\documentclass{lmcs}
\pdfoutput=1

% LMCS Layouting Macros
\usepackage{lastpage}

\lmcsheading{}{1--\pageref{LastPage}}{}{}%
{Feb.~28,~2017}{Oct.~30,~2018}{}

\keywords{
Complex systems,
modelling,
graphs,
layered graphs,
substructural logic,
bunched logic,
layered graph logic,
predicate logic,
tableaux,
Kripke semantics,
algebraic semantics,
decidability,
finite model property,
Stone-type duality,
soundness and completeness,
bigraphs,
pointer logic,
hyperdoctrine}

%Packages
\usepackage{amsfonts}
\usepackage{amsmath}
\usepackage{amssymb}
\usepackage{bussproofs}
\usepackage{hyperref}
\usepackage{stmaryrd}
\usepackage{tikz-cd}
\usepackage{enumitem}
\usepackage{array}
\usepackage{multicol}

%Commands
\newcommand{\gcompE}[2]{#1 \mathop{@_{\mathcal{E}}}\nolimits #2}
\newcommand{\gimp}{\mathop{- \!\!\!\!\! - \!\!\! \blacktriangleright}}
\newcommand{\limp}{\mathop{\blacktriangleright \!\!\!\! - \!\!\!\!\! -}}
\newcommand{\satisfaction}[1]{ \models_{#1} }
\newcommand{\model}{ \mathcal{M} }
\newcommand{\Atom}[1]{\mbox{\rm #1}}
\newcommand{\subLabel}[1]{ \mathcal{S}(#1) }
\newcommand{\domain}[1]{ \mathcal{D}(#1) }
\newcommand{\alphabet}[1]{ \mathcal{A}(#1) }
\newcommand{\closure}[1]{ \overline{#1} }
\newcommand{\omegasubgraph}{\Omega}
\newcommand{\RLabel}[1]{\RightLabel{\begin{scriptsize}#1\end{scriptsize}}}
\newcommand{\labelledFormula}[3]{ \mathbb{#1} #2 : #3 }
\newcommand{\labelledFormulaIndex}[4]{ \mathbb{#1}_{#4} #2_{#4} : #3_{#4}}  
\newcommand{\CSS}[2]{ \langle #1, #2 \rangle }
\newcommand{\CSSfiniteInclusion}{ \subseteq_f }
\newcommand{\CSSinclusion}{ \subseteq }
\newcommand{\concatList}{ \oplus }
\newcommand{\realization}[1]{ \lfloor #1 \rfloor }
\newcommand{\valuation}{ \mathcal{V} }
\newcommand{\logicfont}[1]{#1}
\newcommand{\wand}{\mathbin{-\mkern-6mu*}}

\newcommand*{\myalign}[2]{\multicolumn{1}{#1}{#2}}

\begin{document}

\title{Intuitionistic Layered Graph Logic:\\ Semantics and Proof Theory}

\author{Simon Docherty}
\address{University College London, UK}
\email{simon.docherty.14@ucl.ac.uk} %%maybe use other one?

\author{David Pym}
\address{\vskip-7pt}
\email{d.pym@ucl.ac.uk} %%is this the one you want to use David?

\begin{abstract}
	\noindent Models of complex systems are widely used in the physical and social 
	sciences, and the concept of layering, typically building upon graph-theoretic 
	structure, is a common feature. We describe an intuitionistic substructural 
	logic called ILGL that gives an account of layering. The logic is a bunched system, combining 
	the usual intuitionistic connectives, together with a non-commutative, non-associative 
	conjunction (used to capture layering) and its associated implications. We give soundness 
	and completeness theorems for a labelled tableaux system with respect to a Kripke   
	semantics on graphs. We then give an equivalent relational semantics, 
	itself proven equivalent to an algebraic semantics via a representation theorem. 
	We utilise this result in two ways. First, we prove decidability of the logic by showing the 
	finite embeddability property holds for the algebraic semantics. 
	Second, we prove a Stone-type duality theorem for the logic. By introducing the notions 
	of ILGL hyperdoctrine and indexed layered frame we are able to extend this result to a 
	predicate version of the logic and prove soundness and completeness theorems for an extension 	
	of the layered graph semantics . We indicate the utility of predicate ILGL with a resource-labelled 
	bigraph model.
\end{abstract}

\maketitle
\section{Introduction} \label{sec:introduction} 
%Complex Systems & Layering

Complex systems can be defined as the field of science that studies, on the one hand, 
how it is that the behaviour of a system, be it natural or synthetic, derives from the 
behaviours of its constituent parts and, on the other, how the system interacts with its 
environment. A commonly employed and highly effective concept that helps to manage 
the difficulty in conceptualizing and reasoning about complex systems is that of 
\emph{layering}: the system is considered to consist of a collection of interconnected 
layers each of which has a distinct, identifiable role in the system's operations. Layers 
can be informational or physical and both kinds may be present in a specific system.

Graphs provide a suitably abstract setting for a wide variety of modelling purposes, 
and layered graphs already form a component of many existing systems modelling approaches. For example, 
both social networks \cite{brodka} and transportation systems \cite{PhysRevLett.96.138701}, 
have been modelled by a form of layered graph in which multiple layers are given by
relations over a single set of nodes. A key feature of the TCP/IP conceptual model of 
communications on the Internet \cite{clark} is its separation into layers. This form of layering 
is not immediately represented in terms of graphs. However, the form of its information flows  
may be captured quite naturally using layered graphs \cite{CMP14}. Elsewhere layered graph 
models have been deployed to solve problems related to telecommunications networks \cite{Gouveia} 
and to aid the design of P2P systems for businesses \cite{Wang}. A bigraph \cite{milner} is a form of layered graph that superimposes a spatial \emph{place graph} of locations and a \emph{link graph} designating communication structure on a single set of nodes. Such graphs provide models of distributed systems and have been used to generalize process models like petri nets and the $\pi$-calculus. Similar ideas have also been used to give layered models of biological systems \cite{maus}. More generally, multi-layered networks have become ubiquitous in a range of complex system modelling approaches (see \cite{KAB14} for a survey).

In this paper, we give a formal definition of layered graph and provide a bunched logic for 
reasoning about layering. 
%%Bunched logic
Bunched logics freely combine systems of different structural strengths. 
For example, the logic of bunched implications (BI) \cite{OP99,HP2003,POY,GMP05} 
combines intuitionistic propositional logic with multiplicative intuitionistic linear logic 
(MILL) \cite{Girard}. 

This kind of combination can be most clearly understood proof-theoretically. Consider the sequent calculus, 
a proof system that directly represents consequence. In a sequent $\Gamma \vdash \varphi$ in intuitionistic logic, 
the context is simply a finite sequence of formulae connected by an operation $\, , \,$ which admits the
structural properties of Exchange (E), Contraction (C), and Weakening (W).  In the bunched system BI, the context $\Gamma$ 
in a sequent $\Gamma \vdash \varphi$ is constructed as a finite tree, using two operations $\, ; \,$ and $\, ,\,$ with 
formulae at the leaves and $\, ; \,$ and $\, ,\,$ at the internal vertices. In this set-up, the semi-colon $\, ; \,$ admits 
Exchange, Contraction, and Weakening, but the comma $\, , \,$ admits only Exchange. Both are associative. 
As a result, we have in BI the  `deep' structural rules for $; \,$ 
\[
{\small \frac{\Gamma(\phi ; \psi) \vdash \chi}{\Gamma(\psi ; \phi) \vdash \chi} \quad E
	\qquad \frac{\Gamma(\psi ; \psi) \vdash \varphi}{\Gamma(\psi) \vdash \varphi} \quad C 
	\qquad \frac{\Gamma(\psi) \vdash \varphi}{\Gamma(\psi ; \chi) \vdash \varphi} \quad W}
\]
but do not have the corresponding $C$ and $W$ rules for the comma $\, ,\, $.

Corresponding to the two operations are additive and multiplicative conjunctions, $\wedge$ and $\ast$, each of 
which is accompanied by a corresponding implication, $\rightarrow$ and $\wand$, respectively. So we have rules 
such as
\[
{\small \frac{\Gamma (\psi_1 , \psi_2) \vdash \varphi}{\Gamma (\psi_1 \ast \psi_2) \vdash \varphi} \quad \mbox{\rm $\ast$ L} 
	\qquad \frac{\Gamma (\psi_1 ; \psi_2) \vdash \varphi}{\Gamma (\psi_1 \wedge \psi_2) \vdash \varphi} \quad \mbox{\rm $\wedge$ L}}
\]
and 
\[
{\small \frac{\Gamma , \psi \vdash \varphi}{\Gamma \vdash \psi \wand \varphi} \quad \mbox{\rm $\wand$ R} 
	\qquad \frac{\Gamma ; \psi \vdash \varphi}{\Gamma \vdash \psi \rightarrow \varphi} \quad \mbox{\rm $\rightarrow$ R}}
\]
and their right and left counterparts. 

The logic BI has a classical counterpart, called Boolean BI (BBI), in which the additives are classical, That is, BBI freely 
combines classical propositional logic and MILL. In \cite{CMP14,CMP15}, the logic LGL, or `layered graph logic', was introduced. 
Like BBI, LGL employs classical additives, but, unlike BBI, it employs a multiplicative conjunction that is neither 
commutative --- that is, does not admit Weakening --- nor associative. 

%Intuitionistic Vs Classical
The key difference with LGL in the present work is that the additive component of the bunched logic we employ is \emph{intuitionistic}. Our logic is thus related to BI in precisely the same manner that LGL is related to BBI.

%i) Persistence property
There are a number of reasons to investigate such a logic. Propositional intuitionistic logic was famously given a truth-functional semantics on ordered sets of possible worlds by Kripke \cite{Kripke}. Propositions are intuitionistically true if their classical truth value \emph{persists} with respect to the introduction of new facts. This is formally captured by the notion of \emph{persistent valuation}. A valuation $\mathcal{V}: \mathrm{Prop} \rightarrow \mathcal{P}(X)$ is persistent iff for all $x, y \in X$ and all $\mathrm{p} \in \mathrm{Prop}$: $x \in \mathcal{V}(p)$ and $x \preccurlyeq y$ implies $y \in \mathcal{V}(\Atom{p})$. For persistent $\mathcal{V}$, this property extends to the satisfaction relation of Kripke's semantics. Thus $\varphi$ is true at a world $x$ iff $\varphi$ is true at all worlds $y$ such that $x \preccurlyeq y$.

Directed graphs carry a natural order: the subgraph relation. An intuitionistic logic on graphs is therefore suitable for reasoning about properties that persist with respect to this order. One example relevant to systems modelling is the existence of a path satisfying desirable properties. This principle can be generalized to a chosen order on the subgraphs of a model. 

%iii) Metatheoretic properties
There are also technical motivations that lie behind the introduction of an intuitionstic variant of the logic, most prominently the question of completeness. In \cite{CMP14} an algebraic completeness result was given for a class of structures that included the layered graph models. Some of the methods in this paper also yield a  completeness result for a class of relational structures that closely resemble the layered graph models. However a completeness result specifically for the class of layered graph models eludes proof. This is not unexpected: completeness proofs typically capture a general classes %typo: class
 of models, of which the intended class of models is a subclass. In the case of \emph{intuitionistic} LGL, however, we are able to give a labelled tableaux system that outputs intuitionistic layered graph countermodels to invalid formulae, yielding completeness for the intended class of models. Moreover, we are able to combine this result with those for algebraic and relational structures to prove strong metatheoretic properties like the decidability of validity on layered graph models.

%Graph logics
There are precursors in the literature for such a logic. The first, a spatial logic for querying graphs \cite{CGG02} also includes multiplicative connectives for reasoning about the decomposition of graphs into disjoint subgraphs. In this treatment directionality is not considered, with the consequence that the spatial decomposition cannot capture a notion of layering. As a result the multiplicative conjunction that captures decomposition is both commutative and associative, in contrast to our logic. It also differs in that the additive component of the logic is classical.
A closer relative is the logic BiLog, designed specifically for reasoning about bigraphs \cite{bilog}. In BiLog there are two sets of multiplicatives reflecting the two different ways bigraphs can be composed: the side-by-side composition of place graphs and the composition of link graphs at designated interfaces. As both compositions are order-sensitive, the associated connectives are non-commutative, however both are associative. Once again the additive connectives are classical. In both cases, very little metatheoretic work has been done beyond specification of semantics. In contrast we provide a comprehensive proof theoretic treatment of intuitionistic LGL and give an affirmative result on decidability. This is bolstered further by mathematically substantial results on topological duality for the logic and a sound and complete extension to predicate logic. Although we defer to another occasion a detailed investigation of the relationship between bigraphs and our logic, we indicate how bigraph-like structures provide an instance of our semantics.

%Route Map
The route map for this paper is as follows. In Section~\ref{sec:ILGL}, we introduce a formal definition of layered graph. This motivates the introduction of ordered scaffolds, the central semantic structure for the logic. This is followed by a specification of the syntax and semantics of Intuitionistic Layered Graph Logic (ILGL). We give two proof systems for the logic: a simple Hilbert-type system extending that of propositional intuitionistic logic and a labelled tableaux system. 

In Section~\ref{sec:metatheory}, we 
establish the basic metatheory of the logic: namely, the soundness and completeness of the layered graph semantics with respect to the labelled tableaux system.  This is facilitated by design choices in the labelling algebra that allow us to extract ordered scaffold countermodels for formulae lacking a tableaux proof. 

In Section~\ref{sec:decidability}, we turn towards decidability of the logic. Our proof proceeds in two stages. First, we define an algebraic semantics based on ILGL's Hilbert system and a relational semantics based on ILGL's layered graph semantics. Using the labelled tableaux system we are able to show that the relational semantics and layered graph semantics are equivalent. Further, by a representation theorem for the algebraic semantics, we are able to show equivalence with the relational semantics. With this established, we prove the finite model property holds for the algebraic semantics, and thus decidability holds for the logic. In Section~\ref{sec:duality} we extend the representation theorem to a Stone-type topological duality.

Finally, in Section~\ref{sec:predicate} we define predicate ILGL and prove soundness and completeness with respect to a Hilbert system for a semantics on indexed relational structures. As an example of how this might be used, we define an extension of the layered graph semantics for a specific signature of predicate ILGL, and show this semantics is an instantiation of the one proven complete. We finish the section with an extension of the duality theorem of Section~\ref{sec:duality}. In Section~\ref{sec:conclusions} we discuss directions for further work. 

This paper is an expanded version of \cite{DP16} and contains several new results. In particular the content of Sections~\ref{sec:decidability} - \ref{sec:predicate} is new. This includes decidability of the logic, the representation \& duality theorems for ILGL's algebraic semantics and the extension to predicate ILGL with its associated soundness, completeness and duality theorems. We also give full proofs of tableaux completeness that were previously only sketched.

While layered graphs are a key component of models of complex systems, other structure 
is also important. For example, in modelling the structure and dynamics of distributed 
systems --- in order, for example, to study security properties --- it is necessary to capture the architecture of system locations, their associated system resources, and the 
processes that describe how the system delivers its services. Tools for reasoning about the 
structure and behaviour of complex systems therefore must handle location. But they must also handle resource and process. Thus logics for layered graphs represent just a first step in establishing a logical account of complex systems modelling. A second step would be to 
reformulate the Hennessy-Milner-van~Bentham-style logics of state for location-resource-processes \cite{Pym09a,CMP2012,AP16} to incorporate layering. 

\section{Intuitionistic layered graph logic} \label{sec:ILGL} 

\subsection{The Layered Graph Construction}  \label{sec:graph-semantics} 

We begin with 
a formal, graph-theoretic account of the notion of layering that, we claim, captures the concept as used in 
complex systems. In this definition, two layers in a directed graph are connected by a specified set of edges, each 
element of which starts in the upper layer and ends in the lower layer. 

Given a directed graph, $\mathcal{G}$, we refer to its \emph{vertex set} by 
$V(\mathcal{G})$. Its edge set is given by a subset $E(\mathcal{G}) \subseteq V(\mathcal{G}) \times V(\mathcal{G})$, while its set of subgraphs is denoted 
$S\!g(\mathcal{G})$. Here, H is a subgraph of $\mathcal{G}$ iff $V(H) \subseteq V(\mathcal{G})$ and $E(H) \subseteq V(\mathcal{G})$. We thus overload set theoretic inclusion to also refer to the subgraph relation: $H \subseteq \mathcal{G} \text{ iff } H \in S\!g(\mathcal{G})$. For a 
distinguished edge set $\mathcal{E} \subseteq E(\mathcal{G})$, the reachability relation 
$\leadsto_{\mathcal{E}}$ on subgraphs of $\mathcal{G}$ is defined $H \leadsto_{\mathcal{E}} K$ iff  
there exist $u \in V(H)$ and $v \in V(K)$ such that $(u,v) \in \mathcal{E}$. 

This generates a partial composition $@_{\mathcal{E}}$ on subgraphs. Let $\downarrow$ denote definedness and $\uparrow$ denote undefinedness. For subgraphs $H$ and $K$, $\gcompE{H}{K}\downarrow \text{ iff } 
V(H) \cap V(K) = \emptyset, H \leadsto_{\mathcal{E}} K \text{ and } K \not \leadsto_{\mathcal{E}} H$  with output given by the graph union of the two subgraphs 
and the $\mathcal{E}$-edges between them. Formally, if $\gcompE{H}{K} \downarrow$, then $\gcompE{H}{K}$ is defined by  $V(\gcompE{H}{K}) = V(H) \cup V(K)$ and 
$E(\gcompE{H}{K}) = E(H) \cup E(K) \cup \{ (u, v) \mid u \in V(H), v \in V(K) \text{ and } (u, v) \in \mathcal{E} \}$.

For a graph $H$, we say it is \emph{layered} (with respect to $\mathcal{E}$) if there exist $K_0$, $K_1$ such that $\gcompE{K_0}{K_1}\downarrow$ and 
$H = \gcompE{K_0}{K_1}$ (see Figure~\ref{fig:layered-graph}). Layering is evidently neither commutative 
nor (because of definedness) associative:  for a full exposition of these properties see \cite{CMP14}.

\begin{figure}[!tbp]
  \centering
  \begin{minipage}[b]{0.51\textwidth}
    \includegraphics[scale=0.3]{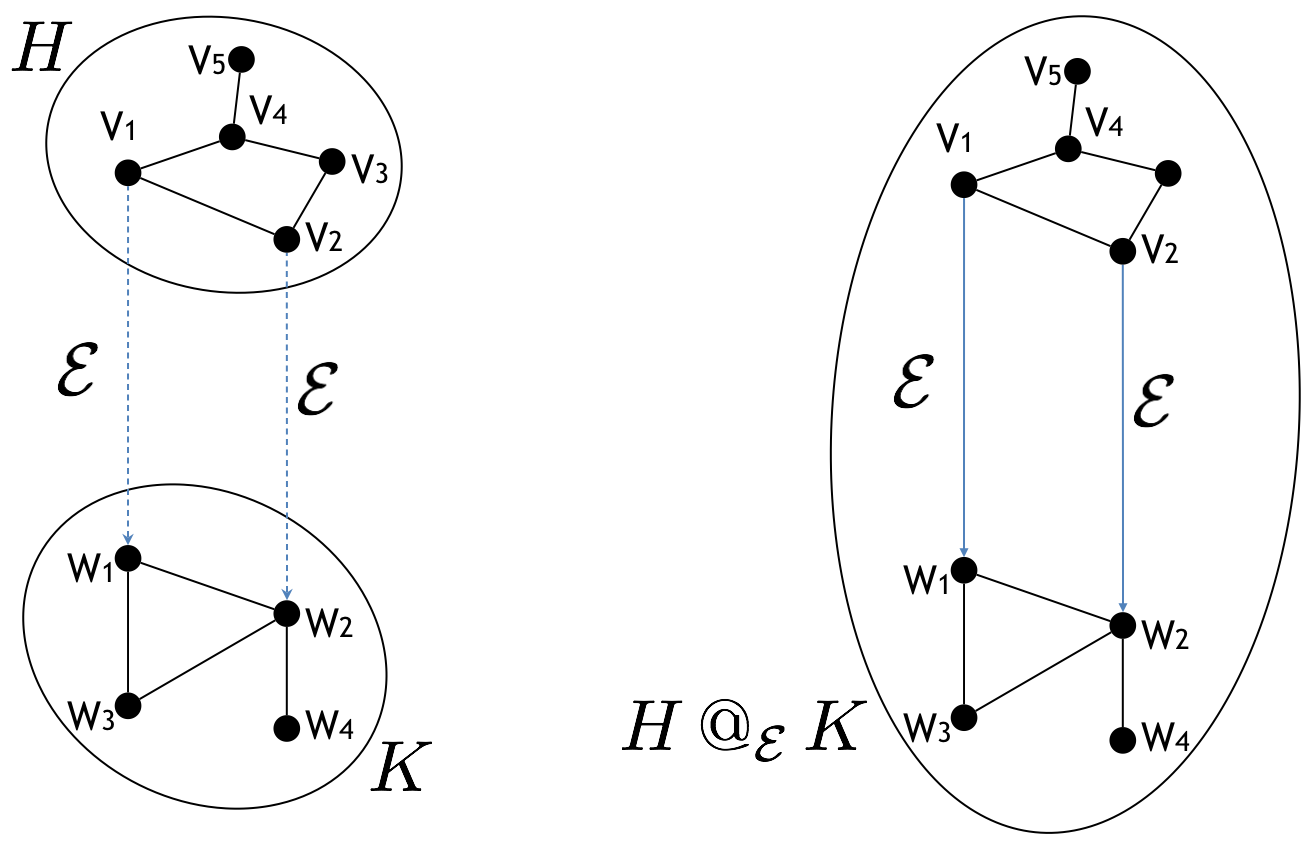}
    \caption{{\bf The graph composition $\gcompE{H}{K}$}}
    \label{fig:layered-graph}
  \end{minipage} \qquad
  \begin{minipage}[b]{0.4\textwidth}
    \includegraphics[scale=0.3]{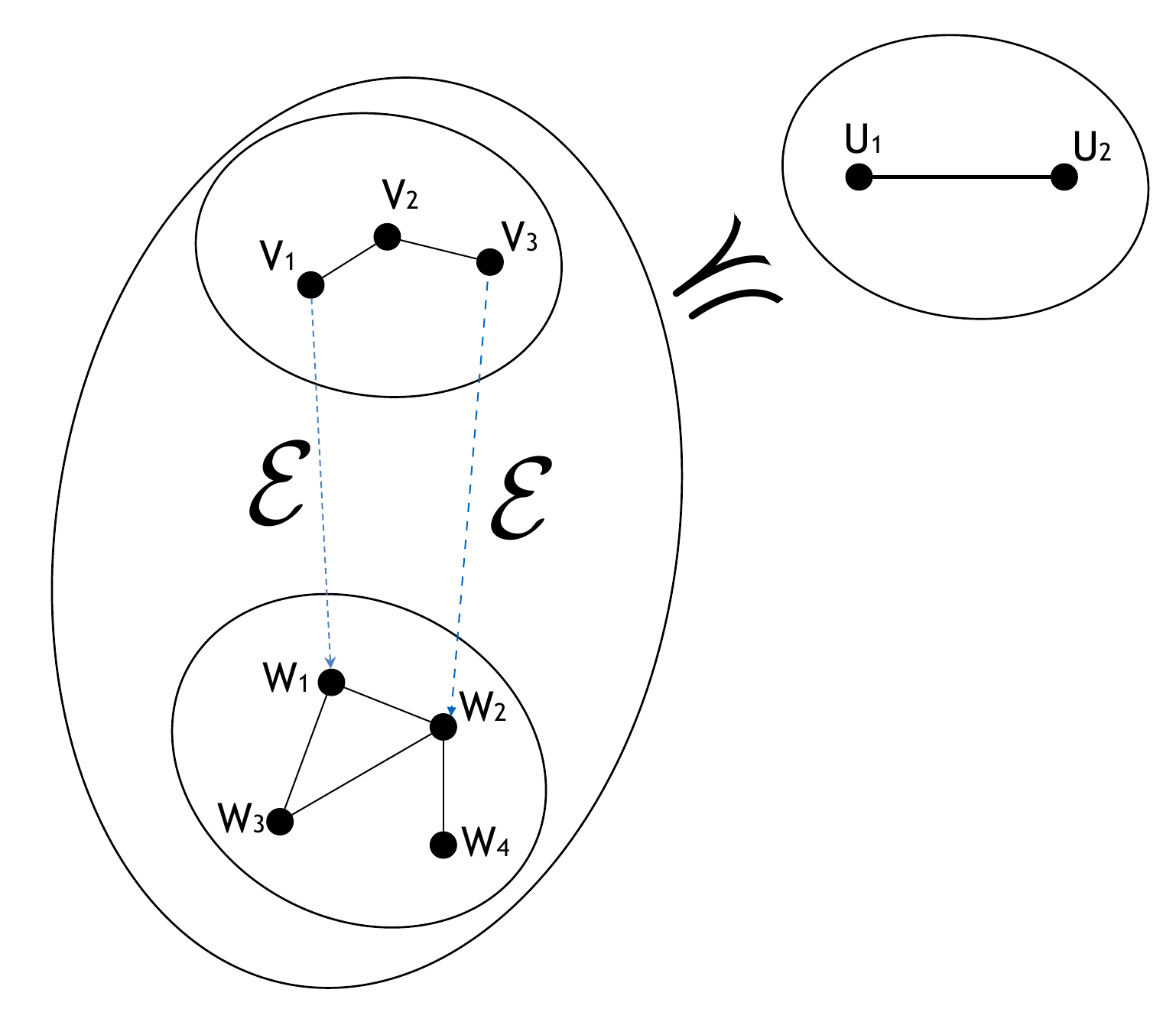}
    \caption{{\bf An ordered scaffold}}
    \label{fig:scaffold}
  \end{minipage}
\end{figure} 

Within a given graph, $\mathcal{G}$, we can identify a specific form of layered structure, 
called an \emph{ordered scaffold}, 
on which we interpret intuitionistic layered graph logic. To set this up, we begin with the definition of \emph{admissible subgraph set}, a subset $X \subseteq S\!g(\mathcal{G})$ such that, for all $H, K \in S\!g(\mathcal{G})$, if $\gcompE{H}{K} \!\!\downarrow$, 
then $H, K \in X$ iff $\gcompE{H}{K} \in X$. 

\begin{defi}[Ordered Scaffold] An  ordered scaffold is a structure $\mathcal{X} = (\mathcal{G}, \mathcal{E}, X, \preccurlyeq)$ such that $\mathcal{G}$ is a graph, $\mathcal{E} \subseteq E(\mathcal{G})$, $X$ an 
admissible subgraph set and $\preccurlyeq$ a preorder on $X$. Layers are present if $\gcompE{H}{K} \!\!\downarrow$ 
for at least one pair $H, K \in X$. 
\end{defi}

Figure~\ref{fig:scaffold} shows a simple example of an ordered scaffold. Note that the scaffold is preordered and we choose a subset of the subgraph set. This is a more general definition of scaffold than that taken in \cite{CMP14,CMP15}, where the 
structure was less tightly defined. 

There are several reasons for these choices. Properties of graphs that are 
inherited by their subgraphs are naturally captured in an intuitionistic logic. This idea is generalized by the 
preorder the ordered scaffold carries, structure that allows us to extend the Kripke interpretation of propositional intuitionistic logic to obtain our semantics for ILGL. We do not specify which preorder the scaffold carries as there are a number of natural choices. It also may be desirable to define more exotic orders for specific modelling situations.

\begin{figure}[!tbp]
  \centering
  \begin{minipage}[b]{0.51\textwidth}
    \includegraphics[scale=0.34]{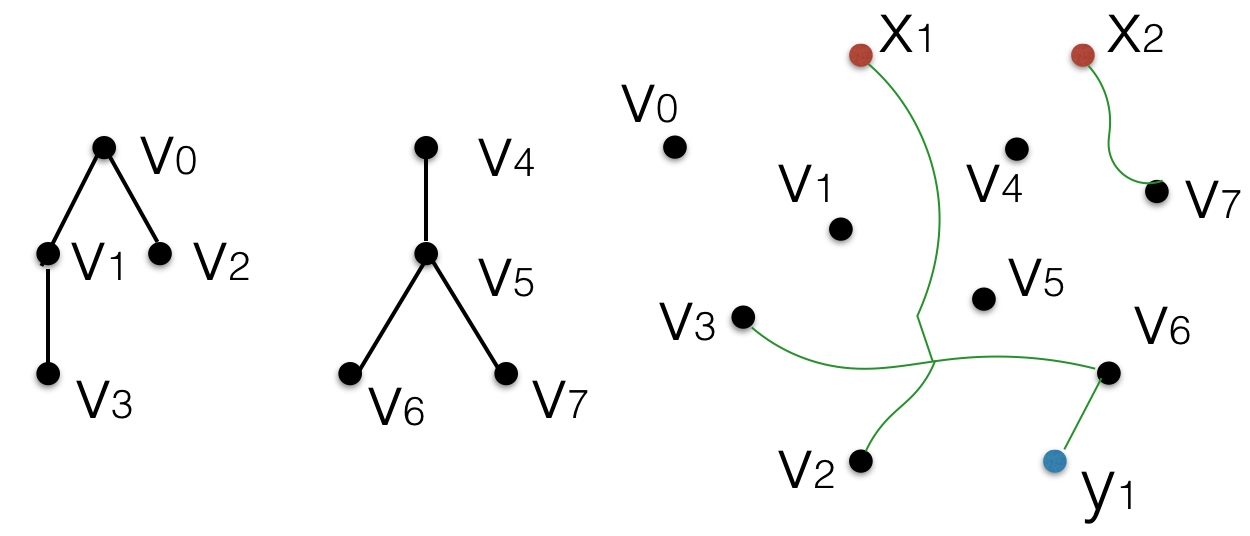}
    \caption{{\bf Place and link graphs}}
    \label{fig:placegraph}
  \end{minipage} \qquad
  \begin{minipage}[b]{0.4\textwidth}
    \includegraphics[scale=0.37]{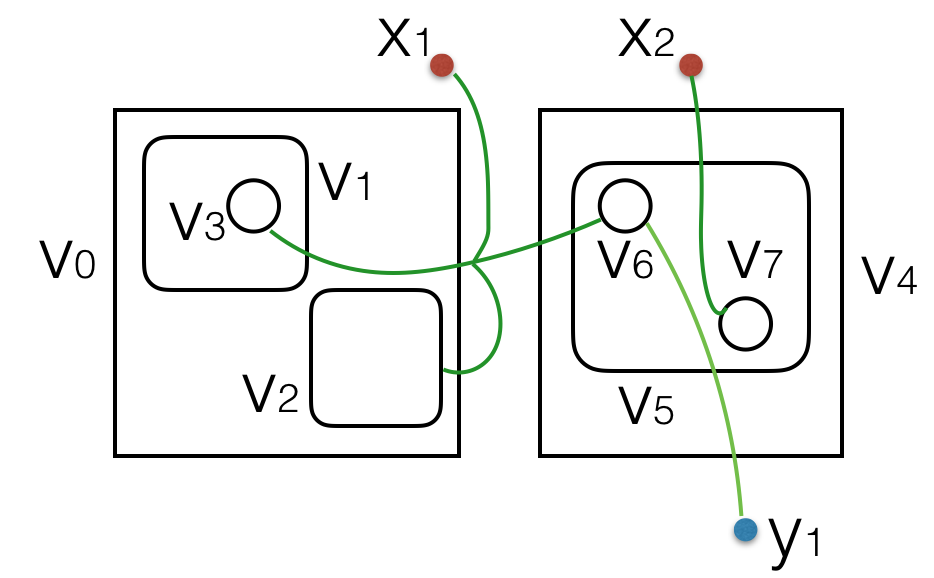}
    \caption{{\bf Bigraph}}
    \label{fig:bigraph}
 \end{minipage}
\end{figure} 

\begin{exa}[Bigraphs] \label{ex:bigraph}
%Definition of a bigraph
A bigraph \cite{milner} is comprised of a set of nodes on which a \emph{place graph} and a \emph{link graph} are defined. The place graph has the structure of a disjoint union of trees (a forest), whilst the link graph is a hypergraph on which one edge can connect many nodes. Intuitively, the place graph denotes spatial relationships, whilst the link graph denotes the communication structure of the system. 

The link graph has additional structure: finite sets of labelled vertices $\{ x_1, \ldots, x_n \}$, $\{y_1, \ldots, y_m \}$  denoting \emph{inner names} and \emph{outer names} respectively. These act as interfaces to enable the composition of bigraphs: if the outer names of a bigraph match the inner names of another, their link graphs may be connected at these vertices. This compositional quality makes bigraphs ideal structures for modelling distributed systems. Bigraphical Reactive Systems (BRS) provide a dynamics for such models by defining transitions that reconfigure spatial relations and connectivity. Such systems generalize a wealth of process calculi, including $\pi$-calculi and the CCS.

Figure \ref{fig:bigraph} shows a bigraph and Figure \ref{fig:placegraph} its consituent parts. The structure of the place graph is visually realised in the bigraph by the containment of its nodes. We now show how a system of composed bigraphs can be encoded as an ordered scaffold. Given we work with directed graphs, we model a form of \emph{directed} bigraph \cite{GM07}. 

We begin with a single bigraph. First, consider the link graph. We can replace each hyperedge with a vertex attached to which we add an edge for each connection of the hyperedge. This obtains a directed graph with the same path information. Now note that a forest can straightforwardly be seen as a partial order on its vertices. This generates a partial order $\preccurlyeq$ on the set of subgraphs $\{ \{ v \} \mid v \text{ a vertex of the place graph} \}$. We extend $\preccurlyeq$ to the link graph $H$ by specifying that $H \preccurlyeq H$. 

Now we consider a system of composed bigraphs. Given bigraphs $(H, \preccurlyeq)$, $(K, \preccurlyeq')$ where $H$ has the same outer names as $K$'s inner names, we can connect the outer name vertices of $H$ to the inner name vertices of $K$ with new edges. We collect all such edges as $\mathcal{E}$. Thus the composition $\gcompE{H}{K}$ denotes the composition of the link graphs $(H, \preccurlyeq)$ and $(K, \preccurlyeq')$, and we can take the disjoint union of the partial orders to obtain a bigraph $(\gcompE{H}{K}, \preccurlyeq \sqcup \preccurlyeq')$. In this way we obtain an ordered scaffold with the admissible subgraph set given by the closure under composition of the set $\{ \{ v \} \mid v \text{ a vertex of a place graph } \}$ together with each link graph $H$, and order generated by the union of the partial orders defined by the place graphs of the system. \qed
\end{exa}

We choose an admissible subgraph set in order to reason more specifically about the layering structure of interest in the model, and to avoid degenerate cases of layers. For example, two disjoint subgraphs $H$ and $K$ may designate distinct, non-interacting regions in a systems model. However their disjoint union would be interpreted as layered over another subgraph $L$ if $H \leadsto_{\mathcal{E}} L$ but $K \not\leadsto_{\mathcal{E}} L$. The solution is to specify $H \cup K \not\in X$ for the ordered scaffold modelling the system.

There are further technical considerations behind this choice. When we restrict to interpreting ILGL on the full subgraph set, it is impossible to perform any 
composition of models without the states proliferating wildly. A similar issue arises during the 
construction of countermodels from the tableaux system of Section~\ref{sec:metatheory}, a procedure that 
breaks down when we are forced to take the full subgraph set as the set of states. 

\subsection{The logic ILGL}

Having established the layered graph construction and a semantic structure of interest, we now set up the logic ILGL. Let $\mathrm{Prop}$ be a set of atomic propositions, ranged over by $\Atom{p}$. The set $\mathrm{Form}$ of all propositional formulae is generated by the following grammar: \[
\varphi ::= \mathrm{p} \mid \top \mid \bot \mid \varphi \wedge \varphi \mid \varphi \vee \varphi \mid 
		\varphi \rightarrow \varphi \mid 
		\varphi \blacktriangleright \varphi \mid \varphi \gimp \varphi \mid \varphi \limp \varphi
\]
%The familiar connectives will be interpreted intuitionistically. The non-commutative, non-associative 
%conjunction, $\blacktriangleright$, which will be used to capture layering, is interpreted intuitionistically, 
%%as in BI \cite{OP99,GMP05}, and has associated right ($\limp$) and left ($\gimp$) implications. 
%We define intuitionistic 
%negation in terms of the connectives: $\neg \varphi ::= \varphi \rightarrow \bot$.

A Hilbert-type proof system for ILGL, $\mathrm{ILGL}_\mathrm{H}$, is given in 
Figure~\ref{fig:hilbert_rules_ILGL}. Intuitively, the judgement $\varphi \vdash \psi$ can be read as $\varphi$ implies $\psi$; or, if $\varphi$ is provable than so too is $\psi$. This intuition is made precise in Section \ref{sec:decidability}. where it is shown that $\vdash$ can be interpreted as the order of a Heyting algebra; it is straightforward to show that $a \leq b$ iff $a \rightarrow b = \top$ in a Heyting algebra.  We say $\varphi \vdash \psi$ is provable if it can be derived in the system; similarly, we say the formula $\varphi$ is provable if $\top \vdash \varphi$ is provable.

The additive fragment, corresponding to 
intuitionistic propositional logic, is standard (e.g., \cite{BdeJ2006}). The presentation of the multiplicative 
fragment is similar to that for BI's multiplicatives 
\cite{POY}, but for the non-commutative and non-associative (following from the absence of a 
multiplicative counterpart to $\wedge_2$) conjunction, $\blacktriangleright$, together with 
its associated left and right implications $\limp$ and $\gimp$ (cf. \cite{Lambek1961,Lambek1993}). 
The multiplicative conjunction is key to our logic, as it captures layering.

\begin{figure}
\centering
\hspace{-.2cm}
{\small 
\begin{tabular}{c}
\hline \\
	
     \AxiomC{}
\RightLabel{$(\mathrm{Ax})$}
\UnaryInfC{$\varphi \vdash \varphi$} 
      \DisplayProof %%%%
      \quad
     \AxiomC{$\varphi \vdash \psi$}
\AxiomC{$\psi \vdash \chi$}
\RightLabel{$(\mathrm{Cut})$}
\BinaryInfC{$\varphi \vdash \chi$}
      \DisplayProof %%%%
      \quad 
    \AxiomC{}
\RightLabel{$(\top)$}
\UnaryInfC{$\varphi \vdash \top$} 
      \DisplayProof %%%%
      \quad 
     \AxiomC{}
\RightLabel{$(\bot)$}
\UnaryInfC{$\bot \vdash \varphi$} 
      \DisplayProof %%%%
      
      \\ \\
      
            \AxiomC{$\varphi \vdash \psi$}
\AxiomC{$\varphi \vdash \chi$}
\RightLabel{$(\wedge_1)$}
\BinaryInfC{$\varphi \vdash \psi \wedge \chi$}  
      \DisplayProof %%%%
      \quad
\AxiomC{}
\RightLabel{$(\wedge_2)$}
\UnaryInfC{$\varphi_1 \wedge \varphi_2 \vdash \varphi_i$}
      \DisplayProof %%%%

	\\ \\

    \AxiomC{}
\RightLabel{$(\vee_1)$}
\UnaryInfC{$\varphi_i \vdash \varphi_1 \vee \varphi_2$} 
      \DisplayProof %%%%
      \quad
      \AxiomC{$\varphi \vdash \chi$}
\AxiomC{$\psi \vdash \chi$}
\RightLabel{$(\vee_2)$}
\BinaryInfC{$\varphi \vee \psi \vdash \chi$}  
      \DisplayProof %%%%
      
      \\ \\

\AxiomC{$\varphi \vdash \psi \rightarrow \chi$}
\AxiomC{$\nu \vdash \psi$}
\RightLabel{$(\rightarrow_1)$}
\BinaryInfC{$\varphi \wedge \nu \vdash \chi$}
      \DisplayProof %%%%
      \quad
 \AxiomC{$\varphi \wedge \psi \vdash \chi$}
\RightLabel{$(\rightarrow_2)$}
\UnaryInfC{$\varphi \vdash \psi \rightarrow \chi$}  
      \DisplayProof %%%%
      \qquad 
\AxiomC{$\varphi \vdash \psi$}
\AxiomC{$\chi\vdash \nu$}
\RightLabel{$(\blacktriangleright)$}
\BinaryInfC{$\varphi \blacktriangleright \chi \vdash \psi \blacktriangleright \nu$}
      \DisplayProof %%%%
     
      \\ \\
      
     \AxiomC{$\varphi \vdash \psi \gimp \chi$} \hspace{-2mm}
\AxiomC{$\nu \vdash \psi$}
\RightLabel{$(\gimp_1)$}
\BinaryInfC{$\varphi \blacktriangleright \nu \vdash \chi$}
      \DisplayProof %%%%
     \,  
    \AxiomC{$\varphi \blacktriangleright \psi \vdash \chi$}
\RightLabel{$(\gimp_2)$}
\UnaryInfC{$\varphi \vdash \psi \gimp \chi$}  
      \DisplayProof %%%%
       
	\\ \\

         \AxiomC{$\varphi \vdash \psi \limp \chi$} \hspace{-2mm}
\AxiomC{$\nu \vdash \psi$}
\RightLabel{$(\limp_1)$}
\BinaryInfC{$\nu \blacktriangleright \varphi \vdash \chi$}
      \DisplayProof %%%%
     \,
\AxiomC{$\varphi \blacktriangleright \psi \vdash \chi$}
\RightLabel{$(\limp_2)$}
\UnaryInfC{$\psi \vdash \varphi \limp \chi$} 
      \DisplayProof %%%%
      
      \\ \\
\hline 
\end{tabular}}
\caption{ {\bf Rules of the Hilbert system, $\mathrm{ILGL}_\mathrm{H}$} }
\label{fig:hilbert_rules_ILGL}
\end{figure}

%%Can/should we make smaller?

\begin{defi}[Layered graph model] \label{def:layered-graph-model}
A \emph{layered graph model} $\mathcal{M}=(\mathcal{X}, \mathcal{V})$ of $\mathrm{ILGL}$ is a pair 
where $\mathcal{X}$ is an ordered scaffold and $\mathcal{V}: \mathrm{Prop} \rightarrow \mathcal{P}(X)$ 
is a \emph{persistent} valuation; that is, 
$H \preccurlyeq K \text{ and } H \in \mathcal{V}(\mathrm{p}) \text{ implies } K \in \mathcal{V}(\mathrm{p})$. \hfill \qed
\end{defi}

%Satisfaction in layered graph models is then defined in a familiar way. 
%\begin{defi}[Satisfaction in layered graph models] \label{def:satisfaction-graphs}
Given a layered graph model $\mathcal{M} = (\mathcal{X}, \mathcal{V})$, we generate the satisfaction 
relation $\satisfaction{\model} \subseteq X \times \mathrm{Form}$ %as follows:
as shown in Figure \ref{fig:graphsat}. 
\begin{figure}
\centering
\hrule
\vspace{1mm}
\setlength\tabcolsep{1.5pt}
\setlength\extrarowheight{2pt}
\begin{tabular}{c c c c l r c c c c r r c}
$H$ & $\vDash_{\model}$ & $\mathrm{p}$ & iff & \myalign{l}{$H \in \mathcal{V}(\mathrm{p})$} & & & &
$H$ & $\vDash_{\model}$ & $\top$ & 
 $H  \not\vDash_{\model} \bot$  \\
$H$ & $\vDash_{\model}$ & $\varphi \land \psi$ & iff & \myalign{l}{$H \vDash_{\model} \varphi$ and $H \vDash_{\model} \psi$} \\
$H$ & $\vDash_{\model}$ & $\varphi \lor \psi$ & iff & $H \vDash_{\model} \varphi$ or  $H \vDash_{\model} \psi$
\\
$H$ & $\vDash_{\model}$ & $\varphi \rightarrow \psi$ & iff & \myalign{l}{for all $K \succcurlyeq H$, $K \vDash_{\model}\varphi$ implies $K$  $\vDash_{\model}$ $\psi$} \\
$H$ & $\vDash_{\model}$ & $\varphi \blacktriangleright \psi$ & iff & \multicolumn{8}{l}{there exists $\gcompE{K_0}{K_1}\downarrow$ s.t. $H \succcurlyeq \gcompE{K_0}{K_1}$, $K_0 \vDash_{\model} \varphi$ and  $K_1 \vDash_{\model} \psi$} \\
$H$ & $\vDash_{\model}$ & $\varphi \gimp \psi$ & iff & \multicolumn{8}{l}{for all $K$ and $L \succcurlyeq H$ s.t. $\gcompE{L}{K} \downarrow$:  $L \vDash_{\model} \varphi$ implies $\gcompE{L}{K} \vDash_{\model} \psi$} \\
$H$ & $\vDash_{\model}$ & $\varphi \limp \psi$ & iff & \multicolumn{8}{l}{for all $K$ and $L \succcurlyeq H$ s.t. $\gcompE{K}{L} \downarrow$: $L \vDash_{\model} \varphi$ implies $\gcompE{K}{L} \vDash_{\model} \psi$}
%\multicolumn{12}{c}{LGL is given by the case where $\preccurlyeq$ is $=$.}
\end{tabular}
\caption{Satisfaction for layered graph models of ILGL.}
\vspace{1mm}
\hrule
\label{fig:graphsat}
\end{figure}

\begin{defi}[Validity] \label{def:validity}
The judgement $\varphi \vDash_{\model} \psi$ asserts that for every subgraph $H$ of the layered graph model $\model$, whenever $H \vDash_{\mathcal{M}} \varphi$, it follows that $H \vDash_{\mathcal{M}} \psi$. $\varphi \vDash \psi$ asserts that $\varphi \vDash_{\model} \psi$ holds for all layered graph models $\model$. $\varphi$ is \emph{valid in a model $\model$} iff $\top \vDash_{\model} \varphi$; $\varphi$ is \emph{valid} iff $\top \vDash \varphi$. \hfill \qed
%
%$\varphi$ is \emph{valid in a layered graph model} $\model$ ({\ }$\satisfaction{\model} \varphi$) 
%iff, for all $H \in X$, $H \satisfaction{\model} \varphi$. $\varphi$ is \emph{valid} $(\models \varphi)$ 
%iff, for all layered graph models $\model$, $\satisfaction{\model} \varphi$. \hfill \qed
\end{defi}

A straightforward inductive argument shows that persistence extends to all formulae for this semantics.

\begin{lem}[Persistence] \label{lem:persistence}
For all $\varphi \in \textrm{Form}$, $H \preccurlyeq K$ and $H \satisfaction{\model} \varphi$ 
implies $K \satisfaction{\model} \varphi$. \qed
\end{lem}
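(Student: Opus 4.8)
The plan is to prove the statement by structural induction on $\varphi$, fixing a model $\mathcal{M} = (\mathcal{X}, \mathcal{V})$ and a pair $H \preccurlyeq K$ throughout. I would first dispatch the base cases: for $\varphi = \top$ the conclusion is immediate since every subgraph satisfies $\top$, and for $\varphi = \bot$ it holds vacuously since no subgraph does; for an atom $\mathrm{p}$, the required implication that $H \in \mathcal{V}(\mathrm{p})$ and $H \preccurlyeq K$ give $K \in \mathcal{V}(\mathrm{p})$ is exactly the persistence condition built into the definition of a layered graph model (Definition~\ref{def:layered-graph-model}).

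For the inductive step, the additive conjunction and disjunction follow directly from the induction hypothesis applied to the immediate subformulae: from $H \vDash_{\mathcal{M}} \varphi \wedge \psi$ we obtain $H \vDash_{\mathcal{M}} \varphi$ and $H \vDash_{\mathcal{M}} \psi$, hence $K \vDash_{\mathcal{M}} \varphi$ and $K \vDash_{\mathcal{M}} \psi$ by two applications of the hypothesis, and the disjunctive case is symmetric.

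The key observation I would exploit is that every remaining clause in Figure~\ref{fig:graphsat} mentions the base world $H$ only through the order: universally over worlds $\succcurlyeq H$ (for $\rightarrow$, $\gimp$, $\limp$) or through a witness with $H \succcurlyeq \gcompE{K_0}{K_1}$ (for $\blacktriangleright$). Consequently persistence for these connectives falls out of transitivity of $\preccurlyeq$ alone and needs no appeal to the induction hypothesis. For $\rightarrow$, given $H \vDash_{\mathcal{M}} \varphi \rightarrow \psi$ and any $L \succcurlyeq K$, transitivity yields $L \succcurlyeq H$, so the hypothesis at $H$ applies to $L$ and gives $K \vDash_{\mathcal{M}} \varphi \rightarrow \psi$; the clauses for $\gimp$ and $\limp$ are treated identically, reindexing their universally quantified world from $L \succcurlyeq H$ to $L \succcurlyeq K \succcurlyeq H$. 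Dually, a witness $\gcompE{K_0}{K_1} \downarrow$ with $H \succcurlyeq \gcompE{K_0}{K_1}$ certifying $H \vDash_{\mathcal{M}} \varphi \blacktriangleright \psi$ also satisfies $K \succcurlyeq H \succcurlyeq \gcompE{K_0}{K_1}$ by transitivity, so the same witness works at $K$.

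I do not anticipate a genuine obstacle: the semantic clauses have been arranged precisely so that the base world enters each one monotonically with respect to $\preccurlyeq$, which reduces all the non-additive cases to transitivity and confines the real content of the lemma to the atomic case (persistence of $\mathcal{V}$) and the additive binary connectives (the induction hypothesis). The only point demanding minor care is the quantifier bookkeeping in the implication clauses, where one must check that every world relevant at $K$ is already relevant at $H$.
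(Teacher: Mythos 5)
Your proof is correct and is exactly the ``straightforward inductive argument'' the paper has in mind (the paper omits the details): induction on $\varphi$, with the atomic case given by persistence of $\mathcal{V}$, the additive binary connectives by the induction hypothesis, and the clauses for $\rightarrow$, $\blacktriangleright$, $\gimp$, $\limp$ by transitivity of $\preccurlyeq$ alone. Your key observation---that the order restriction built into the implication clauses makes the induction hypothesis unnecessary there---is precisely the design choice the paper itself highlights in the discussion following the soundness theorem.
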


%\begin{proof} By induction on the complexity of formulae. The additive fragment, corresponding to intuitionistic 
%propositional logic, is standard and we restrict attention to two examples of the multiplicative connectives.
%
%Suppose $G \satisfaction{\model} \varphi \blacktriangleright \psi$ and $G \preccurlyeq H$. There are
%$K, K^{\prime}$ s.t. $\gcompE{K}{K^{\prime}}\downarrow$ and $\gcompE{K}{K^{\prime}} \preccurlyeq G$, with 
%$K \satisfaction{\model} \varphi$ and $K^{\prime} \satisfaction{\model} \psi$. By transitivity of $\preccurlyeq$, 
%$\gcompE{K}{K^{\prime}} \preccurlyeq H$, so $H \satisfaction{\model} \varphi \blacktriangleright \psi$.
%
%Suppose $G \satisfaction{\model} \varphi \gimp \psi$. Then, for all $K$ such that $G \preccurlyeq K$ and all 
%$K^{\prime}$ s.t. $\gcompE{K}{K^{\prime}} \!\!\downarrow$, if $K^{\prime} \satisfaction{\model} \varphi$, then 
%$\gcompE{K}{K^{\prime}} \satisfaction{\model} \psi$. Let $G \preccurlyeq H$ and suppose $H \preccurlyeq K$ and 
%$K^{\prime}$ are s.t. $\gcompE{K}{K^{\prime}} \!\!\downarrow$ and $K^{\prime} \satisfaction{\model} \varphi$. 
%So, since $G \preccurlyeq H \preccurlyeq K$, it follows that $\gcompE{K}{K^{\prime}} \satisfaction{\model} \varphi$. 
%So  $H \satisfaction{\model} \varphi \gimp \psi$.
%\end{proof}
Persistence yields soundness of the Hilbert system for ILGL with respect to the layered graph semantics; we will return to the completeness of the Hilbert system in Section 4.

\begin{thm}[Soundness of (I)LGL]
$\varphi \vdash \psi$ is provable implies $\varphi \vDash \psi$. Similarly, $\varphi$ provable implies $\varphi$ is valid.
\end{thm}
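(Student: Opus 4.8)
The plan is to argue by induction on the structure of derivations in $\mathrm{ILGL}_\mathrm{H}$, showing that each rule preserves semantic validity. Unfolding the definitions, $\varphi \vDash \psi$ asserts that for every layered graph model $\model = (\mathcal{X}, \valuation)$ and every subgraph $H \in X$, $H \satisfaction{\model} \varphi$ implies $H \satisfaction{\model} \psi$; so it suffices to check, rule by rule, that whenever the premises hold under this reading the conclusion does too. The axioms are immediate: $(\mathrm{Ax})$ is reflexivity of implication at each world, $(\top)$ and $(\bot)$ hold because $H \satisfaction{\model} \top$ and $H \not\satisfaction{\model} \bot$ at every world, and $(\mathrm{Cut})$ is transitivity of the pointwise implication. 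The purely additive rules $(\wedge_1)$, $(\wedge_2)$, $(\vee_1)$, $(\vee_2)$ and $(\rightarrow_1)$ reduce directly to the corresponding satisfaction clauses, using reflexivity of $\preccurlyeq$ to instantiate the universally quantified world in the clause for $\rightarrow$ at $H$ itself.

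The remaining cases rely on the Persistence Lemma (Lemma~\ref{lem:persistence}), which is the workhorse. For $(\rightarrow_2)$, assuming the premise $\varphi \wedge \psi \vDash \chi$ and $H \satisfaction{\model} \varphi$, I would take an arbitrary $K \succcurlyeq H$ with $K \satisfaction{\model} \psi$; persistence lifts $H \satisfaction{\model} \varphi$ to $K \satisfaction{\model} \varphi$, so $K \satisfaction{\model} \varphi \wedge \psi$, and the induction hypothesis gives $K \satisfaction{\model} \chi$, witnessing $H \satisfaction{\model} \psi \rightarrow \chi$.

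The multiplicative rules are where the real work lies. The case $(\blacktriangleright)$ is direct: given $H \satisfaction{\model} \varphi \blacktriangleright \chi$, its existential clause supplies a decomposition $\gcompE{K_0}{K_1} \downarrow$ with $H \succcurlyeq \gcompE{K_0}{K_1}$, $K_0 \satisfaction{\model} \varphi$ and $K_1 \satisfaction{\model} \chi$; the induction hypotheses upgrade these to $K_0 \satisfaction{\model} \psi$ and $K_1 \satisfaction{\model} \nu$, and reusing the \emph{same} decomposition as witness yields $H \satisfaction{\model} \psi \blacktriangleright \nu$. The residual implications $\gimp$ and $\limp$ are the main obstacle, since there one must reconcile the preorder with the partial composition $@_{\mathcal{E}}$. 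For the elimination rules $(\gimp_1)$ and $(\limp_1)$, I would unpack the $\blacktriangleright$ on the left to obtain a witnessing composite below $H$, instantiate the universally quantified clause of the implication at whichever component carries it, using reflexivity of $\preccurlyeq$ so that the witnessing composite itself is the relevant layering, apply the induction hypotheses to discharge the antecedent, and finally transport the resulting satisfaction of $\chi$ down to $H$ by persistence (legitimate because $H \succcurlyeq$ the composite). For the introduction rules $(\gimp_2)$ and $(\limp_2)$, I would assume the appropriate component satisfies $\varphi$, take an arbitrary $L \succcurlyeq H$ and composable $K$ as demanded by the clause for $\gimp$ (resp.\ $\limp$), use persistence to lift satisfaction along $L \succcurlyeq H$, form the relevant composite, and invoke the induction hypothesis $\varphi \blacktriangleright \psi \vDash \chi$.

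The crux throughout is bookkeeping: matching which component of each layered composite $\gcompE{L}{K}$ carries the antecedent, and ensuring persistence is applied in the correct direction along $\preccurlyeq$; once the witnessing decompositions are chosen correctly, these cases become routine. Finally, the second statement follows from the first: taking $\varphi$ provable to mean $\top \vdash \varphi$ is derivable gives $\top \vDash \varphi$, and since $H \satisfaction{\model} \top$ at every world $H$, this says exactly that $\varphi$ holds at every world of every model, i.e.\ that $\varphi$ is valid.
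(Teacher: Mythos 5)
Your proposal is correct and follows essentially the same route as the paper: rule-by-rule verification that validity of premisses yields validity of the conclusion, with the Persistence Lemma doing the work in the implication and multiplicative cases, and reflexivity of $\preccurlyeq$ used to instantiate the universal clauses at the witnessing composite itself. Your treatment of the elimination rules $(\gimp_1)$ and $(\limp_1)$ coincides exactly with the paper's single demonstrated case $(\limp_1)$ (note only that persistence transports satisfaction \emph{up} the preorder from the composite to $H$, which is what your parenthetical justification in fact invokes).
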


\begin{proof}
Soundness follows by showing that validity of premisses leads to validity of the conclusion for each rule. We demonstrate with the rule $(\limp_1)$ \[ \quad \cfrac{\xi\vdash\varphi\limp\psi\quad\eta\vdash\varphi}{\eta \blacktriangleright\xi\vdash\psi}\] on layered graph models. Suppose $\xi \vDash \varphi \limp \psi$ and $\eta \vDash \varphi$. Let $H$ be an arbitrary subgraph in a layered graph model such that $H \vDash \eta \blacktriangleright \xi$. Then $H \succcurlyeq \gcompE{K_0}{K_1}$ with $K_0 \vDash \eta$ and $K_1 \vDash \xi$. By assumption $K_0 \vDash \varphi$ and $K_1 \vDash \varphi \limp \psi$ so $H \succcurlyeq \gcompE{K_0}{K_1}$ entails $x \vDash \psi$ by persistence, as required.
\end{proof}

Note that, unlike in BI's resource monoid semantics, we require the restriction `for all $K$, $H \preccurlyeq K \ldots $' in the semantic clauses 
for the multiplicative implications. Without this we cannot prove persistence, and with it completeness, because we cannot apply the inductive hypothesis in those cases. The reason for this is that we put no restriction on the interaction between 
$\preccurlyeq$ and $@$ in the definition of preordered scaffold. This is unlike the analogous case for BI, where 
the monoidal composition is required to be bifunctorial with respect to the ordering. One might resolve this issue 
with the following addendum to the definition of preordered scaffold: if $H \preccurlyeq K$ and $\gcompE{K}{L} \downarrow$, 
then $\gcompE{H}{L} \downarrow$ and $\gcompE{H}{L} \preccurlyeq \gcompE{K}{L}$.

Two natural examples of subgraph preorderings show that this would be undesirable. First, consider 
the layering preorder. Let $\preccurlyeq$ be the reflexive, transitive closure of the relation $R(H , K)$  iff  $\gcompE{K}{H}\downarrow$, 
restricted to the admissible subgraph set $X$. Figure~\ref{fig:reachability_ord} shows a subgraph $H$ with $L \preccurlyeq H$ 
and $\gcompE{H}{K}\downarrow$ but $\gcompE{L}{K} \uparrow$. Second, consider 
the subgraph order. In Figure~\ref{fig:subgraph_ord}, 
we have $L \subseteq H$ and $\gcompE{H}{K}\downarrow$ but $\gcompE{L}{K}\uparrow$. It is, however, the case 
that, with this ordering, if $H \subseteq K, \gcompE{K}{L} \downarrow$ and $\gcompE{H}{L}\downarrow$, 
then $\gcompE{H}{L} \subseteq \gcompE{K}{L}$. 

\begin{figure}[!tbp]
  
  \begin{minipage}[b]{0.5\textwidth}
\centering
    \includegraphics[scale=0.3]{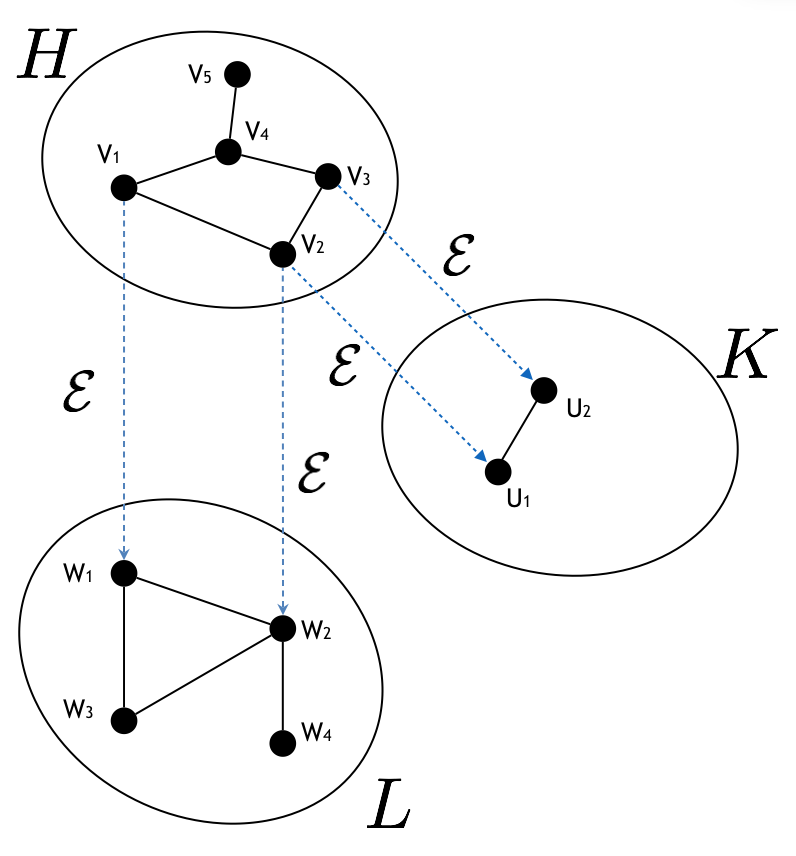}
    \caption{{\bf $\mathcal{E}$-reachability preorder}}
    \label{fig:reachability_ord}
  \end{minipage}
  \begin{minipage}[b]{0.4\textwidth}
	\centering
    \includegraphics[scale=0.3]{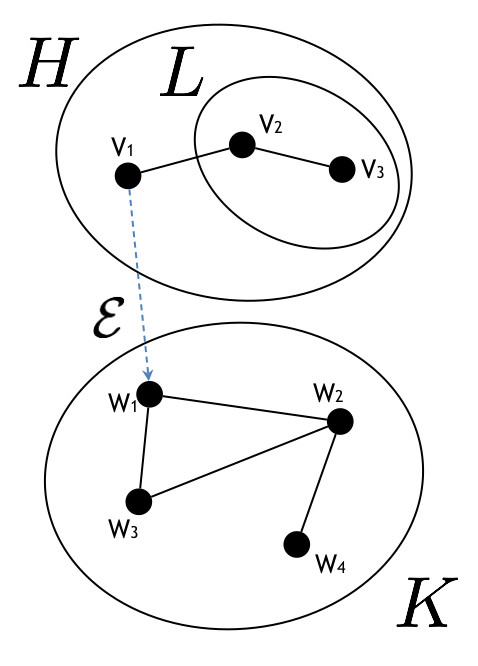}
    \caption{{\bf Subgraph order}}
    \label{fig:subgraph_ord}
  \end{minipage}
\end{figure} 

%\begin{exa}[Satisfaction on an ordered scaffold]
%
%
%\end{exa}

\subsection{Labelled tableaux}  \label{subsec:tableaux}
It's clear that obtaining completeness for the class of layered graph models with respect to the Hilbert system $\mathrm{ILGL}_{\mathrm{H}}$ will not be a straightforward task. For one, there does not appear to be any sensible way to 
augment equivalence classes of ILGL formulae with graph theoretic structure in such a way that $\mathrm{ILGL}_{\mathrm{H}}$-provability is reflected. A key reason for this is that the multiplicativity of $\blacktriangleright$ is not directly represented in $\mathrm{ILGL}_{\mathrm{H}}$.

We take an alternative route: we define a labelled tableaux system for ILGL, utilising a method first showcased on tableaux systems for BBI \cite{Lar14a} and DMBI \cite{Courtault-Galmiche15}, strongly influenced by previous work 
for BI \cite{GMP05}. By carefully restricting the labelling algebra of the proof system, we are able to uniformly transform the labels of a derivation into a layered graph model under the right conditions. This is similar in spirit to the labelled tableaux system for Separation Logic \cite{GM10}, which has a countermodel extraction procedure that outputs heap models, the intended models of the logic. It will be shown in Section~\ref{sec:decidability} that this captures a notion of provability equivalent to $\mathrm{ILGL}_{\mathrm{H}}$. 

\begin{defi}[Graph labels] \label{def:labels}
Let $\Sigma = \lbrace c_i \mid i \in \mathbb{N} \rbrace$ be a countable set of \emph{atomic labels}. 
We define the set  $\mathbb{L} = \lbrace x \in \Sigma^{\star} \mid 0 < \lvert x \rvert \leq 2 \rbrace \setminus 
\{ c_ic_i \mid c_i \in \Sigma \}$ to be the set of \emph{graph labels}. A \emph{sub-label} $y$ of a label $x$ 
is a non-empty sub-word of $x$, and we denote the set of sub-labels of $x$ by $\subLabel{x}$. \qed
\end{defi}

The graph labels are a syntactic representation of the subgraphs of a model, with labels of length $2$
representing a graph that can be decomposed into two layers. We exclude the possibility $c_ic_i$ as
layering is anti-reflexive. In much the same way we give a syntactic representation of preorder.

\begin{defi}[Constraints] \label{def:constraints} 
A \emph{constraint} is an expression of the form $x \preccurlyeq y$,
where $x$ and $y$ are graph labels. \qed
\end{defi}

Let $\mathcal{C}$ be a set of constraints. The \emph{domain} of $\mathcal{C}$, $\domain{\mathcal{C}}$, is the set of all 
sub-labels appearing in $\mathcal{C}$. In particular,  
$\domain{\mathcal{C}} = \bigcup_{x \preccurlyeq y \in \mathcal{C}} (\subLabel{x} \cup \subLabel{y})$
The \emph{alphabet} of $\mathcal{C}$ is the set of atomic labels appearing in $\mathcal{C}$. In particular,
we have $\alphabet{\mathcal{C}} = \Sigma \cap   \domain{\mathcal{C}}$. 

\begin{figure}[ht]
\hrule
\vspace{2mm}
\begin{center}
{\small 
\begin{tabular}{c} 
     \AxiomC{$x \preccurlyeq y$}
    \RLabel{$\langle \mathrm{R}_1 \rangle$}
    \UnaryInfC{$x \preccurlyeq x$} 
    \DisplayProof %%%%
    \qquad
       \AxiomC{$x \preccurlyeq y$}
    \RLabel{$\langle \mathrm{R_2} \rangle$}
    \UnaryInfC{$y \preccurlyeq y$} 
    \DisplayProof %%%%
    \qquad
    \AxiomC{$x \preccurlyeq yz$}
    \RLabel{$\langle \mathrm{R_3} \rangle$}
    \UnaryInfC{$y \preccurlyeq y$} 
    \DisplayProof %%%%
  \qquad  
    \AxiomC{$x \preccurlyeq yz$}
    \RLabel{$\langle \mathrm{R_4} \rangle$}
    \UnaryInfC{$z \preccurlyeq z$} 
    \DisplayProof %%%%
\\ \\
	    \AxiomC{$xy \preccurlyeq z$}
    \RLabel{$\langle \mathrm{R}_5 \rangle$}
    \UnaryInfC{$x \preccurlyeq x$} 
    \DisplayProof %%%%
  \qquad
    \AxiomC{$xy \preccurlyeq z$}
    \RLabel{$\langle \mathrm{R}_6 \rangle$}
    \UnaryInfC{$y \preccurlyeq y$} 
    \DisplayProof %%%%
    \qquad
    \AxiomC{$x \preccurlyeq y$}
    \AxiomC{$y \preccurlyeq z$}
    \RLabel{$\langle \mathrm{Tr} \rangle$}
    \BinaryInfC{$x \preccurlyeq z$} 
    \DisplayProof %%%%
\end{tabular}}
\end{center} \vspace{-2mm}
\caption{ {\bf Rules for closure of constraints}} \vspace{2mm}
\label{fig_contraints_closure}
\hrule 
\end{figure}

\begin{defi}[Closure of constraints] \label{def:contraints_closure}
Let $\mathcal{C}$ be a set of constraints. The closure of $\mathcal{C}$, 
denoted $\closure{\mathcal{C}}$, is the least relation closed under
the rules of Figure \ref{fig_contraints_closure} such that
$\mathcal{C} \subseteq \closure{\mathcal{C}}$. \hfill \qed
\end{defi}

This closure yields a preorder on $\domain{\mathcal{C}}$, with $\langle \mathrm{R}_1 \rangle - \langle \mathrm{R}_6 \rangle$ 
generating reflexivity and $\langle \mathrm{Tr }\rangle$ yielding transitivity. Crucially, taking the closure of the 
constraint set does not cause labels to proliferate and the generation of any particular constraint from an 
arbitrary constraint set $\mathcal{C}$ is fundamentally a finite process.

\begin{prop} \label{prop:constraints}
Let $\mathcal{C}$ be a set of constraints. (1) $x \in \domain{\closure{\mathcal{C}}}$ iff $x \preccurlyeq x \in
\closure{\mathcal{C}}$.  (2) $\domain{\mathcal{C}} = \domain{\closure{\mathcal{C}}}$  and
$\alphabet{\mathcal{C}} = \alphabet{\closure{\mathcal{C}}}$. \hfill \qed 
%\label{cor_EQUIV}
\end{prop}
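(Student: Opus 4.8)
The plan is to establish part (2) essentially first, via the observation that the closure rules of Figure~\ref{fig_contraints_closure} never manufacture a label that is not already a sub-label of something in $\mathcal{C}$: rules $\langle \mathrm{R}_1 \rangle$--$\langle \mathrm{R}_6 \rangle$ only assert reflexivity of sub-labels of their premisses, and $\langle \mathrm{Tr} \rangle$ merely recombines existing labels, so no rule ever concatenates and builds a longer word. The first ingredient I would record is that $\domain{\mathcal{C}}$ is closed under sub-labels: if $w \in \domain{\mathcal{C}}$ then $\subLabel{w} \subseteq \domain{\mathcal{C}}$, which is immediate since a sub-label of a sub-label of a constraint is again a sub-label of that constraint.

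For part (2), the inclusion $\domain{\mathcal{C}} \subseteq \domain{\closure{\mathcal{C}}}$ is trivial from $\mathcal{C} \subseteq \closure{\mathcal{C}}$. For the reverse inclusion I would argue by induction on the construction of $\closure{\mathcal{C}}$ that every constraint $u \preccurlyeq v \in \closure{\mathcal{C}}$ has $u, v \in \domain{\mathcal{C}}$. The base case $u \preccurlyeq v \in \mathcal{C}$ is immediate. For the inductive step I check each rule: in $\langle \mathrm{R}_1 \rangle$, $\langle \mathrm{R}_2 \rangle$ and $\langle \mathrm{Tr} \rangle$ the components of the conclusion already occur as components of a premiss, hence lie in $\domain{\mathcal{C}}$ by hypothesis; in $\langle \mathrm{R}_3 \rangle$--$\langle \mathrm{R}_6 \rangle$ the single component of the conclusion is a sub-label of a premiss component that lies in $\domain{\mathcal{C}}$, so sub-label closure of $\domain{\mathcal{C}}$ places it there too. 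Since every label occurring in $\closure{\mathcal{C}}$ is then a sub-label of some element of $\domain{\mathcal{C}}$, sub-label closure gives $\domain{\closure{\mathcal{C}}} \subseteq \domain{\mathcal{C}}$, establishing the first equality. The alphabet equality follows at once, since $\alphabet{\mathcal{C}} = \Sigma \cap \domain{\mathcal{C}} = \Sigma \cap \domain{\closure{\mathcal{C}}} = \alphabet{\closure{\mathcal{C}}}$.

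For part (1), the right-to-left direction is immediate: if $x \preccurlyeq x \in \closure{\mathcal{C}}$ then $x \in \subLabel{x}$ witnesses $x \in \domain{\closure{\mathcal{C}}}$. For left-to-right, suppose $x \in \domain{\closure{\mathcal{C}}}$, so $x \in \subLabel{u} \cup \subLabel{v}$ for some $u \preccurlyeq v \in \closure{\mathcal{C}}$; I would then run a case analysis on the position $x$ occupies, applying exactly the reflexivity rule tailored to it. If $x = u$ apply $\langle \mathrm{R}_1 \rangle$; if $x = v$ apply $\langle \mathrm{R}_2 \rangle$; if $x$ is the left (resp.\ right) letter of a length-two $u$ apply $\langle \mathrm{R}_5 \rangle$ (resp.\ $\langle \mathrm{R}_6 \rangle$); and if $x$ is the left (resp.\ right) letter of a length-two $v$ apply $\langle \mathrm{R}_3 \rangle$ (resp.\ $\langle \mathrm{R}_4 \rangle$). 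In each case the rule yields $x \preccurlyeq x \in \closure{\mathcal{C}}$, and these cases exhaust the possibilities since graph labels have length at most two.

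The argument is routine and the only real work is the bookkeeping; the substantive point worth isolating as the main obstacle is the induction of part (2)---specifically, verifying that the six reflexivity rules together with transitivity are closed on the set of pre-existing sub-labels, so that the closure operation neither lengthens labels nor introduces fresh atomic symbols. This is precisely the design feature the surrounding text flags as crucial, namely that taking the closure of the constraint set does not cause labels to proliferate.
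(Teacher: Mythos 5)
Your proof is correct and takes the only natural route, which is the one the paper presupposes: the proposition is stated there without an explicit proof (flagged as routine), and your rule induction showing that $\langle \mathrm{R}_1 \rangle$--$\langle \mathrm{R}_6 \rangle$ and $\langle \mathrm{Tr} \rangle$ never introduce labels outside the sub-label-closed set $\domain{\mathcal{C}}$, together with the case analysis matching each position of $x$ in a constraint to the appropriate reflexivity rule for part (1), is exactly the bookkeeping behind the paper's remark that taking the closure does not cause labels to proliferate. No gaps.
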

%\begin{proof}
%Immediate by inspection of the rules for closure of constraints.
%\end{proof}
\begin{lem}[Compactness] \label{lem_compactness}
Let $\mathcal{C}$ be a (possibly countably infinite) set of constraints. If $x \preccurlyeq y \in \closure{\mathcal{C}}$, 
then there is a finite set of constraints $\mathcal{C}_f \subseteq \mathcal{C}$ such that 
$x \preccurlyeq y \in \closure{\mathcal{C}_f}$. \hfill \qed
\end{lem}

\begin{defi}[Labelled Formula / CSS] \label{def:labelled_formula_AND_CSS}
A \emph{labelled formula} is a triple $(\mathbb{S}, \varphi, x) \in \{\mathbb{T}, \mathbb{F}\} \times \mathrm{Form} \times \mathbb{L}$,  
written $\labelledFormula{S}{\varphi}{x}$. A \emph{constrained set of statements} (CSS) is a pair $\CSS{\mathcal{F}}{\mathcal{C}}$, where $\mathcal{F}$ is a set of labelled formulae and  $\mathcal{C}$ is
a set of constraints, satisfying the following properties: for all $x \in \mathbb{L}$ and distinct 
$c_i, c_j, c_k \in \Sigma$, 
\begin{enumerate}
\item $(\text{Ref })$ if $\labelledFormula{S}{\varphi}{x} \in
\mathcal{F}$, then $x \preccurlyeq x \in \closure{\mathcal{C}}$,
\item $(\text{Contra})$ if $c_ic_j \in \domain{\mathcal{C}}$, then $c_jc_i \not\in \domain{\mathcal{C}}$, and
\item  $(\text{Freshness})$ if $c_ic_j \in \domain{\mathcal{C}}$, then 
$c_ic_k, c_kc_i, c_jc_k, c_kc_j \not \in \domain{\mathcal{C}}$.  
\end{enumerate}
A CSS $\CSS{\mathcal{F}}{\mathcal{C}}$ is \emph{finite} if $\mathcal{F}$ and $\mathcal{C}$ are finite. 
The relation $\CSSinclusion$ is defined on CSSs by 
$\CSS{\mathcal{F}}{\mathcal{C}} \CSSinclusion
  \CSS{\mathcal{F}'}{\mathcal{C}'} \text{ iff } 
  \mathcal{F} \subseteq \mathcal{F}' \text{ and }  \mathcal{C}
  \subseteq \mathcal{C}'$. We denote by $\CSS{\mathcal{F}_f}{\mathcal{C}_f} \CSSfiniteInclusion
\CSS{\mathcal{F}}{\mathcal{C}}$ when $\CSS{\mathcal{F}_f}{\mathcal{C}_f} \CSSinclusion
\CSS{\mathcal{F}}{\mathcal{C}}$ holds and $\CSS{\mathcal{F}_f}{\mathcal{C}_f}$ is finite. \hfill \qed
\end{defi}

The CSS properties ensure models can be built from the labels: (Ref) ensures we have enough data for the 
closure rules to generate a preorder, (Contra) ensures the contra-commutativity of graph layering is respected, and 
(Freshness) ensures the layering structure of the models we construct is exactly that specified by the labels and 
constraints in the CSS. As with constraint closure, CSSs have a finite character.

\begin{prop} \label{prop_css_finite_css}
For any CSS $\CSS{\mathcal{F}_f}{\mathcal{C}}$ in which
$\mathcal{F}_f$ is finite, there exists $\mathcal{C}_f \subseteq
\mathcal{C}$ such that $\mathcal{C}_f$ is finite and
$\CSS{\mathcal{F}_f}{\mathcal{C}_f}$ is a CSS. \hfill \qed
\end{prop}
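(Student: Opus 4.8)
The plan is to exploit an asymmetry among the three CSS conditions: only (Ref) is a positive existence requirement on the closure $\closure{\mathcal{C}}$, whereas (Contra) and (Freshness) are purely negative conditions on the domain $\domain{\mathcal{C}}$. Since passing to a subset of the constraints can only shrink the domain, the latter two conditions are automatically inherited by any $\CSS{\mathcal{F}_f}{\mathcal{C}_f}$ with $\mathcal{C}_f \subseteq \mathcal{C}$; so the entire task reduces to securing (Ref) while keeping the chosen constraint set finite.

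First I would collect the labels occurring in $\mathcal{F}_f$; since $\mathcal{F}_f$ is finite, there are only finitely many of them. For each labelled formula $\labelledFormula{S}{\varphi}{x} \in \mathcal{F}_f$, condition (Ref) for the original CSS gives $x \preccurlyeq x \in \closure{\mathcal{C}}$. Applying the Compactness Lemma (Lemma~\ref{lem_compactness}) to each such reflexivity constraint yields a finite $\mathcal{C}_x \subseteq \mathcal{C}$ with $x \preccurlyeq x \in \closure{\mathcal{C}_x}$. I then set $\mathcal{C}_f = \bigcup_x \mathcal{C}_x$, the union ranging over the finitely many labels $x$ of formulae in $\mathcal{F}_f$. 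As a finite union of finite sets, $\mathcal{C}_f$ is finite, and $\mathcal{C}_f \subseteq \mathcal{C}$ by construction.

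It remains to verify that $\CSS{\mathcal{F}_f}{\mathcal{C}_f}$ is a CSS. For (Ref): since $\mathcal{C}_x \subseteq \mathcal{C}_f$ and closure is monotone (as $\closure{\mathcal{C}_f}$ is a relation closed under the rules of Figure~\ref{fig_contraints_closure} containing $\mathcal{C}_x$, while $\closure{\mathcal{C}_x}$ is the least such), we get $\closure{\mathcal{C}_x} \subseteq \closure{\mathcal{C}_f}$, hence $x \preccurlyeq x \in \closure{\mathcal{C}_f}$ for every label $x$ of a formula in $\mathcal{F}_f$. For (Contra) and (Freshness): from $\mathcal{C}_f \subseteq \mathcal{C}$ the definition of the domain as a union over constraints gives $\domain{\mathcal{C}_f} \subseteq \domain{\mathcal{C}}$ immediately. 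Thus any hypothesis of the form $c_ic_j \in \domain{\mathcal{C}_f}$ already holds in $\domain{\mathcal{C}}$, where the corresponding forbidden labels are excluded by the two conditions for the original CSS; a fortiori they are excluded from the smaller domain $\domain{\mathcal{C}_f}$, so both conditions are preserved.

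The only genuinely nontrivial step is the appeal to Compactness, which is exactly what guarantees that each reflexivity fact $x \preccurlyeq x$ already follows from finitely many of the (possibly infinitely many) constraints in $\mathcal{C}$. Everything else is bookkeeping: the finiteness of $\mathcal{F}_f$ bounds the number of reflexivity obligations to be discharged, while the monotonicity of closure and of the domain under constraint inclusion handles the preservation of the three CSS properties in the two directions required.
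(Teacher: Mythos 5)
Your proof is correct and is exactly the argument the paper intends; the paper states this proposition without proof, treating it as routine once the Compactness Lemma is available. Your decomposition — compactness discharges the finitely many (Ref) obligations arising from the labels in $\mathcal{F}_f$, while monotonicity of $\domain{\cdot}$ and of $\closure{\cdot}$ under constraint-set inclusion gives (Contra), (Freshness), and the closure containments for free — is the standard route, so there is nothing to add.
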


\begin{figure}
\hspace{-.2cm} 
{\small 
\setlength\tabcolsep{1pt}
\setlength\extrarowheight{10pt}
\begin{tabular}{cccc}
\hline
$\langle \mathbb{T} \wedge \rangle$ 
&
      \AxiomC{$\labelledFormula{\mathbb{T}}{\varphi \wedge \psi}{x}
        \in \mathcal{F}$} 
      %\RLabel{$\langle \mathbb{T} \wedge \rangle$} 
      \UnaryInfC{$\CSS{\{\labelledFormula{\mathbb{T}}{\varphi}{x},
          \labelledFormula{\mathbb{T}}{\psi}{x} \}}{\emptyset}$}  
      \DisplayProof
&
$\langle \mathbb{F} \wedge \rangle$
&
      \AxiomC{$\labelledFormula{\mathbb{F}}{\varphi \wedge \psi}{x}
        \in \mathcal{F}$} 
      %\RLabel{$\langle \mathbb{F} \wedge \rangle$} 
      \UnaryInfC{$\CSS{\{\labelledFormula{\mathbb{F}}{\varphi}{x}
          \}}{\emptyset} \ \mid \  
                  \CSS{\{ \labelledFormula{\mathbb{F}}{\psi}{x}
                    \}}{\emptyset}$}  
      \DisplayProof %%%%
      \\
$\langle \mathbb{T}\vee \rangle$
&
      \AxiomC{$\labelledFormula{\mathbb{T}}{\varphi \vee \psi}{x}
        \in \mathcal{F}$} 
%      \RLabel{$\langle \mathbb{T} \vee \rangle$} 
      \UnaryInfC{$\CSS{\{\labelledFormula{\mathbb{T}}{\varphi}{x}
          \}}{\emptyset} \ \mid \  
                  \CSS{\{\labelledFormula{\mathbb{T}}{\psi}{x}
                    \}}{\emptyset}$}  
      \DisplayProof
&
$\langle \mathbb{F}\vee \rangle$
&
      \AxiomC{$\labelledFormula{\mathbb{F}}{\varphi \vee \psi}{x}
        \in \mathcal{F}$} 
%      \RLabel{$\langle \mathbb{F} \vee \rangle$} 
      \UnaryInfC{$\CSS{\{ \labelledFormula{\mathbb{F}}{\varphi}{x},
          \labelledFormula{\mathbb{F}}{\psi}{x} \}}{\emptyset}$}  
      \DisplayProof
	\\
$\langle \mathbb{T} \rightarrow \rangle$
&
      \AxiomC{$\labelledFormula{\mathbb{T}}{\varphi \rightarrow \psi}{x}
        \in \mathcal{F} \text{ and } x \preccurlyeq y \in \closure{\mathcal{C}}$} 
%      \RLabel{$\langle \mathbb{T} \rightarrow \rangle$} 
      \UnaryInfC{$\CSS{\{ \labelledFormula{\mathbb{F}}{\varphi}{y}
          \}}{\emptyset} \ \mid \  
                  \CSS{\{\labelledFormula{\mathbb{T}}{\psi}{y}
                    \}}{\emptyset}$}  
      \DisplayProof 
&
$\langle \mathbb{F} \rightarrow \rangle$
&
      \AxiomC{$\labelledFormula{\mathbb{F}}{\varphi \rightarrow \psi}{x}
        \in \mathcal{F}$} 
%      \RLabel{$\langle \mathbb{F} \rightarrow \rangle$} 
      \UnaryInfC{$\CSS{\{\labelledFormula{\mathbb{T}}{\varphi}{c_i},
          \labelledFormula{\mathbb{F}}{\psi}{c_i} \}}{\{x \preccurlyeq c_i \}}$}  
      \DisplayProof %%%%
      \\
$\langle \mathbb{T}\blacktriangleright \rangle$
&
      \AxiomC{$\labelledFormula{\mathbb{T}}{\varphi \blacktriangleright \psi}{x}
        \in \mathcal{F}$} 
%      \RLabel{$\langle \mathbb{T} \blacktriangleright \rangle$} 
      \UnaryInfC{$\CSS{\{\labelledFormula{\mathbb{T}}{\varphi}{c_i},
          \labelledFormula{\mathbb{T}}{\psi}{c_j} \}}{\{ c_ic_j \preccurlyeq x \}}$}  
      \DisplayProof
&
$\langle \mathbb{F}\blacktriangleright \rangle$
&
      \AxiomC{$\labelledFormula{\mathbb{F}}{\varphi \blacktriangleright \psi}{x}
        \in \mathcal{F} \text{ and } yz \preccurlyeq x \in
        \closure{\mathcal{C}}$} 
%      \RLabel{$\langle \mathbb{F} \blacktriangleright \rangle$} 
      \UnaryInfC{$\CSS{\{\labelledFormula{\mathbb{F}}{\varphi}{y}
          \}}{\emptyset} \ \mid \ \CSS{\{
          \labelledFormula{\mathbb{F}}{\psi}{z} \}}{\emptyset}$}  
      \DisplayProof
	\\
$\langle \mathbb{T} \gimp \rangle$
&
      \AxiomC{$\labelledFormula{\mathbb{T}}{\varphi \gimp \psi}{x}
        \in \mathcal{F} \text{ and } x \preccurlyeq y, yz \preccurlyeq yz \! \in\!
        \closure{\mathcal{C}}$} 
%      \RLabel{$\langle \mathbb{T} \gimp \rangle$} 
      \UnaryInfC{$\CSS{\{\labelledFormula{\mathbb{F}}{\varphi}{z}
          \}}{\emptyset} \ \mid \ \CSS{\{
          \labelledFormula{\mathbb{T}}{\psi}{yz} \}}{\emptyset}$}  
      \DisplayProof
&
$\langle \mathbb{F} \gimp \rangle$
&
      \AxiomC{$\labelledFormula{\mathbb{F}}{\varphi \gimp \psi}{x}
        \in \mathcal{F}$} 
%      \RLabel{$\langle \mathbb{F} \gimp \rangle$} 
      \UnaryInfC{$\CSS{\{ \labelledFormula{\mathbb{T}}{\varphi}{c_j},
          \labelledFormula{\mathbb{F}}{\psi}{c_ic_j} \}}{\{ x \preccurlyeq c_i, c_ic_j \preccurlyeq c_ic_j \}}$}  
      \DisplayProof %%%%
      \\
$\langle \mathbb{T} \limp \rangle$
     & 
       \AxiomC{$\labelledFormula{\mathbb{T}}{\varphi \limp \psi}{x}
        \in \mathcal{F} \text{ and } x \preccurlyeq y, zy \preccurlyeq zy \!\in \!
        \closure{\mathcal{C}}$} 
%      \RLabel{$\langle \mathbb{T} \limp \rangle$} 
      \UnaryInfC{$\CSS{\{\labelledFormula{\mathbb{F}}{\varphi}{z}
          \}}{\emptyset} \ \mid \ \CSS{\{
          \labelledFormula{\mathbb{T}}{\psi}{zy} \}}{\emptyset}$}  
      \DisplayProof %%%%
&
$\langle \mathbb{F} \limp \rangle$
&
      \AxiomC{$\labelledFormula{\mathbb{F}}{\varphi \limp \psi}{x}
        \in \mathcal{F}$} 
%      \RLabel{$\langle \mathbb{F} \limp \rangle$} 
      \UnaryInfC{$\CSS{\{ \labelledFormula{\mathbb{T}}{\varphi}{c_j},
          \labelledFormula{\mathbb{F}}{\psi}{c_jc_i} \}}{\{ x \preccurlyeq c_i, c_jc_i \preccurlyeq c_jc_i \}}$}  
      \DisplayProof %%%%
\\
\multicolumn{4}{c}{with $c_i$ and $c_j$ being fresh atomic labels.} \\
\end{tabular}}
\caption{{\bf Tableaux rules for ILGL}}
\label{fig_tableaux_ILGL_rules}
\vspace{2mm}
\hrule
\end{figure}

Figure \ref{fig_tableaux_ILGL_rules} presents the rules of the tableaux system for ILGL. 
That `$c_i$ and $c_j$ are fresh atomic labels' means $c_i \not = c_j \in \Sigma \setminus
\alphabet{\mathcal{C}}$. This means it is impossible to introduce the word $c_ic_i \not\in \mathbb{L}$ as a label. Note also that bunching is explicit in the labels, with concatenation of labels occurring in the rules for the multiplicative connectives $\blacktriangleright$, $\gimp$, $\limp$. This is analogous (and in fact equivalent) to the concatenation of contexts via the multiplicative conjunction's context former in a sequent calculus. This is in stark contrast to the Hilbert system $\mathrm{ILGL}_{\mathrm{H}}$ which outsources this structure to the metatheory.  

\begin{defi}[Tableaux] \label{def:tableaux} Let $\CSS{\mathcal{F}_0}{\mathcal{C}_0}$ be a 
finite CSS. A \emph{tableau} for this CSS is a list of CSS, called \emph{branches}, built inductively 
according to the following rules, where $\concatList$ denotes the concatenation of lists: 
\begin{enumerate}
\item The one branch list $[\CSS{\mathcal{F}_0}{\mathcal{C}_0}]$
         is a tableau for $\CSS{\mathcal{F}_0}{\mathcal{C}_0}$; 
\item If the list $\mathcal{T}_m \concatList
  [\CSS{\mathcal{F}}{\mathcal{C}}] \concatList \mathcal{T}_n$ is a
  tableau for $\CSS{\mathcal{F}_0}{\mathcal{C}_0}$ and 
\begin{center}
  \AxiomC{cond$\CSS{\mathcal{F}}{\mathcal{C}}$}
  \UnaryInfC{$\CSS{\mathcal{F}_1}{\mathcal{C}_1}$ $\mid$ \ldots $\mid$
    $\CSS{\mathcal{F}_k}{\mathcal{C}_k}$}  
  \DisplayProof %%%%
\end{center}
is an instance of a rule of Figure \ref{fig_tableaux_ILGL_rules} for
which cond$\CSS{\mathcal{F}}{\mathcal{C}}$ is fulfilled, then the list
$\mathcal{T}_m \concatList [\CSS{\mathcal{F} \cup
    \mathcal{F}_1}{\mathcal{C} \cup \mathcal{C}_1}; \ldots; 
  \CSS{\mathcal{F} \cup \mathcal{F}_k}{\mathcal{C} \cup
    \mathcal{C}_k}] \concatList \mathcal{T}_n$  
is a tableau for $\CSS{\mathcal{F}_0}{\mathcal{C}_0}$.
\end{enumerate}
A \emph{tableau} for the formula $\varphi$ is a \emph{tableau} for 
$\CSS{\{\labelledFormula{\mathbb{F}}{\varphi}{c_0}\}}{\{ c_0 \preccurlyeq c_0 \}}$. \hfill \qed
\end{defi}

It is a simple but tedious exercise to show that the rules of Figure
\ref{fig_tableaux_ILGL_rules} preserve the CSS properties of Definition
\ref{def:labelled_formula_AND_CSS}. 

We now give the notion of proof for our labelled tableaux.

\begin{defi}[Closed tableau/proof] \label{def:closed_tableau} 
A CSS $\CSS{\mathcal{F}}{\mathcal{C}}$ is \emph{closed} if one of the
following conditions holds: (1) $\labelledFormula{\mathbb{T}}{\varphi}{x} \in \mathcal{F}$,
$\labelledFormula{\mathbb{F}}{\varphi}{y} \in \mathcal{F}$ and $x \preccurlyeq y \in \closure{\mathcal{C}}$;
(2) $\labelledFormula{\mathbb{F}}{\top}{x} \in \mathcal{F}$; and 
(3) $\labelledFormula{\mathbb{T}}{\bot}{x} \in \mathcal{F}$. 
A CSS is \emph{open} iff it is not closed. A tableau is closed iff all
its branches are closed. A \emph{proof} for a formula $\varphi$ is a closed tableau for $\varphi$. \hfill \qed
\end{defi}

Figure \ref{fig:tableaux} shows a tableau proof of $q \limp (q \blacktriangleright (p \rightarrow (p \lor q)))$ in tree form. Each branch gives the set $\mathcal{F}$ of labelled formulae of a CSS, with the rectangular boxes giving its set of constraints $\mathcal{C}$. $\surd_i$ denotes the application of a rule, with the circled side conditions showing which condition on the closure of the constraint set was used to allow it. Finally, the cross $\times$ marks the closure of a branch. 

The first rule application at $\surd_1$ is $\langle \mathbb{F}\limp \rangle$. This introduces fresh labels $c_1$ and $c_2$ with the constraints $c_0 \preccurlyeq c_1$ and $c_2c_1 \preccurlyeq c_2c_1$. Since we have the constraint $c_2c_1 \preccurlyeq c_2c_1$ we are able to apply $\langle \mathbb{F}\blacktriangleright \rangle$ at $\surd_2$. The tableau then branches. On the left hand branch we have both $\mathbb{T}q:c_2$ and $\mathbb{F}q:c_2$, thus it is closed; on the right we apply $\langle \mathbb{F}\rightarrow \rangle$ at $\surd_3$. This introduces a fresh label $c_3$ with the constraint $c_1 \preccurlyeq c_3$ and introduces $\mathbb{T}p:c_3$ to the branch. We then apply $\langle \mathbb{F}\lor \rangle$ at $\surd_4$, introducing $\mathbb{F}p:c_3$, thus closing the branch. As both branches are closed, the tableau is closed and the formula is proved.

\begin{figure}
\centering
\includegraphics[scale=0.5]{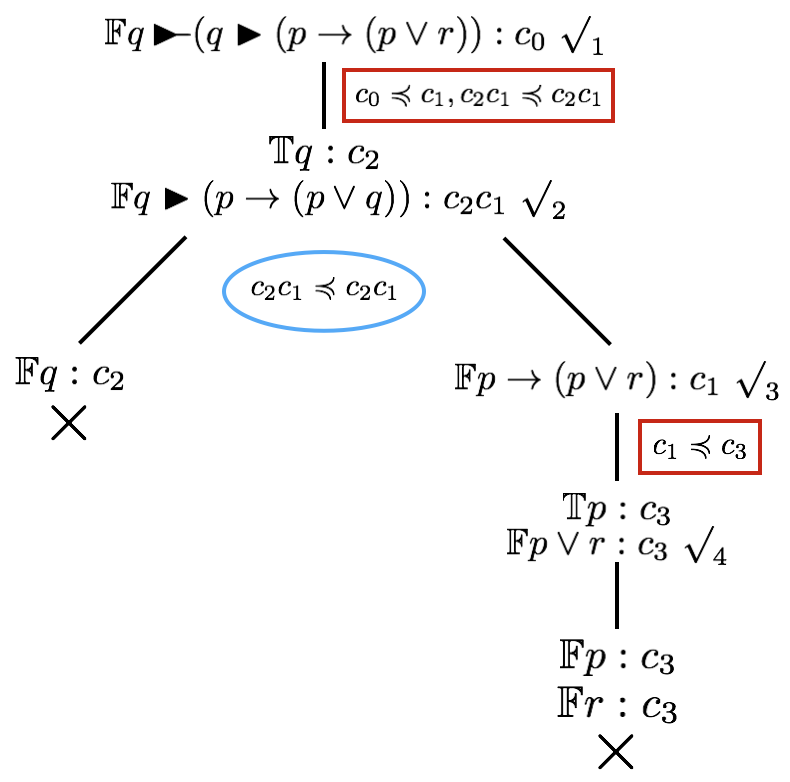}
\caption{{\bf A tableau proof of $q \limp (q \blacktriangleright (p \rightarrow (p \lor q)))$ }}
\label{fig:tableaux}
\end{figure}
%Example of a derivation (figure?)
%\begin{exa}[A labelled tableau proof]
%
%\end{exa}

CSSs are related back to the graph semantics via the notion of realization.

\begin{defi}[Realization] \label{def:realization}
Let $\CSS{\mathcal{F}}{\mathcal{C}}$ be a CSS. A \emph{realization} of
$\CSS{\mathcal{F}}{\mathcal{C}}$ is a triple $\mathfrak{R} = (\mathcal{X}, \valuation,
\realization{.})$ where $\model = (\mathcal{X}, \valuation)$ is a layered graph model and $\realization{.} :
\domain{\mathcal{C}} \rightarrow X$ is such that 
(1) for all $x \in \domain{\mathcal{C}}$, if $x = c_ic_j$, then $\gcompE{\realization{c_i}}{\realization{c_j}} \downarrow$ and $\realization{x} = \gcompE{\realization{c_i}}{\realization{c_j}}$), 
(2) if $x \preccurlyeq y \in \mathcal{C}$, then $\realization{x} \preccurlyeq_{\model} \realization{y}$,   
(3) if $\labelledFormula{T}{\varphi}{x} \in \mathcal{F}$, then
	$\realization{x} \satisfaction{\model} \varphi$,
(4) if $\labelledFormula{F}{\varphi}{x} \in \mathcal{F}$, then
    $\realization{x} \not \models_{\model} \varphi$. \hfill \qed    
\end{defi}

We say that a CSS is \emph{realizable} if there exists a realization of it. We say that a 
tableau is \emph{realizable} if at least one of its branches is realizable. We can also show that the
relevant clauses of the definition extend to the closure of the constraint set automatically.

\begin{prop} \label{prop_realization_relations_closure_assertions}
Let $\CSS{\mathcal{F}}{\mathcal{C}}$ be a CSS and
$\mathfrak{R} = (\mathcal{X}, \valuation, \realization{.})$
a realization of it. Then: (1) for all $x \in \domain{\closure{\mathcal{C}}}$,
$\realization{x}$ is defined; (2) if $x \preccurlyeq y \in \closure{\mathcal{C}}$, then 
$\realization{x} \preccurlyeq \realization{y}$. \hfill \qed
\end{prop}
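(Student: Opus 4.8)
The plan is to handle the two parts separately, proving part~(1) first since part~(2) will rely on it, and then establishing part~(2) by a routine induction on the generation of the closure. Part~(1) requires no induction at all: by Proposition~\ref{prop:constraints}(2) we have $\domain{\mathcal{C}} = \domain{\closure{\mathcal{C}}}$, and since Definition~\ref{def:realization} stipulates that $\realization{.}$ is defined precisely on $\domain{\mathcal{C}}$, this equality immediately yields that $\realization{x}$ is defined for every $x \in \domain{\closure{\mathcal{C}}}$.

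For part~(2), the key observation is that, by Definition~\ref{def:contraints_closure}, $\closure{\mathcal{C}}$ is the \emph{least} relation containing $\mathcal{C}$ and closed under the rules of Figure~\ref{fig_contraints_closure}. This licenses a proof by induction on the generation of the constraint $x \preccurlyeq y \in \closure{\mathcal{C}}$. In the base case $x \preccurlyeq y \in \mathcal{C}$, so condition~(2) of Definition~\ref{def:realization} directly supplies $\realization{x} \preccurlyeq \realization{y}$. The inductive step then splits according to the last closure rule applied.

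The reflexivity rules $\langle \mathrm{R}_1 \rangle$--$\langle \mathrm{R}_6 \rangle$ each produce a constraint of the form $w \preccurlyeq w$, where $w$ is a sub-label of some label occurring in the premise and hence lies in $\domain{\closure{\mathcal{C}}}$; by part~(1) the subgraph $\realization{w}$ is therefore defined, and since $\preccurlyeq$ is a preorder on $X$ (the ordered scaffold carries a preorder), its reflexivity gives $\realization{w} \preccurlyeq \realization{w}$ at once. For the transitivity rule $\langle \mathrm{Tr} \rangle$, with premises $x \preccurlyeq y$ and $y \preccurlyeq z$ in $\closure{\mathcal{C}}$, the induction hypothesis yields $\realization{x} \preccurlyeq \realization{y}$ and $\realization{y} \preccurlyeq \realization{z}$, whence transitivity of $\preccurlyeq$ delivers $\realization{x} \preccurlyeq \realization{z}$.

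There is no substantial obstacle here: the whole argument rests on the fact that the interpreting relation $\preccurlyeq$ on $X$ is a preorder, so that the reflexivity- and transitivity-generating closure rules are validated by exactly these two properties of $\preccurlyeq$ on the model. The one point needing a little care is that the reflexive constraints introduced by $\langle \mathrm{R}_1 \rangle$--$\langle \mathrm{R}_6 \rangle$ mention sub-labels (for instance $y$ in the conclusion of $\langle \mathrm{R}_3 \rangle$ from premise $x \preccurlyeq yz$) whose realizations must be known to exist before reflexivity can be applied; this is precisely what part~(1), via Proposition~\ref{prop:constraints}(2), secures, so the two parts fit together without circularity provided part~(1) is established first.
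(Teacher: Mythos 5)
Your proof is correct. The paper offers no proof of this proposition at all---it is stated with a \qed{} as immediate---and your argument (part~(1) directly from Proposition~\ref{prop:constraints}(2) together with the fact that $\realization{.}$ is defined on $\domain{\mathcal{C}}$, part~(2) by induction on the derivation of $x \preccurlyeq y \in \closure{\mathcal{C}}$, with the rules $\langle \mathrm{R}_1 \rangle$--$\langle \mathrm{R}_6 \rangle$ validated by reflexivity of the scaffold's preorder and $\langle \mathrm{Tr} \rangle$ by its transitivity) is precisely the routine argument the paper leaves implicit.
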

%%Make preorder notation consistent

%Example of realization
%\begin{exa}[Realization of a CSS]
%
%\end{exa}

\section{Soundness \& Completeness}  \label{sec:metatheory}

In this section we establish the soundness and, via countermodel extraction, the completeness 
of ILGL's tableaux system with respect to layered graph semantics. The proof of soundness is straightforward (cf. \cite{Courtault-Galmiche15,fitting1972,GMP05,Lar14a}). We begin with two key lemmas about realizability 
and closure. Their proofs proceed by simple case analysis. 

\begin{lem} \label{lem_rules_preserve_realizability}
The tableaux rules for ILGL preserve realizability. \hfill \qed
\end{lem}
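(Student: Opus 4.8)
The plan is to reduce the statement to a single rule application and then proceed by case analysis on the rules of Figure~\ref{fig_tableaux_ILGL_rules}. Concretely, suppose the tableau $\mathcal{T} = \mathcal{T}_m \concatList [\CSS{\mathcal{F}}{\mathcal{C}}] \concatList \mathcal{T}_n$ is realizable and that an expanded tableau $\mathcal{T}'$ is obtained by applying some rule to $\CSS{\mathcal{F}}{\mathcal{C}}$. If a realizable branch lies in $\mathcal{T}_m$ or $\mathcal{T}_n$ it survives untouched, so the only case to treat is when $\CSS{\mathcal{F}}{\mathcal{C}}$ itself is the realizable branch, witnessed by a realization $\mathfrak{R} = (\mathcal{X}, \valuation, \realization{.})$. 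It then suffices to show that at least one of the children $\CSS{\mathcal{F} \cup \mathcal{F}_i}{\mathcal{C} \cup \mathcal{C}_i}$ is realizable.

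The cases split naturally into two groups. For the \emph{analysing} rules --- $\langle \mathbb{T}\wedge \rangle$, $\langle \mathbb{F}\wedge \rangle$, $\langle \mathbb{T}\vee \rangle$, $\langle \mathbb{F}\vee \rangle$, $\langle \mathbb{T}\rightarrow \rangle$, $\langle \mathbb{F}\blacktriangleright \rangle$, $\langle \mathbb{T}\gimp \rangle$ and $\langle \mathbb{T}\limp \rangle$ --- no fresh labels are introduced, so I keep the same realization $\mathfrak{R}$ and simply read off from the satisfaction clause which child branch remains consistent. For the additive rules this is immediate from the clauses for $\wedge$ and $\vee$. For $\langle \mathbb{T}\rightarrow \rangle$, $\langle \mathbb{F}\blacktriangleright \rangle$, $\langle \mathbb{T}\gimp \rangle$ and $\langle \mathbb{T}\limp \rangle$ the rule fires under a side condition of the form $x \preccurlyeq y$, $yz \preccurlyeq x$, or $yz \preccurlyeq yz$ in $\closure{\mathcal{C}}$; here I invoke Proposition~\ref{prop_realization_relations_closure_assertions} to transport that closed constraint to the model, obtaining $\realization{x} \preccurlyeq \realization{y}$ together with the definedness and value of the relevant composite $\realization{yz} = \gcompE{\realization{y}}{\realization{z}}$. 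The satisfaction clause for the principal connective then forces one disjunct, selecting the realizable child.

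For the \emph{synthesising} rules --- $\langle \mathbb{F}\rightarrow \rangle$, $\langle \mathbb{T}\blacktriangleright \rangle$, $\langle \mathbb{F}\gimp \rangle$ and $\langle \mathbb{F}\limp \rangle$ --- the single child introduces fresh atomic labels, so I must extend $\realization{.}$ to them. In each case the relevant satisfaction clause supplies exactly the witnessing subgraph(s): for $\langle \mathbb{T}\blacktriangleright \rangle$, $\realization{x} \models \varphi \blacktriangleright \psi$ yields $K_0, K_1$ with $\gcompE{K_0}{K_1} \downarrow$, $\gcompE{K_0}{K_1} \preccurlyeq \realization{x}$, $K_0 \models \varphi$ and $K_1 \models \psi$, and I set $\realization{c_i} = K_0$, $\realization{c_j} = K_1$ (so $\realization{c_ic_j} = \gcompE{K_0}{K_1}$, validating clause (1)); for $\langle \mathbb{F}\rightarrow \rangle$ the clause gives $K \succcurlyeq \realization{x}$ with $K \models \varphi$ and $K \not\models \psi$, and I set $\realization{c_i} = K$; and $\langle \mathbb{F}\gimp \rangle$, $\langle \mathbb{F}\limp \rangle$ are handled symmetrically using the witnessing pair of the corresponding clause. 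I then verify that the extended map is still a realization: freshness of $c_i, c_j$ guarantees the extension is well defined and leaves conditions (1)--(4) on the old domain undisturbed, while the newly added constraints (such as $c_ic_j \preccurlyeq x$ or $x \preccurlyeq c_i$) are satisfied precisely by the inequalities the clause provides.

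The main obstacle is bookkeeping rather than conceptual: ensuring that composite labels are treated consistently, i.e.\ that whenever a composed label $c_ic_j$ enters the domain I have defined $\realization{c_i}$ and $\realization{c_j}$ so that $\gcompE{\realization{c_i}}{\realization{c_j}} \downarrow$ and clause (1) of Definition~\ref{def:realization} holds, and that the non-commutativity and anti-reflexivity of $@_{\mathcal{E}}$ line up correctly with the left/right roles appearing in the clauses for $\blacktriangleright$, $\gimp$ and $\limp$. The CSS properties (Ref), (Contra), (Freshness) and the freshness side condition of the rules are exactly what make this consistent, as they preclude the introduction of ill-formed labels such as $c_ic_i$ and of conflicting composites, while Proposition~\ref{prop_realization_relations_closure_assertions} performs the remaining work of lifting the purely syntactic closure of $\mathcal{C}$ to genuine order and composition facts in the model.
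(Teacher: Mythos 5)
Your proposal is correct and takes essentially the same approach as the paper: the paper states Lemma~\ref{lem_rules_preserve_realizability} without detail, saying only that the proof ``proceeds by simple case analysis,'' and your argument---reducing to a single rule application on the realizable branch, splitting into analysing rules (same realization, with Proposition~\ref{prop_realization_relations_closure_assertions} transporting the closure side conditions into the model) and synthesising rules (extending $\realization{.}$ to the fresh labels via the witnesses supplied by the satisfaction clauses)---is exactly that case analysis, carried out correctly.
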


\begin{lem} \label{lem_closed_tableau_not_realizable}
Closed branches are not realizable. \hfill \qed
\end{lem}

\begin{thm}[Soundness] \label{thm:soundness}
If there exists a closed tableau for the formula $\varphi$, then $\varphi$ is valid in layered graph models. 
\end{thm}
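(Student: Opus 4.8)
The plan is to prove the contrapositive-flavoured statement by standard tableaux methodology: I want to show that a closed tableau for $\varphi$ guarantees validity of $\varphi$, i.e. $\top \vDash \varphi$. The proof hinges on the two lemmas just stated, Lemma~\ref{lem_rules_preserve_realizability} (the rules preserve realizability) and Lemma~\ref{lem_closed_tableau_not_realizable} (closed branches are not realizable).

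**Key steps.**

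First I would argue by contradiction: suppose there is a closed tableau for $\varphi$ but $\varphi$ is \emph{not} valid. By Definition~\ref{def:validity}, failure of validity means there is some layered graph model $\model = (\mathcal{X}, \valuation)$ and some subgraph $H \in X$ with $H \not\vDash_{\model} \varphi$. Recall that a tableau for $\varphi$ is by Definition~\ref{def:tableaux} a tableau for the initial CSS $\CSS{\{\labelledFormula{F}{\varphi}{c_0}\}}{\{c_0 \preccurlyeq c_0\}}$. The second step is to observe that this initial CSS is realizable: I define $\realization{c_0} = H$, which immediately gives a realization $\mathfrak{R} = (\mathcal{X}, \valuation, \realization{.})$ satisfying all four clauses of Definition~\ref{def:realization} --- clause (1) is vacuous since $c_0$ has length one, clause (2) holds since $\realization{c_0} \preccurlyeq_{\model} \realization{c_0}$ by reflexivity of the preorder, and clauses (3)--(4) reduce to the single requirement $H \not\vDash_{\model} \varphi$, which holds by assumption. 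Hence the initial one-branch tableau is realizable.

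Third, I would apply Lemma~\ref{lem_rules_preserve_realizability} inductively along the construction of the tableau as described in Definition~\ref{def:tableaux}: since each rule application preserves realizability, and the initial tableau is realizable, every tableau in the sequence built for $\CSS{\{\labelledFormula{F}{\varphi}{c_0}\}}{\{c_0 \preccurlyeq c_0\}}$ is realizable --- meaning at least one of its branches is realizable at every stage. In particular the final, closed tableau is realizable, so it has a realizable branch. But every branch of a closed tableau is closed by Definition~\ref{def:closed_tableau}, and by Lemma~\ref{lem_closed_tableau_not_realizable} closed branches are not realizable. This contradiction establishes that $\varphi$ must be valid.

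**Main obstacle.**

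Since the two supporting lemmas are assumed, the body of this theorem is essentially bookkeeping, and I do not expect a serious obstacle here; the genuine work lies in those lemmas' case analyses, which are deferred. The one point requiring care is the inductive step connecting the tableau-construction rules of Definition~\ref{def:tableaux} to Lemma~\ref{lem_rules_preserve_realizability}: I must confirm that ``realizable'' is the right invariant, defined at the level of \emph{the whole tableau} (some branch is realizable) rather than per branch, so that branching rules do not break the induction --- a branching rule splits a realizable branch into alternatives at least one of which remains realizable. Getting this quantifier structure right is the only subtlety worth flagging explicitly.
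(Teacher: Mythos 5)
Your proposal is correct and follows essentially the same route as the paper's proof: assume a closed tableau exists and $\varphi$ is invalid, realize the initial CSS by setting $\realization{c_0} = H$ for a witnessing subgraph $H$, propagate realizability through the tableau construction via Lemma~\ref{lem_rules_preserve_realizability}, and derive a contradiction with Lemma~\ref{lem_closed_tableau_not_realizable}. Your additional care in verifying the realization clauses for the initial CSS and in noting that realizability is a whole-tableau (some-branch) invariant is exactly the bookkeeping the paper leaves implicit.
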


\begin{proof}
Suppose that there exists a proof for $\varphi$. Then there is a closed
tableau $\mathcal{T}_\varphi$ for the CSS $\mathfrak{C} = \CSS{\{
  \labelledFormula{\mathbb{F}}{\varphi}{c_0} \}}{\{ c_0 \preccurlyeq c_0 \}}$. Now suppose that $\varphi$ is not valid. Then
there is a countermodel $\model = (\mathcal{X},
\valuation)$ and a subgraph $H \in X$ such
that $H \not \models_{\model} \varphi$. Define $\mathfrak{R} =
(\model, \mathcal{V}, \realization{.})$ with
$\realization{c_0} = H$. Note that $\mathfrak{R}$ is a realization
of $\mathfrak{C}$, hence by Lemma \ref{lem_rules_preserve_realizability},
$\mathcal{T}_\varphi$ is realizable. By Lemma
\ref{lem_closed_tableau_not_realizable}, $\mathcal{T}_\varphi$ cannot be
closed. But, this contradicts the fact that $\mathcal{T}_\varphi$ is a proof and
therefore a closed tableau. It follows that $\varphi$ is valid.
\end{proof}

We now proceed to establish the completeness of the labelled tableaux with respect to layered 
graph semantics. We begin with the notion of a Hintikka CSS, which will facilitate the construction 
of countermodels.

\begin{defi}[Hintikka CSS] \label{def:hintikka_css}
A CSS $\CSS{\mathcal{F}}{\mathcal{C}}$ is a \emph{Hintikka CSS} iff,
for any formulae  $\varphi, \psi \in \mathrm{Form}$ and any graph labels $x,y \in \mathbb{L}$, 
we have the following: 
\[ {\small 
\begin{array}{rl}
1. &  \mbox{\rm $ \labelledFormula{T}{\varphi}{x} \not \in \mathcal{F}$ or
             $\labelledFormula{F}{\varphi}{y} \not \in \mathcal{F}$ or
             $x \preccurlyeq y \not \in \closure{\mathcal{C}}$ \quad 
        2. $\labelledFormula{F}{\top}{x} \not \in \mathcal{F}$ \quad 
        3. $\labelledFormula{T}{\bot}{x} \not \in \mathcal{F}$} \\

4. & \mbox{\rm if $\labelledFormula{T}{\varphi \wedge \psi}{x} \in \mathcal{F}$,
        then $\labelledFormula{T}{\varphi}{x} \in \mathcal{F}$ and
        $\labelledFormula{T}{\psi}{x} \in \mathcal{F}$}  \\ 
 
5. & \mbox{\rm if $\labelledFormula{F}{\varphi \wedge \psi}{x} \in \mathcal{F}$,
        then $\labelledFormula{F}{\varphi}{x} \in \mathcal{F}$ or
        $\labelledFormula{F}{\psi}{x} \in \mathcal{F}$} \\

6. & \mbox{\rm if $\labelledFormula{T}{\varphi \vee \psi}{x} \in \mathcal{F}$,
        then $\labelledFormula{T}{\varphi}{x} \in \mathcal{F}$ or
        $\labelledFormula{T}{\psi}{x} \in \mathcal{F}$}  \\

%   \end{array} } \]
% \[  {\small     \begin{array}{rl}
        7. & \mbox{\rm if $\labelledFormula{F}{\varphi \vee \psi}{x} \in \mathcal{F}$,
        then $\labelledFormula{F}{\varphi}{x} \in \mathcal{F}$ and
        $\labelledFormula{F}{\psi}{x} \in \mathcal{F}$}  \\
 
8. & \mbox{\rm if $\labelledFormula{T}{\varphi \rightarrow \psi}{x} \in \mathcal{F}$,
        then, for all $y \in \mathbb{L}$, if $x\preccurlyeq y \in
        \closure{\mathcal{C}}$, then  $\labelledFormula{F}{\varphi}{y} \in \mathcal{F}$ or
        $\labelledFormula{T}{\psi}{y} \in \mathcal{F}$}   \\   
 
9. & \mbox{\rm if $\labelledFormula{F}{\varphi \rightarrow \psi}{x} \in
   	 \mathcal{F}$, then there exists $y \in \mathbb{L}$ such that 
   	 $x \preccurlyeq y \in \closure{\mathcal{C}}$} \\ 
	 & \mbox{\rm and
         $\labelledFormula{T}{\varphi}{y} \in \mathcal{F}$ and
         $\labelledFormula{F}{\psi}{y} \in \mathcal{F}$} \\

10. & \mbox{\rm if $\labelledFormula{T}{\varphi \blacktriangleright \psi}{x} \in \mathcal{F}$,
        then there are $c_i, c_j \in \Sigma$ such that 
        $c_ic_j \preccurlyeq x \in  \closure{\mathcal{C}}$ 
        and} \\ 
        &  \mbox{\rm $\labelledFormula{T}{\varphi}{c_i} \in \mathcal{F}$ and
        $\labelledFormula{T}{\psi}{c_j} \in \mathcal{F}$} \\

11. & \mbox{\rm if $\labelledFormula{F}{\varphi \blacktriangleright \psi}{x} \in \mathcal{F}$,
        then, for all $c_i, c_j \in \Sigma$, 
        if $c_ic_j \preccurlyeq x  \in \closure{\mathcal{C}}$, then} \\ 
       & \mbox{\rm $\labelledFormula{F}{\varphi}{c_i} \in \mathcal{F}$ or
        $\labelledFormula{F}{\psi}{c_j} \in \mathcal{F}$}  \\

12. & \mbox{\rm if $\labelledFormula{T}{\varphi \gimp \psi}{x} \in
    \mathcal{F}$, then, 
        for all $c_i, c_j \in \Sigma$, if 
        $x \preccurlyeq c_i \in \closure{\mathcal{C}}$ and
        $c_ic_j \in \domain{\closure{\mathcal{C}}}$, then} \\
        & \mbox{\rm $\labelledFormula{F}{\varphi}{c_j} \!\in\! \mathcal{F}$ or
        $\labelledFormula{T}{\psi}{c_ic_j} \!\in\! \mathcal{F}$}  \\
            
13. & \mbox{\rm if $\labelledFormula{F}{\varphi \gimp \psi}{x} \in
    \mathcal{F}$, then  
         there are $c_i, c_j \in \Sigma$ such that 
	$x \preccurlyeq c_i \in \closure{\mathcal{C}}$ and
        $c_ic_j \in \domain{\closure{\mathcal{C}}}$ and} \\  
        & \mbox{\rm $\labelledFormula{T}{\varphi}{c_j} \in \mathcal{F}$ and
        $\labelledFormula{F}{\psi}{c_ic_j} \in \mathcal{F}$}  \\
 
14. & \mbox{\rm if $\labelledFormula{T}{\varphi \limp \psi}{x} \in
    \mathcal{F}$, then, 
        for all $c_i, c_j \in \Sigma$, if 
        $x \preccurlyeq c_i \in \closure{\mathcal{C}}$ and
        $c_jc_i \in \domain{\closure{\mathcal{C}}}$, then} \\
        & \mbox{\rm $\labelledFormula{F}{\varphi}{c_j} \!\in\! \mathcal{F}$ or \/
        $\labelledFormula{T}{\psi}{c_jc_i} \!\in\! \mathcal{F}$}  \\
     
15. &  \mbox{\rm if $\labelledFormula{F}{\varphi \gimp \psi}{x} \in
    \mathcal{F}$, then  
         there are $c_i, c_j \in \Sigma$ such that 
	$x \preccurlyeq c_i \in \closure{\mathcal{C}}$ and
        $c_jc_i \in \domain{\closure{\mathcal{C}}}$ and } \\
        & \mbox{\rm $\labelledFormula{T}{\varphi}{c_j} \in \mathcal{F}$ and \/
        $\labelledFormula{F}{\psi}{c_jc_i} \in \mathcal{F}$.}     
   \end{array}}
 \] \qed
\end{defi}
%%Formating

We now give the definition of a function $\Omega$ that extracts a countermodel from a Hintikka CSS.   
A Hintikka CSS can thus be seen as the \emph{labelled} tableaux counterpart of Hintikka sets, which are 
maximally consistent sets satisfying a subformula property. 

\begin{defi}[Function $\Omega$]  \label{def:function_omega}
Let $\CSS{\mathcal{F}}{\mathcal{C}}$ be a Hintikka CSS. The function
$\Omega$ associates to $\CSS{\mathcal{F}}{\mathcal{C}}$ a tuple
$\Omega(\CSS{\mathcal{F}}{\mathcal{C}}) = (\mathcal{G}, \mathcal{E}, X, \preccurlyeq,
\valuation)$, such that 
\begin{enumerate}
\item $V(\mathcal{G}) = \mathcal{A}(\mathcal{C}$),
\item $E(\mathcal{G}) = \{ (c_i, c_j) \mid c_ic_j \in \domain{\mathcal{C}} \} = \mathcal{E}$, 
$X = \lbrace x^{\omegasubgraph} \mid x \in \domain{\mathcal{C}} \}$, where 
$V(c^{\omegasubgraph}_i) = \{ c_i \}$, $E(c^{\omegasubgraph}_i) = \emptyset$, 
$V((c_ic_j)^{\omegasubgraph}) = \{ c_ic_j \}$, and $E((c_ic_j)^{\omegasubgraph}) = \{ (c_i, c_j) \}$, 
\item $x^{\omegasubgraph} \preccurlyeq y^{\omegasubgraph}$ iff  $x \preccurlyeq y \in \closure{\mathcal{C}}$, 
and
\item $x^{\omegasubgraph} \in \valuation(p)$ iff there exists $y \in \domain{\mathcal{C}} $ such that
$y \preccurlyeq x \in \closure{\mathcal{C}}$ and $\labelledFormula{T}{p}{y} \in \mathcal{F}$. \qed
\end{enumerate}
\end{defi}

The next lemma shows that there is a precise correspondence between the structure that the Hintikka CSS 
properties impose on the labels and the layered structure specified by the construction of the model.

\begin{lem} \label{lem:omega_prop} Let $\CSS{\mathcal{F}}{\mathcal{C}}$ be a Hintikka CSS and 
$\Omega(\CSS{\mathcal{F}}{\mathcal{C}}) = (\mathcal{G}, \mathcal{E}, X, \preccurlyeq,\valuation)$.
\begin{enumerate}
\item If $c_i, c_j \in \alphabet{\mathcal{C}}$, then $c_ic_j \in \domain{\mathcal{C}}$ iff $\gcompE{c^{\omegasubgraph}_i}{c^{\omegasubgraph}_j} \downarrow$.
\item If $c_ic_j \in \domain{\mathcal{C}}$, then $(c_ic_j)^{\omegasubgraph} = \gcompE{c^{\omegasubgraph}_i}{c^{\omegasubgraph}_j}$. 
\item $\gcompE{x^{\omegasubgraph}}{y^{\omegasubgraph}}\downarrow$ iff there exist $c_i, c_j \in \alphabet{\mathcal{C}}$ s.t. $x=c_i$, $y=c_j$ and $c_ic_j \in \domain{\mathcal{C}}$.
\end{enumerate}
\end{lem}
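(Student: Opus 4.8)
The plan is to prove the three claims by carefully unwinding the definition of the composition operation $@_{\mathcal{E}}$ together with the explicit construction of the graphs $x^{\omegasubgraph}$ given by the function $\Omega$, and invoking the CSS properties (Ref), (Contra), (Freshness) that the Hintikka CSS inherits. The key observation driving everything is that under $\Omega$ the vertex set is $\alphabet{\mathcal{C}}$, the edge set is $\mathcal{E} = \{(c_i,c_j) \mid c_ic_j \in \domain{\mathcal{C}}\}$, and each singleton graph $c_i^{\omegasubgraph}$ is a single isolated vertex $\{c_i\}$ while each length-two graph $(c_ic_j)^{\omegasubgraph}$ is the two vertices $\{c_i, c_j\}$ with the single edge $(c_i,c_j)$.

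For claim (1), I would compute $\gcompE{c_i^{\omegasubgraph}}{c_j^{\omegasubgraph}}\downarrow$ directly against its three definedness conditions: disjointness of vertex sets, $c_i^{\omegasubgraph} \leadsto_{\mathcal{E}} c_j^{\omegasubgraph}$, and $c_j^{\omegasubgraph} \not\leadsto_{\mathcal{E}} c_i^{\omegasubgraph}$. For the forward direction, disjointness forces $c_i \neq c_j$ (so the word $c_ic_j$ is legal), and the reachability condition $c_i^{\omegasubgraph} \leadsto_{\mathcal{E}} c_j^{\omegasubgraph}$ unwinds to $(c_i,c_j) \in \mathcal{E}$, which by definition of $\mathcal{E}$ is exactly $c_ic_j \in \domain{\mathcal{C}}$. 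Conversely, assuming $c_ic_j \in \domain{\mathcal{C}}$ gives $(c_i,c_j)\in\mathcal{E}$ hence the reachability clause; here the crucial step is ruling out the reverse reachability $c_j^{\omegasubgraph}\not\leadsto_{\mathcal{E}} c_i^{\omegasubgraph}$, i.e. $(c_j,c_i)\notin\mathcal{E}$, i.e. $c_jc_i\notin\domain{\mathcal{C}}$, and this is precisely what the (Contra) property of the CSS guarantees. Disjointness of $\{c_i\}$ and $\{c_j\}$ follows since $c_i \neq c_j$ (a consequence of $c_ic_j \in \mathbb{L}$, which excludes $c_ic_i$).

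Claim (2) should then be a direct computation: assuming $c_ic_j \in \domain{\mathcal{C}}$, claim (1) gives $\gcompE{c_i^{\omegasubgraph}}{c_j^{\omegasubgraph}}\downarrow$, and I compare the two graphs vertex-wise and edge-wise using the formal definition of the composition output. The vertex set of the composite is $V(c_i^{\omegasubgraph}) \cup V(c_j^{\omegasubgraph}) = \{c_i\}\cup\{c_j\} = \{c_i,c_j\} = V((c_ic_j)^{\omegasubgraph})$, and the edge set is $\emptyset \cup \emptyset \cup \{(u,v) \mid u \in \{c_i\}, v\in\{c_j\}, (u,v)\in\mathcal{E}\} = \{(c_i,c_j)\} = E((c_ic_j)^{\omegasubgraph})$, the last equality using $(c_i,c_j)\in\mathcal{E}$ from $c_ic_j\in\domain{\mathcal{C}}$.

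For claim (3), the right-to-left direction is immediate from claim (1). The left-to-right direction is where I expect the main obstacle: I must show that whenever $\gcompE{x^{\omegasubgraph}}{y^{\omegasubgraph}}\downarrow$ for arbitrary labels $x,y \in \domain{\mathcal{C}}$, both $x$ and $y$ are necessarily atomic. The argument is that $x$ or $y$ cannot be a length-two label: if, say, $x = c_ic_j$, then $V(x^{\omegasubgraph}) = \{c_i,c_j\}$ already contains two vertices connected in a way that, combined with the reachability demanded of the composite and disjointness from $V(y^{\omegasubgraph})$, would force the existence of labels linking these vertices to the vertices of $y^{\omegasubgraph}$; this is exactly forbidden by the (Freshness) property, which isolates the atomic labels occurring in a length-two label from participating in any further layering. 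I would therefore do a case analysis on the lengths of $x$ and $y$, using disjointness to eliminate overlaps and (Freshness) to eliminate every case except $x = c_i$, $y = c_j$ atomic, and then close by applying claim (1). The careful bookkeeping in this case analysis — tracking which atomic labels (Freshness) rules out — is the delicate part of the whole lemma.
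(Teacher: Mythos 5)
Your proposal is correct and follows essentially the same route as the paper's own proof: claim (1) by unwinding the definedness conditions and invoking (Contra), claim (2) by direct computation from claim (1) and the definition of $\Omega$, and claim (3) by a case analysis on label lengths in which (Freshness), together with vertex-disjointness, rules out every case where $x$ or $y$ has length two. The only difference is that you spell out details the paper compresses into ``immediate,'' which is fine.
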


\begin{proof}
\begin{enumerate}
	\item Immediate from CSS property (Contra).
%		
%		Let $c_ic_j \in \domain{\mathcal{C}}$. Then by $CSS$ property (Contra) we have 
%		$c_jc_i \not\in \domain{\mathcal{C}}$. Hence by definition of $\Omega$
%		we have $(c_i, c_j) \in \mathcal{E}$ and $(c_j, c_i) \not\in \mathcal{E}$.
%		Thus $\gcompE{c_i}{c_j} \downarrow$ as required. The other direction
%		is trivial.
	\item Immediate from 1. and the definition of $\Omega$.
	\item The right-to-left direction is trivial, so assume 
		$\gcompE{x^{\omegasubgraph}}{y^{\omegasubgraph}}\downarrow$.
		There are three possible cases for $x$ and $y$ other than $x=c_i$ and $y=c_j$:
		we attend to one as the others are similar. Suppose $x=c_ic_j$ and $y=c_k$. Then
		$\gcompE{x^{\omegasubgraph}}{y^{\omegasubgraph}}\downarrow$ must hold 
		because of either $(c_i, c_k) \in \mathcal{E}$ or $(c_j, c_k) \in \mathcal{E}$. 
		That is, $c_ic_k \in \domain{\mathcal{C}}$ or $c_jc_k \in \domain{\mathcal{C}}$.
		In both cases the $CSS$ property (Freshness) is contradicted so neither can hold.
		It follows that only the case $x=c_i$ and $y=c_j$ is non-contradictory, and so by 1.
		$c_ic_j \in \domain{\mathcal{C}}$. \qedhere
\end{enumerate}
\end{proof}

\begin{lem} \label{lem_omega_extracts_model}
Let $\CSS{\mathcal{F}}{\mathcal{C}}$ be a Hintikka CSS. $\Omega(\CSS{\mathcal{F}}{\mathcal{C}})$ 
is a layered graph model. 
\end{lem}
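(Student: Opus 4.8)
The plan is to verify each of the four defining conditions of a layered graph model (Definition~\ref{def:layered-graph-model}) for the tuple $\Omega(\CSS{\mathcal{F}}{\mathcal{C}}) = (\mathcal{G}, \mathcal{E}, X, \preccurlyeq, \valuation)$: namely that $\mathcal{G}$ is a graph, that $\mathcal{E} \subseteq E(\mathcal{G})$, that $X$ is an admissible subgraph set, that $\preccurlyeq$ is a preorder on $X$, and finally that $\valuation$ is a persistent valuation. The first two conditions are immediate from the definition of $\Omega$: $V(\mathcal{G})$ and $E(\mathcal{G})$ are sets of atomic labels and pairs thereof, and $\mathcal{E}$ is defined to equal $E(\mathcal{G})$, so the inclusion is trivial. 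I would dispatch these in a sentence.

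Next I would check that $X$ is an admissible subgraph set, which is the condition requiring that whenever $\gcompE{H}{K}\!\downarrow$, we have $H, K \in X$ iff $\gcompE{H}{K} \in X$. This is exactly where Lemma~\ref{lem:omega_prop} does the work: parts 1 and 3 characterise when the composition $\gcompE{x^{\omegasubgraph}}{y^{\omegasubgraph}}$ is defined (precisely when $x = c_i$, $y = c_j$ are atomic and $c_ic_j \in \domain{\mathcal{C}}$), and part 2 identifies the composite as $(c_ic_j)^{\omegasubgraph}$. So given a defined composition of two elements of $X$, the composite $(c_ic_j)^{\omegasubgraph}$ lies in $X$ because $c_ic_j \in \domain{\mathcal{C}}$ by definition of $X$; conversely if the composite lies in $X$, its factors $c_i^{\omegasubgraph}, c_j^{\omegasubgraph}$ lie in $X$ since $c_i, c_j \in \domain{\mathcal{C}}$ (being sub-labels of $c_ic_j$). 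I would then verify that $\preccurlyeq$ is a preorder on $X$: reflexivity and transitivity transfer directly from the fact that $\closure{\mathcal{C}}$ is a preorder on $\domain{\mathcal{C}}$ (by the remark following Definition~\ref{def:contraints_closure} and Proposition~\ref{prop:constraints}), using that $x^{\omegasubgraph} \preccurlyeq y^{\omegasubgraph}$ is defined to mean $x \preccurlyeq y \in \closure{\mathcal{C}}$.

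The last and only genuinely delicate condition is persistence of $\valuation$: I must show that $x^{\omegasubgraph} \preccurlyeq y^{\omegasubgraph}$ and $x^{\omegasubgraph} \in \valuation(p)$ imply $y^{\omegasubgraph} \in \valuation(p)$. Unfolding the definition of $\valuation$ from clause~4 of Definition~\ref{def:function_omega}, $x^{\omegasubgraph} \in \valuation(p)$ gives some $z \in \domain{\mathcal{C}}$ with $z \preccurlyeq x \in \closure{\mathcal{C}}$ and $\labelledFormula{T}{p}{z} \in \mathcal{F}$; combined with $x \preccurlyeq y \in \closure{\mathcal{C}}$, transitivity of $\closure{\mathcal{C}}$ yields $z \preccurlyeq y \in \closure{\mathcal{C}}$, so the same witness $z$ certifies $y^{\omegasubgraph} \in \valuation(p)$. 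This is the crux, though the definition of $\valuation$ via a pre-existing $\preccurlyeq$-predecessor was evidently engineered precisely to make persistence fall out of transitivity.

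The main obstacle, to the extent there is one, is bookkeeping rather than conceptual: one must be careful that the set-theoretic subgraph operations in the admissibility clause genuinely correspond to the syntactic label operations, and it is Lemma~\ref{lem:omega_prop} that licenses passing back and forth. I expect the persistence verification to consume the proof's attention but to be short; the admissibility check, routed entirely through Lemma~\ref{lem:omega_prop}, is the part where I would be most careful to invoke each of its three parts explicitly.
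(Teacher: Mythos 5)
Your overall route is the paper's own: dispatch the graph and preorder conditions quickly, prove admissibility of $X$ via Lemma~\ref{lem:omega_prop}, and prove persistence of $\valuation$ by taking the witness $z$ from clause~4 of Definition~\ref{def:function_omega} and applying transitivity of $\closure{\mathcal{C}}$ --- your persistence argument is essentially identical to the paper's. The problem is the right-to-left half of the admissibility biconditional. Admissibility quantifies over \emph{arbitrary} $H, K \in S\!g(\mathcal{G})$ with $\gcompE{H}{K}\downarrow$: you must show that if $\gcompE{H}{K} \in X$ then $H, K \in X$. Your justification --- ``its factors $c_i^{\omegasubgraph}, c_j^{\omegasubgraph}$ lie in $X$'' --- presupposes that the factors \emph{are} these canonical graphs, which is precisely what has to be proved; and Lemma~\ref{lem:omega_prop} cannot supply this on its own, because its part~3 characterises definedness of compositions $\gcompE{x^{\omegasubgraph}}{y^{\omegasubgraph}}$ of elements \emph{of} $X$, and says nothing about arbitrary subgraphs of $\mathcal{G}$ whose composite happens to land in $X$.

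The missing argument, which the paper spells out, is short but has genuine content. First, any defined composite $\gcompE{H}{K}$ contains at least one $\mathcal{E}$-edge (since $H \leadsto_{\mathcal{E}} K$), so if $\gcompE{H}{K} = x^{\omegasubgraph} \in X$ then $x$ cannot be atomic ($E(c_k^{\omegasubgraph}) = \emptyset$); hence $x = c_ic_j$ and $\gcompE{H}{K} = (c_ic_j)^{\omegasubgraph}$. Second, $(c_ic_j)^{\omegasubgraph}$ admits only one decomposition as a defined composition: its two vertices must be split between $H$ and $K$ (neither part can be empty, else $H \leadsto_{\mathcal{E}} K$ fails); the reversed split $V(H) = \{c_j\}$, $V(K) = \{c_i\}$ is impossible because $(c_i, c_j) \in \mathcal{E}$ violates the requirement $K \not\leadsto_{\mathcal{E}} H$; and since $\mathcal{G}$ has no self-loops (as $c_kc_k \notin \mathbb{L}$), both parts are edgeless. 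Only then can you conclude $H = c_i^{\omegasubgraph}$ and $K = c_j^{\omegasubgraph}$, which lie in $X$ because $c_i, c_j$ are sub-labels of $c_ic_j$ and hence in $\domain{\mathcal{C}}$ --- this is the point where your sub-label observation correctly enters. Everything else in your proposal is right and matches the paper; this uniqueness-of-decomposition step is the one piece your plan, as stated, would not produce.
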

\begin{proof}
$\mathcal{G}$ is clearly a graph and $\preccurlyeq$ being a preorder on 
$X$ can be read off of the rules for the closure of constraint sets. 
Thus the only non-trivial aspects of the proof are that $X$ is admissible 
and that $\valuation$ is persistent.

First we show that	$X$ is an admissible subgraph set.
		Let $H, K \in S\!g(\mathcal{G})$ with $\gcompE{H}{K}\downarrow$. 
		First we assume $H, K \in X$. Then $H=x^{\omegasubgraph}$ and $K=y^{\omegasubgraph}$
		for labels $x, y$. By the previous lemma it follows that $x = c_i$ and $y=c_j$
		and $c_ic_j \in \domain{\mathcal{C}}$. Thus $\gcompE{H}{K} = 
		\gcompE{c_i^{\omegasubgraph}}{c_j^{\omegasubgraph}} 
		= (c_ic_j)^{\omegasubgraph} \in X$.
		Now suppose $\gcompE{H}{K} \in X$. Then $\gcompE{H}{K} = x^{\omegasubgraph}$
		for some $x \in \domain{\mathcal{C}}$. The case $x=c_i$ is clearly impossible
		as $E(c_i^{\omegasubgraph}) = \emptyset$ so necessarily $x = c_ic_j$. Then we have
		$c_i, c_j \in \domain{\mathcal{C}}$ as sub-labels of $c_ic_j$ and
		$\gcompE{c^{\omegasubgraph}_i}{c^{\omegasubgraph}_j} \downarrow$ with 
		$\gcompE{c^{\omegasubgraph}_i}{c^{\omegasubgraph}_j}$ the only possible composition
		equal to  $(c_ic_j)^{\omegasubgraph}$. It follows that $H = c_i^{\omegasubgraph} \in X$
		and $K = c_j^{\omegasubgraph} \in X$ as required.
      
Finally we must show	$\valuation$ is a persistent valuation.
		Let $H \in \valuation(\Atom{p})$ with $H \preccurlyeq K$. 
		Then $H=x^{\omegasubgraph}$ and $K=y^{\omegasubgraph}$ for some 
		$x, y \in \domain{\mathcal{C}}$ with $x \preccurlyeq y \in \closure{\mathcal{C}}$. 
		By definition of $\valuation$ there exists $z \in \domain{\mathcal{C}}$ with 
		$z \preccurlyeq x \in \closure{\mathcal{C}}$ and 
		$\labelledFormula{\mathbb{T}}{p}{z} \in \mathcal{F}$. 
		By the closure rule $\langle Tr \rangle$ we have 
		$z \preccurlyeq y \in \closure{\mathcal{C}}$ so 
		$K = y^{\omegasubgraph} \in \valuation(p)$.      
\end{proof}

\begin{lem}\label{lem:countermodel}
Let $\CSS{\mathcal{F}}{\mathcal{C}}$ be a Hintikka CSS and
$\model = \Omega(\CSS{\mathcal{F}}{\mathcal{C}}) = (\mathcal{G}, \mathcal{E}, X, \preccurlyeq,
\valuation)$. For all formulae $\varphi \in \mathrm{Form}$, and all $x \in \domain{\mathcal{C}}$. 
we have (1) if $\labelledFormula{F}{\varphi}{x} \in \mathcal{F}$, then 
$x^{\omegasubgraph} \not \models_{\model} \varphi$, and (2) 
if $\labelledFormula{T}{\varphi}{x} \in \mathcal{F}$, then $x^{\omegasubgraph} \satisfaction{\model} \varphi$. 
Hence, if $\labelledFormula{F}{\varphi}{x} \in \mathcal{F}$, then $\varphi$ is not valid and $\Omega(\CSS{\mathcal{F}}{\mathcal{C}})$ 
is a countermodel of $\varphi$.
\end{lem}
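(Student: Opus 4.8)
The plan is to prove (1) and (2) \emph{simultaneously} by structural induction on $\varphi$, for every $x \in \domain{\mathcal{C}}$ at once; the final ``Hence'' is then immediate, since $\labelledFormula{F}{\varphi}{x} \in \mathcal{F}$ gives $x^{\omegasubgraph} \not\vDash_{\model} \varphi$ by (1) while trivially $x^{\omegasubgraph} \vDash_{\model} \top$, so $\top \not\vDash_{\model} \varphi$ and $\model = \Omega(\CSS{\mathcal{F}}{\mathcal{C}})$ is a countermodel witnessing that $\varphi$ is not valid. A simultaneous induction is needed because the clauses for the implications ($\rightarrow$, $\gimp$, $\limp$) invoke the truth part of the hypothesis on one subformula and the falsity part on the other. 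Throughout I read the worlds of $\model$ off the construction: every world is some $y^{\omegasubgraph}$ with $y \in \domain{\mathcal{C}} = \domain{\closure{\mathcal{C}}}$ (Proposition~\ref{prop:constraints}), and $y^{\omegasubgraph} \succcurlyeq x^{\omegasubgraph}$ iff $x \preccurlyeq y \in \closure{\mathcal{C}}$.

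For the base cases I would argue as follows. The cases $\top$ and $\bot$ are handled by Hintikka clauses~2 and~3, which make the hypotheses $\labelledFormula{F}{\top}{x} \in \mathcal{F}$ and $\labelledFormula{T}{\bot}{x} \in \mathcal{F}$ impossible, the remaining directions holding by definition of satisfaction. For an atom $p$: if $\labelledFormula{T}{p}{x} \in \mathcal{F}$ then, since $(\text{Ref})$ gives $x \preccurlyeq x \in \closure{\mathcal{C}}$, the defining clause of $\valuation$ (with witness $y = x$) puts $x^{\omegasubgraph} \in \valuation(p)$; conversely, if $\labelledFormula{F}{p}{x} \in \mathcal{F}$ but $x^{\omegasubgraph} \in \valuation(p)$, the definition of $\valuation$ supplies $y$ with $y \preccurlyeq x \in \closure{\mathcal{C}}$ and $\labelledFormula{T}{p}{y} \in \mathcal{F}$, contradicting Hintikka clause~1. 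The additive connectives are then routine: $\wedge$ and $\vee$ use clauses~4--7 and the hypothesis directly, while for $\varphi \rightarrow \psi$ the quantification ``for all $K \succcurlyeq x^{\omegasubgraph}$'' matches exactly the quantification ``for all $y$ with $x \preccurlyeq y \in \closure{\mathcal{C}}$'' of clauses~8 and~9, so the truth case follows from clause~8 together with induction hypotheses (1) and (2), and the falsity case from the witness produced by clause~9.

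The multiplicative connectives are the substance of the proof, and here the key tool is Lemma~\ref{lem:omega_prop}. Its part~(3) states that any composition $\gcompE{x^{\omegasubgraph}}{y^{\omegasubgraph}} \downarrow$ in $\model$ forces $x = c_i$ and $y = c_j$ to be atomic labels with $c_ic_j \in \domain{\mathcal{C}}$, while parts~(1)--(2) identify the composite with $(c_ic_j)^{\omegasubgraph}$; moreover $c_ic_j \preccurlyeq x \in \closure{\mathcal{C}}$ is exactly $(c_ic_j)^{\omegasubgraph} \preccurlyeq x^{\omegasubgraph}$. Using this, the existential clauses (true-$\blacktriangleright$, false-$\gimp$, false-$\limp$) follow by taking the witnesses $c_i, c_j$ supplied by Hintikka clauses~10,~13,~15 and feeding $c_i^{\omegasubgraph}, c_j^{\omegasubgraph}$ (and the composite) into the satisfaction clause via the induction hypotheses. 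Dually, the universal clauses (false-$\blacktriangleright$, true-$\gimp$, true-$\limp$) are where Lemma~\ref{lem:omega_prop}(3) does the real work: to verify, say, $x^{\omegasubgraph} \not\vDash_{\model} \varphi \blacktriangleright \psi$, I would assume a decomposition $x^{\omegasubgraph} \succcurlyeq \gcompE{K_0}{K_1}$, use part~(3) to rewrite it as $c_ic_j \preccurlyeq x \in \closure{\mathcal{C}}$ with $K_0 = c_i^{\omegasubgraph}$, $K_1 = c_j^{\omegasubgraph}$, and then apply Hintikka clause~11 and the induction hypothesis to force $K_0 \not\vDash_{\model} \varphi$ or $K_1 \not\vDash_{\model} \psi$; the clauses for $\gimp$ and $\limp$ go the same way using clauses~12 and~14.

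I expect the main obstacle to be exactly this matching in the multiplicative cases: one must check that the universally-quantified satisfaction clauses are driven \emph{only} by the compositions recorded in $\mathcal{C}$, which is precisely the content of Lemma~\ref{lem:omega_prop}(3) (itself resting on $(\text{Contra})$ and $(\text{Freshness})$) --- without it the model could contain spurious decompositions not governed by any Hintikka clause. A secondary point of care is the bookkeeping between $\domain{\mathcal{C}}$ and $\domain{\closure{\mathcal{C}}}$ and the correct left/right placement of the two layers in the $\gimp$ versus $\limp$ clauses, but once Lemma~\ref{lem:omega_prop} is in hand these are routine.
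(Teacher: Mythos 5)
Your proposal is correct and follows essentially the same route as the paper: a simultaneous structural induction in which the atomic and additive cases are discharged by Hintikka clauses 1--9 together with the definitions of $\Omega$ and $\valuation$, and the multiplicative cases by clauses 10--15 combined with Lemma~\ref{lem:omega_prop}. The paper's write-up displays only the atomic, $\mathbb{T}\rightarrow$ and $\mathbb{T}\blacktriangleright$ cases, so your explicit handling of the universal cases via Lemma~\ref{lem:omega_prop}(3) --- ruling out decompositions not recorded in $\mathcal{C}$ --- is precisely the intended content of the cases the paper omits as ``similar''.
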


\begin{proof}
We proceed by a simultaneous structural induction on $\varphi$, concentrating on cases of interest. 
\begin{itemize}
\item[-] {Base cases.}
  \begin{itemize}
    \item Case $\labelledFormula{F}{\Atom{p}}{x} \in \mathcal{F}$.  
         We suppose that $x^{\omegasubgraph} \satisfaction{\model} \Atom{p}$.
         Then $x^{\omegasubgraph} \in \valuation(\Atom{p})$. By the definition of $\valuation$,
         there is a label $y$ such that $y \preccurlyeq x \in
         \closure{\mathcal{C}}$ and 
         $\labelledFormula{T}{\Atom{p}}{y} \in \mathcal{F}$. Then by condition
         (1) of Definition \ref{def:hintikka_css}, 
         $\CSS{\mathcal{F}}{\mathcal{C}}$ is not a Hintikka CSS, a contradiction.
         It follows that $x^{\omegasubgraph} \not \models_{\model} \Atom{p}$.
           
    \item Case $\labelledFormula{T}{\Atom{p}}{x} \in \mathcal{F}$.  
          By property $(Ref)$, $x \preccurlyeq x \in \closure{\mathcal{C}}$.
          Thus, by definition of $\valuation$ we have $x^{\omegasubgraph} \in \valuation(\Atom{p})$.
          Thus $x^{\omegasubgraph} \satisfaction{\model} \Atom{p}$.
          
%    \item Cases $\labelledFormula{F}{\bot}{x} \in \mathcal{F}$,
%    	$\labelledFormula{T}{\bot}{x} \in \mathcal{F}$, 
%    	$\labelledFormula{F}{\top}{x}$ and $\labelledFormula{T}{\top}{x}$
%    	are straightforward consequences of the definition of
%    	Hintikka CSS and the layered graph semantics.
  \end{itemize}
  
\item[-] Inductive step. We now suppose that (1) and
(2) hold for formulae $\varphi$ and $\psi$ (IH). 
%We attend only to the cases 
%$\langle \mathbb{T}\rightarrow \rangle, \langle \mathbb{T}\blacktriangleright \rangle$ and 
%$\langle \mathbb{T}\gimp \rangle$
%as the others are similar.
  \begin{itemize}
         \item Case $\labelledFormula{T}{\varphi \rightarrow \psi}{x} \in \mathcal{F}$.
         	Suppose $x^{\omegasubgraph} \preccurlyeq y^{\omegasubgraph}$. Then 
         	$x \preccurlyeq y \in \closure{C}$ and by Definition \ref{def:hintikka_css} 
         	property (8) it follows that $\labelledFormula{F}{\varphi}{y} \in \mathcal{F}$
         	or $\labelledFormula{T}{\psi}{y} \in \mathcal{F}$. By (IH) it follows that
         	if $y^{\omegasubgraph} \satisfaction{\model} \varphi$ then  
         	$y^{\omegasubgraph} \satisfaction{\model} \psi$ as required.
          
    	\item Case $\labelledFormula{T}{\varphi \blacktriangleright \psi}{x} \in \mathcal{F}$.
    		By Definition \ref{def:hintikka_css} property (10) there exist labels 
    		$c_i, c_j \in \domain{\mathcal{C}}$ such that $c_i c_j \preccurlyeq x \in 
    		\closure{\mathcal{C}}$ and $\labelledFormula{T}{\varphi}{c_i} \in \mathcal{F}$
    		and $\labelledFormula{T}{\psi}{c_j} \in \mathcal{F}$. By (IH) we have
    		$c_i^{\omegasubgraph} \satisfaction{\model} \varphi$ and
    		$c_j^{\omegasubgraph} \satisfaction{\model} \psi$. Further, by definition of
    		$\Omega$ we have that $(c_i c_j)^{\omegasubgraph} = 
    		\gcompE{c_i^{\omegasubgraph}}{c_j^{\omegasubgraph}} \preccurlyeq
    		x^{\omegasubgraph}$, so 
    		$x^{\omegasubgraph} \satisfaction{\model} \varphi \blacktriangleright \psi$. \qedhere
         
%    	\item Case $\labelledFormula{T}{\varphi \gimp \psi}{x} \in \mathcal{F}$.
%    		Suppose $x^{\omegasubgraph} \preccurlyeq y^{\omegasubgraph}$ with 
%    		$\gcompE{y^{\omegasubgraph}}{z^{\omegasubgraph}}\downarrow$ 
%    		and $z^{\omegasubgraph} \satisfaction{\model} \varphi$. 
%    		By Lemma \ref{lem:omega_prop} we know
%    		$y=c_i, z=c_j \in \alphabet{\mathcal{C}}$ with $c_ic_j \in \domain{\mathcal{C}}$.
%    		Hence by Definition \ref{def:hintikka_css} property 12, either 
%    		$F\varphi:c_j \in \mathcal{F}$ or $T\psi: c_ic_j \in \mathcal{F}$. By (IH)
%    		it follows either $c^{\omegasubgraph}_j \models_{\model} \varphi$
%    		or $(c_ic_j)^{\omegasubgraph} = 
%    		\gcompE{c^{\omegasubgraph}_i}{c^{\omegasubgraph}_j} \satisfaction{\model} \psi$.
%    		As we know the former cannot be true, it must be the latter. Hence
%    		$x^{\omegasubgraph} \satisfaction{\model} \varphi \gimp \psi$ as required. \qedhere
  \end{itemize}
\end{itemize}
\end{proof}

This construction of a countermodel would fail in the analogous labelled tableaux system for LGL (i.e., the layered graph logic with classical 
additives \cite{CMP14}). We would require a systematic way to construct the internal structure of each subgraph in the model, 
as the classical semantics for $\blacktriangleright$ demands strict equality between the graph under interpretation and the decomposition 
into layers. This issue is sidestepped for ILGL since each time the tableaux rules require a decomposition of a subgraph into layers we 
can move to a `fresh' layered subgraph further down the ordering. Thus we can safely turn each graph label into the simplest 
instantiation of the kind of graph it represents: either a single vertex (indecomposable) or two vertices and an edge (layered). A well-foundedness condition on CSSs may make this method adaptable to \logicfont{LGL} but we defer this investigation to another occasion.

We now show how to construct a Hintikka CSS. We first require a listing of all labelled formulae that may need to be added to the CSS 
in order to satisfy properties $4$--$15$. We require a particularly strong condition on the listing to make this procedure work: that every 
labelled formula appears infinitely often to be tested.

\begin{defi}[Fair strategy] \label{def:fair_strategy} 
A \emph{fair strategy} for a language $\mathcal{L}$ is a labelled sequence of 
formul{\ae} $(\labelledFormulaIndex{S}{\varphi}{x}{i})_{i \in \mathbb{N}}$ in 
$\{\mathbb{T}, \mathbb{F}\} \times \mathrm{Form} \times \mathbb{L}$ such that $\{i \in \mathbb{N} \mid
\labelledFormulaIndex{S}{\varphi}{x}{i} \equiv \labelledFormula{S}{\varphi}{x}\}$ is infinite for any
$\labelledFormula{S}{\varphi}{x} \in \{\mathbb{T}, \mathbb{F}\} \times \mathrm{Form} \times \mathbb{L}$. \hfill \qed
\end{defi}

\begin{prop}[cf. \cite{Courtault-Galmiche15}] \label{prop:exists_fair_strategy}
There exists a fair strategy for the language of ILGL. \hfill \qed
\end{prop}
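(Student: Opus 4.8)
The plan is to reduce the existence of a fair strategy to the countability of the set of labelled formulae $S = \{\mathbb{T}, \mathbb{F}\} \times \mathrm{Form} \times \mathbb{L}$, followed by a standard dovetailing argument. First I would observe that $S$ is countable: the sign set $\{\mathbb{T}, \mathbb{F}\}$ is finite; the set $\mathbb{L}$ of graph labels is countable, being a subset of the words of length at most two over the countable alphabet $\Sigma$; and $\mathrm{Form}$ is countable, since it is freely generated by a finite grammar over the countable set $\mathrm{Prop}$ of atomic propositions. As a finite product of countable sets, $S$ is therefore countable, and I may fix a surjection $e : \mathbb{N} \rightarrow S$.

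The second step is to arrange for each element to recur. I would take any bijection $\pi : \mathbb{N} \rightarrow \mathbb{N} \times \mathbb{N}$ (for instance the inverse of the Cantor pairing function), write $\pi_1 : \mathbb{N} \rightarrow \mathbb{N}$ for its first projection, and define the sequence by $\labelledFormulaIndex{S}{\varphi}{x}{i} := e(\pi_1(i))$. This plainly lies in $\{\mathbb{T}, \mathbb{F}\} \times \mathrm{Form} \times \mathbb{L}$, so it is a candidate fair strategy.

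It then remains to verify fairness. Given any $\labelledFormula{S}{\varphi}{x} \in S$, surjectivity of $e$ yields some $k$ with $e(k) = \labelledFormula{S}{\varphi}{x}$. Since $\pi$ is a bijection onto $\mathbb{N} \times \mathbb{N}$, the fibre $\pi_1^{-1}(k) = \{ i \mid \pi_1(i) = k \}$ is infinite, as it contains precisely one index for each possible second coordinate. For every $i$ in this fibre we have $\labelledFormulaIndex{S}{\varphi}{x}{i} = e(k) = \labelledFormula{S}{\varphi}{x}$, so the set $\{ i \in \mathbb{N} \mid \labelledFormulaIndex{S}{\varphi}{x}{i} \equiv \labelledFormula{S}{\varphi}{x} \}$ contains this infinite fibre and is therefore infinite, exactly the condition demanded by Definition~\ref{def:fair_strategy}.

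I expect no genuine obstacle, as the result is a matter of set-theoretic bookkeeping, entirely parallel to the corresponding construction for DMBI in \cite{Courtault-Galmiche15}. The only point deserving a word of care is the implicit assumption that $\mathrm{Prop}$ is countable, which is what drives the countability of $\mathrm{Form}$ and hence of $S$; under the standard convention that the propositional signature is countable, this holds automatically, and the argument goes through unchanged.
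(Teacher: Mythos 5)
Your proof is correct: the paper itself omits the argument entirely, deferring to \cite{Courtault-Galmiche15}, and your countability-plus-dovetailing construction (composing a surjective enumeration of $\{\mathbb{T},\mathbb{F}\}\times\mathrm{Form}\times\mathbb{L}$ with the first projection of a pairing bijection) is precisely the standard argument used there. Your remark that this rests on the usual convention that $\mathrm{Prop}$, and hence $\mathrm{Form}$, is countable is the only hypothesis worth flagging, and you have flagged it.
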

%\begin{proof}
%See \cite{Courtault-Galmiche15}. \hfill \qed
%\end{proof} 
Next we need the concept of an oracle. Here an oracle allows Hintikka sets to be constructed inductively, 
testing the required consistency properties at each stage. 

\begin{defi}[Oracle] \label{def:oracle}
Let $\mathcal{P}$ be a set of CSSs. (1) $\mathcal{P}$ is \emph{$\CSSinclusion$-closed} if
        $\CSS{\mathcal{F}}{\mathcal{C}} \in \mathcal{P}$
        holds whenever $\CSS{\mathcal{F}}{\mathcal{C}}
        \CSSinclusion \CSS{\mathcal{F}'}{\mathcal{C}'}$ and
        $\CSS{\mathcal{F}'}{\mathcal{C}'} \in \mathcal{P}$
        holds. (2)  $\mathcal{P}$ is of \emph{finite character} if
        $\CSS{\mathcal{F}}{\mathcal{C}} \in \mathcal{P}$
        holds whenever
        $\CSS{\mathcal{F}_f}{\mathcal{C}_f} \in
        \mathcal{P}$ holds for every
        $\CSS{\mathcal{F}_f}{\mathcal{C}_f}
        \CSSfiniteInclusion \CSS{\mathcal{F}}{\mathcal{C}}$. (3) 
        $\mathcal{P}$ is \emph{saturated} if, for any
        $\CSS{\mathcal{F}}{\mathcal{C}} \in \mathcal{P}$ and any instance \vspace{-4mm}
        \begin{prooftree}
          \AxiomC{$cond(\mathcal{F}, \mathcal{C})$}
          \UnaryInfC{$\CSS{\mathcal{F}_1}{\mathcal{C}_1} \ \mid \
            \ldots \ \mid \ \CSS{\mathcal{F}_k}{\mathcal{C}_k}$}    
        \end{prooftree} 
        of a rule of Figure~\ref{fig_tableaux_ILGL_rules},
        if $cond(\mathcal{F}, \mathcal{C})$ is fulfilled, then
        $\CSS{\mathcal{F} \cup \mathcal{F}_i}{\mathcal{C} \cup
        \mathcal{C}_i} \in \mathcal{P}$,  
        for at least one $i \in \{ 1, \ldots, k \}$. 
An \emph{oracle} is a set of open CSSs which is $\CSSinclusion$-closed, of finite character, and saturated. \hfill \qed 
\end{defi}

\begin{defi}[Consistency] \label{def:consistency} 
Let $\CSS{\mathcal{F}}{\mathcal{C}}$ be a CSS. We say $\CSS{\mathcal{F}}{\mathcal{C}}$ is \emph{consistent} if it is finite and has no closed tableau. We say $\CSS{\mathcal{F}}{\mathcal{C}}$ is \emph{finitely consistent} if every finite sub-CSS $\CSS{\mathcal{F}_f}{\mathcal{C}_f}$ is consistent. \hfill \qed
\end{defi}

\begin{prop}[cf. \cite{Courtault-Galmiche15}] \label{prop:consistency} 
(1) Consistency is $\CSSinclusion$-closed. (2) A finite CSS is consistent iff it is finitely consistent. \hfill \qed
\end{prop}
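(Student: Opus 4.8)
The statement has two parts. Part (1) asserts that consistency is $\CSSinclusion$-closed, and part (2) asserts that a finite CSS is consistent iff it is finitely consistent. I would treat these separately, exploiting the finite character of both CSSs and constraint closure established earlier (Proposition~\ref{prop_css_finite_css}, Lemma~\ref{lem_compactness}).

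\medskip
\noindent\textbf{Part (1): $\CSSinclusion$-closure.} I would prove the contrapositive. Suppose $\CSS{\mathcal{F}}{\mathcal{C}} \CSSinclusion \CSS{\mathcal{F}'}{\mathcal{C}'}$ and $\CSS{\mathcal{F}}{\mathcal{C}}$ is \emph{inconsistent}; I aim to show $\CSS{\mathcal{F}'}{\mathcal{C}'}$ is inconsistent. By definition of consistency, inconsistency of $\CSS{\mathcal{F}}{\mathcal{C}}$ means either it fails to be finite or it has a closed tableau. Since consistency is only asserted of finite CSSs, the relevant content is that a closed tableau $\mathcal{T}$ for $\CSS{\mathcal{F}}{\mathcal{C}}$ exists. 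The key observation is that every tableau rule application is driven by a condition $cond\CSS{\mathcal{F}}{\mathcal{C}}$ (membership of a labelled formula in $\mathcal{F}$, or a constraint lying in $\closure{\mathcal{C}}$) that is \emph{monotone} under $\CSSinclusion$: if the side condition is fulfilled by $\CSS{\mathcal{F}}{\mathcal{C}}$, it is fulfilled by any larger $\CSS{\mathcal{F}'}{\mathcal{C}'}$, because $\mathcal{F} \subseteq \mathcal{F}'$ and $\closure{\mathcal{C}} \subseteq \closure{\mathcal{C}'}$. Hence the same sequence of rule applications that builds $\mathcal{T}$ from $\CSS{\mathcal{F}}{\mathcal{C}}$ can be replayed starting from $\CSS{\mathcal{F}'}{\mathcal{C}'}$, yielding a tableau whose branches each contain the corresponding branch of $\mathcal{T}$. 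Since the closure conditions of Definition~\ref{def:closed_tableau} are likewise monotone (a clash $\labelledFormula{T}{\varphi}{x}, \labelledFormula{F}{\varphi}{y}$ with $x \preccurlyeq y \in \closure{\mathcal{C}}$ persists into the larger CSS), every branch remains closed, giving a closed tableau for $\CSS{\mathcal{F}'}{\mathcal{C}'}$.

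\medskip
\noindent\textbf{Part (2): finite consistency.} The forward direction is immediate: if $\CSS{\mathcal{F}}{\mathcal{C}}$ is finite and consistent, then by part~(1) every sub-CSS $\CSS{\mathcal{F}_f}{\mathcal{C}_f} \CSSfiniteInclusion \CSS{\mathcal{F}}{\mathcal{C}}$ is also consistent (as $\CSS{\mathcal{F}_f}{\mathcal{C}_f} \CSSinclusion \CSS{\mathcal{F}}{\mathcal{C}}$ and consistency is downward-closed by $\CSSinclusion$-closure applied in the contrapositive form), so $\CSS{\mathcal{F}}{\mathcal{C}}$ is finitely consistent. For the converse, I argue contrapositively: if a finite CSS $\CSS{\mathcal{F}}{\mathcal{C}}$ is inconsistent, it has a closed tableau $\mathcal{T}$. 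Being finite, $\CSS{\mathcal{F}}{\mathcal{C}}$ is its own finite sub-CSS, so it is exhibited as a finite sub-CSS that is inconsistent, whence $\CSS{\mathcal{F}}{\mathcal{C}}$ is not finitely consistent.

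\medskip
\noindent\textbf{Main obstacle.} The delicate point is the ``replay'' argument in Part~(1): I must verify that \emph{every} side condition used in $\mathcal{T}$ is genuinely monotone, including those that reference $\closure{\mathcal{C}}$ rather than $\mathcal{C}$ directly (e.g.\ the conditions $yz \preccurlyeq x \in \closure{\mathcal{C}}$ for $\langle \mathbb{F}\blacktriangleright \rangle$, and the reflexivity-style side conditions $yz \preccurlyeq yz \in \closure{\mathcal{C}}$ for the multiplicative implications). Monotonicity of closure---namely $\mathcal{C} \subseteq \mathcal{C}'$ implies $\closure{\mathcal{C}} \subseteq \closure{\mathcal{C}'}$---follows directly from Definition~\ref{def:contraints_closure}, since $\closure{\mathcal{C}}$ is the \emph{least} relation containing $\mathcal{C}$ closed under the rules of Figure~\ref{fig_contraints_closure}, and any such relation for $\mathcal{C}'$ also contains $\mathcal{C}$. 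I would also note the freshness side conditions pose no difficulty: the atomic labels chosen fresh for $\CSS{\mathcal{F}}{\mathcal{C}}$ may clash with $\alphabet{\mathcal{C}'}$, but since these rules are \emph{existential} (they require the existence of suitable fresh labels, and $\Sigma$ is countably infinite while $\alphabet{\mathcal{C}'}$ of a finite CSS is finite), fresh labels for $\CSS{\mathcal{F}'}{\mathcal{C}'}$ can always be found and the closed subtableau relabelled accordingly. This relabelling is routine but is the one place where care is genuinely needed.
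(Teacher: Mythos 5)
Your proof is correct, but it is worth knowing that the paper does not prove this proposition at all: it is stated with a terminal \qed{} and the annotation ``cf.~\cite{Courtault-Galmiche15}'', i.e.\ it is imported from the DMBI/BBI labelled-tableaux literature, where the analogous facts are proved for a different labelling algebra. Your argument therefore supplies exactly the content that citation stands for, and it is the standard argument: inconsistency propagates from a CSS to any $\CSSinclusion$-larger CSS because a closed tableau can be replayed over the larger CSS --- the rule side conditions and the branch-closure conditions are monotone (using $\mathcal{F}\subseteq\mathcal{F}'$ and, by leastness of the closure operator, $\closure{\mathcal{C}}\subseteq\closure{\mathcal{C}'}$), while the fresh labels introduced along the way are renamed injectively away from the alphabet of the larger CSS; part~(2) then follows formally, as you say. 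Two details should be made explicit to match the statement exactly. First, consistency as defined in Definition~\ref{def:consistency} requires finiteness, so in part~(1) the degenerate case matters: if the sub-CSS is inconsistent through being infinite, the super-CSS is infinite and hence inconsistent too; and in the replay case you may assume the super-CSS is finite (otherwise it is inconsistent for free) --- this finiteness is precisely what licenses your claim that fresh labels can always be found, since $\Sigma$ is countably infinite while $\alphabet{\mathcal{C}'}$ is finite. Second, your phrase ``$\CSSinclusion$-closure applied in the contrapositive form'' in part~(2) is unnecessary: $\CSSinclusion$-closure as defined in Definition~\ref{def:oracle} is already downward closure, so consistency of the whole finite CSS yields consistency of every finite sub-CSS directly. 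Neither point is a gap; the injective-renaming step you flag is indeed the only place where genuine (if routine) work is required, exactly as in the cited reference.
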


We denote the set of finitely consistent CSS by $\mathcal{P}_{\mathrm{fincon}}$.

\begin{lem}\label{lem:oracle}
$\mathcal{P}_{\mathrm{fincon}}$ is an oracle.
\end{lem}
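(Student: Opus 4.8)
The plan is to verify that $\mathcal{P}_{\mathrm{fincon}}$ satisfies each of the three defining properties of an oracle from Definition~\ref{def:oracle}: that it is $\CSSinclusion$-closed, of finite character, and saturated. The first two are essentially bookkeeping, leveraging Proposition~\ref{prop:consistency} and the definition of finite consistency; the saturation condition will be the substantive part.

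For $\CSSinclusion$-closure, suppose $\CSS{\mathcal{F}}{\mathcal{C}} \CSSinclusion \CSS{\mathcal{F}'}{\mathcal{C}'}$ with $\CSS{\mathcal{F}'}{\mathcal{C}'} \in \mathcal{P}_{\mathrm{fincon}}$. Any finite sub-CSS of $\CSS{\mathcal{F}}{\mathcal{C}}$ is also a finite sub-CSS of $\CSS{\mathcal{F}'}{\mathcal{C}'}$, hence consistent by hypothesis; so $\CSS{\mathcal{F}}{\mathcal{C}}$ is finitely consistent. For finite character, the definition of finite consistency is manifestly one about finite sub-CSSs: $\CSS{\mathcal{F}}{\mathcal{C}} \in \mathcal{P}_{\mathrm{fincon}}$ exactly when every $\CSS{\mathcal{F}_f}{\mathcal{C}_f} \CSSfiniteInclusion \CSS{\mathcal{F}}{\mathcal{C}}$ is consistent, and since any finite sub-CSS of any such $\CSS{\mathcal{F}_f}{\mathcal{C}_f}$ is itself a finite sub-CSS of $\CSS{\mathcal{F}}{\mathcal{C}}$, the finite-character closure condition follows directly. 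I would also note that every $\CSS{\mathcal{F}}{\mathcal{C}} \in \mathcal{P}_{\mathrm{fincon}}$ is open, since a closed CSS contains a closed finite sub-CSS (the closure condition of Definition~\ref{def:closed_tableau} involves only finitely many formulae and, via compactness Lemma~\ref{lem_compactness}, finitely many constraints), which would be inconsistent.

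The heart of the proof is saturation. Take $\CSS{\mathcal{F}}{\mathcal{C}} \in \mathcal{P}_{\mathrm{fincon}}$ and a rule instance with premise $cond(\mathcal{F}, \mathcal{C})$ fulfilled and conclusions $\CSS{\mathcal{F}_1}{\mathcal{C}_1} \mid \ldots \mid \CSS{\mathcal{F}_k}{\mathcal{C}_k}$. The strategy is to argue by contradiction: suppose that for every $i \in \{1, \ldots, k\}$ the CSS $\CSS{\mathcal{F} \cup \mathcal{F}_i}{\mathcal{C} \cup \mathcal{C}_i}$ fails to lie in $\mathcal{P}_{\mathrm{fincon}}$. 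Then each has some finite, inconsistent sub-CSS $\CSS{\mathcal{G}_i}{\mathcal{D}_i}$, i.e.\ one possessing a closed tableau. I would then assemble from these a single finite sub-CSS of $\CSS{\mathcal{F}}{\mathcal{C}}$ and build a closed tableau for it, contradicting the finite consistency of $\CSS{\mathcal{F}}{\mathcal{C}}$. Concretely, collect the finitely many labelled formulae and constraints of the $\CSS{\mathcal{G}_i}{\mathcal{D}_i}$ that come from $\mathcal{F}$ and $\mathcal{C}$, together with the (finite) formula and constraints triggering $cond(\mathcal{F}, \mathcal{C})$, into a finite $\CSS{\mathcal{F}_f}{\mathcal{C}_f} \CSSfiniteInclusion \CSS{\mathcal{F}}{\mathcal{C}}$; Proposition~\ref{prop_css_finite_css} ensures we may keep the constraint part finite while remaining a CSS. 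Applying the rule to $\CSS{\mathcal{F}_f}{\mathcal{C}_f}$ yields branches that each extend (a variant of) $\CSS{\mathcal{G}_i}{\mathcal{D}_i}$, and attaching the closed tableau for each branch closes the whole tableau, contradicting consistency of $\CSS{\mathcal{F}_f}{\mathcal{C}_f}$.

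The main obstacle I anticipate is the careful handling of the freshness side-conditions in the branching rules (the rules introducing $c_i, c_j \in \Sigma \setminus \alphabet{\mathcal{C}}$). When restricting to a finite sub-CSS $\CSS{\mathcal{F}_f}{\mathcal{C}_f}$ and re-applying the rule, the fresh labels chosen there need not coincide with those used in the inconsistent sub-CSSs $\CSS{\mathcal{G}_i}{\mathcal{D}_i}$. I would address this by invoking the fact that fresh atomic labels are interchangeable: renaming the witnesses in the closed tableaux for $\CSS{\mathcal{G}_i}{\mathcal{D}_i}$ to match the fresh labels generated when the rule is applied to $\CSS{\mathcal{F}_f}{\mathcal{C}_f}$ preserves closedness, since closure (Definition~\ref{def:closed_tableau}) and the closure rules of Figure~\ref{fig_contraints_closure} are invariant under such renamings. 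This alignment is exactly the point where the compactness Lemma~\ref{lem_compactness} and Proposition~\ref{prop_css_finite_css} do their work, guaranteeing that the relevant constraint derivations live in a finite fragment that can be reproduced after renaming.
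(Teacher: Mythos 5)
Your proposal is correct and takes essentially the same route as the paper's proof: $\CSSinclusion$-closure and finite character are routine (the paper defers them to the cited prior work), and saturation is established by the same contradiction argument --- extract finite inconsistent sub-CSSs from the branch extensions, assemble via compactness (Lemma~\ref{lem_compactness}) and Proposition~\ref{prop_css_finite_css} a finite sub-CSS of $\CSS{\mathcal{F}}{\mathcal{C}}$ containing the triggering data, re-apply the rule, and concatenate the closed tableaux to contradict finite consistency (the paper writes this out only for $\langle \mathbb{T}\gimp \rangle$, claiming the rest are similar). One remark: the renaming machinery you introduce for freshness is not actually needed, since in this system the branching rules introduce no fresh labels and the fresh-label rules have a single conclusion, so the fresh labels of the given rule instance remain fresh for any sub-CSS of $\CSS{\mathcal{F}}{\mathcal{C}}$ (all labels occurring in $\mathcal{F}$ lie in $\domain{\mathcal{C}}$ by (Ref)) and the same instance can simply be reused.
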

\begin{proof}
For $\CSSinclusion$-closure and finite character see \cite{Courtault-Galmiche15}.  For saturation we show the case 
%$\langle \mathbb{T}\blacktriangleright \rangle$ and 
$\langle \mathbb{T}\gimp \rangle$: the rest are similar. 

		Let $\CSS{\mathcal{F}}{\mathcal{C}} \in  \mathcal{P}_{\mathrm{fincon}}$, $\labelledFormula{T}{\varphi \gimp \psi}{x} \in \mathcal{F}$ and
		$x \preccurlyeq y, yz \preccurlyeq yz \in \closure{\mathcal{C}}$.
		Suppose neither $\CSS{\mathcal{F} \cup \{ \labelledFormula{F}{\varphi}{z} \} }{\mathcal{C}} 
		\in  \mathcal{P}_{\mathrm{fincon}}$ nor $\CSS{\mathcal{F} \cup \{ \labelledFormula{T}{\psi}{yz} \}}{\mathcal{C}} 
		\in  \mathcal{P}_{\mathrm{fincon}}$. 
		Then there exist $\CSS{\mathcal{F}^A_f}{\mathcal{C}^A_f} 
		\CSSinclusion_{f} \CSS{\mathcal{F} \cup \{ \labelledFormula{F}{\varphi}{z} \} }{\mathcal{C}}$
		and $\CSS{\mathcal{F}^B_f}{\mathcal{C}^B_f} \CSSinclusion_{f}
		\CSS{\mathcal{F} \cup \{ \labelledFormula{T}{\psi}{yz} \}}{\mathcal{C}}$ that are inconsistent.
		By compactness (Lemma \ref{lem_compactness}), there exist finite $\mathcal{C}_0, \mathcal{C}_1 \subseteq \mathcal{C}$
		such that $z \preccurlyeq z \in \mathcal{C}_0$ and $yz \preccurlyeq yz \in \mathcal{C}_1$.
		Thus we define $\mathcal{F}^{\prime}_f = (\mathcal{F}^A_f \setminus \{
		\labelledFormula{F}{\varphi}{z} \}) \cup (\mathcal{F}^B_f \setminus \{\labelledFormula{T}{\psi}{yz} \})
		\cup \{ \labelledFormula{T}{\varphi \gimp \psi}{x} \}$ and $\mathcal{C}^{\prime}_f =
		\mathcal{C}^A_f \cup \mathcal{C}^B_f \cup \mathcal{C}_0 \cup \mathcal{C}_1$.
		Then $\CSS{\mathcal{F}^{\prime}_f}{\mathcal{C}^{\prime}_f}$ is a finite CSS and
		$[\CSS{\mathcal{F}^{\prime}_f \cup \{\labelledFormula{F}{\varphi}{z}\}}{\mathcal{C}^{\prime}_f}
		; \CSS{\mathcal{F}^{\prime}_f \cup \{ \labelledFormula{T}{\psi}{yz} \}}{\mathcal{C}^{\prime}_f}]$
		is a tableau for it. We have $\CSS{\mathcal{F}^{A}_f}{\mathcal{C}^A_f} \CSSinclusion_{f}
		\CSS{\mathcal{F}^{\prime}_f \cup \{\labelledFormula{F}{\varphi}{z}\}}{\mathcal{C}^{\prime}_f}$ and
		$\CSS{\mathcal{F}^{B}_f}{\mathcal{C}^B_f} \CSSinclusion_{f} 
		\CSS{\mathcal{F}^{\prime}_f \cup \{ \labelledFormula{T}{\psi}{yz} \}}{\mathcal{C}^{\prime}_f}$
		so by $\CSSinclusion$-closure of consistency $\CSS{\mathcal{F}^{A}_f}{\mathcal{C}^A_f}$ and
		$\CSS{\mathcal{F}^{B}_f}{\mathcal{C}^B_f}$ are inconsistent: respectively let
		 $\mathcal{T}_A$ and $\mathcal{T}_B$ be closed tableaux for them. Then 
		 $\mathcal{T}_A \concatList \mathcal{T}_B$ is a closed tableau for 
		 $\CSS{\mathcal{F}^{\prime}_f}{\mathcal{C}^{\prime}_f}$ and the CSS is inconsistent: contradicting
		 $\CSS{\mathcal{F}^{\prime}_f}{\mathcal{C}^{\prime}_f} \CSSinclusion_{f} 
		 \CSS{\mathcal{F}}{\mathcal{C}} \in \mathcal{P}_{\mathrm{fincon}}$.
\end{proof}

We can now show completeness of our tableaux system. Consider a formula $\varphi$ for which there 
exists no closed tableau. We show there is a countermodel to $\varphi$. We start with the initial tableau 
$\mathcal{T}_0$ for $\varphi$. Then, we have (1) $\mathcal{T}_0 = 
[\CSS{\{ \labelledFormula{F}{\varphi}{c_0} \}}{\{ c_0 \preccurlyeq c_0) \}}]$  
and (2) $\mathcal{T}_0$ cannot be closed. By 
Proposition \ref{prop:exists_fair_strategy}, there exists a fair strategy, which we denote by 
$\mathcal{S}$, with $\labelledFormulaIndex{S}{\varphi}{x}{i}$ the $i^\text{th}$ formula
of $\mathcal{S}$. As $\mathcal{T}_0$ cannot be closed, $\CSS{\{
  \labelledFormula{F}{\varphi}{c_0} \}}{\{ c_0 \preccurlyeq
  c_0 \}} \in  \mathcal{P}_{\mathrm{fincon}}$. We build a sequence $\CSS{\mathcal{F}_i}{\mathcal{C}_i}_{i
  \geqslant 0}$ as follows:
\begin{itemize}
\item[-] $\CSS{\mathcal{F}_0}{\mathcal{C}_0} =
      \CSS{\{ \labelledFormula{F}{\varphi}{c_0} \}}{\{ c_0 \preccurlyeq c_0 \}}$;  
\item[-] if $\CSS{\mathcal{F}_i \cup \{
    \labelledFormulaIndex{S}{\varphi}{x}{i} \} 
      }{\mathcal{C}_i} \not \in \mathcal{P}_{\mathrm{fincon}}$,  
      then we have $\CSS{\mathcal{F}_{i + 1}}{\mathcal{C}_{i + 1}} = 
      \CSS{\mathcal{F}_i}{\mathcal{C}_i}$; and  
\item[-] if $\CSS{\mathcal{F}_i \cup \{
    \labelledFormulaIndex{S}{\varphi}{x}{i} \} 
      }{\mathcal{C}_i} \in \mathcal{P}_{\mathrm{fincon}}$, then we have
      $\CSS{\mathcal{F}_{i + 1}}{\mathcal{C}_{i + 1}} =
      \CSS{\mathcal{F}_i \cup \{
        \labelledFormulaIndex{S}{\varphi}{x}{i} \} 
      \cup F_e }{\mathcal{C}_i \cup \mathcal{C}_e}$ where $F_e$
      and $\mathcal{C}_e$ are determined by

\begin{center}
{\small 
\begin{tabular}{|c|c||c|c|}
\hline
$\mathbb{S}_i$ & $\varphi_i$ & $F_e$ & $\mathcal{C}_e$  \\ 
\hline \hline 
$\mathbb{F}$ & $\varphi \rightarrow \psi$ & $\{ \labelledFormula{T}{\varphi}{c_{\mathfrak{I}+1}}, 
	\labelledFormula{F}{\psi}{c_{\mathfrak{I}+1}} \}$ & $\{ x_i \preccurlyeq c_{\mathfrak{I}+1} \}$ \\ 
\hline 
$\mathbb{T}$ & $\varphi \blacktriangleright \psi$ & $\{
\labelledFormula{T}{\varphi}{c_{\mathfrak{I}+1}},
\labelledFormula{T}{\psi}{c_{\mathfrak{I}+2}} \}$ & 
$\{ c_{\mathfrak{I}+1}c_{\mathfrak{I}+2} \preccurlyeq x_i \}$\\  
\hline 
$\mathbb{F}$ & $\varphi \gimp \psi$ & $\{
\labelledFormula{T}{\varphi}{c_{\mathfrak{I}+2}},
\labelledFormula{F}{\psi}{c_{\mathfrak{I}+1}c_{\mathfrak{I}+2}} \}$ 
& $\{ x_i \preccurlyeq c_{\mathfrak{I}+1}, 
c_{\mathfrak{I}+1}c_{\mathfrak{I}+2} \preccurlyeq c_{\mathfrak{I}+1}c_{\mathfrak{I}+2} \}$ \\  
\hline 
$\mathbb{F}$ & $\varphi \limp \psi$ & $\{
\labelledFormula{T}{\varphi}{c_{\mathfrak{I}+2}},
\labelledFormula{F}{\psi}{c_{\mathfrak{I}+2}c_{\mathfrak{I}+1}} \}$ 
& $\{ x_i \preccurlyeq c_{\mathfrak{I}+1}, 
c_{\mathfrak{I}+2}c_{\mathfrak{I}+1} \preccurlyeq c_{\mathfrak{I}+2}c_{\mathfrak{I}+1} \}$ \\  
\hline 
\multicolumn{2}{|c||}{Otherwise} & $\emptyset$ & $\emptyset$ \\ 
\hline 
\end{tabular}}
\end{center}
with $\mathfrak{I} = \text{max}\{j \mid c_j \in \alphabet{\mathcal{C}_i} \cup \mathcal{S}(x_i) \}$. 
\end{itemize}

\begin{prop} \label{prop:css_sequence} For any $i \in \mathbb{N}$, the following 
properties hold: (1) $\mathcal{F}_i \subseteq \mathcal{F}_{i + 1}$ and
$\mathcal{C}_i \subseteq \mathcal{C}_{i + 1}$; (2) $\CSS{\mathcal{F}_i}{\mathcal{C}_i} \in \mathcal{P}_{\mathrm{fincon}}$. 
\end{prop}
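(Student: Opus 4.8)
The plan is to establish (1) directly from the defining clauses of the sequence and (2) by induction on $i$, the inductive step being driven by the fact, recorded in Lemma~\ref{lem:oracle}, that $\mathcal{P}_{\mathrm{fincon}}$ is an oracle and hence saturated. Claim (1) needs no induction: if $\CSS{\mathcal{F}_i \cup \{ \labelledFormulaIndex{S}{\varphi}{x}{i} \}}{\mathcal{C}_i} \not\in \mathcal{P}_{\mathrm{fincon}}$ then $\CSS{\mathcal{F}_{i+1}}{\mathcal{C}_{i+1}} = \CSS{\mathcal{F}_i}{\mathcal{C}_i}$, whereas otherwise $\mathcal{F}_{i+1} = \mathcal{F}_i \cup \{ \labelledFormulaIndex{S}{\varphi}{x}{i} \} \cup F_e$ and $\mathcal{C}_{i+1} = \mathcal{C}_i \cup \mathcal{C}_e$; in either case $\mathcal{F}_i \subseteq \mathcal{F}_{i+1}$ and $\mathcal{C}_i \subseteq \mathcal{C}_{i+1}$.

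For (2) I would induct on $i$. The base case $\CSS{\mathcal{F}_0}{\mathcal{C}_0} = \CSS{\{ \labelledFormula{F}{\varphi}{c_0} \}}{\{ c_0 \preccurlyeq c_0 \}} \in \mathcal{P}_{\mathrm{fincon}}$ is the observation already in hand: as there is no closed tableau for $\varphi$, this finite CSS is consistent (Definition~\ref{def:consistency}), and by Proposition~\ref{prop:consistency}(2) a finite consistent CSS is finitely consistent. For the inductive step, suppose $\CSS{\mathcal{F}_i}{\mathcal{C}_i} \in \mathcal{P}_{\mathrm{fincon}}$ and split according to the clause that produced $\CSS{\mathcal{F}_{i+1}}{\mathcal{C}_{i+1}}$. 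If $\CSS{\mathcal{F}_i \cup \{ \labelledFormulaIndex{S}{\varphi}{x}{i} \}}{\mathcal{C}_i} \not\in \mathcal{P}_{\mathrm{fincon}}$, then $\CSS{\mathcal{F}_{i+1}}{\mathcal{C}_{i+1}} = \CSS{\mathcal{F}_i}{\mathcal{C}_i}$ and the conclusion is exactly the induction hypothesis. Otherwise $\CSS{\mathcal{F}_i \cup \{ \labelledFormulaIndex{S}{\varphi}{x}{i} \}}{\mathcal{C}_i} \in \mathcal{P}_{\mathrm{fincon}}$, and when the ``Otherwise'' row of the table applies we have $F_e = \mathcal{C}_e = \emptyset$, so $\CSS{\mathcal{F}_{i+1}}{\mathcal{C}_{i+1}}$ is precisely this CSS and we are done.

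The interesting cases are the four rows in which $\labelledFormulaIndex{S}{\varphi}{x}{i}$ is one of $\labelledFormula{F}{\varphi \rightarrow \psi}{x_i}$, $\labelledFormula{T}{\varphi \blacktriangleright \psi}{x_i}$, $\labelledFormula{F}{\varphi \gimp \psi}{x_i}$, or $\labelledFormula{F}{\varphi \limp \psi}{x_i}$. Here I would invoke the saturation clause of the oracle $\mathcal{P}_{\mathrm{fincon}}$. Each of the four corresponding tableaux rules of Figure~\ref{fig_tableaux_ILGL_rules} is single-conclusion, and its side condition (mere presence of the principal formula) is met, since $\labelledFormulaIndex{S}{\varphi}{x}{i}$ lies in the formula set of $\CSS{\mathcal{F}_i \cup \{ \labelledFormulaIndex{S}{\varphi}{x}{i} \}}{\mathcal{C}_i} \in \mathcal{P}_{\mathrm{fincon}}$. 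As there is only one branch, saturation forces that unique output back into $\mathcal{P}_{\mathrm{fincon}}$, and by construction the output is exactly $\CSS{\mathcal{F}_i \cup \{ \labelledFormulaIndex{S}{\varphi}{x}{i} \} \cup F_e}{\mathcal{C}_i \cup \mathcal{C}_e} = \CSS{\mathcal{F}_{i+1}}{\mathcal{C}_{i+1}}$, completing the step.

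The one point demanding genuine care—the main obstacle—is verifying that the concrete witnesses $c_{\mathfrak{I}+1}$, $c_{\mathfrak{I}+2}$ recorded in the table really do instantiate the relevant rule, that is, that they are fresh in the sense of Figure~\ref{fig_tableaux_ILGL_rules} ($c_{\mathfrak{I}+1} \neq c_{\mathfrak{I}+2} \in \Sigma \setminus \alphabet{\mathcal{C}_i}$). This is secured by the choice $\mathfrak{I} = \max\{ j \mid c_j \in \alphabet{\mathcal{C}_i} \cup \subLabel{x_i} \}$: every atomic label of $\mathcal{C}_i$ has index at most $\mathfrak{I}$, so both $c_{\mathfrak{I}+1}$ and $c_{\mathfrak{I}+2}$ lie outside $\alphabet{\mathcal{C}_i}$, and including $\subLabel{x_i}$ in the maximand additionally keeps them distinct from the atomic labels occurring in $x_i$ itself, so that the generated constraints (for instance $x_i \preccurlyeq c_{\mathfrak{I}+1}$ and $c_{\mathfrak{I}+1}c_{\mathfrak{I}+2} \preccurlyeq x_i$) introduce no accidental identifications. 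With freshness established the saturation instance is legitimate and the induction goes through.
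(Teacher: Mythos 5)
Your proof is correct and follows essentially the same route as the paper's: part (1) read off directly from the construction, and part (2) by induction on $i$ with the base case given by the initial assumption that $\varphi$ has no closed tableau and the inductive step supplied by the saturation clause of the oracle $\mathcal{P}_{\mathrm{fincon}}$ (Lemma~\ref{lem:oracle}). The paper compresses the inductive step into a single sentence; your elaboration of the single-branch rule instances and, in particular, the check that $c_{\mathfrak{I}+1}, c_{\mathfrak{I}+2}$ are genuinely fresh is exactly the detail it leaves implicit.
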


\begin{proof}
Only 2 is non-trivial. and we prove it by induction on $i$. The base case $i=0$ is given by our initial assumption. 
Now for the inductive hypothesis (IH) we have that $\CSS{\mathcal{F}_i}{\mathcal{C}_i} \in \mathcal{P}_{\mathrm{fincon}}$. Then 
the inductive step is an immediate consequence of Lemma \ref{lem:oracle} for the non-trivial cases.
\end{proof}

We now define the limit $\CSS{\mathcal{F}_\infty}{\mathcal{C}_\infty} = 
\CSS{\bigcup_{i \geqslant 0} \mathcal{F}_i}{\bigcup_{i \geqslant 0} \mathcal{C}_i}$ of the sequence
$\CSS{\mathcal{F}_i}{\mathcal{C}_i}_{i \geqslant 0}$.

\begin{prop} \label{prop:css_limit}
The following properties hold: (1) $\CSS{\mathcal{F}_\infty}{\mathcal{C}_\infty} \in \mathcal{P}_{\mathrm{fincon}}$;  
(2) For all labelled  formulae $\labelledFormula{S}{\varphi}{x}$, if $\CSS{\mathcal{F}_\infty \cup \{
	\labelledFormula{S}{\varphi}{x}\}}{\mathcal{C}_\infty} \in \mathcal{P}_{\mathrm{fincon}}$, then 
	$\labelledFormula{S}{\varphi}{x} \in \mathcal{F}_\infty$. 
\end{prop}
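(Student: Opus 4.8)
The plan is to read off both clauses directly from the oracle structure of $\mathcal{P}_{\mathrm{fincon}}$ established in Lemma~\ref{lem:oracle} (namely $\CSSinclusion$-closure, finite character, and saturation), combined with the monotonicity of the sequence from Proposition~\ref{prop:css_sequence}(1), the stagewise membership from Proposition~\ref{prop:css_sequence}(2), and the fairness of $\mathcal{S}$ from Proposition~\ref{prop:exists_fair_strategy}. No new combinatorial input is needed: each clause is an unwinding of one oracle axiom.

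For clause (1) I would invoke the finite character of $\mathcal{P}_{\mathrm{fincon}}$. By that property it suffices to show that every finite sub-CSS $\CSS{\mathcal{F}_f}{\mathcal{C}_f} \CSSfiniteInclusion \CSS{\mathcal{F}_\infty}{\mathcal{C}_\infty}$ already lies in $\mathcal{P}_{\mathrm{fincon}}$. The key observation is that any such finite sub-CSS is captured at a finite stage of the construction: since $\mathcal{F}_f$ and $\mathcal{C}_f$ are finite and $\mathcal{F}_\infty = \bigcup_{i} \mathcal{F}_i$, $\mathcal{C}_\infty = \bigcup_i \mathcal{C}_i$, each of the finitely many elements of $\mathcal{F}_f \cup \mathcal{C}_f$ belongs to some $\CSS{\mathcal{F}_i}{\mathcal{C}_i}$; taking the maximum of these finitely many indices and appealing to monotonicity (Proposition~\ref{prop:css_sequence}(1)) yields an $n$ with $\CSS{\mathcal{F}_f}{\mathcal{C}_f} \CSSinclusion \CSS{\mathcal{F}_n}{\mathcal{C}_n}$. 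As $\CSS{\mathcal{F}_n}{\mathcal{C}_n} \in \mathcal{P}_{\mathrm{fincon}}$ by Proposition~\ref{prop:css_sequence}(2), the $\CSSinclusion$-closure of $\mathcal{P}_{\mathrm{fincon}}$ forces $\CSS{\mathcal{F}_f}{\mathcal{C}_f} \in \mathcal{P}_{\mathrm{fincon}}$, completing the argument.

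For clause (2) I would argue as follows. Suppose $\CSS{\mathcal{F}_\infty \cup \{\labelledFormula{S}{\varphi}{x}\}}{\mathcal{C}_\infty} \in \mathcal{P}_{\mathrm{fincon}}$. By Proposition~\ref{prop:exists_fair_strategy} the fair strategy $\mathcal{S}$ lists $\labelledFormula{S}{\varphi}{x}$ infinitely often, so I fix an index $i$ with $\labelledFormulaIndex{S}{\varphi}{x}{i} \equiv \labelledFormula{S}{\varphi}{x}$. Since $\CSS{\mathcal{F}_i \cup \{\labelledFormula{S}{\varphi}{x}\}}{\mathcal{C}_i} \CSSinclusion \CSS{\mathcal{F}_\infty \cup \{\labelledFormula{S}{\varphi}{x}\}}{\mathcal{C}_\infty}$ (using $\mathcal{F}_i \subseteq \mathcal{F}_\infty$ and $\mathcal{C}_i \subseteq \mathcal{C}_\infty$), the $\CSSinclusion$-closure of $\mathcal{P}_{\mathrm{fincon}}$ gives $\CSS{\mathcal{F}_i \cup \{\labelledFormulaIndex{S}{\varphi}{x}{i}\}}{\mathcal{C}_i} \in \mathcal{P}_{\mathrm{fincon}}$. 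This is precisely the side condition that triggers the third clause of the construction of $\CSS{\mathcal{F}_i}{\mathcal{C}_i}_{i \geqslant 0}$, so $\labelledFormulaIndex{S}{\varphi}{x}{i} \in \mathcal{F}_{i+1}$, and hence $\labelledFormula{S}{\varphi}{x} \in \mathcal{F}_\infty$.

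I expect the only point requiring genuine care to be the \emph{capture at a finite stage} step in clause (1): one must explicitly use monotonicity of the sequence to amalgamate the finitely many witnessing stage-indices for the separate members of $\mathcal{F}_f$ and $\mathcal{C}_f$ into a single $n$ bounding the whole finite sub-CSS. Everything else is a mechanical application of the oracle axioms together with the infinitude guaranteed by the fair strategy.
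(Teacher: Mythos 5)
Your clause (1) is essentially the paper's own argument (finite character, capture of the finite sub-CSS at a single stage $n$ via monotonicity, then $\CSSinclusion$-closure), and it is sound apart from one omission: you never check that $\CSS{\mathcal{F}_\infty}{\mathcal{C}_\infty}$ is itself a CSS, which is needed before finite character can be applied to it at all, since $\mathcal{P}_{\mathrm{fincon}}$ and the oracle axioms speak only about CSSs. The check is routine --- (Ref) passes to the union because each stage is a CSS and constraint closure is monotone, while any violation of (Contra) or (Freshness) in the limit would already occur at some finite stage --- but the paper states it explicitly and you should too.

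Clause (2), however, has a genuine gap. You fix an \emph{arbitrary} index $i$ with $\labelledFormulaIndex{S}{\varphi}{x}{i} \equiv \labelledFormula{S}{\varphi}{x}$ and invoke $\CSSinclusion$-closure to conclude $\CSS{\mathcal{F}_i \cup \{\labelledFormula{S}{\varphi}{x}\}}{\mathcal{C}_i} \in \mathcal{P}_{\mathrm{fincon}}$. But $\mathcal{P}_{\mathrm{fincon}}$ is a set of CSSs, and that pair need not be one: property (Ref) demands $x \preccurlyeq x \in \closure{\mathcal{C}_i}$, which fails whenever the label $x$ is built from atomic labels that enter the constraint set only at stages later than $i$ (the hypothesis only guarantees $x \preccurlyeq x \in \closure{\mathcal{C}_\infty}$, and by Proposition \ref{prop:constraints} membership in $\closure{\mathcal{C}_i}$ requires $x \in \domain{\mathcal{C}_i}$). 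In that situation the pair is not in $\mathcal{P}_{\mathrm{fincon}}$, the second (copying) clause of the construction fires rather than the third, and $\labelledFormula{S}{\varphi}{x}$ is \emph{not} added to $\mathcal{F}_{i+1}$; your claim that the side condition ``triggers the third clause'' is then false. The paper closes exactly this hole with compactness (Lemma \ref{lem_compactness}): since $\CSS{\mathcal{F}_\infty \cup \{\labelledFormula{S}{\varphi}{x}\}}{\mathcal{C}_\infty}$ is a CSS, some finite $\mathcal{C}_0 \subseteq \mathcal{C}_\infty$ already has $x \preccurlyeq x \in \closure{\mathcal{C}_0}$; choosing $k$ with $\mathcal{C}_0 \subseteq \mathcal{C}_k$ and then an index $l \geq k$ with $\labelledFormulaIndex{S}{\varphi}{x}{l} \equiv \labelledFormula{S}{\varphi}{x}$ --- and this is where the ``infinitely often'' clause of fairness is genuinely needed, rather than mere occurrence, which is all your argument uses --- the stage-$l$ pair is a genuine CSS and your $\CSSinclusion$-closure step goes through verbatim. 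The repair is local, but without it the proof of (2) does not stand.
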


\begin{proof}
\begin{enumerate}
\item First note that $\CSS{\mathcal{F}_\infty}{\mathcal{C}_\infty}$ is a CSS
	since each stage of construction satisfies (Ref) and by our choice of constants
	throughout the construction (Contra) and (Freshness) are satisfied. Further, it is open since otherwise there would be some stage $\CSS{\mathcal{F}_k}{\mathcal{C}_k}$
	at which the offending closure condition is satisfied, which would contradict that each 
	$\CSS{\mathcal{F}_i}{\mathcal{C}_i}$ is consistent.
	Now let $\CSS{\mathcal{F}_f}{\mathcal{C}_f} \CSSinclusion_{f} 
	\CSS{\mathcal{F}_\infty}{\mathcal{C}_\infty}$. Then there exists $k \in \mathbb{N}$
	such that $\CSS{\mathcal{F}_f}{\mathcal{C}_f} \CSSinclusion_{f}
	\CSS{\mathcal{F}_k}{\mathcal{C}_k}$. By Proposition \ref{prop:css_sequence}
	$\CSS{\mathcal{F}_k}{\mathcal{C}_k} \in \mathcal{P}_{\mathrm{fincon}}$ so it follows
	$\CSS{\mathcal{F}_f}{\mathcal{C}_f} \in \mathcal{P}_{\mathrm{fincon}}$. As $\mathcal{P}_{\mathrm{fincon}}$ is of finite character,
	we thus have $\CSS{\mathcal{F}_\infty}{\mathcal{C}_\infty} \in \mathcal{P}_{\mathrm{fincon}}$.

\item   First note that $\CSS{\mathcal{F}_\infty \cup \{
          \labelledFormula{S}{\varphi}{x}\}}{\mathcal{C}_\infty}$ is a CSS so 
          (Contra) and (Freshness) are satisfied when the label $x$ is introduced. By compactness,
          there exists finite $\mathcal{C}_0 \subseteq \mathcal{C}_\infty$ such that 
          $x \preccurlyeq x \in \closure{\mathcal{C}_0}$. As it is finite, there exists $k \in \mathbb{N}$
          such that $\mathcal{C}_0 \subseteq \mathcal{C}_k$ and by fairness there exists 
          $l \geq k$ such that $\labelledFormulaIndex{S}{\varphi}{x}{l} \equiv  \labelledFormula{S}{\varphi}{x}$.
          Since (Freshness) and (Contra) are fufilled with respect to $\mathcal{F}_\infty$ they are also
          fulfilled with respect to $\mathcal{F}_l \cup \{ \labelledFormula{S}{\varphi}{x}\}$ so
          $\CSS{\mathcal{F}_{l+1}}{\mathcal{C}_{l+1}} = 
          \CSS{\mathcal{F}_l \cup \{ \labelledFormula{S}{\varphi}{x}\}}{\mathcal{C}_l} \in \mathcal{P}_{\mathrm{fincon}}$
          and $\CSS{\mathcal{F}_{l+1}}{\mathcal{C}_{l+1}} = 
          \CSS{\mathcal{F}_l \cup \{ \labelledFormula{S}{\varphi}{x}\} 
          \cup \mathcal{F}_e}{\mathcal{C}_l \cup \mathcal{C}_e}$. 
          Hence $\labelledFormula{S}{\varphi}{x} \in \mathcal{F}_\infty$. \qedhere
\end{enumerate}
\end{proof}

\begin{lem} \label{lem:limit_is_hintikka}
The limit CSS is a Hintikka CSS.
\end{lem}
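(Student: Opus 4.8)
The plan is to verify, one at a time, each of the fifteen Hintikka conditions of Definition~\ref{def:hintikka_css} for the limit $\CSS{\mathcal{F}_\infty}{\mathcal{C}_\infty}$, leaning throughout on the two facts recorded in Proposition~\ref{prop:css_limit}: that $\CSS{\mathcal{F}_\infty}{\mathcal{C}_\infty} \in \mathcal{P}_{\mathrm{fincon}}$, and that $\mathcal{F}_\infty$ already contains any labelled formula whose adjunction leaves the CSS in $\mathcal{P}_{\mathrm{fincon}}$. The conditions fall into three families, each dispatched by a different mechanism. Conditions~1--3 are exactly the negations of the three closure conditions of Definition~\ref{def:closed_tableau}; since $\mathcal{P}_{\mathrm{fincon}}$ is an oracle (Lemma~\ref{lem:oracle}) all its members are open, so $\CSS{\mathcal{F}_\infty}{\mathcal{C}_\infty}$ is open and none of those closure conditions can hold, giving 1--3 at once.

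The second family consists of the \emph{non-generative} conditions 4--8, 11, 12 and 14, arising from the tableaux rules that introduce neither fresh labels nor new constraints. For each I would check that the presence of the principal labelled formula in $\mathcal{F}_\infty$ --- together, for the rules carrying a side condition ($\langle\mathbb{T}\rightarrow\rangle$, $\langle\mathbb{F}\blacktriangleright\rangle$, $\langle\mathbb{T}\gimp\rangle$, $\langle\mathbb{T}\limp\rangle$), with the required membership of a constraint in $\closure{\mathcal{C}_\infty}$ --- fulfils the condition of the corresponding rule of Figure~\ref{fig_tableaux_ILGL_rules}; here Proposition~\ref{prop:constraints}(1) is used to pass between $yz \in \domain{\closure{\mathcal{C}_\infty}}$ and $yz \preccurlyeq yz \in \closure{\mathcal{C}_\infty}$. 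Saturation of $\mathcal{P}_{\mathrm{fincon}}$ then yields that for at least one output branch the extension $\CSS{\mathcal{F}_\infty \cup \mathcal{F}_i}{\mathcal{C}_\infty}$ lies in $\mathcal{P}_{\mathrm{fincon}}$; because these rules add only formulae, I peel the added formulae off one at a time by $\CSSinclusion$-closure and apply Proposition~\ref{prop:css_limit}(2) to place each in $\mathcal{F}_\infty$. This delivers the desired disjunction (for the branching rules) or conjunction (for $\langle\mathbb{T}\wedge\rangle$ and $\langle\mathbb{F}\vee\rangle$) of membership statements.

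The third family comprises the \emph{generative} conditions 9, 10, 13 and 15, arising from $\langle\mathbb{F}\rightarrow\rangle$, $\langle\mathbb{T}\blacktriangleright\rangle$, $\langle\mathbb{F}\gimp\rangle$ and $\langle\mathbb{F}\limp\rangle$, which introduce fresh labels and new constraints. These are handled not by saturation at the limit but by the explicit construction of the sequence. If the principal formula lies in $\mathcal{F}_\infty$ it enters at some finite stage $n$, and by fairness (Definition~\ref{def:fair_strategy}, Proposition~\ref{prop:exists_fair_strategy}) it is selected for processing at some stage $i \geq n$, at which point it already belongs to $\mathcal{F}_i$; hence $\CSS{\mathcal{F}_i \cup \{\labelledFormulaIndex{S}{\varphi}{x}{i}\}}{\mathcal{C}_i} = \CSS{\mathcal{F}_i}{\mathcal{C}_i} \in \mathcal{P}_{\mathrm{fincon}}$ by Proposition~\ref{prop:css_sequence}(2), so the ``in $\mathcal{P}_{\mathrm{fincon}}$'' clause of the construction fires and the witnessing $F_e$ and $\mathcal{C}_e$ of the table are adjoined. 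Reading off that table supplies exactly the witness labels, formulae and constraints demanded by the relevant Hintikka clause (e.g.\ $c_{\mathfrak{I}+1}c_{\mathfrak{I}+2} \preccurlyeq x \in \mathcal{C}_\infty \subseteq \closure{\mathcal{C}_\infty}$ together with $\labelledFormula{T}{\varphi}{c_{\mathfrak{I}+1}}$ and $\labelledFormula{T}{\psi}{c_{\mathfrak{I}+2}}$ for condition~10), and all of these survive into the limit since the sequence is increasing (Proposition~\ref{prop:css_sequence}(1)).

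The hard part will be precisely this third family. One cannot simply invoke saturation for the generative rules at the limit, since those rules require labels fresh with respect to the constraint set, whereas $\mathcal{C}_\infty$ may exhaust the atomic labels of $\Sigma$; the witnesses must therefore already have been installed at a finite stage, which is exactly what fairness guarantees. Beyond this, the residual care needed is bookkeeping: confirming that the side-condition translations via Proposition~\ref{prop:constraints} correctly align the rule variables $y,z$ with the $c_i,c_j$ quantified in Definition~\ref{def:hintikka_css}, and that the witness constraints and formulae lodged in $\mathcal{C}_{i+1}$ and $\mathcal{F}_{i+1}$ match the patterns of each clause verbatim.
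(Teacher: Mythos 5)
Your proposal is correct, and on most of the conditions it coincides with the paper's (very terse) proof: conditions 1--3 follow from openness of the limit (it lies in $\mathcal{P}_{\mathrm{fincon}}$, all of whose members are open), and the non-generative conditions follow from saturation of the oracle together with Proposition~\ref{prop:css_limit}(2), with exactly the peeling-off and Proposition~\ref{prop:constraints}(1) bookkeeping you describe. Where you genuinely depart from the paper is on the generative conditions 9, 10, 13 and 15: the paper's proof asserts that saturation and Proposition~\ref{prop:css_limit}(2) suffice for \emph{all} of conditions 4--15, whereas you argue these four must instead be discharged by the explicit construction of the sequence --- the principal formula enters at a finite stage, fairness guarantees it is re-selected later, Proposition~\ref{prop:css_sequence}(2) makes the positive clause of the construction fire, and the table's $F_e$ and $\mathcal{C}_e$ install witnesses that persist to the limit. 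Your instinct here is sound, and in fact more careful than the paper's wording: a saturation instance of a generative rule at the limit enlarges the constraint set with a constraint on a \emph{fresh} label, which puts it outside the scope of Proposition~\ref{prop:css_limit}(2) (that proposition only concerns adding formulae over the unchanged $\mathcal{C}_\infty$), and such a constraint can never lie in $\closure{\mathcal{C}_\infty}$ in any case, since closure introduces no new labels (Proposition~\ref{prop:constraints}(2)). So the witnesses these conditions demand can only come from constraints actually placed in $\mathcal{C}_\infty$ at a finite stage --- which is precisely the mechanism you invoke, and the reason the construction carries the $F_e$/$\mathcal{C}_e$ table at all. In short, your proof is correct and supplies the detail that the paper's two-sentence proof elides; read literally, the paper's appeal to saturation for the generative cases would not go through without your third-family argument.
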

 \begin{proof}
		For properties $(1) - (3)$ we have that $\CSS{\mathcal{F}_\infty}{\mathcal{C}_\infty}$ 
		is open. For the other conditions,
		the saturation property of the oracle $\mathcal{P}_{\mathrm{fincon}}$ and Proposition 
		\ref{prop:css_limit} item 2. suffice.
\end{proof}

We immediately obtain completeness.

\begin{thm}[Completeness] \label{th_completeness}
If $\varphi$ is valid, then there exists a closed tableau for $\varphi$. \qed
\end{thm}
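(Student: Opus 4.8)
The plan is to prove the contrapositive: assuming that $\varphi$ admits no closed tableau, I will construct a layered graph model in which $\varphi$ fails, witnessing that $\varphi$ is not valid. Since every component of the countermodel-extraction machinery has already been assembled in the preceding development, the proof is essentially a matter of chaining these results together in the right order.

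First I would observe that, if there is no closed tableau for $\varphi$, then the initial CSS $\CSS{\{ \labelledFormula{F}{\varphi}{c_0} \}}{\{ c_0 \preccurlyeq c_0 \}}$ is consistent in the sense of Definition~\ref{def:consistency}; being finite, by Proposition~\ref{prop:consistency}(2) it is therefore finitely consistent, i.e.\ it lies in $\mathcal{P}_{\mathrm{fincon}}$. Next, using the fair strategy guaranteed by Proposition~\ref{prop:exists_fair_strategy}, I would run the inductive construction of the sequence $\CSS{\mathcal{F}_i}{\mathcal{C}_i}_{i \geq 0}$ specified just before Proposition~\ref{prop:css_sequence}, starting from this initial CSS. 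Proposition~\ref{prop:css_sequence} ensures the sequence is increasing and that every stage remains in $\mathcal{P}_{\mathrm{fincon}}$, and Proposition~\ref{prop:css_limit} lifts both facts to the limit $\CSS{\mathcal{F}_\infty}{\mathcal{C}_\infty}$, additionally establishing the crucial saturation-closure property: any labelled formula whose addition keeps the CSS finitely consistent is already present in $\mathcal{F}_\infty$.

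The key step is then Lemma~\ref{lem:limit_is_hintikka}, which combines the openness of the limit with the saturation of the oracle $\mathcal{P}_{\mathrm{fincon}}$ and the closure property of Proposition~\ref{prop:css_limit}(2) to conclude that $\CSS{\mathcal{F}_\infty}{\mathcal{C}_\infty}$ is a Hintikka CSS. Applying the function $\Omega$ of Definition~\ref{def:function_omega}, Lemma~\ref{lem_omega_extracts_model} yields a genuine layered graph model $\model = \Omega(\CSS{\mathcal{F}_\infty}{\mathcal{C}_\infty})$, and since $\labelledFormula{F}{\varphi}{c_0} \in \mathcal{F}_\infty$, the countermodel Lemma~\ref{lem:countermodel} gives $c_0^{\omegasubgraph} \not\satisfaction{\model} \varphi$. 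Hence $\varphi$ is not valid, which establishes the contrapositive and so the theorem.

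I expect the main obstacle to lie not in the final assembly but in the two results it rests on most heavily: that the limit is genuinely a Hintikka CSS (Lemma~\ref{lem:limit_is_hintikka}) and that $\Omega$ extracts a correct countermodel (Lemma~\ref{lem:countermodel}). The former depends on the fair strategy's insistence that each labelled formula be tested infinitely often, so that saturation is fully exhausted in the limit rather than merely approximated at finite stages; the latter depends on the `for all $K \succcurlyeq H$' guards in the multiplicative clauses of Figure~\ref{fig:graphsat}, which are exactly what allow the simultaneous induction in Lemma~\ref{lem:countermodel} to push each required decomposition down the preorder to a fresh, simplest-instantiation subgraph. Fortunately both are discharged before the theorem, so here only their correct invocation is required.
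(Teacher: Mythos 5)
Your proof is correct and follows exactly the paper's own route: the paper likewise argues by contraposition, placing the initial CSS $\CSS{\{\labelledFormula{F}{\varphi}{c_0}\}}{\{c_0 \preccurlyeq c_0\}}$ in $\mathcal{P}_{\mathrm{fincon}}$, building the increasing sequence via the fair strategy, passing to the limit (Propositions~\ref{prop:css_sequence} and~\ref{prop:css_limit}), concluding it is a Hintikka CSS (Lemma~\ref{lem:limit_is_hintikka}), and extracting the countermodel with $\Omega$ via Lemmas~\ref{lem_omega_extracts_model} and~\ref{lem:countermodel}. Your invocations of each result, including the consistency/finite-consistency step, match the paper's argument.
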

%\begin{proof}
%	Suppose there exists no proof for the formula $\varphi$. Then by Lemma \ref{lem:limit_is_hintikka}
%	we can construct the Hintikka CSS
%	$\CSS{\mathcal{F}_{\infty}}{\mathcal{C}_{\infty}}$ from  $\mathcal{T}_0 = [\CSS{\{
%   	 \labelledFormula{F}{\varphi}{c_0} \}}{\{ c_0 \preccurlyeq c_0) \}}]$ as outlined above,
%    	with $\labelledFormula{F}{\varphi}{c_0} \in \mathcal{F}_\infty$.
%	Then by Lemma \ref{lem:countermodel},
%	$\Omega(\CSS{\mathcal{F}_{\infty}}{\mathcal{C}_{\infty}})$ is a countermodel for $\varphi$. 
%	That is, $\varphi$ is not valid. 
%\end{proof}

\section{Decidability}\label{sec:decidability}

In this section, we prove the decidability of ILGL. Given the infinite behaviour of the countermodel extraction procedure, we do not do so by means of a tableaux termination argument. Instead, we define a semantics on a class of algebras that we prove equivalent to the layered graph semantics by means of a representation theorem. By proving the finite embeddability property holds for this class of algebras, we obtain the finite model property and thus decidability of validity for the layered graph semantics. We are able to extend the representation theorem to a full topological duality, and do so in Section~\ref{sec:duality}.

\subsection{Algebraic and Relational Semantics}
%Discussion
We begin by defining algebraic structures appropriate for interpreting ILGL. As one might be expected, these are obtained by replacing the Boolean algebra base of the \emph{layered algebras} that act as LGL's algebraic semantics with a Heyting algebra in order to soundly interpret the propositional intuitionistic logic fragment of ILGL.

\begin{defi}[Layered Heyting algebra] A \emph{layered Heyting algebra} is a structure $\mathbb{A}=(A, \wedge, \vee, \rightarrow, \bot, \top, \blacktriangleright, \gimp, \limp)$ 
        such that $(A, \wedge, \vee, \rightarrow, \bot, \top)$ is a Heyting algebra and $\blacktriangleright$, $\gimp$, and $\limp$ are binary operations on $A$ satisfying  
	$a \blacktriangleright b \leq c \text{ iff } a \leq b \gimp c \text{ iff } b \leq a \limp c.$ \qed
\end{defi}

We can derive some simple but useful properties about the residuated groupoid structure of a layered Heyting algebra.

\begin{prop}[cf. \cite{Jipsen}] \label{prop-alg-prop} 
Let $\mathbb{A}$ be a layered Heyting algebra. Then, for all $a, b, a', b' \in A$ and $X, Y \subseteq A$,  we have
\begin{enumerate}[leftmargin=*]
\item If $a \leq a'$ and $b \leq b'$ then $a \blacktriangleright b \leq a' \blacktriangleright b'$;
\item If $\bigvee X$ and $\bigvee Y$ exist then $\bigvee_{x \in X, y \in Y} x \blacktriangleright y$ exists and
	$(\bigvee X) \blacktriangleright (\bigvee Y) =  \bigvee_{x \in X, y \in Y} x \blacktriangleright y$;
\item If $a = \bot$ or $b = \bot$ then $a \blacktriangleright b = \bot$;	
\item If $\bigvee X$ exists then for any $z \in A$: $\bigwedge_{x \in X} x \gimp z$ and $\bigwedge_{x \in X} x 	\limp z$ exist with $\bigwedge_{x \in X} x \gimp z = \bigvee X \gimp z$ and $\bigwedge_{x \in X} x 			\limp z = \bigvee X \limp z$
\item If $\bigwedge X$ exists then for any $z \in A$ $\bigwedge_{x \in X} z \gimp x$ and 
	$\bigwedge_{x \in X} z \limp x$ exist with $\bigwedge_{x \in X} z \gimp x = z \gimp \bigwedge X$
	and $\bigwedge_{x \in X} z \limp x = z \limp \bigwedge X$;
\item $a \gimp \top = a \limp \top = \bot \gimp a =  \bot \limp a = \top$. \qed
\end{enumerate}
\end{prop}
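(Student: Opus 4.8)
The plan is to derive all six clauses from the single defining residuation $a \blacktriangleright b \leq c$ iff $a \leq b \gimp c$ iff $b \leq a \limp c$, treating $\mathbb{A}$ as a residuated groupoid over its (lattice-)reduct; each clause is then an instance of a standard residuation argument (cf. \cite{Jipsen}). The backbone is a handful of monotonicity facts I would establish first. For monotonicity of $\blacktriangleright$ (clause 1) I argue coordinatewise: from $a' \blacktriangleright b \leq a' \blacktriangleright b$ the $\gimp$-residuation gives $a' \leq b \gimp (a' \blacktriangleright b)$, so $a \leq a'$ yields $a \blacktriangleright b \leq a' \blacktriangleright b$; symmetrically the $\limp$-residuation gives monotonicity in the second coordinate, and composing the two inequalities proves clause 1. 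The same unfolding shows that $\gimp$ and $\limp$ are antitone in their first argument and monotone in their second, which I would record as a preliminary lemma since clauses 4 and 5 depend on it.

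Next I would treat clause 2, the key distributivity fact, directly as a least-upper-bound computation rather than via iterated single-variable joins. For the upper bound I use clause 1: $x \leq \bigvee X$ and $y \leq \bigvee Y$ give $x \blacktriangleright y \leq (\bigvee X) \blacktriangleright (\bigvee Y)$. For minimality, suppose $x \blacktriangleright y \leq c$ for all $x \in X$, $y \in Y$; fixing $y$ and using the $\gimp$-residuation gives $x \leq y \gimp c$ for all $x$, hence $\bigvee X \leq y \gimp c$ and so $(\bigvee X) \blacktriangleright y \leq c$; now ranging over $y$ and using the $\limp$-residuation gives $y \leq (\bigvee X) \limp c$, hence $\bigvee Y \leq (\bigvee X) \limp c$, i.e. $(\bigvee X) \blacktriangleright (\bigvee Y) \leq c$. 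This simultaneously establishes existence of $\bigvee_{x,y} x \blacktriangleright y$ and the claimed equality. Clause 3 then drops out by taking $X$ or $Y$ empty in clause 2 (so the relevant join is $\bot$ and the join on the right is empty, hence $\bot$); alternatively I would give the one-line direct argument $\bot \leq b \gimp \bot$, which by residuation is $\bot \blacktriangleright b \leq \bot$, and dually for $a \blacktriangleright \bot$.

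For clauses 4 and 5 I would show the proposed element is the relevant meet. For clause 4, antitonicity of $\gimp$ in its first argument makes $(\bigvee X) \gimp z$ a lower bound of $\{x \gimp z\}$; for greatest-lower-boundness, if $d \leq x \gimp z$ for all $x$ then residuation gives $d \blacktriangleright x \leq z$ for all $x$, and since $\blacktriangleright$ preserves the join in its second argument (the special case of clause 2 with a singleton first factor) we get $d \blacktriangleright (\bigvee X) = \bigvee_x (d \blacktriangleright x) \leq z$, i.e. $d \leq (\bigvee X) \gimp z$; the $\limp$ half is identical using join-preservation in the first argument. Clause 5 is the dual, simpler fact that $z \gimp (-)$ and $z \limp (-)$ preserve existing meets: monotonicity gives the lower bound, and if $d \leq z \gimp x$ for all $x$ then $d \blacktriangleright z \leq x$ for all $x$, so $d \blacktriangleright z \leq \bigwedge X$ and thus $d \leq z \gimp (\bigwedge X)$; $\limp$ is handled the same way. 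Finally clause 6 I would verify directly from residuation with the extremal witness $\top$: $\top \blacktriangleright a \leq \top$ forces $a \gimp \top = \top$ and $a \blacktriangleright \top \leq \top$ forces $a \limp \top = \top$, while $\bot \gimp a = \bot \limp a = \top$ follows because $\top \blacktriangleright \bot = \bot \blacktriangleright \top = \bot$ by clause 3 together with $\bot \leq a$.

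The only real subtlety---and the place I would be most careful---is bookkeeping the two residuation directions, since $\gimp$ residuates the first argument of $\blacktriangleright$ and $\limp$ the second; mixing these up interchanges the role of joins and meets and breaks the adjunction arguments. The other point deserving care is that clauses 2, 4, and 5 assert \emph{existence} of the indexed join or meet, not merely an identity between two a priori existing elements; phrasing each as ``the exhibited element is the least upper bound (resp. greatest lower bound)'' discharges existence and value in one stroke. No completeness of the lattice is needed: every supremum or infimum invoked is either assumed to exist or is a finite (possibly empty) one already available in the bounded lattice $(A, \wedge, \vee, \bot, \top)$.
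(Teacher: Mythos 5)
Your proof is correct in every clause: the monotonicity/antitonicity lemmas, the least-upper-bound computation for clause 2 (which, as you note, also discharges the existence claims), the derivation of clauses 3--6 from residuation, and the careful tracking of which implication residuates which argument are all sound. The paper itself gives no proof of this proposition---it is stated as a standard fact about residuated groupoids with a ``cf.~\cite{Jipsen}'' citation---and your argument is precisely the standard residuation/adjunction reasoning that the citation points to, so there is nothing to compare beyond noting that you have filled in the omitted details correctly.
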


We interpret \logicfont{ILGL} on layered Heyting algebras as follows. A valuation
$\valuation: \mathrm{Prop} \rightarrow \mathbb{A}$ on a layered Heyting algebra $\mathbb{A}$, is uniquely extended to an interpretation $\llbracket - \rrbracket$ as follows: $\llbracket \mathrm{p} \rrbracket = \valuation(\mathrm{p})$, $\llbracket \top \rrbracket = \top$, $\llbracket \bot \rrbracket = \bot$, $\llbracket \varphi \circ \psi \rrbracket = \llbracket \varphi \rrbracket \circ \llbracket \psi \rrbracket$ for 
$\circ \in \{ \land, \lor, \rightarrow, \blacktriangleright, \gimp, \limp \}$.
%formatting
An algebra $\mathbb{A}$ and an interpretation $\llbracket - \rrbracket$ \emph{satisfies} $\varphi$ if 
$\llbracket \varphi \rrbracket = \top$. Correspondingly, $\varphi$ is \emph{valid} if it is satisfied by every algebra $\mathbb{A}$ and interpretation $\llbracket - \rrbracket$.

%S&C Discussion
We obtain soundness and completeness for this semantics with respect to the Hilbert-type system $\mathrm{ILGL}_\mathrm{H}$ of Fig \ref{fig:hilbert_rules_ILGL} by forming a Lindenbaum-Tarski algebra from it, utilising precisely the same argument as that given for LGL.
%%% REFERENCE LGL PAPER

\begin{thm}[Algebraic Soundness \& Completeness (cf. \cite{CMP14})] \label{theorem:algsoundcomplete}
For all formulae $\varphi$ and $\psi$ of \logicfont{ILGL}, $\varphi \vdash \psi$ is provable in $\mathrm{ILGL}_\mathrm{H}$ iff, for all algebraic interpretations $\llbracket - \rrbracket$, 
$\llbracket \varphi \rrbracket \leq \llbracket \psi \rrbracket$.\qed
\end{thm}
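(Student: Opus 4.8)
The plan is to prove Theorem~\ref{theorem:algsoundcomplete} via the standard Lindenbaum--Tarski construction, exactly as indicated in the text preceding the statement. The result asserts an equivalence: $\varphi \vdash \psi$ is provable in $\mathrm{ILGL}_\mathrm{H}$ iff $\llbracket \varphi \rrbracket \leq \llbracket \psi \rrbracket$ for every algebraic interpretation. The forward direction (soundness) and the backward direction (completeness) are handled separately, but both hinge on constructing a single canonical algebra from the syntax.

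First I would prove soundness by induction on the length of derivations in $\mathrm{ILGL}_\mathrm{H}$. Fix an arbitrary layered Heyting algebra $\mathbb{A}$ and interpretation $\llbracket - \rrbracket$. For each rule of Figure~\ref{fig:hilbert_rules_ILGL}, I would verify that if the premisses satisfy the corresponding algebraic inequality then so does the conclusion. The additive rules are validated by the Heyting algebra axioms; the rule $(\blacktriangleright)$ follows from monotonicity of $\blacktriangleright$ (Proposition~\ref{prop-alg-prop}, item~1); and the rules $(\gimp_1), (\gimp_2), (\limp_1), (\limp_2)$ follow directly from the defining residuation biconditional $a \blacktriangleright b \leq c \text{ iff } a \leq b \gimp c \text{ iff } b \leq a \limp c$. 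Since $\mathbb{A}$ and $\llbracket - \rrbracket$ were arbitrary, provability of $\varphi \vdash \psi$ yields $\llbracket \varphi \rrbracket \leq \llbracket \psi \rrbracket$ in every interpretation.

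For completeness I would build the Lindenbaum--Tarski algebra. Define $\varphi \equiv \psi$ iff both $\varphi \vdash \psi$ and $\psi \vdash \varphi$ are provable; using $(\mathrm{Ax})$ and $(\mathrm{Cut})$ this is an equivalence relation. The key step is to show $\equiv$ is a congruence with respect to every connective $\{\wedge, \vee, \rightarrow, \blacktriangleright, \gimp, \limp\}$, so that the operations descend to the quotient $A = \mathrm{Form}/{\equiv}$; the congruence property for $\blacktriangleright$ is precisely what rule $(\blacktriangleright)$ supplies, and the analogous property for the implications follows from their rules together with $(\mathrm{Cut})$. One then defines $[\varphi] \leq [\psi]$ iff $\varphi \vdash \psi$ is provable, and checks that $(A, \wedge, \vee, \rightarrow, \bot, \top)$ forms a Heyting algebra (a standard calculation from the additive rules) while the residuation biconditional for $\blacktriangleright, \gimp, \limp$ is witnessed by $(\gimp_1), (\gimp_2), (\limp_1), (\limp_2)$. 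Thus $A$ is a genuine layered Heyting algebra. Taking the canonical valuation $\valuation(\mathrm{p}) = [\mathrm{p}]$, an easy induction gives $\llbracket \varphi \rrbracket = [\varphi]$, so the hypothesis $\llbracket \varphi \rrbracket \leq \llbracket \psi \rrbracket$ in this particular algebra unwinds to $[\varphi] \leq [\psi]$, i.e.\ provability of $\varphi \vdash \psi$.

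The main obstacle, and the only place demanding genuine care, is verifying that $\equiv$ is a congruence for the residuated operations and that the resulting quotient satisfies the layered Heyting algebra axioms rather than merely the Heyting part. In particular one must confirm that the biconditional defining $\blacktriangleright, \gimp, \limp$ holds at the level of $\equiv$-classes; this is exactly the content for which the Hilbert rules were designed, so the verification is routine but must be carried out rule by rule. Since the text notes this is \emph{precisely the same argument as that given for LGL}, I would cite \cite{CMP14} for the detailed congruence and quotient computations, noting only that the Boolean base there is replaced by a Heyting base, which affects only the treatment of the additive implication $\rightarrow$ and not the multiplicative fragment.
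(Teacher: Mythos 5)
Your proposal is correct and follows exactly the route the paper itself takes: soundness by rule-by-rule induction using the residuation biconditional and monotonicity, and completeness via the Lindenbaum--Tarski quotient ordered by provability, with the details deferred to \cite{CMP14} since only the additive base changes from Boolean to Heyting. Nothing in your argument diverges from or adds a gap relative to the paper's (deliberately brief) proof.
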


We now define relational structures appropriate to interpret \logicfont{ILGL}. These are simple extensions of the frames that interpret propositional intuitionistic logic.

\begin{defi}[Intuitionistic Layered Frame]
An \emph{intuitionistic layered frame} is a triple $\mathcal{F} = (X, \preccurlyeq, R)$ where $(X, \preccurlyeq)$ is a preorder and $R$ a ternary relation over $X$. \qed
\end{defi}

This is an obvious generalization of ordered scaffold. By defining $R(G, H, K)$ iff $\gcompE{G}{H} \downarrow$ and $\gcompE{G}{H} = K$ we obtain an intuitionistic layered frame from an ordered scaffold. An alternative frame semantics can be derived from recent independent work by Galatos \& Jipsen \cite{GJ2017}. There, algebras called \emph{generalized bunched implication algebras} are investigated. These algebras can be seen as layered Heyting algebras for which $\blacktriangleright$ is associative and has an associated unit. Such algebras, together with sequent calculus presentations of substructural logics, motivate the definition of \emph{distributive residuated frames} which provide an alternative semantics to BI-like logics. Similar structures can be defined for ILGL, though we do not pursue that here.

As with the semantics on scaffolds, we require a valuation on an intuitionistic layered frame to be persistent with respect to the preorder $\preccurlyeq$. An intuitionistic layered frame $\mathcal{F}$ together with a persistent valuation $\mathcal{V}$ is a \emph{relational model} $\mathcal{M}$ of \logicfont{ILGL}, with a satisfaction relation $\vDash_{\model}$ defined analogously to that of layered graph models in Figure \ref{fig:relsat}.

\begin{figure}
\centering
\hrule
\vspace{1mm}
\setlength\tabcolsep{1.5pt}
\setlength\extrarowheight{2pt}
\begin{tabular}{c c c c l r c c c c r r c}
$x$ & $\vDash_{\model}$ & $\mathrm{p}$ & iff & \myalign{l}{$x \in \mathcal{V}(\mathrm{p})$} & & & &
$x$ & $\vDash_{\model}$ & $\top$ & 
 $x  \not\vDash_{\model} \bot$  \\
$x$ & $\vDash_{\model}$ & $\varphi \land \psi$ & iff & \myalign{l}{$x \vDash_{\model} \varphi$ and $x \vDash_{\model} \psi$} \\
$x$ & $\vDash_{\model}$ & $\varphi \lor \psi$ & iff & $x \vDash_{\model} \varphi$ or  $x \vDash_{\model} \psi$
\\
$x$ & $\vDash_{\model}$ & $\varphi \rightarrow \psi$ & iff & \myalign{l}{for all $y \succcurlyeq x$, $y \vDash_{\model}\varphi$ implies $y$  $\vDash_{\model}$ $\psi$} \\
$x$ & $\vDash_{\model}$ & $\varphi \blacktriangleright \psi$ & iff & \multicolumn{8}{l}{there exists $x', y, z$ s.t. $x \succcurlyeq x'$,  $Ryzx'$, $y \vDash_{\model} \varphi$ and  $z \vDash_{\model} \psi$} \\
$x$ & $\vDash_{\model}$ & $\varphi \gimp \psi$ & iff & \multicolumn{8}{l}{for all $x', y, z$: $x'\succcurlyeq x, Rx'yz$ and  $y \vDash_{\model} \varphi$ implies $z \vDash_{\model} \psi$} \\
$x$ & $\vDash_{\model}$ & $\varphi \limp \psi$ & iff & \multicolumn{8}{l}{for all $x', y, z$: $x' \succcurlyeq x, Ryx'z$ and $y \vDash_{\model} \varphi$ implies $z \vDash_{\model} \psi$}
%\multicolumn{12}{c}{LGL is given by the case where $\preccurlyeq$ is $=$.}
\end{tabular}
\caption{Satisfaction for relational models of ILGL.}
\vspace{1mm}
\hrule
\label{fig:relsat}
\end{figure}

We can define a new notion of realization for our labelled tableaux system to give soundness and completeness for the relational semantics.

\begin{defi}[Relational Realization]\label{def:relrealization}
Let $\CSS{\mathcal{F}}{\mathcal{C}}$ be a CSS. A \emph{relational realization} of
$\CSS{\mathcal{F}}{\mathcal{C}}$ is a triple $\mathfrak{R} = (\mathcal{F}, \valuation,
\realization{.})$ where $\model = (\mathcal{F}, \valuation)$ is a relational model and $\realization{.} :
\domain{\mathcal{C}} \rightarrow X$ is such that 
	(1) For all $x \in \domain{\mathcal{C}}$, if $x = c_ic_j$, then $R\realization{c_i}\realization{c_j}\realization{c_ic_j}$;
	(2) If $x \preccurlyeq y \in \mathcal{C}$, then $\realization{x} \preccurlyeq_{\model} \realization{y}$;
	(3) If $\labelledFormula{T}{\varphi}{x} \in \mathcal{F}$, then
	$\realization{x} \satisfaction{\model} \varphi$; and 
	(4) If $\labelledFormula{F}{\varphi}{x} \in \mathcal{F}$, then
    $\realization{x} \not \models_{\model} \varphi$. \hfill \qed    
\end{defi}

Precisely the same argument as that given in Section~\ref{sec:metatheory} gives that the labelled tableaux system is sound for the relational semantics. Noting that we can transform a layered graph model into an intuitionistic layered frame with an equivalent satisfaction relation, we also have completeness by transforming the layered graph countermodels. As the labelled tableaux is sound and complete for both the layered graph semantics and the relational semantics, we obtain their equivalence.

\begin{thm}[Relational Soundness \& Completeness] {\ }
\begin{enumerate}
\item For all formulae $\varphi$ of \logicfont{ILGL}, $\varphi$ is valid in the relational semantics iff there exists a closed tableau for $\varphi$.
\item For all formulae $\varphi$ of \logicfont{ILGL}, $\varphi$ is valid in the relational semantics iff $\varphi$ is valid in the layered graph semantics. \qed
\end{enumerate}
\end{thm}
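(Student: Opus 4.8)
The plan is to leverage the fact that the labelled tableaux system has already been shown sound and complete for the layered graph semantics (Theorems~\ref{thm:soundness} and~\ref{th_completeness}), and to establish that it is \emph{also} sound and complete for the relational semantics. Since soundness and completeness pin down exactly the set of valid formulae, both parts of the theorem follow once this is done: part~(1) is immediate, and part~(2) follows because two semantics proven (un)satisfiable-equivalent to the same proof system must agree on validity.

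\emph{Relational soundness.} First I would replay the soundness argument of Section~\ref{sec:metatheory} verbatim, but with \emph{relational} realizations (Definition~\ref{def:relrealization}) in place of layered-graph realizations. The only substantive content is to re-prove the analogue of Lemma~\ref{lem_rules_preserve_realizability} --- that each tableaux rule preserves relational realizability --- and the analogue of Lemma~\ref{lem_closed_tableau_not_realizable} --- that closed branches are not relationally realizable. Each is a routine case analysis over the rules of Figure~\ref{fig_tableaux_ILGL_rules}, now reading off the relational satisfaction clauses of Figure~\ref{fig:relsat}; the constraint clause $c_ic_j \preccurlyeq x$ is interpreted via $R\realization{c_i}\realization{c_j}\realization{c_ic_j}$ together with $\realization{c_ic_j} \preccurlyeq_{\model} \realization{x}$, which is exactly what the relational clauses for $\blacktriangleright$, $\gimp$, $\limp$ consume. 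With these two lemmas in hand, the proof that a closed tableau forces validity is identical to that of Theorem~\ref{thm:soundness}.

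\emph{Relational completeness.} Rather than redo the entire Hintikka-set construction, I would obtain relational completeness by \emph{transport} from the layered graph case. The key observation, already flagged in the text, is that every ordered scaffold yields an intuitionistic layered frame by setting $R(G,H,K)$ iff $\gcompE{G}{H}\downarrow$ and $\gcompE{G}{H}=K$, and that this transformation preserves the satisfaction relation on the nose: by comparing the clauses in Figure~\ref{fig:graphsat} with those in Figure~\ref{fig:relsat}, one checks by a straightforward structural induction that for any formula $\varphi$ and subgraph $H$, $H \vDash_{\model}\varphi$ in the layered graph model iff $H \vDash_{\model'}\varphi$ in the induced relational model. Consequently, if $\varphi$ has no closed tableau, the layered graph countermodel $\Omega(\CSS{\mathcal{F}_\infty}{\mathcal{C}_\infty})$ produced in Lemma~\ref{lem:countermodel} transports to a relational countermodel, so $\varphi$ is not relationally valid. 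Contrapositively, relational validity implies the existence of a closed tableau.

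\emph{Equivalence of the two semantics.} Part~(2) is then purely formal: a formula is valid in the relational semantics iff it has a closed tableau (part~(1)) iff it is valid in the layered graph semantics (Theorems~\ref{thm:soundness} and~\ref{th_completeness}), giving the biconditional. The main obstacle I anticipate is not any single step but the verification that the satisfaction-preserving correspondence is \emph{exact} for the multiplicative connectives: one must confirm that the existential witness $x',y,z$ with $x\succcurlyeq x'$ and $Ryzx'$ in the relational clause for $\blacktriangleright$ corresponds precisely to the witness $\gcompE{K_0}{K_1}\downarrow$ with $H\succcurlyeq\gcompE{K_0}{K_1}$ in the scaffold clause, and similarly that the universally-quantified witnesses for $\gimp$ and $\limp$ range over exactly the same triples under the definition of $R$. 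Because $R$ is defined to hold \emph{only} when the composite is defined and equal to the third coordinate, these witnesses match up bijectively, so the induction goes through without the need for any side conditions on the frame.
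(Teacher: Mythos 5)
Your proposal is correct and follows essentially the same route as the paper: soundness is obtained by replaying the Section~\ref{sec:metatheory} argument with relational realizations (Definition~\ref{def:relrealization}), completeness by transporting the layered graph countermodels through the translation $R(G,H,K)$ iff $\gcompE{G}{H}\downarrow$ and $\gcompE{G}{H}=K$, and the equivalence of the two semantics then follows formally from both being sound and complete for the same tableaux system. The paper states this only as a brief sketch, so your write-up is simply a more detailed rendering of the same argument.
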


We also define a notion of morphism for intuitionistic layered frames, yielding a category $\mathrm{IntLayFr}$. This will be important for Section~\ref{sec:duality}'s duality theorem and the extension to predicate ILGL in Section \ref{sec:predicate}.

\begin{defi}[Intuitionistic Layered P-Morphism] \label{def:pmorphism}
An \emph{intuitionistic layered p-morphism} is a map $f: X \rightarrow X'$ between intuitionistic layered frames $\mathcal{F}= (X, \preccurlyeq, R)$ and $\mathcal{F}' = (X', \preccurlyeq', R')$ satisfying the following conditions:
\begin{enumerate}
\item $f$ is order preserving;
\item If $f(x) \preccurlyeq y'$ then there exists $y$ such that $x \preccurlyeq y$ and $f(y) = y'$;
\item If $Rxyz$ then $R'f(x)f(y)f(z)$;
\item If $w' \preccurlyeq' f(x)$ and $R'y'z'w'$ then there exists $w, y, z$ such that $w \preccurlyeq x$ and $Ryzw$ with $y' \preccurlyeq' f(y)$ and $z' \preccurlyeq' f(z)$;
\item If $f(x) \preccurlyeq' w'$ and $R'w'y'z'$ then there exists $w, y, z$ such that $x \preccurlyeq w$ and $Rwyz$ with $y' \preccurlyeq' f(y)$ and $f(z) \preccurlyeq' z'$;
\item If $f(x) \preccurlyeq' w'$ and $R'y'w'z'$ then there exists $w, y, z$ such that $x \preccurlyeq w$ and $Rywz$ with $y' \preccurlyeq f(y)$ and $f(z) \preccurlyeq z'$. \qed
\end{enumerate}
\end{defi}

\subsection{Equivalence of Semantics}
%Discussion
We now show that the algebraic and relational semantics are equivalent. This means we can prove properties of the layered graph semantics algebraically and yields the equivalence of the Hilbert and labelled tableaux systems. We begin by lifting the semantic clauses of the relational semantics to operations on the upward-closed subsets of the frame.

\begin{defi}[Complex Algebra] \label{def:layeredcomplexalgebra}
Given an intuitionistic layered frame $\mathcal{X}$, the \emph{complex algebra} of $\mathcal{X}$ is given by
$Com(\mathcal{X}) = (\mathcal{P}_{\preccurlyeq}(X), \cap, \cup, \Rightarrow, X, \emptyset, \blacktriangleright_R, \gimp_R, \limp_R)$, 
where %$\mathcal{P}_{\preccurlyeq}(X) = \{ A \subseteq X \mid \text{if } x \in A \text{ and } x \preccurlyeq y \text{ then } y \in A \}$ and $\Rightarrow, \blacktriangleright_R, \gimp_R$ and $\limp_R$ are defined as follows:
\begin{align*} \begin{array}[b]{cl}
	\mathcal{P}_{\preccurlyeq}(X) &= \{ A \subseteq X \mid \text{ if } a \in A \text{ and } a \preccurlyeq b \text{ then } b \in A \} \\
	A \Rightarrow B &= \{ x \mid \text{if } x \preccurlyeq x' \text{ and } x' \in A \text{ then } x' \in B \} \\
	A \blacktriangleright_R B &= \{ x \mid \text{there exists } w, y, z \text{ s.t } w \preccurlyeq x, Ryzw, y \in A \text{ and } z \in B \} \\
	 A \gimp_R B &= \{ x \mid \text{for all } w, y, z, \text{ if } x \preccurlyeq w, Rwyz \text{ and } y \in A \text{ then } z \in B \}  \\
	 A \limp_R B &= \{ x \mid \text{for all } w, y, z, \text{ if } x \preccurlyeq w, Rywz \text{ and } y \in A \text{ then } z \in B \}.  
\end{array} \tag*{\qed} \end{align*}
%\[
%\begin{array}{l}
%A \Rightarrow B = \{ x \in X \mid \text{ for all } y: \text{if } x \preccurlyeq y \text{ and } y \in A \text{ then } y \in B \} \\
%A \blacktriangleright_R B  =  \{ x \in X \mid \text{there exists } y \in A, z \in B \text{ and } w \preccurlyeq x \text{ s.t. } Ryzw \} \\ 
%A \gimp_R B = \{ x \in X \mid \text{for all } w, y, z : \text{ if } x \preccurlyeq w, Rwyz \text{ and } y \in A \text{ then } z \in B \} \\ 
%A \limp_R B = \{ x \in X \mid \text{for all } w, y, z : \text{ if } x \preccurlyeq w, Rywz \text{ and } y \in A \text{ then } z \in B \}. 
%\end{array}
%\] \qed
\end{defi}

It is standard that the $(\mathcal{P}_{\preccurlyeq}(X), \cap, \cup, \Rightarrow, X, \emptyset)$-fragment of this structure is a Heyting algebra. That $\blacktriangleright_R$, $\gimp_R$ and $\limp_R$ satisfy the residuation property can also easily be verified. We therefore have that complex algebras are layered Heyting algebras.

\begin{lem} \label{lem:comislha}
Given an intuitionistic layered frame $\mathcal{X}$ the complex algebra $Com(\mathcal{X})$ is a layered Heyting algebra. \qed
\end{lem}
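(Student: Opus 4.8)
The plan is to verify the two defining conditions of a layered Heyting algebra for the structure $Com(\mathcal{X})$: first, that the additive fragment $(\mathcal{P}_{\preccurlyeq}(X), \cap, \cup, \Rightarrow, X, \emptyset)$ is a Heyting algebra; second, that the triple $(\blacktriangleright_R, \gimp_R, \limp_R)$ satisfies the double residuation law $A \blacktriangleright_R B \subseteq C \text{ iff } A \subseteq B \gimp_R C \text{ iff } B \subseteq A \limp_R C$, where the order is set inclusion. As the excerpt already notes, the Heyting algebra part is standard (the upward-closed subsets of a preorder, ordered by inclusion, form the canonical Heyting algebra with the Kripke implication $\Rightarrow$), so I would cite this and move on. The one preliminary obligation that must not be skipped is \emph{well-definedness}: I must check that each of $A \blacktriangleright_R B$, $A \gimp_R B$, and $A \limp_R B$ is in fact an upward-closed set, i.e. lands in $\mathcal{P}_{\preccurlyeq}(X)$. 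For $\blacktriangleright_R$ this is immediate, since if $x \in A \blacktriangleright_R B$ via witnesses $w \preccurlyeq x$ and $x \preccurlyeq x'$, transitivity of $\preccurlyeq$ gives $w \preccurlyeq x'$ and the same witnesses serve for $x'$; for $\gimp_R$ and $\limp_R$ upward-closure follows similarly by prepending a step of $\preccurlyeq$ using transitivity.

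The core of the proof is the residuation chain, which I would prove as a cycle of implications. For the first equivalence, assume $A \blacktriangleright_R B \subseteq C$ and take $x \in A$; to show $x \in B \gimp_R C$, fix arbitrary $w, y, z$ with $x \preccurlyeq w$, $Rwyz$ and $y \in B$, and observe that the existential witnesses $(z, w, y)$ (playing the roles of the bound variables in the definition of $\blacktriangleright_R$, with $z \preccurlyeq z$ by reflexivity) place $z \in A \blacktriangleright_R B$, hence $z \in C$ by hypothesis. Conversely, assuming $A \subseteq B \gimp_R C$, take $x \in A \blacktriangleright_R B$ with witnesses $w \preccurlyeq x$, $Ryzw$, $y \in A$, $z \in B$; then $y \in B \gimp_R C$ by hypothesis, and instantiating its universal clause at $w, z, $ the appropriate point (using $y \preccurlyeq y$ reflexively and $Ryzw$) yields $w \in C$, whence $x \in C$ by upward-closure of $C$. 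The second equivalence, $A \blacktriangleright_R B \subseteq C \text{ iff } B \subseteq A \limp_R C$, is entirely symmetric, swapping the roles of the two arguments of $R$ exactly as the definitions of $\gimp_R$ and $\limp_R$ differ only in the order $Rwyz$ versus $Rywz$.

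I do not anticipate any genuine obstacle here; the lemma is routine bookkeeping, and the only place demanding care is keeping the bound-variable bookkeeping straight, since the clauses for $\gimp_R$ and $\limp_R$ carry an extra $\preccurlyeq$-step (the $x \preccurlyeq w$ in their definitions) that does not appear in the bare residuation statement. The mild subtlety is that this extra monotonicity step is exactly what is needed to make $\gimp_R$ and $\limp_R$ upward-closed while still witnessing residuation, and verifying that the reflexive instance $x \preccurlyeq x$ suffices in the forward directions while the genuine $\preccurlyeq$-step is absorbed by upward-closure of $C$ in the backward directions is the one point I would state explicitly rather than leave to the reader.
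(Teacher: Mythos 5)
Your overall route---citing the standard fact for the Heyting fragment on $\mathcal{P}_{\preccurlyeq}(X)$, checking that $\blacktriangleright_R$, $\gimp_R$, $\limp_R$ actually land in $\mathcal{P}_{\preccurlyeq}(X)$, and then verifying the residuation chain by a cycle of implications---is exactly the verification the paper intends: the paper itself gives no proof, remarking only that the Heyting part is standard and that residuation ``can also easily be verified.'' Your choice of witnesses in all four implications is the correct one, and your treatment of the backward directions (reflexivity $y \preccurlyeq y$ plus upward-closure of $C$) is right.

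There is, however, one concrete slip, and it contradicts your closing summary. In the forward direction of the first equivalence, your witnesses $(z,w,y)$ require the element $w$ to lie in $A$ (it occupies the $A$-slot of the definition of $\blacktriangleright_R$); but your hypothesis gives only $x \in A$ together with $x \preccurlyeq w$, so you need upward-closure of $A$ to conclude $w \in A$. This use of persistence is not optional: take $X = \{x,w,y,z\}$ with $x \preccurlyeq w$ the only non-reflexive relation, $R = \{(w,y,z)\}$, $A = \{x\}$ (not upward-closed), $B = \{y\}$, $C = \emptyset$; then $A \blacktriangleright_R B = \emptyset \subseteq C$, yet $x \notin B \gimp_R C$, so the forward implication fails for non-persistent $A$. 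Symmetrically, the forward direction of the second equivalence uses upward-closure of $B$. Consequently your final claim that ``the reflexive instance $x \preccurlyeq x$ suffices in the forward directions while the genuine $\preccurlyeq$-step is absorbed by upward-closure of $C$ in the backward directions'' is miscalibrated: persistence of the \emph{arguments} is consumed in the forward directions as well---which is precisely why the complex algebra must be built on $\mathcal{P}_{\preccurlyeq}(X)$ rather than $\mathcal{P}(X)$. The repair is a one-line addition at each of the two places, and the rest of your argument stands.
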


A persistent valuation $\valuation: \mathrm{Prop} \rightarrow \mathcal{P}(X)$ on an intuitionistic layered frame $\mathcal{X}$ automatically generates an interpretation $\llbracket - \rrbracket_{\valuation}$ on its complex algebra $Com(\mathcal{X})$. An inductive argument shows that satisfaction coincides in both structures.

\begin{lem} \label{lem:algsatisfiability}
Given a relational model $\mathcal{M} = (\mathcal{X}, \valuation)$, for all $x \in X$ and all formulae $\varphi$ of \logicfont{ILGL}, $x \in \llbracket \varphi \rrbracket_{\valuation}$ iff $x \vDash_{\mathcal{M}} \varphi$. \qed
\end{lem}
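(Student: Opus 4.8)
The plan is to prove Lemma~\ref{lem:algsatisfiability} by structural induction on $\varphi$, showing simultaneously for all $x \in X$ that $x \in \llbracket \varphi \rrbracket_{\valuation}$ iff $x \vDash_{\mathcal{M}} \varphi$. The interpretation $\llbracket - \rrbracket_{\valuation}$ takes values in $\mathcal{P}_{\preccurlyeq}(X)$ and is defined by the complex-algebra operations of Definition~\ref{def:layeredcomplexalgebra}, while $\vDash_{\mathcal{M}}$ is the satisfaction relation of Figure~\ref{fig:relsat}. The strategy is simply to match each semantic clause against the corresponding complex-algebra operation and observe that they are literally the same condition, appealing to the inductive hypothesis on immediate subformulae.

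First I would dispatch the base cases. For an atomic proposition $\mathrm{p}$ we have $\llbracket \mathrm{p} \rrbracket_{\valuation} = \valuation(\mathrm{p})$ by definition, and $x \vDash_{\mathcal{M}} \mathrm{p}$ iff $x \in \valuation(\mathrm{p})$, so the two agree. The constants $\top$ and $\bot$ are handled by $\llbracket \top \rrbracket_{\valuation} = X$ and $\llbracket \bot \rrbracket_{\valuation} = \emptyset$, matching the clauses $x \vDash_{\mathcal{M}} \top$ always and $x \not\vDash_{\mathcal{M}} \bot$ always. For the inductive step I assume the biconditional holds for the immediate subformulae $\varphi$ and $\psi$ (IH), i.e. $\llbracket \varphi \rrbracket_{\valuation} = \{ x \mid x \vDash_{\mathcal{M}} \varphi \}$ and likewise for $\psi$. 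The additive cases $\wedge$, $\vee$, $\rightarrow$ then reduce to the Heyting operations: $\cap$, $\cup$, and $\Rightarrow$ of $Com(\mathcal{X})$ are defined by exactly the conjunction, disjunction, and persistent-implication conditions of Figure~\ref{fig:relsat}, so each follows by rewriting the set-membership condition using the IH.

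The multiplicative cases are the heart of the argument, and are where I expect the only real bookkeeping. For $\varphi \blacktriangleright \psi$, membership of $x$ in $\llbracket \varphi \rrbracket_{\valuation} \blacktriangleright_R \llbracket \psi \rrbracket_{\valuation}$ unfolds to: there exist $w, y, z$ with $w \preccurlyeq x$, $Ryzw$, $y \in \llbracket \varphi \rrbracket_{\valuation}$ and $z \in \llbracket \psi \rrbracket_{\valuation}$. Applying the IH to replace the two membership statements by $y \vDash_{\mathcal{M}} \varphi$ and $z \vDash_{\mathcal{M}} \psi$, and relabelling $w$ as the witness $x'$ in the Figure~\ref{fig:relsat} clause (with $x \succcurlyeq x'$, $Ryzx'$), yields exactly $x \vDash_{\mathcal{M}} \varphi \blacktriangleright \psi$. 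The cases $\gimp$ and $\limp$ proceed symmetrically using $\gimp_R$ and $\limp_R$: in each the universally quantified condition over $w, y, z$ in the complex-algebra definition coincides with the satisfaction clause once the IH is applied to the hypothesis $y \in \llbracket \varphi \rrbracket_{\valuation}$ and the conclusion $z \in \llbracket \psi \rrbracket_{\valuation}$, taking care that $\gimp_R$ uses the pattern $Rwyz$ (with $x \preccurlyeq w$) matching $Rx'yz$ with $x' \succcurlyeq x$, while $\limp_R$ uses $Rywz$ matching $Ryx'z$.

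The main obstacle, such as it is, is purely notational: ensuring the quantifier patterns and the placement of the order witness in the complex-algebra operations are aligned exactly with the corresponding clauses in Figure~\ref{fig:relsat}, particularly the distinction between the existential upward witness for $\blacktriangleright$ and the universal ones for the two implications, and the distinct argument positions in $R$ for $\gimp$ versus $\limp$. Since the complex-algebra operations in Definition~\ref{def:layeredcomplexalgebra} were defined precisely to lift these clauses, no genuine mathematical difficulty arises; one need only verify that $\llbracket \varphi \rrbracket_{\valuation}$ is upward-closed at each stage, which is guaranteed because every operation of $Com(\mathcal{X})$ outputs a member of $\mathcal{P}_{\preccurlyeq}(X)$ by Lemma~\ref{lem:comislha}. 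The induction therefore closes, establishing the claimed coincidence of satisfaction in the relational model and its complex algebra.
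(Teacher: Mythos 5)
Your proposal is correct and takes essentially the same approach as the paper, which states only that ``an inductive argument shows that satisfaction coincides in both structures'' and leaves the details implicit. Your case analysis fills in exactly that induction, with the quantifier patterns and argument positions of $R$ in $\blacktriangleright_R$, $\gimp_R$, and $\limp_R$ matched correctly against the clauses of Figure~\ref{fig:relsat}.
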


Conversely, we can generate a frame from any layered Heyting algebra. We first recall some definitions.\

For a bounded distributive lattice $\mathbb{A}$, a \emph{filter} is a subset $F \subseteq \mathbb{A}$ satisfying, for all $a, b \in \mathbb{A}$: i) if $a \in F$ and $a \leq b$ then $b \in F$; and ii) if $a, b \in F$ then $a \land b \in F$. It is \emph{proper} if $F \neq \mathbb{A}$. By upwards-closure this condition is equivalent to $\bot \not\in F$. A \emph{prime filter} is a proper filter additionally satisfying, for all $a, b \in \mathbb{A}$, if $a \lor b \in F$ then $a \in F$ or $b \in F$. For a set $X \subseteq \mathbb{A}$ we define $[ X ) = \{ a \mid \text{there exists } b_1, \ldots, b_n \in X \text{ such that } b_1 \land \cdots \land b_n \leq a \}$. It can be shown that this defines the least filter containing $X$. For $X = \{ a \}$ we simply write $[ a )$, and for $X = Y \cup \{ a \}$ we write $[Y, a)$.

An \emph{ideal} is the dual notion of a filter. Analogous definitions for proper and prime ideals can be obtained by switching $\land$ and $\lor$ and reversing the order in the above definitions. We also obtain the dual notation $(X]$ for least ideal containing $X$.

%The dual notion of a filter is an \emph{ideal}: a subset $I \subseteq \mathbb{A}$ satisfying, for all
%$a, b \in \mathbb{A}$: i) if $a \in I$ and $b \leq a$ then $b \in I$; and ii) if $a, b \in F$ then 
%$a \lor b \in I$. An ideal $I$ is \emph{proper} if $I \neq \mathbb{A}$ (ie; $\top \not\in I$) and it is \emph{prime} if, for all $a, b \in \mathbb{A}$, if $a \land b \in I$ then $a \in I$ or $b \in I$.
%For a set $X \subseteq \mathbb{A}$ we define $( X ] = \{ a \mid \text{there exists } b \in X \text{ such that } a \leq b \}$, and note that this defines an ideal, and is proper iff $\top \not\in X$. Once again we write $( a ]$ for $X = \{ a \}$ and $(Y, a]$ for $X = Y \cup \{ a \}$.

Importantly, given a prime ideal $P$, the complement
$\overline{P} = \{ a \in \mathbb{A} \mid a \not\in P \}$ is a prime filter (and vice versa). A (prime) filter/ideal of a layered Heyting algebra is simply a (prime) filter/ideal of the underlying bounded distributive lattice.

\begin{defi}[Prime Filter Frame] \label{def:ilglprimefilter}
Given a layered Heyting algebra $\mathbb{A}$, the \emph{prime filter frame} $Pr(\mathbb{A})$ is defined
$Pr(\mathbb{A}) = (Pr(A), \subseteq, R_{Pr(\mathbb{A})})$ where $Pr(A)$ is the set of prime filters of $\mathbb{A}$ and $R_{Pr(\mathbb{A})}$ is given by
$R_{Pr(\mathbb{A})}F_0F_1F_2 \text{ iff for all } a \in F_0 \text{ and all } b \in F_1, a \blacktriangleright b \in F_2.$ \qed
\end{defi}

We can prove the satisfaction on a layered Heyting algebra and its prime filter frame coincides. To show this we first prove a representation theorem for layered Heyting algebras. This is an analogue of the Stone representation theorem for Boolean algebras \cite{Stone} that shows that every Boolean algebra is isomorphic to a field of sets. Specifically, it extends the representation theorem for Heyting algebras \cite{DH01}.

\begin{thm}[Representation Theorem for Layered Heyting Algebras] \label{thm:representation}
Every layered Heyting algebra is isomorphic to a subalgebra of a complex algebra. Specifically, the map 
$h_{\mathbb{A}}(a): \mathbb{A} \rightarrow Com(Pr(\mathbb{A}))$ defined 
$h_{\mathbb{A}}(a) = \{ F \in Pr(A) \mid a \in F \}$ is an embedding.
\end{thm}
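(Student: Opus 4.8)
The plan is to show that $h_{\mathbb{A}}$ is a well-defined injective homomorphism of layered Heyting algebras. First, each $h_{\mathbb{A}}(a)$ is upward-closed under $\subseteq$ (if $a \in F$ and $F \subseteq G$ then $a \in G$), so $h_{\mathbb{A}}$ maps into $Com(Pr(\mathbb{A}))$, which is a layered Heyting algebra by Lemma~\ref{lem:comislha}. That $h_{\mathbb{A}}$ preserves $\wedge,\vee,\rightarrow,\top,\bot$ and is injective is exactly the representation theorem for Heyting algebras \cite{DH01}; injectivity uses the prime filter theorem (if $a \not\le b$ there is a prime filter containing $a$ but not $b$). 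Thus the entire content of the theorem is the preservation of the three multiplicative operations, i.e. the identities $h_{\mathbb{A}}(a \blacktriangleright b) = h_{\mathbb{A}}(a) \blacktriangleright_R h_{\mathbb{A}}(b)$, $h_{\mathbb{A}}(a \gimp b) = h_{\mathbb{A}}(a) \gimp_R h_{\mathbb{A}}(b)$, and $h_{\mathbb{A}}(a \limp b) = h_{\mathbb{A}}(a) \limp_R h_{\mathbb{A}}(b)$, where $R = R_{Pr(\mathbb{A})}$.

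For $\blacktriangleright$, the inclusion $h_{\mathbb{A}}(a) \blacktriangleright_R h_{\mathbb{A}}(b) \subseteq h_{\mathbb{A}}(a \blacktriangleright b)$ is immediate: any witness consists of prime filters $W \subseteq F$, $Y \ni a$, $Z \ni b$ with $R Y Z W$, whence $a \blacktriangleright b \in W \subseteq F$ directly from the definition of $R_{Pr(\mathbb{A})}$. The reverse inclusion is the crux and amounts to a \emph{product lemma}: if $F$ is prime and $a \blacktriangleright b \in F$, then there are prime filters $Y \ni a$, $Z \ni b$ with $R Y Z F$ (so $W = F$ witnesses $F \in h_{\mathbb{A}}(a) \blacktriangleright_R h_{\mathbb{A}}(b)$). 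I would prove this by two successive maximal-filter constructions. Using Zorn's Lemma, first take $Z$ maximal among filters containing $b$ with $a \blacktriangleright z \in F$ for all $z \in Z$ (the principal filter $[b)$ qualifies by monotonicity, Proposition~\ref{prop-alg-prop}(1)); then take $Y$ maximal among filters containing $a$ with $y \blacktriangleright z \in F$ for all $y \in Y$, $z \in Z$. In each case the maximal filter is prime: assuming $y_1 \vee y_2$ lies in the filter but neither $y_i$ does, maximality produces witnesses whose $\blacktriangleright$-images escape $F$, and combining them with a common lower bound, distributivity of $\blacktriangleright$ over finite joins (Proposition~\ref{prop-alg-prop}(2)), lattice distributivity, and primeness of $F$ (so $\overline{F}$ is closed under $\vee$) force a contradiction.

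The residual cases are dual and use one maximal-filter construction plus a separation. For $\gimp$, the easy inclusion $h_{\mathbb{A}}(a \gimp b) \subseteq h_{\mathbb{A}}(a) \gimp_R h_{\mathbb{A}}(b)$ follows from $(a \gimp b) \blacktriangleright a \le b$ (an instance of residuation) together with upward-closure of the relevant prime filters. For the converse I argue contrapositively: given $a \gimp b \notin F$ I must produce prime $Y \ni a$ and prime $Z$ with $b \notin Z$ and $R F Y Z$. Taking $W = F$, the set $\{ w \blacktriangleright y \mid w \in F,\ y \in Y \}$ must generate a filter avoiding $b$; using residuation and closure of $F, Y$ under meets, this is equivalent to $y \gimp b \notin F$ for every $y \in Y$. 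A maximal-filter construction (with $[a)$ qualifying because $\gimp$ is antitone in its first argument, a consequence of Proposition~\ref{prop-alg-prop}(4)) yields a prime $Y \ni a$ with this property, primeness again following from Proposition~\ref{prop-alg-prop}(4) converting joins to meets together with primeness of $F$; the prime filter theorem then separates the generated filter from $b$ to produce $Z$. The case of $\limp$ is entirely symmetric, using $a \blacktriangleright (a \limp b) \le b$ and the $\limp$-clause of Proposition~\ref{prop-alg-prop}.

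The main obstacle is the product lemma and its residual analogues — specifically, verifying that the filters obtained by Zorn's Lemma are prime. This is exactly where the residuated-groupoid identities of Proposition~\ref{prop-alg-prop}, namely distributivity of $\blacktriangleright$ over joins and the conversion of joins to meets by $\gimp$ and $\limp$, are indispensable, since they are what allow primeness of $F$ to propagate to the constructed filters. Everything else (upward-closure of the image, preservation of the Heyting operations, and injectivity) is routine or inherited from the classical Heyting representation theorem.
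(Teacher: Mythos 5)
Your proposal is correct, and it reaches the theorem by a genuinely different organization of the hard direction than the paper. The shared skeleton is the same: the additive part and injectivity are inherited from the Heyting representation theorem, the easy inclusions follow from residuation identities such as $(a \gimp b) \blacktriangleright a \leq b$, and the substance is a Zorn-type construction of prime witnesses. But where you make $\blacktriangleright$ the detailed case and build the witnesses \emph{sequentially} --- two successive maximal-filter constructions for the product lemma, and for the residuals one maximal-filter construction (resting on your reformulation that the filter generated by $\{\, w \blacktriangleright y \mid w \in F,\ y \in Y \,\}$ avoids $b$ iff $y \gimp b \notin F$ for all $y \in Y$) followed by an off-the-shelf application of the distributive-lattice prime filter separation theorem to extract $Z$ --- the paper makes $\gimp$ the detailed case and runs a \emph{single} Zorn argument over pairs $\langle F, I \rangle$ of a proper filter and a proper ideal satisfying a joint compatibility condition ($a \in F$, $b \in I$, and $x \blacktriangleright y \notin I$ for all $x \in F_0$, $y \in F$); the maximal pair has both components prime, and the second prime-filter witness is taken to be $\overline{I^{max}}$. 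One consequence: since the paper's starting pair is $\langle [a), (b] \rangle$, which must be proper, the paper needs explicit case splits for $a = \bot$, $b = \top$, and $a \gimp b = \bot$, whereas your formulation absorbs these automatically (if $\bot \in Y$ then $\bot \gimp b = \top \in F$ by Proposition \ref{prop-alg-prop}(6), and if $\bot \in Z$ then $a \blacktriangleright \bot = \bot \in F$ by Proposition \ref{prop-alg-prop}(3), so the defining properties force properness). What your route buys is modularity and fewer degenerate cases, at the price of a separate primeness verification for each successive construction; what the paper's filter--ideal pair technique buys is a single maximality argument and a template that the paper reuses almost verbatim later (in the duality functor of Section \ref{sec:duality} and in Lemma \ref{lem:ipfsatisfiability}, which explicitly invoke ``a similar argument to that of Theorem \ref{thm:representation}''). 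Both primeness arguments rest on exactly the facts you identify: monotonicity of $\blacktriangleright$, its distribution over joins, the join-to-meet conversion for $\gimp$ and $\limp$ (Proposition \ref{prop-alg-prop}(2)/(4)), lattice distributivity, and primeness (for $\blacktriangleright$) or merely meet-closure (for the residuals) of the ambient filter.
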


\begin{proof}
That $h_{\mathbb{A}}$ is injective and respects the Heyting algebra operations is simply the representation theorem for Heyting algebras. We must therefore verify $h_{\mathbb{A}}$ also respects the residuated structure. We attend to $\gimp$, leaving the similar $\blacktriangleright$ 
and $\limp$ cases to the reader.

First suppose $a = \bot$. Then by Proposition \ref{prop-alg-prop}, $a \gimp b = \top$. We have that \sloppy{ $h_{\mathbb{A}}(a) \gimp_{R_{Pr(\mathbb{A})}} h_{\mathbb{A}}(b) = Pr(A)$} since the defining condition of $\gimp_{R_{Pr(\mathbb{A})}}$ is vacuously true ($h_{\mathbb{A}}(a) = \emptyset$), so every prime filter is trivially a member of $h_{\mathbb{A}}(a) \gimp_{R_{Pr(\mathbb{A})}} h(b)$. Thus $h_{\mathbb{A}}(a \gimp b) = h_{\mathbb{A}}(a) \gimp_{R_{Pr(\mathbb{A})}} h_{\mathbb{A}}(b)$ as required. The case $b = \top$ is similar.

%Similarly, when $b = \top$ we have $a \gimp b = \top$. Since $F_2 \in h_{\mathbb{A}}(\top)$ is always satisfied we have that the defining condition for $h_{\mathbb{A}}(a) \gimp_{R_{Pr(\mathbb{A})}} h_{\mathbb{A}}(b)$ is vacuously true. Hence $h_{\mathbb{A}}(a \gimp b) = Pr(\mathcal{A}) = h_{\mathbb{A}}(a) \gimp_{R_{Pr(\mathbb{A})}} h_{\mathbb{A}}(b)$ as required.

We divide the case where both $a \neq \bot$ and $b \neq \top$ into two subcases: $a \gimp b \neq \bot$ and $a\gimp b = \bot$. First $a \gimp b \neq \bot$. 
The left-to-right direction is easy: suppose $a \gimp b \in F$. Then for any $F_0 \supseteq F$ 
we have $a \gimp b \in F_0$. Suppose for such a $F_0$ we have $R_{pr}F_0F_1F_2$ and $F_1 \in h_{\mathbb{A}}(a)$. Then $(a \gimp b) \blacktriangleright a \in F_2$. We have $(a \gimp b) \blacktriangleright a \leq b$, 
so by filterhood $b \in F_2$. Thus $F \in h_{\mathbb{A}}(a) \gimp_{R_{Pr(\mathbb{A})}} h_{\mathbb{A}}(b)$. 

The other direction is more involved. Suppose we have $F_0 \in h_{\mathbb{A}}(a) \gimp_{R_{Pr(\mathbb{A})}} h_{\mathbb{A}}(b)$ and for contradiction
$a\gimp b \not\in F_0$. We claim this entails the existence of $F_1, F_2$ such that $R_{pr}F_0F_1F_2$ and $F_1 \in h_{\mathbb{A}}(a)$ but $F_2 \not\in h_{\mathbb{A}}(b)$. Since $a \neq \bot$ and $b \neq \top$ we have a proper filter $[a)$ and a proper ideal $(b]$ such that, for all $x \in F_0$ and all $y \in [a)$ we have $x \blacktriangleright y \not\in (b]$: if there existed $x$ and $y$ contradicting this statement then $x \blacktriangleright a \leq x \blacktriangleright y \leq b$, so by 
residuation $x \leq a \gimp b$ and $a \gimp b \in F_0$, contradicting our assumption. Hence we define $E = \{ \langle F, I \rangle \mid F \text{ is a proper filter}, I \text{ is a proper ideal}, 
		a \in F, b \in I \text{ and for all } x \in F_0, y \in F: x \blacktriangleright y \not\in I \}$. 
$E$ is non-empty. Further, when ordered by component-wise inclusion, every chain in $E$ has a maximum given by taking the union of each component of the chain. Hence, by Zorn's lemma, $E$ has a maximum $\langle F^{max}, I^{max} \rangle$. We claim $F^{max}$ is a prime filter and $I^{max}$ is a prime ideal.

%First, we attend to the primeness of $P_1$. Suppose for contradiction that $x \lor y \in P_1$ with
%$x, y \not\in P_1$. Then $P_1 \subset [P_1, x), [P_1, y)$, so $\langle [P_1, x), I^{max}_2 \rangle$ and
%$\langle [P_1, y), I^{max}_2 \rangle$ must fail to be maximal because there exist $a_0, a'_0 \in F$, 
%$a_1 \in [P_1, x)$ and $a'_1 \in [P_1, y)$ such that $a_0 \blacktriangleright a_1, 
%a'_0 \blacktriangleright a'_1 \in I^{max}_2$. There then exist $c, c' \in P_1$ such that
%$x \land c \leq a_1$ and $y \land c' \leq a'_1$. So set $a'' = a_0 \land a_1$ and $c'' = c \land c'$. 
%Since $I^{max}_2$ is downwards-closed and $\blacktriangleright$ is monotone we have that
%$a'' \blacktriangleright (x \land c''),$  $a'' \blacktriangleright (y \land c'') \in I_2$. As it is is an ideal, it follows
%that $(a'' \blacktriangleright (x \land c'')) \lor (a'' \blacktriangleright (y \land c'') \in I_2$.
%But then by distributivity of $\blacktriangleright$ over $\lor$: 
%$a'' \blacktriangleright ((x \lor y) \land c'') \in I_2$. However $a'' \in F$ and $((x \lor y) \land c'') \in P_2$
%so $a'' \blacktriangleright ((x \lor y) \land c'') \not\in I_2$, a contradiction.

We prove that $I^{max}$ is a prime ideal; the case for $F^{max}$ is similar. Suppose $x \land y \in I^{max}$ but $x, y \not\in I^{max}$. Then we have
proper ideals $(I^{max}, x], (I^{max}, y] \supset I^{max}$. $\langle F^{max}, (I^{max}, x] \rangle$ and $\langle F^{max}, (I^{max}, y] \rangle$ can only fail to be maximal in $E$ if there exist $a_0, a'_0 \in F_0$ and $a_1, a'_1 \in F^{max}$ such that
$a_0 \blacktriangleright a_1 \in (I^{max}, x]$ and $a'_0 \blacktriangleright a'_1 \in (I^{max}, y]$. There then exist $c, c' \in I^{max}$ such that 
$a_0 \blacktriangleright a_1 \leq x \lor c$ and $a'_0 \blacktriangleright a'_1 \leq y \lor c'$. Set
$a'' = a_1 \land a'_1$ and $c'' = c \lor c'$. Then $a'' \in F^{max}$ and $c'' \in I^{max}$. By monotonicity
of $\blacktriangleright$ we have $a_0 \blacktriangleright a'' \leq c'' \lor x$ and 
$a'_0 \blacktriangleright a'' \leq c'' \lor y$. Hence, by residuation and filterhood of $F_0$, we have
$(a'' \gimp (c'' \lor x)) \land (a'' \gimp (c'' \lor y)) \in F_0$. Now $\gimp$ distributes over $\land$ in its right argument so we have $a'' \gimp (c'' \lor (x \land y)) \in F_0$. Hence
$(a'' \gimp (c'' \lor (x \land y))) \blacktriangleright a'' \not\in I^{max}$. Since $I^{max}$ is downwards-closed and
$(a'' \gimp (c'' \lor (x \land y))) \blacktriangleright a'' \leq (c'' \lor (x \land y))$ we have that 
$c'' \lor (x \land y) \not\in I^{max}$. However, by assumption, $c'', x\land y \in I^{max}$, hence $c'' \lor (x \land y) \in I^{max}$ : contradiction. It follows that $x \in I^{max}$ or $y \in I^{max}$ and $I^{max}$ is prime. 

Take $F_1 = F^{max}$ and $F_2 = \overline{I^{max}}$. Then $F_0 \subseteq F_0$,
$R_{pr}F_0F_1F_2$, $F_1 \in h_{\mathbb{A}}(a)$ and $F_2 \not\in h_{\mathbb{A}}(b)$, contradicting that $F_0 \in h_{\mathbb{A}}(a) \gimp_{R_{Pr(\mathbb{A})}} h_{\mathbb{A}}(b)$. Hence $a \gimp b \in F$ as required.

Finally, suppose $a \gimp b = \bot$. Then $h_{\mathbb{A}}(a \gimp b) = \emptyset$.
% so we require
%$h_{\mathbb{A}}(a) \gimp_{R_{Pr(\mathbb{A})}} h_{\mathbb{A}}(b) = \emptyset$. 
Let $F$ be arbitrary. We have, for all $x \in F$ and all $y \in [a)$, that $x \blacktriangleright y \not\in (b]$ since otherwise
$x \blacktriangleright a \leq b$ and then $x \leq a \gimp b = \bot$: hence $x = \bot \not\in F$, a contradiction. Hence we can run the argument of the previous subcase to obtain $F_1$ and
$F_2$ such that $F_1 \in h(a)$, $R_{pr}FF_1F_2$, but $F_2 \not\in h_{\mathbb{A}}(b)$.
\end{proof}

Now given an interpretation $\llbracket - \rrbracket$ on a layered Heyting algebra $\mathbb{A}$ we immediately obtain a persistent valuation on the prime filter frame by setting $\valuation(\mathrm{p}) = h_{\mathbb{A}}(\llbracket \mathrm{p} \rrbracket)$. This makes $\mathrm{Pr} = (Pr(\mathbb{A}), \valuation)$ a relational model of \logicfont{ILGL}. The fact that $h_{\mathbb{A}}$ is a homomorphism, together with Lemma \ref{lem:algsatisfiability}, immediately yields the following lemma.

\begin{lem} \label{lem:framesatisfiability}
Given a layered Heyting algebra $\mathbb{A}$ and an interpretation $\llbracket - \rrbracket$, for all 
$F \in Pr(A)$ and all formulae $\varphi$ of ILGL, $\llbracket \varphi \rrbracket \in F$ iff 
$F \vDash_{\mathrm{Pr}} \varphi$. \qed
\end{lem}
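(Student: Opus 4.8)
The plan is to deduce this directly from the Representation Theorem (Theorem~\ref{thm:representation}) and the algebraic satisfiability result (Lemma~\ref{lem:algsatisfiability}), exactly as the preceding remark suggests. The key observation is that membership of $\llbracket \varphi \rrbracket$ in a prime filter $F$ is, by the very definition of the map $h_{\mathbb{A}}$, the same as saying $F \in h_{\mathbb{A}}(\llbracket \varphi \rrbracket)$. So the whole statement reduces to identifying the set $h_{\mathbb{A}}(\llbracket \varphi \rrbracket) \subseteq Pr(A)$ with the interpretation $\llbracket \varphi \rrbracket_{\valuation}$ computed in the complex algebra $Com(Pr(\mathbb{A}))$ under the valuation $\valuation(\mathrm{p}) = h_{\mathbb{A}}(\llbracket \mathrm{p} \rrbracket)$, and then reading off satisfaction via Lemma~\ref{lem:algsatisfiability}.

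First I would establish the central claim: for every formula $\varphi$, $h_{\mathbb{A}}(\llbracket \varphi \rrbracket) = \llbracket \varphi \rrbracket_{\valuation}$ as elements of $Com(Pr(\mathbb{A}))$. This is proved by a straightforward structural induction on $\varphi$. The base case for atoms holds by definition, since $\valuation(\mathrm{p}) = h_{\mathbb{A}}(\llbracket \mathrm{p} \rrbracket) = \llbracket \mathrm{p} \rrbracket_{\valuation}$, while the constants $\top$ and $\bot$ are handled by the fact that $h_{\mathbb{A}}$ sends $\top$ to $Pr(A)$ and $\bot$ to $\emptyset$. For each connective $\circ \in \{ \wedge, \vee, \rightarrow, \blacktriangleright, \gimp, \limp \}$, the inductive step is immediate from the fact that $h_{\mathbb{A}}$ is a homomorphism of layered Heyting algebras: we have $h_{\mathbb{A}}(\llbracket \varphi \circ \psi \rrbracket) = h_{\mathbb{A}}(\llbracket \varphi \rrbracket \circ \llbracket \psi \rrbracket) = h_{\mathbb{A}}(\llbracket \varphi \rrbracket) \circ h_{\mathbb{A}}(\llbracket \psi \rrbracket)$, where the final $\circ$ is the corresponding operation of $Com(Pr(\mathbb{A}))$, and then the inductive hypothesis rewrites the arguments as $\llbracket \varphi \rrbracket_{\valuation} \circ \llbracket \psi \rrbracket_{\valuation} = \llbracket \varphi \circ \psi \rrbracket_{\valuation}$.

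With the claim in hand, the lemma follows in one line: $\llbracket \varphi \rrbracket \in F$ iff $F \in h_{\mathbb{A}}(\llbracket \varphi \rrbracket) = \llbracket \varphi \rrbracket_{\valuation}$, and by Lemma~\ref{lem:algsatisfiability} applied to the relational model $\mathrm{Pr} = (Pr(\mathbb{A}), \valuation)$ this holds iff $F \vDash_{\mathrm{Pr}} \varphi$.

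I expect no substantive obstacle here, since the genuinely hard work has already been discharged in Theorem~\ref{thm:representation}: that is where one verifies that $h_{\mathbb{A}}$ respects the residuated operations $\blacktriangleright$, $\gimp$, and $\limp$ (via the Zorn's-lemma prime-filter separation argument), which is precisely what makes the inductive step go through for the multiplicative connectives. The only point requiring mild care is to confirm that the ternary relation $R_{Pr(\mathbb{A})}$ used to define the operations of $Com(Pr(\mathbb{A}))$ in the claim is the same relation featuring in the statement of the Representation Theorem, so that ``$h_{\mathbb{A}}$ is a homomorphism'' may legitimately be invoked for those connectives; this is guaranteed by Definition~\ref{def:ilglprimefilter} and the construction of $Pr(\mathbb{A})$.
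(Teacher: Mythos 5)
Your proposal is correct and is essentially the paper's own argument: the paper derives the lemma ``immediately'' from the fact that $h_{\mathbb{A}}$ is a homomorphism (Theorem~\ref{thm:representation}) together with Lemma~\ref{lem:algsatisfiability}, which is exactly the induction $h_{\mathbb{A}}(\llbracket \varphi \rrbracket) = \llbracket \varphi \rrbracket_{\valuation}$ plus the one-line membership argument that you spell out. You have merely made explicit the details the paper leaves implicit.
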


Taking Lemmas \ref{lem:algsatisfiability} and \ref{lem:framesatisfiability} together yields the equivalence of the algebraic and relational semantics.

\begin{thm}[Equivalence of algebraic and relational semantics] {\ }
For all formulae $\varphi \in \mathrm{Form}$: $\varphi$ is satisfiable (valid) on layered Heyting algebras iff $\varphi$ is satisfiable (valid) on layered frames. \qed
\end{thm}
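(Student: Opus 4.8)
The plan is to glue together the two satisfaction-coincidence lemmas via the two constructions that pass between the semantics: the complex algebra $Com(\mathcal{X})$ of a frame (Lemma~\ref{lem:comislha}, Lemma~\ref{lem:algsatisfiability}) sends a relational model to a layered Heyting algebra, while the prime filter frame $Pr(\mathbb{A})$ (Lemma~\ref{lem:framesatisfiability}) sends a layered Heyting algebra to a relational model. The observation organising the argument is that algebraic satisfaction ($\llbracket \varphi \rrbracket = \top$) corresponds to \emph{validity within a single model} (truth at every point), not to pointwise truth: on $Com(\mathcal{X})$, Lemma~\ref{lem:algsatisfiability} gives $\llbracket \varphi \rrbracket_{\valuation} = \{ x \mid x \vDash_{\mathcal{M}} \varphi \}$, so $\llbracket \varphi \rrbracket_{\valuation} = \top$ (i.e.\ $= X$) exactly when every point of $\mathcal{M}$ satisfies $\varphi$. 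I therefore read ``$\varphi$ satisfiable on frames'' as ``$\varphi$ is valid in some relational model'', the notion dual to frame-validity and matching algebraic satisfaction.

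For the validity equivalence I would argue both directions. $(\Rightarrow)$ Assume $\varphi$ is valid on all layered Heyting algebras and let $\mathcal{M} = (\mathcal{X}, \valuation)$ be any relational model. Since $Com(\mathcal{X})$ is a layered Heyting algebra (Lemma~\ref{lem:comislha}), validity gives $\llbracket \varphi \rrbracket_{\valuation} = \top = X$, and Lemma~\ref{lem:algsatisfiability} then forces $x \vDash_{\mathcal{M}} \varphi$ for every $x \in X$; as $\mathcal{M}$ was arbitrary, $\varphi$ is frame-valid. $(\Leftarrow)$ Assume $\varphi$ is frame-valid and let $\mathbb{A}, \llbracket - \rrbracket$ be arbitrary. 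Form the prime filter frame $Pr(\mathbb{A})$ with $\valuation(\mathrm{p}) = h_{\mathbb{A}}(\llbracket \mathrm{p} \rrbracket)$; by frame-validity every prime filter $F$ satisfies $\varphi$, so Lemma~\ref{lem:framesatisfiability} yields $\llbracket \varphi \rrbracket \in F$ for every $F \in Pr(A)$. Since $h_{\mathbb{A}}(\top) = Pr(A)$ and $h_{\mathbb{A}}$ is injective by the Representation Theorem~\ref{thm:representation}, an element lying in every prime filter must equal $\top$; hence $\llbracket \varphi \rrbracket = \top$ and $\varphi$ is algebra-valid.

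The satisfiability equivalence is handled by the same two constructions. $(\Leftarrow)$ If $\varphi$ is valid in some relational model $\mathcal{M} = (\mathcal{X}, \valuation)$, then on $Com(\mathcal{X})$ Lemma~\ref{lem:algsatisfiability} gives $\llbracket \varphi \rrbracket_{\valuation} = X = \top$, exhibiting an algebra satisfying $\varphi$. $(\Rightarrow)$ If $\llbracket \varphi \rrbracket = \top$ in some $\mathbb{A}$, then in $Pr(\mathbb{A})$ we have $\llbracket \varphi \rrbracket = \top \in F$ for every prime filter $F$ (top belongs to every filter), so Lemma~\ref{lem:framesatisfiability} makes every prime filter satisfy $\varphi$; that is, $\varphi$ is valid in the relational model $\mathrm{Pr}$, so it is frame-satisfiable. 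The degenerate one-element algebra, whose prime filter frame is empty, causes no trouble: a formula is vacuously valid in the empty model, matching the fact that every formula is satisfied by the trivial algebra.

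I expect the main obstacle to be conceptual rather than computational: correctly matching the notions of satisfiability across the two semantics --- recognising that $\llbracket \varphi \rrbracket = \top$ encodes truth at \emph{all} points of the associated model --- and, on the algebraic-validity side, invoking injectivity of $h_{\mathbb{A}}$ (equivalently, that $\top$ is the only element contained in every prime filter) to pass from ``$\llbracket \varphi \rrbracket$ in every prime filter'' back to ``$\llbracket \varphi \rrbracket = \top$''. Everything else is a direct application of Lemmas~\ref{lem:comislha}, \ref{lem:algsatisfiability} and \ref{lem:framesatisfiability} together with the Representation Theorem~\ref{thm:representation}.
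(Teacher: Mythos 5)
Your proposal is correct and takes essentially the same route as the paper: the paper's entire proof is the one-line observation that Lemma~\ref{lem:algsatisfiability} (satisfaction on a frame coincides with satisfaction on its complex algebra) and Lemma~\ref{lem:framesatisfiability} (satisfaction on an algebra coincides with satisfaction on its prime filter frame) together yield the equivalence, which is exactly the gluing you carry out. Your additional care --- reading satisfiability as validity-in-some-model so that $\llbracket \varphi \rrbracket = \top$ matches truth at every point, using injectivity of $h_{\mathbb{A}}$ to pass from ``in every prime filter'' to ``$= \top$'', and checking the trivial algebra/empty frame case --- simply makes explicit what the paper leaves implicit.
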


We also obtain the equivalence of the Hilbert and labelled tableaux systems, and with it soundness and completeness with respect to $\mathrm{ILGL}_\mathrm{H}$ for the layered graph semantics. With these equivalences proven, we may refer to \logicfont{ILGL} without specifying the particular semantics and proof system of those stated thus far.

\begin{cor}[Equivalence of the Hilbert and tableaux systems] \label{cor:Hilbert-tableaux} {\ }
\begin{enumerate}
\item For all formulae $\varphi$ and $\psi$ of \logicfont{ILGL}, $\varphi \vdash \psi$ is provable in $\mathrm{ILGL}_\mathrm{H}$ iff\/ there is a closed tableau for $\varphi \rightarrow \psi$.
\item For all formulae $\varphi$ and $\psi$ of \logicfont{ILGL}, $\varphi \vdash \psi$ is provable in $\mathrm{ILGL}_\mathrm{H}$ iff $\varphi \rightarrow \psi$ is valid on preordered scaffolds. \qed
\end{enumerate}
\end{cor}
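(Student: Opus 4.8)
The plan is to route everything through the algebraic semantics, using the soundness and completeness results already established together with one elementary fact about Heyting algebras. Recall from the discussion of $\mathrm{ILGL}_\mathrm{H}$ that in any Heyting algebra $a \leq b$ holds iff $a \rightarrow b = \top$. Since every algebraic interpretation satisfies $\llbracket \varphi \rightarrow \psi \rrbracket = \llbracket \varphi \rrbracket \rightarrow \llbracket \psi \rrbracket$, this identity gives, for a fixed layered Heyting algebra $\mathbb{A}$ and interpretation $\llbracket - \rrbracket$, that $\llbracket \varphi \rrbracket \leq \llbracket \psi \rrbracket$ iff $\llbracket \varphi \rightarrow \psi \rrbracket = \top$, i.e. iff $\varphi \rightarrow \psi$ is satisfied by $(\mathbb{A}, \llbracket - \rrbracket)$. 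Quantifying over all algebras and interpretations, $\llbracket \varphi \rrbracket \leq \llbracket \psi \rrbracket$ holds for every interpretation iff $\varphi \rightarrow \psi$ is valid on layered Heyting algebras. This converts the binary provability judgement into the validity of a single formula.

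With that pivot in place, the remaining equivalences are purely a matter of chaining theorems already proved. For part~1, I would argue: $\varphi \vdash \psi$ is provable in $\mathrm{ILGL}_\mathrm{H}$ iff $\llbracket \varphi \rrbracket \leq \llbracket \psi \rrbracket$ for all interpretations (by Theorem~\ref{theorem:algsoundcomplete}), iff $\varphi \rightarrow \psi$ is valid on layered Heyting algebras (by the observation above), iff $\varphi \rightarrow \psi$ is valid on layered frames (by the Equivalence of algebraic and relational semantics), iff there is a closed tableau for $\varphi \rightarrow \psi$ (by Relational Soundness \& Completeness, item~1). For part~2, I would stop one step earlier and instead invoke item~2 of Relational Soundness \& Completeness: $\varphi \rightarrow \psi$ is valid on layered frames iff it is valid on preordered scaffolds, so $\varphi \vdash \psi$ is provable iff $\varphi \rightarrow \psi$ is valid in the layered graph semantics.

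The only genuine content is the pivot in the first paragraph; everything else is assembly of prior results, and so there is no real obstacle at this stage: all the hard work (the representation theorem for layered Heyting algebras and the countermodel-extraction completeness proof for the tableaux) has already been carried out. The one point requiring a little care is bookkeeping, namely matching the binary forms $\varphi \vdash \psi$ and $\llbracket \varphi \rrbracket \leq \llbracket \psi \rrbracket$ against the unary validity of the single formula $\varphi \rightarrow \psi$ — precisely what the Heyting identity $a \leq b \Leftrightarrow a \rightarrow b = \top$ supplies — while keeping the quantification over all algebras and frames on the correct side of each biconditional.
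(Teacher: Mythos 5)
Your proposal is correct and follows exactly the route the paper intends: the corollary is stated with no explicit proof precisely because it is obtained by chaining Theorem~\ref{theorem:algsoundcomplete}, the equivalence of algebraic and relational semantics, and the two parts of relational soundness and completeness, with the Heyting identity $a \leq b$ iff $a \rightarrow b = \top$ (noted earlier in the paper) supplying the pivot from the judgement $\varphi \vdash \psi$ to validity of the single formula $\varphi \rightarrow \psi$. Your write-up makes explicit the bookkeeping the paper leaves implicit, but the decomposition and the lemmas invoked are the same.
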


A sequent calculus, a natural deduction system, and a display calculus (satisfying cut elimination) can be given for 
LGL. They define equivalent notions of provability to its Hilbert system \cite{CMP14}. It is straightforward to weaken the classical propositional logic component of each to obtain such systems for ILGL. Each of these systems is also proven equivalent to our labelled tableaux system by this corollary.

\subsection{Finite Model Property \& Decidability}
We now prove the finite embeddability property (FEP) for the class of layered Heyting algebras, and with it, decidability of \logicfont{ILGL}. We first recall the definition of the FEP for classes of ordered algebras.

\begin{defi}[Finite Embeddability Property] A class of %ordered 
algebras $\mathcal{K}$ has the \emph{finite embeddability property} (FEP) if, for any algebra $\mathbb{A} \in \mathcal{K}$ and any finite subset $B \subseteq A$ there exists a finite algebra $\mathbb{C} \in \mathcal{K}$ and an injective map $g: B \rightarrow C$ such that for all algebraic operations $f$, if $b_1, \ldots, b_n \in B$ and $f_{\mathbb{A}}(b_1, \ldots, b_n) \in B$ then $g(f_{\mathbb{A}}(b_1, \ldots, b_n)) = f_{\mathbb{C}}(g(b_1), \ldots, g(b_n))$. \qed 
%and for all $b, b' \in B: b \leq_{\mathbb{A}} b' \text{ iff } g(b) \leq_{\mathbb{C}} g(b')$. \qed
\end{defi}

Intuitively, the FEP states that every finite partial subalgebra can be completed as a finite algebra. If the class of layered Heyting algebras has the FEP then ILGL has a finite model property and is thus decidable by the following argument. Suppose the algebra $\mathbb{A}$ with interpretation $\llbracket - \rrbracket$ witnesses that $\varphi \vdash \psi$ does not hold: that is, $\llbracket \varphi \rrbracket \not\leq_{\mathbb{A}} \llbracket \psi \rrbracket$. Set $B = \{ \llbracket \chi \rrbracket \mid \chi \text{ a subformula of } \varphi \rightarrow \psi \} \cup \{ \top_{\mathbb{A}}, \bot_{\mathbb{A}} \}$. By the FEP we obtain a finite algebra $\mathbb{C}$ and injective map $g: B \rightarrow \mathbb{C}$, yielding an interpretation $\widetilde{\llbracket - \rrbracket}$ generated by setting $\widetilde{\llbracket \mathrm{p} \rrbracket} = g(\llbracket \mathrm{p} \rrbracket)$ for all propositional atoms $\mathrm{p}$ occurring in $\varphi \rightarrow \psi$. As $g$ is injective and preserves existing algebraic operations in $B$ this gives that $\widetilde{\llbracket \varphi \rightarrow \psi \rrbracket} <_{\mathbb{C}} \top_{\mathbb{C}}$. Since in layered Heyting algebras $a \leq b$ iff $a \rightarrow b = \top$, we have a finite algebra $\mathbb{C}$ and interpretation $\widetilde{\llbracket - \rrbracket}$ such that $\widetilde{\llbracket \varphi \rrbracket} \not\leq \widetilde{\llbracket \psi \rrbracket}$, witnessing that $\varphi \vdash \psi$ does not hold. As layered Heyting algebras are finitely axiomatised this yields decidability of the logic.

We adapt an argument given by Hanikov\'a \& Hor\u{c}\'{i}k \cite{FEP} to prove the class of bounded residuated distributive-lattice ordered groupoids has the FEP. If such an algebra supports Heyting implication it is a layered Heyting algebra. Thus we simply have to additionally account for Heyting implication to make the desired proof go through.

\begin{thm}[cf. \cite{FEP}]  \label{theorem:fep}
The class of layered Heyting algebras has the FEP.
\end{thm}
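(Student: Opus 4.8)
The plan is to adapt the finite embeddability argument of Hanikov\'a \& Hor\u{c}\'{i}k \cite{FEP} for bounded residuated distributive-lattice-ordered groupoids, treating the Heyting implication $\rightarrow$ as the sole additional operation that must be preserved. The guiding observation is that \emph{every finite distributive lattice is automatically a Heyting algebra}, so the finite completion produced by their construction will already be a layered Heyting algebra; the one genuinely new obligation is to check that the embedding respects $\rightarrow$ on the finite fragment. First I would reduce to a convenient $B$: given a layered Heyting algebra $\mathbb{A}$ and finite $B \subseteq A$, I replace $B$ by the bounded distributive sublattice it generates together with $\top$ and $\bot$. Since finitely generated distributive lattices are finite, this enlarged $B$ remains finite, is closed under $\wedge$ and $\vee$, and contains the bounds; and proving the FEP for a larger finite set restricts to the FEP for the original, so this reduction is harmless.

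Next I would run the completion itself. Let $W$ be the sub-distributive-lattice-ordered-groupoid of $\mathbb{A}$ generated by $B$ (closed under $\wedge$, $\vee$, $\blacktriangleright$ and the bounds, but possibly infinite), and form the associated residuated frame whose closure operator $\gamma$ on the order-ideals of $W$ is determined by the relation ``$p(w) \leq b$'' ranging over unary $\blacktriangleright$-contexts $p$ and targets $b \in B$ (cf. \cite{FEP,GJ2017}). The key lemma of \cite{FEP}, which rests on the \emph{non-associativity} of $\blacktriangleright$ together with the finiteness of $B$, is that the $\gamma$-closed order-ideals form a \emph{finite} distributive lattice $\mathbb{C}$, with product $X \blacktriangleright_{\mathbb{C}} Y = \gamma(X \blacktriangleright Y)$ and with $\gimp_{\mathbb{C}}, \limp_{\mathbb{C}}$ the (automatically closed) set-theoretic residuals; Proposition \ref{prop-alg-prop} supplies the algebraic identities (e.g.\ $a \blacktriangleright b = \bot$ when $a=\bot$ or $b=\bot$) that make these behave correctly. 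By their argument the map $g(b) = \{\, w \in W \mid w \leq b \,\}$ is an injective map that preserves $\wedge, \vee, \top, \bot, \blacktriangleright, \gimp, \limp$ whenever the value computed in $\mathbb{A}$ already lies in $B$. Since meet in $\mathbb{C}$ is intersection and $\mathbb{C}$ is a finite distributive lattice, its Heyting implication $\rightarrow_{\mathbb{C}}$ exists, so $\mathbb{C}$ is a layered Heyting algebra.

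It then remains to verify preservation of $\rightarrow$. Suppose $a, a' \in B$ with $d := a \rightarrow_{\mathbb{A}} a' \in B$. For one inclusion, $d \wedge a \leq a'$ holds in $\mathbb{A}$ and $B$ is meet-closed, so $g(d) \wedge_{\mathbb{C}} g(a) = g(d \wedge a) \leq g(a')$, whence $g(d) \leq g(a) \rightarrow_{\mathbb{C}} g(a')$ by residuation of $\wedge$. For the converse, write $Z = g(a) \rightarrow_{\mathbb{C}} g(a')$ and take $w \in Z$. Because $W$ is meet-closed, $w \wedge a \in W$; since $Z$ is a closed order-ideal, $w \wedge a \in Z$, and $w \wedge a \leq a$ gives $w \wedge a \in g(a)$, so $w \wedge a \in Z \cap g(a) \subseteq g(a')$. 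Hence $w \wedge a \leq a'$, and by residuation $w \leq a \rightarrow_{\mathbb{A}} a' = d$, i.e.\ $w \in g(d)$; thus $Z \subseteq g(d)$. Therefore $g$ preserves $\rightarrow$ as well, and $(\mathbb{C}, g)$ witnesses the FEP.

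I expect the main obstacle to be the finiteness of $\mathbb{C}$ itself, which is the technical heart of \cite{FEP}: in particular, one must confirm that taking the frame base $W$ to be closed under the lattice operations (which the $\rightarrow$-argument above requires, so that $w \wedge a$ again lives in $W$) does not disturb their finiteness bound. This should go through because $\gamma$-closedness is measured against the \emph{finite} set $B$ of targets and the bounded branching forced by non-associativity, rather than against the cardinality of $W$; the distributivity of $\mathbb{C}$ likewise transfers from that of $\mathbb{A}$. By comparison, the residuation bookkeeping for $\gimp_{\mathbb{C}}, \limp_{\mathbb{C}}$ and the verification that $g$ is a bounded-lattice embedding are routine, and the Heyting-implication case is mild precisely because finiteness upgrades the distributive reduct to a Heyting algebra for free.
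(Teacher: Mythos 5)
Your overall strategy---complete the finite fragment via a residuated-frame/Galois-closure construction and then observe that a finite distributive lattice is automatically a Heyting algebra---is a recognizable alternative to the paper's argument, but as written it has two genuine gaps, both at the heart of the construction. First, the finiteness of $\mathbb{C}$: you invoke a ``key lemma of \cite{FEP}'' stating that the $\gamma$-closed order-ideals form a finite lattice, but no such lemma is available in the form you need, and the argument the paper actually adapts from \cite{FEP} is not frame-based at all. In the frame-based route, finiteness of the Galois algebra is exactly the hard part: one must show that the basic closed sets $\{w \in W \mid p(w) \le b\}$, with $p$ ranging over infinitely many $\blacktriangleright$-contexts, realize only finitely many distinct subsets of $W$. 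That requires an interpolation-style lemma (as in Buszkowski--Farulewski's treatment of the non-associative Lambek calculus with additives), and it is made strictly harder by your decision to close $W$ under $\wedge$ and $\vee$ (which your $\rightarrow$-preservation step needs), since contexts and frame elements then range over a larger set than in the standard treatments. Second, distributivity of $\mathbb{C}$ does not simply ``transfer from $\mathbb{A}$'': in a Galois algebra meets are intersections but joins are closures of unions, and the lattice of Galois-closed sets of a residuated frame is not distributive in general. This is precisely the problem that \cite{GJ2017} introduces \emph{distributive} residuated frames to solve; without verifying such extra frame conditions (or that $\gamma$ is a nucleus on the frame of downsets), you cannot conclude that $\mathbb{C}$ is a finite distributive lattice, so the Heyting implication $\rightarrow_{\mathbb{C}}$---and with it your otherwise correct preservation argument for $\rightarrow$---has nothing to stand on.

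The paper avoids both problems by never leaving $\mathbb{A}$: it takes $C$ to be the sublattice of the distributive-lattice reduct of $\mathbb{A}$ generated by $B$, which is finite because finitely generated distributive lattices are finite, and distributive because it is a sublattice; it defines $a \rightarrow_{\mathbb{C}} b = \bigvee_{\mathbb{C}}\{c \in C \mid a \wedge c \le b\}$ (a finite join), and it defines the multiplicatives by truncation, $c \blacktriangleright_{\mathbb{C}} c' = \lambda(c \blacktriangleright_{\mathbb{A}} c')$, $c \gimp_{\mathbb{C}} c' = \sigma(c \gimp_{\mathbb{A}} c')$, $c \limp_{\mathbb{C}} c' = \sigma(c \limp_{\mathbb{A}} c')$, where $\lambda(a)$ (resp.\ $\sigma(a)$) is the least element of $C$ above (resp.\ greatest element of $C$ below) $a$. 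Residuation then follows in two lines from $\lambda$ being a closure operator and $\sigma$ an interior operator, and the embedding is literally the inclusion $B \subseteq C$. Your opening reduction (replacing $B$ by the finite sublattice it generates) is exactly the paper's first move, and your instinct that finiteness plus distributivity yields the Heyting structure for free is also the right one; but where the paper then needs only the $\lambda$/$\sigma$ truncation, your proposal defers its two central claims---finiteness and distributivity of the frame completion---to citations that do not supply them, so the proof is incomplete as it stands.
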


\begin{proof} Let $\mathbb{A}$ be a layered Heyting algebra and $B \subseteq \mathbb{A}$ a finite
subset that, \emph{wlog}, contains $\top_{\mathbb{A}}$ and $\bot_{\mathbb{A}}$. Denote by 
$(C, \land_{\mathbb{C}}, \lor_{\mathbb{C}}, \top_{\mathbb{C}}, \bot_{\mathbb{C}})$ the distributive sublattice of the distributive lattice reduct of $\mathbb{A}$ generated by $B$. As $B$ was finite, so too is $C$. Define $a \rightarrow_{\mathbb{C}} b = \bigvee_{\mathbb{C}} \{ c \in C \mid a \land_{\mathbb{C}} c \leq_{\mathbb{C}} b \}$. Since each join is finite this is well defined, and this makes
$(C, \land_{\mathbb{C}}, \lor_{\mathbb{C}}, \rightarrow_{\mathbb{C}}, \top_{\mathbb{C}}, \bot_{\mathbb{C}})$ a Heyting algebra. It can be shown that if $b, b', b \rightarrow_{\mathbb{A}} b' \in B$ then 
$b \rightarrow_{\mathbb{A}} b' = b \rightarrow_{\mathbb{C}} b'$.

The rest of the proof now proceeds as in \cite{FEP}. Define operations $\lambda, \sigma: A \rightarrow A$
by $\lambda(a) = \bigwedge_{\mathbb{C}}\{ c \in C \mid a \leq_{\mathbb{A}} c \}$ and
$\sigma(a) = \bigvee_{\mathbb{C}}\{ c \in C \mid c \leq_{\mathbb{A}} a \}$. These are both well-defined because $C$ is finite. It follows that for
$c \in C$, $\lambda(c) = c = \sigma(c)$. We then define 
$\blacktriangleright_{\mathbb{C}}, \gimp_{\mathbb{C}}, \limp_{\mathbb{C}}$ on $C$ by
$c \blacktriangleright_{\mathbb{C}} c' = \lambda(c \blacktriangleright_{\mathbb{A}} c')$,
$c \gimp_{\mathbb{C}} c' = \sigma(c \gimp_{\mathbb{A}} c')$ and
$c \limp_{\mathbb{C}} c' = \sigma(c \limp_{\mathbb{A}} c')$. The fact that $\lambda$ is a closure operator and $\sigma$ an interior operator can be used to show that the required residuation properties hold for these operations. We thus have a finite layered Heyting algebra
$\mathbb{C} = (C, \land_{\mathbb{C}}, \lor_{\mathbb{C}}, \rightarrow_{\mathbb{C}}, \top_{\mathbb{C}}, \bot_{\mathbb{C}}, \blacktriangleright_{\mathbb{C}}, \gimp_{\mathbb{C}}, \limp_{\mathbb{C}})$, with the inclusion map of $B$ into $C$ satisfying the defining property of the FEP.
\end{proof}

\begin{cor}[Decidability of ILGL] \label{cor:decidability}
The consequence relation $\vdash$ for ILGL is decidable. \qed
\end{cor}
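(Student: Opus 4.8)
The plan is to derive decidability from the finite embeddability property (Theorem~\ref{theorem:fep}) by first extracting a finite model property and then combining it with the recursive axiomatisation and algebraic completeness (Theorem~\ref{theorem:algsoundcomplete}) in the standard Harrop-style dovetailing argument. In effect this makes rigorous the informal argument given in the paragraph preceding Theorem~\ref{theorem:fep}.

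First I would establish the finite model property: if $\varphi \vdash \psi$ is \emph{not} provable in $\mathrm{ILGL}_\mathrm{H}$, then there is a \emph{finite} layered Heyting algebra refuting it. By algebraic completeness there is some layered Heyting algebra $\mathbb{A}$ and interpretation $\llbracket - \rrbracket$ with $\llbracket \varphi \rrbracket \not\leq_{\mathbb{A}} \llbracket \psi \rrbracket$. Take $B$ to be the finite set of denotations $\llbracket \chi \rrbracket$ of subformulae $\chi$ of $\varphi \rightarrow \psi$, augmented with $\top_{\mathbb{A}}, \bot_{\mathbb{A}}$, and apply Theorem~\ref{theorem:fep} to obtain a finite layered Heyting algebra $\mathbb{C}$ and an injection $g\colon B \to C$ preserving whichever algebraic operations remain inside $B$. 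Defining $\widetilde{\llbracket \mathrm{p} \rrbracket} = g(\llbracket \mathrm{p} \rrbracket)$ on the atoms of $\varphi \rightarrow \psi$, the crucial step is an induction on subformulae $\chi$ showing $\widetilde{\llbracket \chi \rrbracket} = g(\llbracket \chi \rrbracket)$. Since $\llbracket \chi_1 \rrbracket, \llbracket \chi_2 \rrbracket \in B$ and $\llbracket \chi_1 \circ \chi_2 \rrbracket \in B$ for each connective $\circ$, the partial-homomorphism clause of the FEP applies at every step. Hence $\widetilde{\llbracket \varphi \rrbracket} = g(\llbracket \varphi \rrbracket)$ and $\widetilde{\llbracket \psi \rrbracket} = g(\llbracket \psi \rrbracket)$, and injectivity of $g$ together with preservation of $\leq$ yields $\widetilde{\llbracket \varphi \rrbracket} \not\leq_{\mathbb{C}} \widetilde{\llbracket \psi \rrbracket}$; equivalently $\widetilde{\llbracket \varphi \rightarrow \psi \rrbracket} <_{\mathbb{C}} \top_{\mathbb{C}}$, using $a \leq b$ iff $a \rightarrow b = \top$.

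With the finite model property in hand I would close the argument by dovetailing two semi-decision procedures. On one track, enumerate derivations of $\mathrm{ILGL}_\mathrm{H}$ searching for a proof of $\varphi \vdash \psi$; this halts precisely when the sequent is provable. On the other track, enumerate the (up to isomorphism, countably many) structures on finite carriers, check recursively that each is a layered Heyting algebra --- possible because the class is finitely axiomatised --- and, for each of the finitely many interpretations of the atoms of $\varphi \rightarrow \psi$, test whether $\llbracket \varphi \rrbracket \not\leq \llbracket \psi \rrbracket$; this halts precisely when a finite countermodel exists. By algebraic soundness and completeness together with the finite model property just proved, exactly one of the two tracks terminates, so running them in parallel decides $\varphi \vdash \psi$.

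I expect the main obstacle to be the inductive preservation step $\widetilde{\llbracket \chi \rrbracket} = g(\llbracket \chi \rrbracket)$: the FEP only guarantees $g(f_{\mathbb{A}}(\bar b)) = f_{\mathbb{C}}(g(\bar b))$ when \emph{both} the arguments and the output of $f_{\mathbb{A}}$ lie in $B$, so the entire construction hinges on choosing $B$ closed under subformula denotation (hence the inclusion of all subformulae of $\varphi \rightarrow \psi$). Everything else --- the termination argument for the parallel search and the $a \leq b \iff a \rightarrow b = \top$ reformulation --- is routine once the finite refuting algebra has been produced.
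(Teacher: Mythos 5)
Your proposal is correct and follows essentially the same route as the paper: the paper's own argument (given in the paragraph preceding Theorem~\ref{theorem:fep}) likewise derives the finite model property from algebraic completeness plus the FEP applied to the set $B$ of subformula denotations augmented with $\top_{\mathbb{A}}, \bot_{\mathbb{A}}$, uses the partial-homomorphism property and injectivity of $g$ together with $a \leq b$ iff $a \rightarrow b = \top$ to transfer the refutation to the finite algebra, and then appeals to the finite axiomatisation of layered Heyting algebras for the standard proof-search/countermodel-search decision procedure. Your explicit treatment of the inductive preservation step and the dovetailing argument merely spells out details the paper leaves implicit.
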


The finite countermodel for an invalid formula is bounded in size: for a formula $\varphi$ with $n$ subformulae, the cardinality of the finite algebra is bounded by $2^{2^n}$. In \cite{FEP} this is improved upon by showing such an algebra can be represented by a poset of join-irreducibles of cardinality $m \leq 2^n - 2$ and a ternary relation $R$ of cardinality $m^3$, where join-irreducibles are those elements that are not equal to $\bot$ and cannot be represented as the join of two distinct, non-$\bot$ elements. We defer to another occasion an in-depth investigation of the computational complexity of the decision procedure.

\section{Duality} \label{sec:duality}

We now extend Theorem \ref{thm:representation} to give a Stone-type \cite{PTJ82} duality result for \logicfont{ILGL}. Such duality theorems elegantly abstract the relationship between syntax and semantics in logic, and support systematic approaches to soundness and completeness theorems. As previously noted, ILGL is the weakest bunched logic with intuitionistic additives. Not only can the other intuitionistic bunched logics be constructed by adding structure to ILGL, but also duality theorems for each intuitionistic bunched logic, including BI and Separation Logic. We defer the presentation of these results to another occasion, although we demonstrate the extension to predicate logic in Section~\ref{sec:predicate}.
We also note that a duality theorem for BI that can be relativized to ILGL has been independently given by Jipsen \& Litak~\cite{JL2018}.

Our duality theorem extends Esakia's duality for Heyting algebras \cite{Esakia}, which we now recall. A \emph{Priestley Space} $\mathcal{X}$ is a structure $\mathcal{X} = (X, \mathcal{O}, \preccurlyeq)$ such that $(X, \mathcal{O})$ is a compact topological space, $\preccurlyeq$ is a preorder and the \emph{Priestley separation axiom} holds: if $x \not\preccurlyeq y$ then there exists a clopen, upward-closed set $C$ such that $x \in C$ and $y \not\in C$. The upward-closed clopen sets $\mathcal{C}_{\preccurlyeq}(\mathcal{X}) = \{ C \in \mathcal{O} \mid C \text{ clopen and for all } x \in C, \text{ if } x \preccurlyeq y \text{ then } y \in C \}$ of a Priestley space carry the structure of a bounded distributive lattice. An \emph{Esakia space} is a Priestley space satisfying the additional property that for each clopen set $C$, the downwards closure
$\{ x \in X \mid \exists y \in C: x \preccurlyeq y \}$ is clopen. This extra property is sufficient for the upward-closed clopen sets to additionally support Heyting implication.

For a Heyting algebra $\mathbb{A}$ define $\neg h_{\mathbb{A}}(a) = \{ F \in Pr(A) \mid a \not\in F \}$. Then $S = \{ h_{\mathbb{A}}(a) \mid a \in A \} \cup \{ \neg h_{\mathbb{A}}(a) \mid a \in A \}$ (where $h_{\mathbb{A}}$ is defined as in Theorem \ref{thm:representation}) forms a subbase that generates a topology $\mathcal{O}_{Pr(A)}$ on the set of prime filters of $\mathbb{A}$, $Pr(A)$. This topology makes $(Pr(A), \mathcal{O}_{Pr(A)}, \subseteq)$ an Esakia space. Conversely, given an Esakia space $\mathcal{X}$ the set of upward-closed clopen sets carries the structure of a Heyting algebra. Now define maps $f_{\mathcal{X}}: \mathcal{X} \rightarrow Pr(\mathcal{C}_{\preccurlyeq}(\mathcal{X}))$ by $f_{\mathcal{X}}(x) = \{ C \in \mathcal{C}_{\preccurlyeq}(\mathcal{X}) \mid x \in C \}$. Then $f_{(-)}$ and $h_{(-)}$ define natural isomorphisms underpinning a dual equivalence of the category of Heyting algebras with the category of Esakia spaces.

We now define the topological dual of layered Heyting algebras, \emph{intuitionistic layered spaces}, which combine the structure of an intuitionistic layered frame with that of an Esakia space, with compatibility conditions on $R$ providing the glue. Intuitonistic layered spaces are closely related to gaggle spaces, the topological dual of distributive gaggles \cite{Dunn}.

\begin{defi}[Intuitionistic Layered Space]
An \emph{intutionistic layered space} is a structure $ \mathcal{X} = (X, \mathcal{O}, \preccurlyeq, R)$ such that 
\begin{enumerate}
\item $(X, \mathcal{O}, \preccurlyeq)$ is an Esakia space;
\item $R$ is a ternary relation over $X$;
\item $\mathcal{C}_{\preccurlyeq}(\mathcal{X})$ is closed under $\blacktriangleright_R, \gimp_R$ and  $\limp_R$;
\item If $Rxyz$ does not hold then there exists $C_1$, $C_2 \in \mathcal{C}_{\preccurlyeq}(\mathcal{X})$ such that $x \in C_1$, $y \in C_2$ and $z \not\in C_1 \blacktriangleright_R C_2$. \qed
\end{enumerate} 
\end{defi}

A morphism of intuitonistic layered spaces is simply a \emph{continuous} intuitionistic layered p-morphism (see Definition \ref{def:pmorphism}). This yields a category $\mathrm{IntLaySp}$. Note that continuity, together with conditions (1) and (2) of Definition \ref{def:pmorphism} define morphisms of Esakia spaces. $\mathrm{IntLaySp}$ is thus a subcategory of the category of Esakia spaces.

We define contravariant functors $\mathrm{F}: \mathrm{LayHeyAlg}^{op} \rightarrow \mathrm{IntLaySp}$ and
$\mathrm{G}: \mathrm{IntLaySp}^{op} \rightarrow \mathrm{LayHeyAlg}$ as follows: 
{\small \[ \begin{cases}  \mathrm{F}(\mathbb{A}) = (Pr(A), \mathcal{O}_{Pr(\mathbb{A})}, \subseteq, R_{Pr(\mathbb{A})}) \\
\mathrm{F}f: \mathrm{F}(\mathbb{A}') \rightarrow \mathrm{F}(\mathbb{A}), X' \mapsto f^{-1}[X']
\end{cases}
\begin{cases} \mathrm{G}(\mathcal{X}) = (\mathcal{C}_{\preccurlyeq}(\mathcal{X}), \cap, \cup, \Rightarrow, X, \emptyset, \blacktriangleright_R, \gimp_R, \limp_R) \\
\mathrm{G}g: \mathrm{G}(\mathcal{X}') \rightarrow \mathrm{G}(\mathcal{X}), C' \mapsto g^{-1}[C']
\end{cases} \] }
\noindent where $\mathcal{O}_{Pr(\mathbb{A})}$ and $\mathcal{C}_{\preccurlyeq}(\mathcal{X})$ are as given in the preceeding discussion on Esakia duality and $\blacktriangleright_R, \gimp_R, \limp_R$ are as defined in Definition \ref{def:layeredcomplexalgebra}. We show the output of these functors gives structures and morphisms of the correct type.

\begin{lem} \label{lem:fwelldefined}
The functor $\mathrm{F}: \mathrm{LayHeyAlg}^{op} \rightarrow \mathrm{IntLaySp}$ is well-defined.
\end{lem}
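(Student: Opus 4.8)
The plan is to establish well-definedness in the two standard parts: first that $\mathrm{F}$ sends each layered Heyting algebra to an intuitionistic layered space, and second that it sends each homomorphism to a continuous intuitionistic layered p-morphism. Functoriality (preservation of identities and composition) will then be immediate, since $\mathrm{F}f$ is the inverse-image map $f^{-1}$ and these properties are inherited from inverse images.

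For the object part, fix a layered Heyting algebra $\mathbb{A}$. That $(Pr(A), \mathcal{O}_{Pr(\mathbb{A})}, \subseteq)$ is an Esakia space --- condition (1) --- is exactly Esakia duality applied to the Heyting-algebra reduct of $\mathbb{A}$, as recalled above, and condition (2) holds by construction of $R_{Pr(\mathbb{A})}$. The key fact I would use for the remaining two conditions is that, under Esakia/Priestley duality, the clopen up-sets of $\mathrm{F}(\mathbb{A})$ are precisely the sets $h_{\mathbb{A}}(a)$; that is, $\mathcal{C}_{\preccurlyeq}(\mathrm{F}(\mathbb{A})) = \{ h_{\mathbb{A}}(a) \mid a \in A \}$. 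Condition (3) then follows from the Representation Theorem (Theorem \ref{thm:representation}): because $h_{\mathbb{A}}$ respects the residuated structure, $h_{\mathbb{A}}(a) \blacktriangleright_{R_{Pr(\mathbb{A})}} h_{\mathbb{A}}(b) = h_{\mathbb{A}}(a \blacktriangleright b)$ and likewise for $\gimp_R$ and $\limp_R$, so $\mathcal{C}_{\preccurlyeq}(\mathrm{F}(\mathbb{A}))$ is closed under all three operations. For condition (4), I would unwind Definition \ref{def:ilglprimefilter}: if $R_{Pr(\mathbb{A})}F_0F_1F_2$ fails there are $a \in F_0$, $b \in F_1$ with $a \blacktriangleright b \notin F_2$, and then $C_1 = h_{\mathbb{A}}(a)$, $C_2 = h_{\mathbb{A}}(b)$ satisfy $F_0 \in C_1$, $F_1 \in C_2$, while $C_1 \blacktriangleright_{R_{Pr(\mathbb{A})}} C_2 = h_{\mathbb{A}}(a \blacktriangleright b)$ by the Representation Theorem, so $F_2 \notin C_1 \blacktriangleright_{R_{Pr(\mathbb{A})}} C_2$.

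For the morphism part, let $f: \mathbb{A} \to \mathbb{A}'$ be a homomorphism, so $\mathrm{F}f = f^{-1}: Pr(A') \to Pr(A)$. Continuity together with conditions (1) and (2) of Definition \ref{def:pmorphism} --- order preservation and the back-condition for $\preccurlyeq$ --- are again delivered by Esakia duality, as these are exactly the requirements on an Esakia-space morphism, and the fact that $f^{-1}$ takes prime filters to prime filters is standard. Condition (3) is a one-line computation using that $f$ is a homomorphism: if $R_{Pr(\mathbb{A}')}F'G'H'$ and $a \in f^{-1}[F']$, $b \in f^{-1}[G']$, then $f(a) \in F'$ and $f(b) \in G'$ give $f(a \blacktriangleright b) = f(a) \blacktriangleright f(b) \in H'$, hence $a \blacktriangleright b \in f^{-1}[H']$.

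The hard part will be the relational back-conditions (4)--(6) of Definition \ref{def:pmorphism}, which assert the existence of prime filters in $Pr(A')$ witnessing a relational configuration pulled back from $Pr(A)$ and so cannot be settled by direct computation. The plan is to treat each by a filter--ideal separation argument in $\mathbb{A}'$ modelled on the $\gimp$-case of the Representation Theorem: from the hypothesised inclusion and the negation of the desired conclusion I would build a proper filter and a proper ideal kept apart by the appropriate residual, apply Zorn's Lemma to obtain a maximal such pair, and check (as in that proof) that the filter is prime and the complement of the ideal is prime, producing the required witnesses. Conditions (5) and (6) are the mirror cases governed by the two residuals $\gimp$ and $\limp$, while (4) is the corresponding statement for $\blacktriangleright$ itself; all follow by symmetric variants of the same construction. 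The only genuine subtlety beyond the Representation Theorem will be threading the homomorphism $f$ through these separations so that the prime filters produced have the prescribed images under $f^{-1}$; this bookkeeping, rather than any new technique, is where the real work lies.
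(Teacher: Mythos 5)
Your proposal is correct and takes essentially the same route as the paper's own proof: objects are handled via Esakia duality plus the Representation Theorem (with the identical direct witness $h_{\mathbb{A}}(a)$, $h_{\mathbb{A}}(b)$ for condition (4)), morphism conditions (1)--(3) via Esakia duality and the one-line homomorphism computation, and the back-conditions (4)--(6) by a Zorn's-lemma filter--ideal separation modelled on the $\gimp$-case of Theorem \ref{thm:representation}. The paper executes the characteristic case (5) concretely by taking the filter $[f(F_1))$ and ideals $(f(\overline{F_3})]$, $(f(\overline{F_2})]$, proving properness via an auxiliary characterization of $R_{Pr(\mathbb{A})}$ in terms of $\gimp$ --- this is exactly the ``threading of $f$ through the separation'' that you correctly identify as the locus of the real work.
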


\begin{proof}
We first verify objects. Let $\mathbb{A}$ be a layered Heyting algebra. By Esakia duality, $(Pr(A), \mathcal{O}_{\mathbb{A}}, \subseteq)$ is an Esakia space. Closure under
$\blacktriangleright_{R_{Pr(\mathbb{A})}}, \gimp_{R_{Pr(\mathbb{A})}}$ and $\limp_{R_{Pr(\mathbb{A})}}$ is given by Theorem \ref{thm:representation} and the fact that, by Esakia duality, every upward-closed clopen set is of the form $h_{\mathbb{A}}(a)$ for $a \in A$. Finally, suppose $R_{Pr(\mathbb{A})}F_0F_1F_2$ does not hold. Then there exist $a \in F_0$ and $b \in F_1$ such that $a \blacktriangleright b \not\in F_2$. The upward-closed clopen sets $h_{\mathbb{A}}(a)$ and $h_{\mathbb{A}}(b)$ then satisfy condition 4.

Next morphisms. Let $f: \mathbb{A} \rightarrow \mathbb{A}'$ be a homomorphism. That $\mathrm{F}(f) = f^{-1}$ maps prime filters to prime filters, is continuous and satisfies properties (1) and (2) of Definition \ref{def:pmorphism} is precisely Esakia duality. It remains to verify properties 3--6. We show 3 and the characteristic example of 5, leaving the other similar cases as an exercise to the reader. First, suppose $R'_{Pr(\mathbb{A})}F'_0F'_1F'_2$ and let $a \in \mathrm{F}(f)(F'_0)$ and $b \in \mathrm{F}(f)(F'_1)$. Then 
$f(a) \in F'_0$ and $f(b) \in F'_1$ so $f(a) \blacktriangleright f(b) = f(a \blacktriangleright b) \in F'_2$. It thus follows that $a \blacktriangleright b \in \mathrm{F}(f)(F'_2)$ so $R_{Pr(\mathbb{A})}\mathrm{F}(f)(F'_0)\mathrm{F}(f)(F'_1)\mathrm{F}(f)(F'_2)$ as required. 

For 5, suppose $R_{Pr(\mathbb{A})}F_3F_1F_2$ and $\mathrm{F}(f)(F'_0) \subseteq F_3$. Consider
$\gamma = (f(\overline{F_3})]$, $\alpha = [f(F_1))$ and $\beta = (f(\overline{F_2})]$. We will use these ideals and filter to prove the existence of the requisite prime filters by showing that the partial order of tuples of proper filters and ideals defined by the satisfaction of the properties required for 5 is non-empty. We can then apply Zorn's lemma and obtain the required prime filters as in Theorem \ref{thm:representation}. First we note an important alternative characterization of $R_{Pr(\mathbb{A})}$:
\[ R_{Pr(\mathbb{A})}F_3F_1F_2 \text{ iff for all } a, b: a \in F_1 \text{ and } b \not\in F_2 \text{ implies } a \gimp b \not\in F_3. \]

We leave it to the reader to verify this holds. With this we show that each of $\gamma$, $\alpha$, $\beta$ is proper. Suppose $\top \in \gamma$. Then there exists $c \not\in F_2$ such that $f(c) = \top$. Hence $c \not\in \mathrm{F}(f)(F'_0)$ so $\top = f(c) \not\in F'_0$, contradicting filterhood of $F'_0$. Suppose $\bot \in \alpha$. Then there exists $a \in F_1$ such that $f(a) = \bot$. Let $b \not\in F_2$ be arbitrary (this is possible because $F_2$ is proper). Then we have $a \gimp b \not\in F_3$, so it follows that $a \gimp b \not\in \mathrm{F}(f)(F'_0)$. But this entails that $f(a \gimp b) = f(a) \gimp f(b) = \top \not\in F'_0$, contradicting filterhood of $F'_0$. Finally, suppose $\top \in \beta$. There thus exists $b \not\in F_2$ such that $f(b) = \top$. Let $a \in F_1$ be arbitrary. Then, similarly to the previous case, $f(a \gimp b) = f(a) \gimp f(b) = \top \not\in F'_0$, contradicting filterhood.

Next we show the defining condition of $R'_{Pr(\mathbb{A})}$ is satisfied by $\overline{\gamma}$, $\alpha$ and $\overline{\beta}$.
Let $a \in \overline{\gamma}$ and $b \in \alpha$. It follows that if $a \leq f(d)$ then $d \in F_3$ and that there exists $b' \in F_1$ such that $f(b') \leq b$. We now show $a \blacktriangleright b \in \overline{\beta}$. Suppose instead that $a \blacktriangleright b \in \beta$. Then there exists $c \not\in F_2$ such that $a \blacktriangleright b \leq f(c)$. We have $a \blacktriangleright f(b') \leq a \blacktriangleright b \leq f(c)$ so $a \leq f(b' \gimp c)$. Thus $b' \gimp c \in F_3$ and so necessarily $c \in F_2$, but this is a contradiction. Hence $a \blacktriangleright b \in \overline{\beta}$, as required.

Finally, we show that the required inclusions hold. That $F_1 \subseteq \mathrm{F}(f)(\alpha)$ and $\mathrm{F}(f)(\overline{\beta}) \subseteq F_2$ is obvious.  For $F'_0 \subseteq \overline{\gamma}$,  suppose $a \in F'_0$ and assume for contradiction that $a \in \gamma$. 
There thus exists $c \not\in F_3$ such that $a \leq f(c)$. By assumption,  
$c \not\in \mathcal{F}(f)(F'_0) \subseteq F_3$ but $a \leq f(c)$ entails $c \in \mathrm{F}(f)(F'_0)$, a contradiction. Hence $a \in \overline{\gamma}$.

By a similar argument to that of Theorem \ref{thm:representation}, there thus exist prime ideals $I^{max}_2$, $I^{max}_3$ and a prime filter $F'_1$ such that $F'_3 = \overline{I^{max}_3}$, $F'_2 = \overline{I^{max}_2}$ and $F'_1$ are prime filters with the property that $R'_{Pr(\mathbb{A'})}F'_3F'_1F'_2$ with $F'_0 \subseteq F'_3$, 
$F_1 \subseteq \mathrm{F}(f)(F'_1)$ and $\mathrm{F}(f)(F'_2) \subseteq F_2$, as required.
\end{proof}

\begin{lem} \label{lem:gwelldefined}
The functor $\mathrm{G}: \mathrm{IntLaySp} \rightarrow \mathrm{LayHeyAlg}$ is well-defined.
\end{lem}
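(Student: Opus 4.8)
The plan is to mirror the proof of Lemma~\ref{lem:fwelldefined}, verifying objects first and morphisms second, leaning on Esakia duality to dispatch the additive (Heyting) structure so that only the residuated operations require genuine work.

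\emph{Objects.} Let $\mathcal{X} = (X, \mathcal{O}, \preccurlyeq, R)$ be an intuitionistic layered space. Since $(X, \mathcal{O}, \preccurlyeq)$ is an Esakia space, Esakia duality immediately gives that $(\mathcal{C}_{\preccurlyeq}(\mathcal{X}), \cap, \cup, \Rightarrow, X, \emptyset)$ is a Heyting algebra. Condition~(3) of the definition of intuitionistic layered space ensures $\mathcal{C}_{\preccurlyeq}(\mathcal{X})$ is closed under $\blacktriangleright_R$, $\gimp_R$ and $\limp_R$, so these restrict to well-defined binary operations. It then remains only to check residuation: because the inclusion $\mathcal{C}_{\preccurlyeq}(\mathcal{X}) \subseteq \mathcal{P}_{\preccurlyeq}(X)$ preserves both the lattice order (meet and join being $\cap$, $\cup$ in each) and the three operations of Definition~\ref{def:layeredcomplexalgebra}, the property $A \blacktriangleright_R B \subseteq C$ iff $A \subseteq B \gimp_R C$ iff $B \subseteq A \limp_R C$ is inherited directly from the complex algebra $Com(\mathcal{X})$ via Lemma~\ref{lem:comislha}. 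Hence $\mathrm{G}(\mathcal{X})$ is a layered Heyting algebra.

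\emph{Morphisms.} Let $g: \mathcal{X} \rightarrow \mathcal{X}'$ be a morphism of intuitionistic layered spaces, i.e.\ a continuous intuitionistic layered p-morphism. By Esakia duality, $\mathrm{G}g = g^{-1}$ sends upward-closed clopen sets to upward-closed clopen sets (continuity delivers clopen-preservation, and order-preservation of $g$ — condition~(1) — makes preimages upward-closed), and $g^{-1}$ is a homomorphism of the Heyting reducts. The remaining task is to show $g^{-1}$ preserves $\blacktriangleright_R$, $\gimp_R$ and $\limp_R$, which I would verify by unwinding Definition~\ref{def:layeredcomplexalgebra} into two inclusions apiece. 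For $\blacktriangleright$, the inclusion $g^{-1}(A') \blacktriangleright_R g^{-1}(B') \subseteq g^{-1}(A' \blacktriangleright_{R'} B')$ follows by pushing a witnessing triple forward along $g$ using order-preservation~(1) and relation-preservation~(3); the reverse inclusion uses the back-condition~(4) to pull a witnessing triple for $g(x)$ back into $\mathcal{X}$, with upward-closure of $A'$ and $B'$ absorbing the resulting `$\preccurlyeq'$' slack. The $\gimp$ case is dual: the easy inclusion again uses (1) and (3), while the hard inclusion uses back-condition~(5); symmetrically, $\limp$ uses (6). Functoriality (contravariance) is immediate from $(g \circ h)^{-1} = h^{-1} \circ g^{-1}$ and $\mathrm{id}^{-1} = \mathrm{id}$.

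The genuine content lies in the morphism part — specifically the `hard' inclusions for the residuated operations — where the bespoke back-conditions~(4)--(6) of the p-morphism definition are precisely what is needed to lift a witnessing configuration from the codomain frame back into the domain frame. Matching each condition to its operation, and discharging the order inequalities it produces via the upward-closure of the clopen upsets involved, is the crux; the object verification and the additive part of the morphism verification are routine once Esakia duality and Lemma~\ref{lem:comislha} are invoked.
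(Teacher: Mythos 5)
Your proposal is correct and takes essentially the same approach as the paper: objects are dispatched by Lemma~\ref{lem:comislha} together with the closure condition on $\mathcal{C}_{\preccurlyeq}(\mathcal{X})$, and morphisms by Esakia duality for the Heyting reduct plus a two-inclusion check for each residuated operation, with the easy inclusions using p-morphism conditions (1) and (3) and the hard inclusions using the back-conditions (4)--(6) with upward-closure absorbing the order slack. The paper simply writes out the $\gimp$ case in full (via condition (5)) and leaves $\blacktriangleright$ and $\limp$ to the reader, exactly as your outline and your matching of conditions to operations predicts.
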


\begin{proof}
Verifying objects: the fact that, for an intuitionistic layered space $\mathcal{X}$, $\mathrm{G}(\mathcal{X})$ 
is a layered Heyting algebra essentially follows from Lemma \ref{lem:comislha} and the fact the upward-closed clopen sets are closed under $\blacktriangleright_{R}, \gimp_{R}, \limp_{R}$.

Verifying morphisms: let $g: \mathcal{X} \rightarrow \mathcal{X}'$ be an intuitionistic layered morphism. That $\mathrm{G}(g)$ maps upward-closed clopen sets to upward-closed clopen sets and respects the Heyting algebra operations is simply Esakia duality. We show that it also respects the operation $\gimp$, leaving the similar $\blacktriangleright$ and $\limp$ cases to the reader. 

First, assume $x \in \mathrm{G}(g)(C_1 \gimp_{R'} C_2)$ and suppose $x \preccurlyeq w$ with $Rwyz$ and $y \in \mathrm{G}(g)(C_1)$. Then $g(x) \preccurlyeq g(w)$ with $R'g(w)g(y)g(z)$ and $g(y) \in C_1$. By assumption, this entails $g(z) \in C_2$ so $z \in \mathrm{G}(g)(C_2)$, as required.

Conversely, assume $x \in \mathrm{G}(g)(C_1) \gimp_R \mathrm{G}(g)(C_2)$ and let 
$g(x) \preccurlyeq' w'$ with $R'w'y'z'$ and $y' \in C_1$. Since $g$ is an intuitionistic layered p-morphism, there exist $w, y, z$ such that $Rwyz$, $x \preccurlyeq w$, $y' \preccurlyeq' g(y)$ and $g(z) \preccurlyeq' z'$. By upwards-closure, $g(y) \in C_1$, so $y \in \mathrm{G}(g)(C_1)$ and by assumption this entails $z \in \mathrm{G}(g)(C_2)$. By upwards-closure again, $z' \in C_2$. Hence $g(x) \in C_1 \gimp_{R'} C_2$, as required.
\end{proof}

\begin{thm}[Duality Theorem for Layered Heyting Algebras] \label{thm:ilglduality}
The categories $\mathrm{LayHeyAlg}$ and $\mathrm{IntLaySp}$ are dually equivalent.
\end{thm}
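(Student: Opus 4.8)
The plan is to exhibit the natural isomorphisms witnessing the dual equivalence and to observe that the substantive work has already been done in the Representation Theorem and the two well-definedness lemmas. The objects of $\mathrm{LayHeyAlg}$ and $\mathrm{IntLaySp}$ are, respectively, Heyting algebras and Esakia spaces equipped with additional (residuated, resp.\ ternary-relational) structure, and the functors $\mathrm{F}$, $\mathrm{G}$ act on underlying objects and maps exactly as in Esakia's duality. The two candidate natural transformations are therefore the representation map $h_{\mathbb{A}}: \mathbb{A} \to \mathrm{G}(\mathrm{F}(\mathbb{A}))$ and the evaluation map $f_{\mathcal{X}}: \mathcal{X} \to \mathrm{F}(\mathrm{G}(\mathcal{X}))$ of Esakia duality. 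It then suffices to check that each is an isomorphism in the richer category and that naturality persists; the latter is inherited immediately, since the naturality squares involve only the underlying maps, which are unchanged by the added structure.

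First I would verify that $h_{\mathbb{A}}$ is an isomorphism of layered Heyting algebras. By Esakia duality it is a Heyting-algebra isomorphism onto the upward-closed clopen sets $\mathcal{C}_{\preccurlyeq}(Pr(\mathbb{A})) = \mathrm{G}(\mathrm{F}(\mathbb{A}))$; in particular it is a bijection. The Representation Theorem (Theorem~\ref{thm:representation}) shows it additionally respects $\blacktriangleright$, $\gimp$ and $\limp$. A bijective homomorphism of layered Heyting algebras is an isomorphism, so $h_{(-)}$ is a natural isomorphism $\mathrm{Id}_{\mathrm{LayHeyAlg}} \Rightarrow \mathrm{G} \circ \mathrm{F}$.

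Next I would verify that $f_{\mathcal{X}}$ is an isomorphism of intuitionistic layered spaces. By Esakia duality it is already a homeomorphism and an order-isomorphism onto $Pr(\mathcal{C}_{\preccurlyeq}(\mathcal{X}))$, so the only new obligation is that it transports $R$ faithfully, i.e.\ that $Rxyz$ holds iff $R_{Pr(\mathrm{G}(\mathcal{X}))}\, f_{\mathcal{X}}(x)\, f_{\mathcal{X}}(y)\, f_{\mathcal{X}}(z)$. Unwinding the definition of $R_{Pr}$ on the complex algebra, the right-hand side says precisely that for every upward-closed clopen $C_1 \ni x$ and $C_2 \ni y$ we have $z \in C_1 \blacktriangleright_R C_2$. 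The forward direction is immediate: given $Rxyz$, the witnesses $x,y,z$ together with reflexivity of $\preccurlyeq$ place $z$ in $C_1 \blacktriangleright_R C_2$ for any such $C_1, C_2$. The converse is exactly the content of clause (4) of the definition of intuitionistic layered space: if $Rxyz$ fails, that clause supplies $C_1 \ni x$ and $C_2 \ni y$ with $z \notin C_1 \blacktriangleright_R C_2$, contradicting the right-hand side. Hence $f_{(-)}$ is a natural isomorphism $\mathrm{Id}_{\mathrm{IntLaySp}} \Rightarrow \mathrm{F} \circ \mathrm{G}$, and the two functors constitute a dual equivalence.

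The genuine difficulty of the duality lies not in this assembly but in the two ingredients already in hand: the Zorn's-lemma construction inside the Representation Theorem guaranteeing that $h_{\mathbb{A}}$ respects the residuals, and the verification in Lemma~\ref{lem:fwelldefined} that $\mathrm{F}$ lands in $\mathrm{IntLaySp}$ (conditions 3--6 of the p-morphism definition). Given those, the only point requiring real care here is the faithful transport of $R$ by $f_{\mathcal{X}}$; clause (4) of the intuitionistic layered space definition has been included precisely to force the converse direction, so even this step is routine once the definitions are unwound.
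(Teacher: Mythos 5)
Your proposal is correct and follows essentially the same route as the paper's proof: both take the Esakia-duality natural transformations $h_{(-)}$ and $f_{(-)}$, cite Theorem~\ref{thm:representation} for $h_{\mathbb{A}}$ being an isomorphism, reduce the claim for $f_{\mathcal{X}}$ to showing $Rxyz$ iff $R_{Pr(\mathrm{G}(\mathcal{X}))}f_{\mathcal{X}}(x)f_{\mathcal{X}}(y)f_{\mathcal{X}}(z)$ (forward by unwinding definitions with reflexivity of $\preccurlyeq$, converse by contraposition via clause (4) of the intuitionistic layered space definition), and inherit naturality from Esakia duality. No gaps; your unwinding of $R_{Pr}$ on the complex algebra matches the paper's intended argument exactly.
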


\begin{proof}
The natural isomorphisms $h_{(-)}$ and $f_{(-)}$ defined in the discussion on Esakia duality together with the functors $\mathrm{F}$ and $\mathrm{G}$
suffice to give the dual equivalence of categories. By Theorem \ref{thm:representation} each $h_{\mathbb{A}}$ is a homomorphism (in fact an isomorphism). To see each $f_{\mathcal{X}}$ is an isomorphism it suffices to show: 
$Rxyz$ iff $R_{Pr(\mathrm{G}(\mathcal{X}))}f_{\mathcal{X}}(x)f_{\mathcal{X}}(y)f_{\mathcal{X}}(z)$. The left-to-right direction is a simple unpacking of the definitions, whilst the right-to-left definition is proved by contraposition, using property 4 in the definition of intuitionistic layered space. Naturality is then immediate from Esakia duality.
\end{proof}

\section{Extension To Predicate ILGL} \label{sec:predicate}
In this final section we extend ILGL to predicate ILGL and prove soundness and completeness with respect to a truth-functional semantics. Our motivation is to provide metatheory for an expressive extension in which real systems modelling can be done. 

As an example, consider the following two predicates: an unary predicate $Contains(-)$ and a binary predicate $- \mapsto -$. Informally, $Contains(r)$ designates that a subgraph contains a resource $r$ and $r \mapsto r'$ designates that a subgraph contains a path from a resource $r$ to a resource $r'$. This signature is styled after Pointer Logic \cite{IO01} (the formulation of Separation Logic \cite{Rey02} as a model of BI) and as such is joined by the standard intuitionistic quantifiers. Note that we do not consider \emph{multiplicative} quantification (see, for example, \cite{Pym09a}): our 
extension is instead akin to the definition of predicate BI given by Biering et al \cite{Biering2005}.
%%DAVID: INSERT MULT QUANT REFERENCES HERE

We specify a model of this extended language as follows. Let $\mathcal{X}$ be a bigraph scaffold of the sort defined in Example \ref{ex:bigraph}. Denote by $G$ the union of the place graphs of the system. A \emph{resource assignment} $s$ is a finite partial function $s: Res \rightarrow \mathcal{P}_{\preccurlyeq}(\mathcal{V}(G))$. As the order is determined by the spatial containment encoded by the place graphs, that a resource assignment maps resources to upwards closed sets just means that if a location $x$ contains a resource $r$, and  $x$ is contained in $y$ then $y$ also contains $r$.

Given a resource assignment $s$, together with $r \in Res$, we define $s[r \mapsto A]$ to be the resource assignment that is equal to $s$ everywhere except $r$, where it assigns $r$ to the upward-closed set of vertices $A$. We define a semantics on pairs $(s, H)$ for this extended language in Figure \ref{fig:bigraphsat}.
\begin{figure}
\centering
\hrule
\vspace{1mm}
\setlength\tabcolsep{1.5pt}
\setlength\extrarowheight{2pt}
\begin{tabular}{c c c c l r c c c c r r c}
$s, H$ & $\vDash$ & $\top$ & & always\\
$s, H$ & $\vDash$ & $\bot$ & & never  \\
$s, H$ & $\vDash$ & $\mathrm{Contains}(r)$ & iff & $H$ contains a $s(r)$-vertex \\ 
$s, H$ & $\vDash$ & $r \mapsto r'$ & iff & exists a non-empty path from a $s(r)$-vertex to a $s(r')$-vertex in H \\
$s, H$ & $\vDash$ & $\varphi \land \psi$ & iff & \myalign{l}{$s, H \vDash \varphi$ and $s, H \vDash \psi$} \\
$s, H$ & $\vDash$ & $\varphi \lor \psi$ & iff & $s, H \vDash \varphi$ or  $s, H \vDash \psi$
\\
$s, H$ & $\vDash$ & $\varphi \rightarrow \psi$ & iff & \myalign{l}{for all $K \succcurlyeq H$, $s, K \vDash\varphi$ implies $s, K$  $\vDash$ $\psi$} \\
$s, H $ & $\vDash$ & $\varphi \blacktriangleright \psi$ & iff & \multicolumn{8}{l}{there exists $\gcompE{K_0}{K_1}\downarrow$ s.t. $H \succcurlyeq \gcompE{K_0}{K_1}$, $s, K_0 \vDash \varphi$ and  $s, K_1 \vDash \psi$} \\
$s, H$ & $\vDash$ & $\varphi \gimp \psi$ & iff & \multicolumn{8}{l}{for all $K$ and $L \succcurlyeq H$ s.t. $\gcompE{L}{K} \downarrow$:  $s, L \vDash \varphi$ implies $s, \gcompE{L}{K} \vDash \psi$} \\
$s, H$ & $\vDash$ & $\varphi \limp \psi$ & iff & \multicolumn{8}{l}{for all $K$ and $L \succcurlyeq H$ s.t. $\gcompE{K}{L} \downarrow$: $s, L \vDash \varphi$ implies $s, \gcompE{K}{L} \vDash \psi$} \\
$s, H$ & $\vDash$ & $\exists r\varphi$ & iff & \multicolumn{8}{l}{there exists $A$ s.t. $s[r \mapsto A], H \vDash \varphi$} \\
$s, H$ & $\vDash$ & $\forall r\varphi$ & iff & \multicolumn{8}{l}{for all $A$, $H \preccurlyeq K$ implies $s[r \mapsto A], K \vDash \varphi$}
%\multicolumn{12}{c}{LGL is given by the case where $\preccurlyeq$ is $=$.}
\end{tabular}
\caption{Satisfaction for bigraph models of predicate ILGL.}
\vspace{1mm}
\hrule
\label{fig:bigraphsat}
\end{figure}

There are many design choices possible here. For example, we could additionally assign weights or permissions to edges, with satisfaction of the $\mapsto$ predicate mediated by conditions on the path. Coupled with a notion of dynamics that evolves assignments and the underlying graph theoretic structure we would have a rich environment within which to model a variety of distributed systems. We defer to another occasion an in-depth investigation of such a framework.

\subsection{Soundness \& completeness for predicate ILGL}
We now prove soundness and completeness of many-sorted predicate ILGL (henceforth predicate ILGL) with respect to a truth-functional semantics and a Hilbert system. The model described in the preceeding discussion is a particular instantiation of this semantics.  We consider a many-sorted language to enable the construction of models where different types of entity are assigned to the graph and because the category theoretic structures we employ easily support it. 

A \emph{signature} for predicate ILGL is comprised of $n$-ary function symbols $f$ of type $X_1 \times \cdots \times X_n \rightarrow X$ and $m$-ary predicate symbols $P$ of type $Y_1, \ldots, Y_m$. Terms of predicate ILGL are given in much the same way as classical predicate logic: each variable $v:X$ is a term of type $X$, and for terms $t_1, \ldots, t_n$ of type $X_1, \ldots, X_n$ respectively and an n-ary function symbol $f$ of type $X_1 \times \cdots \times X_n \rightarrow X$, $f(t_1, \ldots, t_n)$ is a term of type $X$. The syntax of predicate ILGL is then given as follows, where $t_i$, $t$, $t'$ range over the terms of the right type for the predicate symbols:
\[
\varphi \! ::= \! Pt_1\ldots t_m \mid t \!=_X\! t' \mid \top \mid \bot \mid \varphi \wedge \varphi \mid \varphi \vee \varphi \mid 
		\varphi \!\rightarrow\! \varphi \mid 
		\varphi \!\blacktriangleright\! \varphi \mid \varphi \!\gimp\! \varphi \mid \varphi \!\limp\! \varphi \mid \exists v\!\!:\!\!X \varphi \mid \forall v\!\!:\!\!X \varphi
\]

Figure \ref{fig:quantifier_rules_ILGL} gives rules for intuitionistic quantifiers that extend the Hilbert system $\mathrm{ILGL}_{\mathrm{H}}$ of Figure \ref{fig:hilbert_rules_ILGL} to predicate ILGL. These rules have auxillary conditions: in rules $\exists_1$ and $\forall_1$, $v$ must not be free in $\varphi$; in rules $\exists_2$ and $\forall _2$, $t$ must be a term of type $X$ and it must be that no occurence of a variable in $t$ becomes bound when substituted into $\varphi$ . 

\begin{figure}
\centering
\hspace{-.2cm}
{\small 
\begin{tabular}{c}
\hline \\
     \AxiomC{$\varphi \vdash \psi$} \hspace{-2mm}
\RightLabel{$(\exists_1)$}
\UnaryInfC{$\exists v:X \varphi \vdash \psi$}
      \DisplayProof %%%%
     \,  
    \AxiomC{$\varphi \vdash \psi$}
\RightLabel{$(\forall_1)$}
\UnaryInfC{$\varphi \vdash \forall v:X \psi$}  
      \DisplayProof %%%%
       
	\\ \\

         \AxiomC{} \hspace{-2mm}
\RightLabel{$(\exists_2)$}
\UnaryInfC{$\varphi(t) \vdash \exists v :X\varphi(v)$}
      \DisplayProof %%%%
     \,
\AxiomC{}
\RightLabel{$(\forall_2)$}
\UnaryInfC{$\forall v:X \varphi(v) \vdash \varphi(t)$} 
      \DisplayProof %%%%
      
      \\ \\
\hline 
\end{tabular}}
\caption{Extension of $\mathrm{ILGL}_\mathrm{H}$ to predicate ILGL}
\label{fig:quantifier_rules_ILGL}
\end{figure}
%%PUT INTO BODY OF TEXT?

%Introduction
Our proof mirrors the structure of the results given in Section~\ref{sec:decidability}. First, we define algebraic structures on which a completeness argument for the Hilbert system can easily be constructed. We then introduce relational structures, of which our intended semantics is a particular instance, and show we can generate an algebraic structure from a relational structure and vice versa, with satisfiability coinciding in each case.

%ILGL hyperdoctrine
\begin{defi}[ILGL Hyperdoctrine]
An \emph{ILGL hyperdoctrine} is a tuple $(\mathbb{P}: \mathrm{C}^{op} \rightarrow \mathrm{Poset}, (=_X)_{X \in Ob(\mathrm{C})}, (\exists X_{\Gamma}, \forall X_{\Gamma})_{\Gamma, X \in Ob(\mathrm{C})})$
such that:
	\begin{enumerate}
	\item $\mathrm{C}$ is a category with finite products;
	\item $\mathbb{P}: \mathrm{C}^{op} \rightarrow \mathrm{Poset}$ is a contravariant functor;
	\item For each object $X$ and each morphism $f$ in $\mathrm{C}$, $\mathbb{P}(X)$ is 					additionally a layered Heyting algebra and $\mathbb{P}(f)$ a layered Heyting algebra homomorphism
	\item For each object $X$ in $\mathrm{C}$ and each diagonal morphism 
		$\Delta_X: X \rightarrow X \times X$ in $\mathrm{C}$,  
		$=_{X} \in \mathbb{P}(X \times X)$ is such that, for all $a \in \mathbb{P}(X \times X)$, 
		$\top \leq \mathbb{P}(\Delta_X)(a) \text{ iff } =_{X} \leq a$; 
	\item For each pair of objects $\Gamma, X$ in $\mathrm{C}$ and each projection 
		$\pi_{\Gamma, X}: \Gamma \times X \rightarrow \Gamma$ in
		$\mathrm{C}$, $\exists X_{\Gamma}$ and $\forall X_{\Gamma}$ are left and right adjoints to 			$\mathbb{P}(\pi_{\Gamma, X})$ respectively: that is, monotone maps
		$\exists X_{\Gamma}: \mathbb{P}(\Gamma \times X) \rightarrow \mathbb{P}(\Gamma)$ and
		$\forall X_{\Gamma}: \mathbb{P}(\Gamma \times X) \rightarrow \mathbb{P}(\Gamma)$
		such that, for all $a, b \in \mathbb{P}(\Gamma)$, 
	$\exists X_{\Gamma}(a) \leq b \text{ iff } a \leq \mathbb{P}(\pi_{\Gamma, X})(b)$ and 
	$\mathbb{P}(\pi_{\Gamma, X})(b) \leq a \text{ iff } b \leq \forall X_{\Gamma}(a)$. 
		This assignment of maps is additionally natural in $\Gamma$: given a morphism
	$s: \Gamma \rightarrow \Gamma'$, the following diagrams commute: \vspace{1mm}

	{\small \begin{tikzcd}
	\mathbb{P}(\Gamma' \times X) \arrow{r}{\mathbb{P}(s \times id_X)} \arrow{d}[swap]{\exists X_{\Gamma'}} &
	\mathbb{P}(\Gamma \times X) \arrow{d}{\exists X_{\Gamma}} \\
	\mathbb{P}(\Gamma') \arrow{r}[swap]{\mathbb{P}(s)} & \mathbb{P}(\Gamma)
	\end{tikzcd}{\ } {\ } {\ } {\ } \begin{tikzcd}
	\mathbb{P}(\Gamma' \times X) \arrow{r}{\mathbb{P}(s \times id_X)} \arrow{d}[swap]{\forall X_{\Gamma'}} &
	\mathbb{P}(\Gamma \times X) \arrow{d}{\forall X_{\Gamma}} \\
	\mathbb{P}(\Gamma') \arrow{r}[swap]{\mathbb{P}(s)} & \mathbb{P}(\Gamma)
	\end{tikzcd}}
	\end{enumerate} \qed
\end{defi}
%We emphasise that the adjoints $\exists X_{\Gamma}$ and $\forall X_{\Gamma}$ are only required to be monotone maps, not layered Heyting algebra homomorphisms. This reflects the fact that quantifiers do not distribute over all of the logical operations.

% Informally, the base category $\mathrm{C}$ interprets the types and terms of the logic, equality of terms is interpreted by the elements $=_X$, quantification is interpreted by $\exists X_{\Gamma}$ and $\forall X_{\Gamma}$ and the other connectives are interpreted by the structure of the algebras $\mathbb{P}(X)$.

%Interpretation
For a given signature of predicate ILGL, an interpretation $\llbracket - \rrbracket$ on an ILGL hyperdoctrine $\mathbb{P}: \mathrm{C}^{op} \rightarrow \mathrm{Poset}$ is defined as follows below. For each type $X$, we assign an object $\llbracket X \rrbracket$ in $\mathrm{C}$ and for each n-ary function symbol $f$ of type $X_1 \times \cdots \times X_n \rightarrow X$ in the signature, we assign a morphism $\llbracket f \rrbracket: \llbracket X_1 \rrbracket \times \cdots \times \llbracket X_n \rrbracket \rightarrow \llbracket X \rrbracket$. A context $\Gamma = \{ v_1: X_1, \ldots, v_n: X_n \}$ is then assigned $\llbracket \Gamma \rrbracket = \llbracket X_1 \rrbracket \times \cdots \times \llbracket X_n \rrbracket$ and each term $t$ of type $X$ in context $\Gamma$ is inductively assigned a morphism $\llbracket t \rrbracket: \llbracket \Gamma \rrbracket \rightarrow \llbracket X \rrbracket$: for variables $v_i$, $\llbracket v_i \rrbracket = \pi_i$, the i-th projection from $\llbracket \Gamma \rrbracket$; for terms $t_1, \ldots, t_n$ and n-ary function symbol $f$, 
$\llbracket f(t_1, \ldots, t_n) \rrbracket = \llbracket f \rrbracket \circ \langle \llbracket t_1 \rrbracket, \ldots, \llbracket t_n \rrbracket \rangle.$

We then extend $\llbracket - \rrbracket$ to formulae of predicate ILGL. For each $\varphi$ with free variables in context $\Gamma$ we define an element $\llbracket \varphi \rrbracket \in \mathbb{P}(\llbracket \Gamma \rrbracket)$ inductively. For atomic formulae $Pt_1\ldots t_m$ and $t =_X t'$, $\llbracket Pt_1\ldots t_m \rrbracket$ and $\llbracket t =_X t' \rrbracket$ are given by $\mathbb{P}(\langle \llbracket t_1 \rrbracket, \ldots, \llbracket t_m \rrbracket \rangle)(\llbracket P \rrbracket)$ and 
$\mathbb{P}(\langle \llbracket t \rrbracket, \llbracket t' \rrbracket \rangle)(=_X)$ respectively. We then proceed by structural induction, using the algebraic structure of $\mathbb{P}(\llbracket \Gamma \rrbracket)$: $\llbracket \top \rrbracket = \top$, $\llbracket \bot \rrbracket = \bot$, 
$\llbracket \varphi \circ \psi \rrbracket = \llbracket \varphi \rrbracket \circ \llbracket \psi \rrbracket$ for all $\circ \in \{ \land, \lor, \rightarrow, \blacktriangleright, \gimp, \limp \}$. Assuming $\varphi$ has free variables in $\Gamma \cup \{ x: X \}$, 
$\llbracket Q x:X \varphi \rrbracket = 
Q \llbracket X \rrbracket_{\llbracket \Gamma \rrbracket}( \llbracket \varphi \rrbracket)$ for $Q \in \{ \exists, \forall \}$. Finally, term substitution is given by 
$\llbracket \varphi(t/x) \rrbracket = \mathbb{P}(\llbracket t \rrbracket)(\llbracket \varphi \rrbracket)$.

A formula $\varphi$ with free variables in $\Gamma$ is satisfied under interpretation $\llbracket - \rrbracket$ if $\llbracket \varphi \rrbracket$ is $\top$ in $\mathbb{P}(\llbracket \Gamma \rrbracket)$. It is valid if it is satisfied under all interpretations. The soundness and completeness of predicate ILGL for this notion of validity is an immediate consequence of the analogous result for BI and BI hyperdoctrines.
%%CLEAN UP FORMATTING IF POSSIBLE

\begin{thm}[cf. \cite{Biering2005}]
For all formulae $\varphi$ of predicate ILGL, $\varphi$ is provable iff $\varphi$ is valid on ILGL hyperdoctrines. \qed
\end{thm}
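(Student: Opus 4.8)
The plan is to follow the template of the soundness and completeness proof for predicate BI and BI hyperdoctrines given by Biering et al.~\cite{Biering2005}, substituting the fibrewise BI-algebra structure with the layered Heyting algebra structure developed in Section~\ref{sec:decidability}. Soundness is the routine direction: I would check that each axiom and rule of the extended Hilbert system preserves validity under an arbitrary interpretation $\llbracket - \rrbracket$ into an ILGL hyperdoctrine $\mathbb{P} : \mathrm{C}^{op} \rightarrow \mathrm{Poset}$. The purely propositional rules of Figure~\ref{fig:hilbert_rules_ILGL} are validated fibrewise: because each $\mathbb{P}(\llbracket \Gamma \rrbracket)$ is a layered Heyting algebra, this is exactly the algebraic soundness already recorded in Theorem~\ref{theorem:algsoundcomplete}, applied in the fibre over the context $\Gamma$ of free variables. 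The equality rules are validated by the defining property of $=_X$ against the diagonal $\mathbb{P}(\Delta_X)$, and the quantifier rules $\exists_1, \exists_2, \forall_1, \forall_2$ are validated by the adjunctions $\exists X_\Gamma \dashv \mathbb{P}(\pi_{\Gamma,X}) \dashv \forall X_\Gamma$ together with the naturality squares, which handle the side conditions on free variables and the substitution identity $\llbracket \varphi(t/x) \rrbracket = \mathbb{P}(\llbracket t \rrbracket)(\llbracket \varphi \rrbracket)$.

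For completeness I would build the \emph{classifying} (term-model) ILGL hyperdoctrine directly from the syntax. Take $\mathrm{C}$ to be the category of contexts: objects are finite products of sorts and morphisms $\Gamma \rightarrow \Gamma'$ are tuples of terms-in-context, with finite products given by context concatenation. For each context $\Gamma$ let $\mathbb{P}(\Gamma)$ be the Lindenbaum--Tarski layered Heyting algebra whose carrier is the set of formulae with free variables in $\Gamma$ quotiented by provable equivalence ($\varphi \vdash \psi$ and $\psi \vdash \varphi$), ordered by provable entailment; the operations $\land, \lor, \rightarrow, \blacktriangleright, \gimp, \limp$ descend to this quotient exactly as in the propositional construction underlying Theorem~\ref{theorem:algsoundcomplete}, and the residuation laws hold because of the Hilbert rules $(\blacktriangleright), (\gimp_1), (\gimp_2), (\limp_1), (\limp_2)$. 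Reindexing $\mathbb{P}(f)$ is term substitution, $=_X$ is the class of $v_1 =_X v_2$, and $\exists X_\Gamma, \forall X_\Gamma$ are the syntactic quantifiers, with the adjunctions being precisely the content of the quantifier rules. One then verifies that this data forms an ILGL hyperdoctrine and that, under the canonical interpretation sending each sort, function and predicate symbol to itself, $\llbracket \varphi \rrbracket = \top$ in $\mathbb{P}(\llbracket \Gamma \rrbracket)$ iff $\varphi$ is provable. Hence any formula valid on all ILGL hyperdoctrines is in particular valid in this one, and so provable.

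The main obstacle will be the coherence bookkeeping rather than any single deep step: one must check that reindexing $\mathbb{P}(f)$ is a genuine layered Heyting algebra homomorphism --- in particular that it commutes with the three multiplicative operations --- and that the quantifier assignment is natural in $\Gamma$, since these are the points where the non-commutative, non-associative multiplicative structure could in principle interact badly with substitution. Here the situation is actually no worse than for BI: the multiplicatives are defined purely fibrewise by the same rules in every context, substitution is a uniform syntactic operation that commutes with all connectives by a straightforward induction on formula structure, and the two residuals $\gimp$ and $\limp$ are treated symmetrically. Consequently the argument of \cite{Biering2005} transfers essentially verbatim, with the replacement of $\ast, \wand$ by $\blacktriangleright, \gimp, \limp$ throughout.
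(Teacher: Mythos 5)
Your proposal is correct and is essentially the paper's own argument: the paper proves this theorem by deferring to the analogous result of Biering et al.~\cite{Biering2005} --- fibrewise soundness over each context of free variables, plus completeness via the syntactic (Lindenbaum--Tarski) hyperdoctrine over the category of contexts --- which is precisely what you have reconstructed with $\ast$, $\wand$ replaced by $\blacktriangleright$, $\gimp$, $\limp$. The only difference is presentational: the paper compresses all of this into the remark that the result is ``an immediate consequence of the analogous result for BI and BI hyperdoctrines,'' whereas you have spelled out the term-model construction and the coherence checks explicitly.
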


We now define the relational structures on which we give a truth-functional semantics, \emph{indexed layered frames}.

%Indexed Intuitionistic Resource Frame
\begin{defi}[Indexed Layered Frame] \label{def:indlayeredframe}
An \emph{indexed layered frame} is a functor 
$\mathcal{R}: \mathrm{C} \rightarrow \mathrm{IntLayFr}$ such that
\begin{enumerate}
\item $\mathrm{C}$ is a category with finite products;
\item (Psuedo Epi) For all objects $\Gamma, \Gamma'$ and $X$ in $\mathrm{C}$, all morphisms
		$s: \Gamma \rightarrow \Gamma'$ and all projections
		$\pi_{\Gamma, X}: \Gamma \times X \rightarrow \Gamma$, 
		if $\mathcal{R}(\pi_{\Gamma', X})(y) \preccurlyeq \mathcal{R}(s)(x)$ then there exists $z$ such that
		$\mathcal{R}(\pi_{\Gamma, X})(z) \preccurlyeq x$ and
		$y \preccurlyeq \mathcal{R}(s \times id_X)(z)$.
\end{enumerate} \qed
%%ORDER SUBSCRIPT
%% "exists x in... y in ??
Given an  indexed layered frame $\mathcal{R}: \mathrm{C} \rightarrow \mathrm{IntLayFr}$
and an object $X$ in $\mathrm{C}$ we denote the intuitionistic layered frame at $X$ by 
$\mathcal{R}(X) = (\mathcal{R}(X), \preccurlyeq_{\mathcal{R}(X)}, R_{\mathcal{R}(X)})$.
\end{defi}
%%Formatting

%Semantics

An interpretation for an indexed layered frame is given in precisely the same way as  for an ILGL hyperdoctrine, but for one key difference: each m-ary predicate symbol $P$ of type $X_1, \ldots, X_m$ is interpreted as an upward-closed set $\llbracket P \rrbracket \in \mathcal{P}_{\preccurlyeq}(\mathcal{R}(\llbracket X_1 \rrbracket, \ldots \llbracket X_m \rrbracket))$. For $\varphi$ with free variables in $\Gamma$, an interpretation $\llbracket - \rrbracket$, and $x \in \mathcal{R}(\llbracket \Gamma \rrbracket)$, the satisfaction relation $\vDash^{\Gamma}$ is defined in Figure \ref{fig:ilfsat}.

\begin{figure}
\centering
\hrule
\vspace{1mm}
\setlength\tabcolsep{1.3pt}
\setlength\extrarowheight{2pt}
\begin{tabular}{c c c c l r c c c c r r c}
$x, \llbracket - \rrbracket$ & $\vDash^{\Gamma}$ & $\top$ & & always\\
$x, \llbracket - \rrbracket$ & $\vDash^{\Gamma}$ & $\bot$ & & never  \\
$x, \llbracket - \rrbracket$ & $\vDash^{\Gamma}$ & $Pt_1\ldots t_m$ & iff & $\mathcal{R}(\langle \llbracket t_1 \rrbracket, \ldots, \llbracket t_m \rrbracket \rangle)(x) \in \llbracket P \rrbracket$ \\ 
$x, \llbracket - \rrbracket$ & $\vDash^{\Gamma}$ & $t =_X t'$ & iff & $\mathcal{R}(\langle \llbracket t \rrbracket, \llbracket t' \rrbracket \rangle)(x) \in Ran(\mathcal{R}(\Delta_{\llbracket X \rrbracket}))$ \\
$x, \llbracket - \rrbracket$ & $\vDash^{\Gamma}$ & $\varphi \land \psi$ & iff & \myalign{l}{$x, \llbracket - \rrbracket \vDash^{\Gamma} \varphi$ and $x, \llbracket - \rrbracket \vDash^{\Gamma} \psi$} \\
$x, \llbracket - \rrbracket$ & $\vDash^{\Gamma}$ & $\varphi \lor \psi$ & iff & $x, \llbracket - \rrbracket \vDash^{\Gamma} \varphi$ or  $x, \llbracket - \rrbracket \vDash^{\Gamma} \psi$
\\
$x, \llbracket - \rrbracket$ & $\vDash^{\Gamma}$ & $\varphi \rightarrow \psi$ & iff & \myalign{l}{for all $y \succcurlyeq x$, $s, y \vDash^{\Gamma}\varphi$ implies $s, y$  $\vDash^{\Gamma}$ $\psi$} \\
$x, \llbracket - \rrbracket $ & $\vDash^{\Gamma}$ & $\varphi \blacktriangleright \psi$ & iff & \multicolumn{8}{l}{there exists $x'\!, \!y,\!z$ s.t. $x' \preccurlyeq x$, $R_{\mathcal{R}(\llbracket \Gamma \rrbracket)}yzx'$,
$y, \!\llbracket - \rrbracket \! \vDash^{\Gamma} \!\varphi$ and $z, \!\llbracket - \rrbracket \! \vDash^{\Gamma} \!\psi$} \\
$x, \llbracket - \rrbracket$ & $\vDash^{\Gamma}$ & $\varphi \gimp \psi$ & iff & \multicolumn{8}{l}{for all $x'\!, \!y, \!z$: 
$x \preccurlyeq x'$, $R_{\mathcal{R}(\llbracket \Gamma \rrbracket)}x'yz$ and $y, \!\llbracket - \rrbracket \!\vDash^{\Gamma} \!\varphi$
implies $z,\! \llbracket - \rrbracket  \!\vDash^{\Gamma}\! \psi$} \\
$x, \llbracket - \rrbracket$ & $\vDash^{\Gamma}$ & $\varphi \limp \psi$ & iff & \multicolumn{8}{l}{for all $x'\!, \!y, \!z$: 
$x \preccurlyeq x'$, $R_{\mathcal{R}(\llbracket \Gamma \rrbracket)}yx'z$ and $y, \!\llbracket - \rrbracket \!\vDash^{\Gamma} \!\varphi$
implies $z,\! \llbracket - \rrbracket  \!\vDash^{\Gamma}\! \psi$} \\
$x, \llbracket - \rrbracket$ & $\vDash^{\Gamma}$ & $\exists r\varphi$ & iff & \multicolumn{8}{l}{there exists $x'$
s.t. $\mathcal{R}(\pi_{\llbracket \Gamma \rrbracket, \llbracket X \rrbracket})(x') = x$
and $x', \llbracket - \rrbracket  \vDash^{\Gamma \cup \{ v : X \}} \varphi$ } \\
$x, \llbracket - \rrbracket$ & $\vDash^{\Gamma}$ & $\forall r\varphi$ & iff & \multicolumn{8}{l}{for all $x' $:
if $x \preccurlyeq \mathcal{R}(\pi_{\llbracket \Gamma \rrbracket, \llbracket X \rrbracket})(x')$
then $x', \llbracket - \rrbracket  \vDash^{\Gamma \cup \{ v:X\}} \varphi$}
%\multicolumn{12}{c}{LGL is given by the case where $\preccurlyeq$ is $=$.}
\end{tabular}
\caption{Satisfaction for indexed layered frame models of predicate ILGL.}
\vspace{1mm}
\hrule
\label{fig:ilfsat}
\end{figure}
The bigraph model of predicate ILGL defined in the previous subsection is an example of an indexed layered frame over the category $\mathrm{Set}$.

\begin{exa}[The bigraph model as an indexed layered frame]
First, let $(X, R, \preccurlyeq)$ be the layered frame corresponding to the bigraph scaffold $\mathcal{X}$: $RHKL$ iff $\gcompE{H}{K}\downarrow$ and $\gcompE{H}{K} = L$. Set $\mathcal{R}: \mathrm{Set} \rightarrow \mathrm{IntLayFr}$ by $\mathcal{R}(A) = (A \times X, R_A, \preccurlyeq_A)$ where $R_A(x, H)(y, K)(z, L)$ iff $x = y = z$ and
$RHKL$, and $(x, H) \preccurlyeq_A (y, K)$ iff $x = y$ and $H \preccurlyeq K$. For functions $f: A \rightarrow B$ set $\mathcal{R}(f)(a, H) = (f(a), H)$. This defines a functor $\mathcal{R}$ which trivially satisfies (Psuedo Epi). Hence $\mathcal{R}$ is an indexed layered frame.

We define the following interpretation. The single sort is interpreted as $\mathcal{P}_{\preccurlyeq}(V(G))$ where $G$ is the graph union of the place graph vertices of the system of bigraphs. The predicate symbols are interpreted as $\llbracket Contains(-) \rrbracket = \{ (A, H) \mid \exists x \in A: x \in V(H) \}$ and
\[ \llbracket \mapsto \rrbracket = \{ ((A_1, A_2), H) \mid \exists x_1 \in A_1 \text{ and } x_2 \in A_2: H \text{ contains non-empty path } x_1 \text{ to } x_2 \}. \] 
Let $r_i$ be an enumeration of resources in $Res$ and let $\varphi$ be a formula with free variables amongst $r_1, \ldots, r_n$. Then
\begin{align*}\{ (r_1, A_1), \ldots, (r_n, A_n) \}, G \vDash \varphi \text{ iff } 
((A_1, \ldots, A_n), G), \llbracket - \rrbracket \vDash^{\{r_1, \ldots, r_n\}} \varphi.\tag*{\qed}\end{align*}
\end{exa}

It remains to show satisfibility coincides for ILGL hyperdoctrines and indexed layered frames. We first obtain an ILGL hyperdoctrine from an indexed layered frame.

%Complex Hyperdoctrine
\begin{defi}[Complex Hyperdoctrine]
Given an indexed layered frame $\mathcal{R}: \mathrm{C} \rightarrow \mathrm{IntLayFr}$, the \emph{complex hyperdoctrine} of $\mathcal{R}$, $Com(\mathcal{R}(-)): \mathrm{C}^{op} \rightarrow \mathrm{Poset}$, is defined by extending $Com(\mathcal{R}(-))$ (Definition \ref{def:layeredcomplexalgebra}) to morphisms by
$Com(\mathcal{R}(f)) = (\mathcal{R}(f))^{-1}$ and setting $Ran(\mathcal{R}(\Delta_X))$ as $=_X$, $\mathcal{R}(\pi_{\Gamma, X})^*$ as $\exists X_{\Gamma}$, and $\mathcal{R}(\pi_{\Gamma, X})_*$ as $\forall X_{\Gamma}$, where
	$\mathcal{R}(\pi_{\Gamma, X})^*(A) = \{ x \mid \text{there exists } y \in A: \mathcal{R}(\pi_{\Gamma, X})(y) \preccurlyeq x \}$ and
	$\mathcal{R}(\pi_{\Gamma, X})_*(A) = \{ x \mid \text{for all } y, \text{ if } x \preccurlyeq \mathcal{R}(\pi_{\Gamma, X})(y) \text{ then } y \in A \}$ \qed
\end{defi}
%%ORDER SUBSCRIPTS
Given that the complex operations are designed to mimic the semantic clauses, one might expect the direct image $\mathcal{R}(\pi_{\Gamma, X})$ to be given as $\exists X_{\Gamma}$ rather than $\mathcal{R}(\pi_{\Gamma, X})^*$. We note that these define precisely the same functions because $\mathcal{R}(\pi_{\Gamma, X})$ is an intuitionistic layered p-morphism: property (2) of Definition \ref{def:pmorphism} easily shows this to be the case. The formulation of $\mathcal{R}(\pi_{\Gamma, X})^*$ simplifies many proofs in what follows, however.
%%DEFINITION 4.5 REFERENCE

%Complex hyperdoctrine is an ILGL Hyperdoctrine
\begin{lem} \label{lemma:comlexisilf}
Given an indexed layered frame $\mathcal{R}: \mathrm{C} \rightarrow \mathrm{IntLayFr}$, the 
complex hyperdoctrine $Com(\mathcal{R}(-))$ is an ILGL hyperdoctrine.
\end{lem}

\begin{proof}
By Lemma \ref{lem:comislha} $Com(\mathcal{R}(X))$ is a layered Heyting algebra, and by a similar argument to ILGL duality each $Com(\mathcal{R}(f))$ is a layered Heyting algebra homomorphism. As $Ran(\mathcal{R}(\Delta_X))$ is upward-closed by property (2) of Definition \ref{def:pmorphism}, it is an element of $Com(\mathcal{R}(X \times X))$, as required. We also have that $\mathcal{R}(\pi_{\Gamma, X})^*$ and $\mathcal{R}(\pi_{\Gamma, X})_*$ map upward-closed sets to upward-closed sets and are monotone with respect to $\subseteq$. The adjointness properties of $=_X$, $\exists X_{\Gamma}$ and $\forall X_{\Gamma}$ are then all easily verified.
%%DEFINITION REFERENCE: LAYERED MORPHISM

We thus concentrate on the remaining property, naturality. We restrict ourselves to the case for
$\exists X_{\Gamma}$. Let $s: \Gamma \rightarrow \Gamma'$ be a morphism in $\mathrm{C}$. and suppose $A \in Com(\mathcal{R}(\Gamma' \times X))$. We must show
$\mathcal{R}(\pi_{\Gamma, X})^*(\mathcal{R}(s \times id_X)^{-1}(A)) = 
\mathcal{R}(s)^{-1}(\mathcal{R}(\pi_{\Gamma', X})*(A))$. Suppose 
$x \in \mathcal{R}^*(\pi_{\Gamma, X})(\mathcal{R}(s \times id_X)^{-1}(A))$: then there exists $y$ such that $\mathcal{R}(\pi_{\Gamma, X})(y) \preccurlyeq x$ and $\mathcal{R}(s \times id_X)(y) \in A$. We have $\mathcal{R}(\pi_{\Gamma', X})(\mathcal{R}(s \times id_X)(y)) =
\mathcal{R}(s)(\mathcal{R}(\pi_{\Gamma, X})(y)) \preccurlyeq \mathcal{R}(s)(x)$. Hence
$x \in \mathcal{R}(s)^{-1}(\mathcal{R}(\pi_{\Gamma', X})^*(A))$, as required.

% we must show there exists
%$y \in A$ such that $\mathcal{R}(s)(\mathcal{R}(\pi_{\Gamma, X})(x)) = \mathcal{R}(\pi_{\Gamma', X})(y)$. By assumption
%$\mathcal{R}(s \times id_X)(x) \in A$ and by naturality 
%$\mathcal{R}(s)(\mathcal{R}(\pi_{\Gamma', X})(x) = 
%\mathcal{R}(\pi_{\Gamma, X})(\mathcal{R}(s \times id_X)(x))$. 

Conversely, assume $x \in \mathcal{R}(s)^{-1}(\mathcal{R}(\pi_{\Gamma', X})^*(A))$.  Then there exists $y \in A$ such that $\mathcal{R}(\pi_{\Gamma', X})(y) \preccurlyeq \mathcal{R}(s)(x)$. Then by (Psuedo Epi), there exists $z$ such that $\mathcal{R}(\pi_{\Gamma, X})(z) \preccurlyeq x$ and $y \preccurlyeq \mathcal{R}(s \times id_X)(z)$. By upwards-closure of $A$, $\mathcal{R}(s \times id_X)(z) \in A$. Hence
$x \in \mathcal{R}(\pi_{\Gamma, X})^*(\mathcal{R}(s \times id_X)^{-1}(A))$, as required. 
\end{proof}
%%ADD SUBSCRIPT TO ORDERS
%%DEFINITION REFERENCE

%complex hyperdoctrine satisfiability lemma
As in the cases for the propositional logics, an interpretation $\llbracket - \rrbracket$ on an indexed resource frame immediately yields an interpretation on its complex hyperdoctrine, as each predicate symbol is assigned an upward-closed subset. We overload the notation $\llbracket - \rrbracket$ to also refer to the interpretation on the complex hyperdoctrine because of this. By unpacking the definition of the semantic clauses and noting the way they coincide with the analogous structure of the complex hyperdoctrine, we can show satisfiability coincides in this case.

\begin{lem}
Given an indexed layered frame $\mathcal{R}$ and an interpretation $\llbracket - \rrbracket$, for all formulae $\varphi$ of predicate ILGL in context $\Gamma$, and all $x \in \mathcal{R}(\llbracket \Gamma \rrbracket)$ we have $x \in \llbracket \varphi \rrbracket$ iff $x, \llbracket - \rrbracket \vDash^{\Gamma} \varphi$. \qed
\end{lem}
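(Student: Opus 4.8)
The plan is to prove both directions simultaneously by structural induction on $\varphi$, ranging over all contexts $\Gamma$ and all $x \in \mathcal{R}(\llbracket \Gamma \rrbracket)$; the induction must be stated uniformly in the context, since the quantifier cases pass to the subformula at the extended context $\Gamma \cup \{v:X\}$. Throughout I use that $\llbracket \varphi \rrbracket$ is by construction an element of $\mathbb{P}(\llbracket \Gamma \rrbracket) = Com(\mathcal{R}(\llbracket \Gamma \rrbracket))$, i.e.\ an upward-closed subset of $\mathcal{R}(\llbracket \Gamma \rrbracket)$, and that $Com(\mathcal{R}(-))$ is a genuine ILGL hyperdoctrine by Lemma \ref{lemma:comlexisilf}. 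For the base cases I simply unpack the complex hyperdoctrine: for an atomic formula we have $\llbracket Pt_1\ldots t_m \rrbracket = \mathbb{P}(\langle \llbracket t_1 \rrbracket, \ldots, \llbracket t_m \rrbracket \rangle)(\llbracket P \rrbracket) = \mathcal{R}(\langle \llbracket t_1 \rrbracket, \ldots, \llbracket t_m \rrbracket \rangle)^{-1}(\llbracket P \rrbracket)$, so $x \in \llbracket Pt_1\ldots t_m \rrbracket$ iff $\mathcal{R}(\langle \llbracket t_1 \rrbracket, \ldots, \llbracket t_m \rrbracket \rangle)(x) \in \llbracket P \rrbracket$, which is exactly the clause of Figure \ref{fig:ilfsat}; the equality case $t =_X t'$ is identical, using $=_X = Ran(\mathcal{R}(\Delta_{\llbracket X \rrbracket}))$.

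For the propositional connectives $\land, \lor, \rightarrow, \blacktriangleright, \gimp, \limp$, each $\llbracket \varphi \circ \psi \rrbracket$ is computed by applying the corresponding operation of the layered Heyting algebra $Com(\mathcal{R}(\llbracket \Gamma \rrbracket))$ to $\llbracket \varphi \rrbracket$ and $\llbracket \psi \rrbracket$, and by Lemma \ref{lem:comislha} these are precisely the complex operations $\cap, \cup, \Rightarrow, \blacktriangleright_R, \gimp_R, \limp_R$ of Definition \ref{def:layeredcomplexalgebra} at the frame $\mathcal{R}(\llbracket \Gamma \rrbracket)$. Since the clauses for these connectives in $\vDash^{\Gamma}$ coincide verbatim with the relational clauses for that frame, each inductive step reduces to the corresponding step in the proof of Lemma \ref{lem:algsatisfiability}, applying the induction hypothesis to $\varphi$ and $\psi$.

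The quantifier cases carry the new content. For $\forall v:X \varphi$ we have $\llbracket \forall v:X \varphi \rrbracket = \mathcal{R}(\pi_{\llbracket \Gamma \rrbracket, \llbracket X \rrbracket})_*(\llbracket \varphi \rrbracket)$, so $x \in \llbracket \forall v:X \varphi \rrbracket$ iff for every $x'$ with $x \preccurlyeq \mathcal{R}(\pi_{\llbracket \Gamma \rrbracket, \llbracket X \rrbracket})(x')$ we have $x' \in \llbracket \varphi \rrbracket$; applying the induction hypothesis to $\varphi$ at context $\Gamma \cup \{v:X\}$ yields the satisfaction clause directly. The existential is the one step requiring care, and I expect it to be the main obstacle: $\llbracket \exists v:X \varphi \rrbracket = \mathcal{R}(\pi_{\llbracket \Gamma \rrbracket, \llbracket X \rrbracket})^*(\llbracket \varphi \rrbracket)$ is defined with $\preccurlyeq$, so membership unfolds to the existence of some $y \in \llbracket \varphi \rrbracket$ with $\mathcal{R}(\pi_{\llbracket \Gamma \rrbracket, \llbracket X \rrbracket})(y) \preccurlyeq x$, whereas the satisfaction clause demands an exact preimage $x'$ with $\mathcal{R}(\pi_{\llbracket \Gamma \rrbracket, \llbracket X \rrbracket})(x') = x$. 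The $\Leftarrow$ direction is immediate (take $y = x'$).

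For the $\Rightarrow$ direction I would invoke that $\mathcal{R}(\pi_{\llbracket \Gamma \rrbracket, \llbracket X \rrbracket})$ is an intuitionistic layered p-morphism, being the image under the functor $\mathcal{R}$ of a morphism of $\mathrm{C}$: property (2) of Definition \ref{def:pmorphism}, applied to $\mathcal{R}(\pi_{\llbracket \Gamma \rrbracket, \llbracket X \rrbracket})(y) \preccurlyeq x$, produces an $x'$ with $y \preccurlyeq x'$ and $\mathcal{R}(\pi_{\llbracket \Gamma \rrbracket, \llbracket X \rrbracket})(x') = x$, and upward-closure of $\llbracket \varphi \rrbracket$ (which holds since $\llbracket \varphi \rrbracket \in Com(\mathcal{R}(\llbracket \Gamma \rrbracket \times \llbracket X \rrbracket))$) promotes $y \in \llbracket \varphi \rrbracket$ to $x' \in \llbracket \varphi \rrbracket$. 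This is exactly the identification of $\mathcal{R}(\pi_{\llbracket \Gamma \rrbracket, \llbracket X \rrbracket})^*$ with the direct image noted after the definition of the complex hyperdoctrine, and once it is in hand the induction hypothesis converts $x' \in \llbracket \varphi \rrbracket$ into $x', \llbracket - \rrbracket \vDash^{\Gamma \cup \{v:X\}} \varphi$, closing the case. Note that (Psuedo Epi) from Definition \ref{def:indlayeredframe} is \emph{not} needed here, having already been consumed in Lemma \ref{lemma:comlexisilf} to establish naturality; the entire argument thus rests on unpacking definitions together with this single appeal to the p-morphism conditions.
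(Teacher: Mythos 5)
Your proposal is correct and takes essentially the same route the paper intends: the paper omits the proof entirely (the lemma carries only a \qed, preceded by the remark that it follows ``by unpacking the definition of the semantic clauses and noting the way they coincide with the analogous structure of the complex hyperdoctrine''), and your structural induction, stated uniformly in the context, is exactly that unpacking. In particular, your resolution of the only non-trivial case --- the existential, where $\mathcal{R}(\pi_{\llbracket \Gamma \rrbracket, \llbracket X \rrbracket})^*$ demands a $\preccurlyeq$-bound rather than an exact preimage --- via property (2) of Definition \ref{def:pmorphism} and upward-closure of $\llbracket \varphi \rrbracket$ is precisely the observation the paper itself records in the remark following the definition of the complex hyperdoctrine, identifying $\mathcal{R}(\pi_{\Gamma, X})^*$ with the direct image.
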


%Indexed Prime Filter Frame
\begin{defi}[Indexed Prime Filter Frame] 
Given an ILGL hyperdoctrine $\mathbb{P}: \mathrm{C}^{op} \rightarrow \mathrm{Poset}$, the \emph{indexed prime filter frame} $Pr(\mathbb{P}(-)): \mathrm{C} \rightarrow \mathrm{IntResFr}$ is given by extending $Pr(\mathbb{P}(-))$ (Definition \ref{def:ilglprimefilter}) to morphisms with $Pr(\mathbb{P}(f)) = (\mathbb{P}(f))^{-1}$. \qed
\end{defi}

\begin{lem} \label{lem:indexedprimefilter}
Given an ILGL hyperdoctrine $\mathbb{P}: \mathrm{C}^{op} \rightarrow \mathrm{Poset}$, the 
indexed prime filter frame $Pr(\mathbb{P}(-))$ is an indexed layered frame.
\end{lem}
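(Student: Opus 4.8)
The plan is to verify the two clauses of Definition~\ref{def:indlayeredframe}, having first confirmed that $Pr(\mathbb{P}(-))$ is indeed a functor into $\mathrm{IntLayFr}$. Clause~(1), that $\mathrm{C}$ has finite products, is immediate since $\mathbb{P}$ is an ILGL hyperdoctrine. For functoriality: on objects, each $\mathbb{P}(X)$ is a layered Heyting algebra, so by Definition~\ref{def:ilglprimefilter} the prime filter frame $Pr(\mathbb{P}(X))$ is an intuitionistic layered frame. On morphisms, for $f \colon X \to Y$ in $\mathrm{C}$ the map $\mathbb{P}(f) \colon \mathbb{P}(Y) \to \mathbb{P}(X)$ is a layered Heyting algebra homomorphism, so its preimage $Pr(\mathbb{P}(f)) = \mathbb{P}(f)^{-1} \colon Pr(\mathbb{P}(X)) \to Pr(\mathbb{P}(Y))$ sends prime filters to prime filters and is an intuitionistic layered p-morphism. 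Here conditions (1)--(2) of Definition~\ref{def:pmorphism} are Esakia duality, and conditions (3)--(6) are exactly the morphism verification already carried out in the proof of Lemma~\ref{lem:fwelldefined} (the continuity content there is not needed, as we work with bare frames). That $Pr(\mathbb{P}(-))$ preserves identities and composition is routine, since $\mathbb{P}$ is contravariant and preimage reverses composition, so the composite is covariant.

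The substance of the lemma is clause~(2), (Pseudo Epi). Fixing $s \colon \Gamma \to \Gamma'$ and the projections $\pi_{\Gamma,X}, \pi_{\Gamma',X}$, I would abbreviate the induced homomorphisms $p = \mathbb{P}(\pi_{\Gamma,X})$, $p' = \mathbb{P}(\pi_{\Gamma',X})$, $\sigma = \mathbb{P}(s)$ and $\sigma_X = \mathbb{P}(s \times \mathrm{id}_X)$. Unwinding the definitions (recall $\preccurlyeq$ is $\subseteq$ on each prime filter frame), the hypothesis $\mathcal{R}(\pi_{\Gamma',X})(y) \preccurlyeq \mathcal{R}(s)(x)$ becomes the prime filter inclusion $p'^{-1}(y) \subseteq \sigma^{-1}(x)$, that is, $p'(e) \in y$ implies $\sigma(e) \in x$ for all $e$; and I must produce a prime filter $z$ of $\mathbb{P}(\Gamma \times X)$ with $\sigma_X[y] \subseteq z$ (equivalent to $y \subseteq \sigma_X^{-1}(z)$) and $z \cap p[\overline{x}] = \emptyset$ (equivalent to $p^{-1}(z) \subseteq x$). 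The natural strategy is a prime filter separation: set $F = [\sigma_X[y])$, the filter generated by the image of $y$, and $I = (p[\overline{x}]]$, the ideal generated by the image of the prime ideal $\overline{x}$, and show $F \cap I = \emptyset$; the prime filter theorem for distributive lattices (as exploited via Zorn's lemma in Theorem~\ref{thm:representation}) then delivers a prime filter $z \supseteq F$ disjoint from $I$, which is exactly the required $z$.

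The heart of the argument, and the step I expect to be the main obstacle, is the disjointness $F \cap I = \emptyset$. Because $y$ is a filter and $\sigma_X$ a homomorphism, membership of $a$ in $F$ reduces to the existence of $b \in y$ with $\sigma_X(b) \leq a$, and dually membership in $I$ to the existence of $d \notin x$ with $a \leq p(d)$; a witness to $F \cap I \neq \emptyset$ would therefore supply $b \in y$ and $d \notin x$ with $\sigma_X(b) \leq p(d)$. To reach a contradiction I would pass to the left adjoint: functoriality of $\mathbb{P}$ applied to $\pi_{\Gamma',X} \circ (s \times \mathrm{id}_X) = s \circ \pi_{\Gamma,X}$ yields the commutation $\sigma_X \circ p' = p \circ \sigma$, and the adjunction $\exists X_\Gamma \dashv p$ together with its naturality square converts $\sigma_X(b) \leq p(d)$ into $\sigma(\exists X_{\Gamma'}(b)) = \exists X_\Gamma(\sigma_X(b)) \leq d$. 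Writing $e = \exists X_{\Gamma'}(b)$, the unit $b \leq p'(\exists X_{\Gamma'}(b)) = p'(e)$ and upward-closure of $y$ give $p'(e) \in y$, so the hypothesis forces $\sigma(e) \in x$; but $\sigma(e) \leq d$ with $x$ upward-closed then gives $d \in x$, contradicting $d \notin x$. This secures $F \cap I = \emptyset$, the separating prime filter satisfies both required inclusions by the reductions above, and the lemma follows.
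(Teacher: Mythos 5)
Your proof is correct and takes essentially the same route as the paper's: both form the filter generated by $\mathbb{P}(s \times id_X)[y]$, establish the crucial separation from $\mathbb{P}(\pi_{\Gamma,X})[\overline{x}]$ via the adjunction $\exists X_{\Gamma} \dashv \mathbb{P}(\pi_{\Gamma,X})$, its unit, and the naturality square, and then extend to a prime filter by the Zorn-style argument of Theorem~\ref{thm:representation}. The differences are only cosmetic: you package the paper's two claims (properness of the generated filter and the inclusion $\mathbb{P}(\pi_{\Gamma,X})^{-1}(\alpha) \subseteq x$) as a single filter--ideal disjointness, and you additionally spell out the routine functoriality and p-morphism checks that the paper leaves implicit.
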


\begin{proof}
We show (Psuedo Epi) holds. Assume we have objects $\Gamma, \Gamma'$ and $X$ in $\mathrm{C}$ and a morphism
$s: \Gamma \rightarrow \Gamma'$. Let prime filters $F_0$ and $F_1$ be such that
$\mathbb{P}(\pi_{\Gamma', X})^{-1}(F_1) \subseteq \mathbb{P}(s)^{-1}(F_0)$. As in Theorem \ref{thm:representation} we can prove the existence of the required prime filter by demonstrating the partial order of proper filters satisfying the property is non-empty and invoking Zorn's Lemma to yield a maximum that can be shown to be prime. 

Consider the filter $\alpha = [ \mathbb{P}(s \times id_X)(F_1) )$. Suppose for contradiction that $\alpha$ is not proper. Then there exists $a \in F_1$ such that $\mathbb{P}(s \times id_X)(a) = \bot$.  By adjointness $\exists X_{\Gamma}(\bot) = \bot$ so 
$\mathbb{P}(s)(\exists X_{\Gamma'}(a)) = \exists X_{\Gamma}(\mathbb{P}(s \times id_X)(a)) = \bot$.
This entails $\exists X_{\Gamma'}(a) \not\in \mathbb{P}(s)^{-1}(F_0)$ so 
$\exists X_{\Gamma'}(a) \not\in \mathbb{P}(\pi_{\Gamma', X})^{-1}(F_1)$ by assumption. However by adjointness and filterhood, $\mathbb{P}(\pi_{\Gamma', X})(\exists X_{\Gamma'}(a)) \in F_1$, a contradiction.

%!!!!!!!!!!!!MISTAKE HERE MISTAKE HERE MISTAKE HERE MISTAKE HERE: FIRST INCLUSION SHOULD BE OTHER WAY AROUND - THIS WHOLE SECTION SHOULD GO FROM THE NEXT VERSION BUT REMEMBER THIS !!!!!!!!!! 

Clearly $\mathbb{P}(s \times id_X)^{-1}(\alpha) \subseteq F_1$. To see, the other inclusion holds, let
$a \in \mathbb{P}(\pi_{\Gamma, X})^{-1}(\alpha)$. Then there exists $b \in F_1$ such that
$\mathbb{P}(s \times id_X)(b) \leq \mathbb{P}(\pi_{\Gamma, X})(a)$. By adjointness
$\exists X_{\Gamma}(\mathbb{P}(s \times id_X)(b)) \leq a$ and so by naturality
$\mathbb{P}(s)(\exists X_{\Gamma'}(b)) \leq a$. Since $\mathbb{P}(\pi_{\Gamma', X})(\exists X_{\Gamma'}(b)) \in F_1$, we have $\exists X_{\Gamma'}(b) \in \mathbb{P}(\pi_{\Gamma', X})^{-1}(F_1) \subseteq \mathbb{P}(s)^{-1}(F_0)$. Thus by filterhood, $a \in F_0$.
%$\alpha = [(\exists X_{\Gamma})^{-1}(F_0))$. This is proper because otherwise $\exists X_{\Gamma}(\bot) = \bot \in F_0$, a contradiction. We show this satisfies the required inclusions.
%Let $a \in \mathbb{P}(\pi_{\Gamma, X})^{-1}(\alpha)$. Then there exists $b$ with $\exists X_{\Gamma}(b) \in F_0$ such that $b \leq \mathbb{P}(\pi_{\Gamma, X})(a)$. By adjointness, 
%$\exists X_{\Gamma}(b) \leq a$ so by upwards-closure $a \in F_0$, as required. Likewise, if $a \in F_1$ we have $\mathbb{P}(\pi_{\Gamma', X})(\exists X_{\Gamma'}(a)) \in F_1$ by adjointness and upwards-closure. 
%Thus $\exists X_{\Gamma'}(a) \in \mathbb{P}(\pi_{\Gamma', X})^{-1}(F_1) \subseteq \mathbb{P}(s)^{-1}(F_0)$.
%By naturality $\mathbb{P}(s)(\exists X_{\Gamma'}(a)) = \exists X_{\Gamma}(\mathbb{P}(s \times id_X)(a)) \in F_0$. Hence $\mathbb{P}(s \times id_X)(a) \in (\exists X_{\Gamma})^{-1}(F_0)$ so 
%$a \in \mathbb{P}(s \times id_X)^{-1}(\alpha)$.
\end{proof}

%Prime Filter Frame satisfiability
Given an interpretation $\llbracket - \rrbracket$ on an ILGL hyperdoctrine $\mathbb{P}$ we can obtain 
an interpretation $\widetilde{\llbracket - \rrbracket}$ on its indexed prime filter frame $Pr(\mathbb{P}(-))$: $\widetilde{\llbracket - \rrbracket}$ is defined identically to $\llbracket - \rrbracket$ everywhere except on predicate symbols, where for an m-ary predicate symbol $P$ of type $X_1, \ldots, X_m$, 
$\widetilde{\llbracket P \rrbracket} =
h_{\mathbb{P}(\llbracket X_1 \rrbracket \times \cdots \times \llbracket X_m \rrbracket)}(\llbracket P \rrbracket)$.

\begin{lem} \label{lem:ipfsatisfiability}
Given an ILGL hyperdoctrine $\mathbb{P}: \mathrm{C}^{op} \rightarrow \mathrm{Poset}$ and interpretation $\llbracket - \rrbracket$, for all formulae $\varphi$ of predicate ILGL in context $\Gamma$ and $F \in Pr(\mathbb{P}(\llbracket \Gamma \rrbracket))$, 
$\llbracket \varphi \rrbracket \in F \text{ iff }
F, \widetilde{\llbracket - \rrbracket} \vDash^{\Gamma} \varphi.$ 
\end{lem}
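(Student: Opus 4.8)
The plan is to prove both directions simultaneously by structural induction on $\varphi$, carried out uniformly over all contexts $\Gamma$ and all prime filters $F \in Pr(\mathbb{P}(\llbracket \Gamma \rrbracket))$. The crucial observation is that, at a fixed context $\Gamma$, the indexed prime filter frame restricts to exactly the prime filter frame $Pr(\mathbb{P}(\llbracket \Gamma \rrbracket))$ of the layered Heyting algebra $\mathbb{P}(\llbracket \Gamma \rrbracket)$, and the relation used in the multiplicative clauses of Figure~\ref{fig:ilfsat} is precisely $R_{\mathcal{R}(\llbracket\Gamma\rrbracket)} = R_{Pr(\mathbb{P}(\llbracket\Gamma\rrbracket))}$. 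Consequently, the cases for $\top$, $\bot$, and the propositional connectives $\wedge$, $\vee$, $\rightarrow$, $\blacktriangleright$, $\gimp$, $\limp$ go through verbatim as in the proof of Lemma~\ref{lem:framesatisfiability}, since there both the semantic clauses and the complex-algebra operations that $\widetilde{\llbracket - \rrbracket}$ inherits agree fibrewise. Only the two atomic cases and the two quantifier cases require genuinely new argument.

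The atomic predicate case is a direct unfolding. Since $\widetilde{\llbracket P \rrbracket} = h_{\mathbb{P}(\cdots)}(\llbracket P \rrbracket) = \{ G \mid \llbracket P \rrbracket \in G\}$ and $\mathcal{R}(g)(F) = \mathbb{P}(g)^{-1}(F)$ for any morphism $g$, satisfaction of $Pt_1 \ldots t_m$ reduces to $\llbracket P \rrbracket \in \mathbb{P}(\langle \llbracket t_1 \rrbracket, \ldots, \llbracket t_m \rrbracket\rangle)^{-1}(F)$, i.e.\ to $\mathbb{P}(\langle \ldots \rangle)(\llbracket P \rrbracket) = \llbracket Pt_1\ldots t_m \rrbracket \in F$, as required. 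For equality I reduce, in the same way, to the claim that a prime filter $G$ of $\mathbb{P}(\llbracket X\rrbracket \times \llbracket X\rrbracket)$ lies in $Ran(\mathcal{R}(\Delta_{\llbracket X\rrbracket})) = Ran(\mathbb{P}(\Delta_{\llbracket X\rrbracket})^{-1})$ iff $=_{\llbracket X\rrbracket} \in G$. The forward implication is immediate from the hyperdoctrine equality axiom, which forces $\mathbb{P}(\Delta_{\llbracket X\rrbracket})(=_{\llbracket X\rrbracket}) = \top$, so that any $H$ with $\mathbb{P}(\Delta_{\llbracket X\rrbracket})^{-1}(H) = G$ automatically satisfies $=_{\llbracket X\rrbracket}\in G$; the converse requires constructing such a prime filter $H$ on $\mathbb{P}(\llbracket X\rrbracket)$ from $G$.

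For the quantifier cases I exploit the adjunctions $\exists \llbracket X\rrbracket_{\llbracket\Gamma\rrbracket} \dashv \mathbb{P}(\pi) \dashv \forall \llbracket X\rrbracket_{\llbracket\Gamma\rrbracket}$. One direction is cheap in each case: for $\exists$, the unit $\llbracket\varphi\rrbracket \leq \mathbb{P}(\pi)(\exists\llbracket X\rrbracket_{\llbracket\Gamma\rrbracket}(\llbracket\varphi\rrbracket))$ together with filterhood shows that any witnessing $F'$ with $\llbracket\varphi\rrbracket \in F'$ and $\mathbb{P}(\pi)^{-1}(F') = F$ forces $\exists\llbracket X\rrbracket_{\llbracket\Gamma\rrbracket}(\llbracket\varphi\rrbracket) \in F$; for $\forall$, the counit $\mathbb{P}(\pi)(\forall\llbracket X\rrbracket_{\llbracket\Gamma\rrbracket}(\llbracket\varphi\rrbracket)) \leq \llbracket\varphi\rrbracket$ gives the analogous easy direction. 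The remaining directions --- producing, from $\exists\llbracket X\rrbracket_{\llbracket\Gamma\rrbracket}(\llbracket\varphi\rrbracket) \in F$, a prime filter $F'$ over $\mathbb{P}(\llbracket\Gamma\rrbracket \times \llbracket X\rrbracket)$ with $\llbracket\varphi\rrbracket \in F'$ and $\mathbb{P}(\pi)^{-1}(F') = F$, and dually, from $\forall\llbracket X\rrbracket_{\llbracket\Gamma\rrbracket}(\llbracket\varphi\rrbracket) \notin F$, a prime filter $F'$ with $F \subseteq \mathbb{P}(\pi)^{-1}(F')$ but $\llbracket\varphi\rrbracket \notin F'$ --- are the heart of the argument.

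The main obstacle is precisely these existence statements (for equality and for the two quantifiers), each of which is a prime filter separation/extension problem. I would handle them by the same Zorn's lemma technique used in Theorem~\ref{thm:representation} and Lemma~\ref{lem:indexedprimefilter}: isolate an appropriate proper filter and a proper ideal that must be kept disjoint --- generated, respectively, by $\mathbb{P}(\pi)(F) \cup \{\llbracket\varphi\rrbracket\}$ and by the elements whose $\mathbb{P}(\pi)$-images lie outside $F$, using the relevant adjunction and naturality exactly as in the (Pseudo Epi) verification of Lemma~\ref{lem:indexedprimefilter} --- extend the pair to a maximal element, and show maximality forces the filter component to be prime and its $\mathbb{P}(\pi)$-preimage to be exactly $F$. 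The delicate point throughout is verifying that the preimage equals $F$ on the nose rather than merely containing or being contained in it, which is where the counit/unit inequalities and the Beck--Chevalley-style naturality of the quantifiers are used. Since predicate ILGL over a fixed context behaves exactly as BI does over that fibre, this entire development parallels the hyperdoctrine completeness argument of Biering et al.~\cite{Biering2005}.
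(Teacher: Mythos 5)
Your proposal follows essentially the same route as the paper's proof: structural induction in which the propositional cases are inherited fibrewise from the representation theorem (Lemma \ref{lem:framesatisfiability}), the atomic predicate case is a direct unfolding, the easy quantifier directions come from the unit/counit of the adjunctions, and the remaining equality and quantifier directions are prime-filter existence arguments via Zorn's lemma exactly as in Theorem \ref{thm:representation} (the paper details only the $\forall$ case and defers the rest in the same way you do). One caveat: your closing ``uniform'' recipe --- filter generated by $\mathbb{P}(\pi)(F)\cup\{\llbracket\varphi\rrbracket\}$, ideal generated by the $\mathbb{P}(\pi)$-images of elements outside $F$, with the preimage forced to equal $F$ on the nose --- is the right construction for the $\exists$ case only; for $\forall$, as you correctly state earlier and as the paper does, one instead separates the filter $[\mathbb{P}(\pi)(F))$ from the ideal $(\llbracket\varphi\rrbracket]$, and only the containment $F \subseteq \mathbb{P}(\pi)^{-1}(F')$ (not equality) is needed or, in general, obtainable.
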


\begin{proof}
Most of the cases follow immediately from the representation theorem for ILGL. Of the new cases, we attend to universal quantification. The others are shown by proving the existence of prime filter witnesses in much the same way as Theorem \ref{thm:representation}. 

For the case we consider, it suffices to prove $\forall \llbracket X \rrbracket_{\llbracket \Gamma \rrbracket}(\llbracket \varphi \rrbracket) \in F$ iff for all prime filters $G$, if $F \subseteq \mathbb{P}(\pi_{\llbracket \Gamma \rrbracket, \llbracket X \rrbracket})^{-1}(G)$ then $\llbracket \varphi \rrbracket \in G$. Assume $\forall \llbracket X \rrbracket_{\llbracket \Gamma \rrbracket}(\llbracket \varphi \rrbracket) \in F$ and suppose $F \subseteq \mathbb{P}(\pi_{\llbracket \Gamma \rrbracket, \llbracket X \rrbracket})^{-1}(G)$. 
Then $\mathbb{P}(\pi_{\llbracket \Gamma \rrbracket, \llbracket X \rrbracket})(\forall \llbracket X \rrbracket_{\llbracket \Gamma \rrbracket}(\llbracket \varphi \rrbracket)) \in G$. By adjointness and upwards-closure, $\llbracket \varphi \rrbracket \in G$. In the other direction, suppose 
$\forall \llbracket X \rrbracket_{\llbracket \Gamma \rrbracket}(\llbracket \varphi \rrbracket) \not\in F$.
Then we can prove the existence of a prime filter $G$ satisfying $F \subseteq \mathbb{P}(\pi_{\llbracket \Gamma \rrbracket, \llbracket X \rrbracket})^{-1}(G)$ and $\llbracket \varphi \rrbracket \not\in G$. 
Consider $\alpha = ( \llbracket \varphi \rrbracket ]$. We can assume this is proper because otherwise
$\llbracket \varphi \rrbracket = \top$ and $\forall \llbracket X \rrbracket_{\llbracket \Gamma \rrbracket}(\llbracket \varphi \rrbracket) = \top \not\in F$, a contradiction. We have that for any $a \in F$, 
by adjointness $\mathbb{P}(\pi_{\Gamma, X})(b) \not\in \alpha$, since otherwise
$b \leq \forall \llbracket X \rrbracket_{\llbracket \Gamma \rrbracket}(\llbracket \varphi \rrbracket) \in F$.
By an argument similar to that of Theorem \ref{thm:representation}, there therefore exists a prime ideal $P$ such that $F \subseteq \mathbb{P}(\pi_{\Gamma, X})^{-1}(\overline{P})$ and 
$\llbracket \varphi \rrbracket \not\in \overline{P}$.
\end{proof}
We thus obtain the equivalence of the hyperdoctrinal and truth-functional semantics. This additionally yields completeness of the indexed resource frame semantics with respect to the predicate ILGL Hilbert system.

\begin{thm}
For all formulae $\varphi$ of predicate ILGL: $\varphi$ is satisfiable (valid) on ILGL hyperdoctrines iff $\varphi$ is satisfiable (valid) on indexed layered frames. \qed
\end{thm}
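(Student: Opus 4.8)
The plan is to mirror the proof of the propositional equivalence of the algebraic and relational semantics, combining the two satisfaction-coincidence lemmas with the back-and-forth constructions $Com(\cdot)$ and $Pr(\cdot)$. The essential observation is that, on either side, the distinguished truth value is the top element: a formula is satisfied by an interpretation on an ILGL hyperdoctrine exactly when its denotation equals $\top$ in the appropriate $\mathbb{P}(\llbracket \Gamma \rrbracket)$, and this corresponds on the frame side to $\varphi$ holding at every point of $\mathcal{R}(\llbracket \Gamma \rrbracket)$. Once this correspondence is pinned down, both the satisfiability and the validity statements follow uniformly.

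First I would treat the direction from frames to hyperdoctrines via $Com$. Given any indexed layered frame $\mathcal{R} \colon \mathrm{C} \rightarrow \mathrm{IntLayFr}$ and interpretation $\llbracket - \rrbracket$, Lemma \ref{lemma:comlexisilf} furnishes an ILGL hyperdoctrine $Com(\mathcal{R}(-))$, and the interpretation lifts to it by assigning each predicate symbol its (upward-closed) denotation. The satisfaction-coincidence lemma relating a frame to its complex hyperdoctrine gives, for each context $\Gamma$ and each $x \in \mathcal{R}(\llbracket \Gamma \rrbracket)$, that $x \in \llbracket \varphi \rrbracket$ iff $x, \llbracket - \rrbracket \vDash^{\Gamma} \varphi$. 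Since the top element of $Com(\mathcal{R}(\llbracket \Gamma \rrbracket))$ is the whole carrier $\mathcal{R}(\llbracket \Gamma \rrbracket)$, the denotation $\llbracket \varphi \rrbracket$ equals $\top$ precisely when $\varphi$ holds at every point of the frame. Hence validity (respectively satisfiability) transfers between every frame and its generating complex hyperdoctrine.

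Next I would treat the converse direction, from hyperdoctrines to their prime filter frames. Given an ILGL hyperdoctrine $\mathbb{P}$ and interpretation $\llbracket - \rrbracket$, Lemma \ref{lem:indexedprimefilter} gives an indexed layered frame $Pr(\mathbb{P}(-))$, and the interpretation lifts to $\widetilde{\llbracket - \rrbracket}$ via the embeddings $h_{\mathbb{P}(\llbracket \Gamma \rrbracket)}$ of the representation theorem. Lemma \ref{lem:ipfsatisfiability} then yields, for each prime filter $F \in Pr(\mathbb{P}(\llbracket \Gamma \rrbracket))$, that $\llbracket \varphi \rrbracket \in F$ iff $F, \widetilde{\llbracket - \rrbracket} \vDash^{\Gamma} \varphi$. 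To close the loop I would invoke Theorem \ref{thm:representation}: since $h_{\mathbb{P}(\llbracket \Gamma \rrbracket)}$ is injective and $h_{\mathbb{P}(\llbracket \Gamma \rrbracket)}(\top)$ is the set of all prime filters, $\llbracket \varphi \rrbracket \in F$ for every prime filter $F$ holds iff $\llbracket \varphi \rrbracket = \top$. Thus $\varphi$ holds at every point of $Pr(\mathbb{P}(-))$ iff $\llbracket \varphi \rrbracket = \top$ in $\mathbb{P}$, so validity and satisfiability transfer between the two semantics, which establishes the theorem.

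The argument is a routine assembly of results already in hand, so I do not expect a genuine obstacle; the one point demanding care is the bookkeeping identifying the two notions of \emph{satisfied by an interpretation} on the two sides. Concretely, I must verify that the top-valued notion used for hyperdoctrines matches \emph{holds at all points} on frames, and deploy the injectivity of $h_{\mathbb{P}(\llbracket \Gamma \rrbracket)}$ to convert membership in every prime filter into equality with $\top$. Everything else is inherited from Lemmas \ref{lemma:comlexisilf}, \ref{lem:indexedprimefilter} and \ref{lem:ipfsatisfiability} together with Theorem \ref{thm:representation}, exactly paralleling the propositional equivalence of the algebraic and relational semantics proved earlier.
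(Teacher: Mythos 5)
Your proposal is correct and matches the paper's (implicit) argument exactly: the paper treats this theorem as an immediate consequence of the two satisfaction-coincidence lemmas for $Com(\mathcal{R}(-))$ and $Pr(\mathbb{P}(-))$, precisely mirroring the propositional equivalence of the algebraic and relational semantics. Your additional care in using injectivity of $h_{\mathbb{P}(\llbracket \Gamma \rrbracket)}$ to translate membership in every prime filter into equality with $\top$ is exactly the bookkeeping the paper leaves to the reader.
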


\begin{cor}
For all formuale $\varphi$ of predicate ILGL, $\varphi$ is provable iff $\varphi$ is valid on indexed layered frames. \qed
\end{cor}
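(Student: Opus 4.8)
The plan is to obtain the corollary by composing two biconditionals that are already in hand, so that no fresh argument about the Hilbert system or the frame semantics is needed. First, the soundness and completeness theorem for the hyperdoctrinal semantics (the earlier Theorem, cf.\ \cite{Biering2005}) gives that $\varphi$ is provable in the predicate extension of $\mathrm{ILGL}_\mathrm{H}$ if and only if $\varphi$ is valid on ILGL hyperdoctrines. Second, the equivalence of the hyperdoctrinal and truth-functional semantics (the immediately preceding Theorem) gives that $\varphi$ is valid on ILGL hyperdoctrines if and only if $\varphi$ is valid on indexed layered frames. Chaining these two equivalences yields that $\varphi$ is provable if and only if $\varphi$ is valid on indexed layered frames, which is exactly the statement of the corollary.

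The real content therefore sits in the equivalence-of-semantics theorem that this corollary invokes, and I would ensure that is discharged first. That theorem is itself established by the two semantic translations developed in this subsection. In one direction, given an indexed layered frame $\mathcal{R}$ with an interpretation witnessing a failure of validity, the complex hyperdoctrine $Com(\mathcal{R}(-))$ is an ILGL hyperdoctrine (Lemma \ref{lemma:comlexisilf}) on which satisfaction coincides with $\vDash^{\Gamma}$ (the satisfiability lemma for complex hyperdoctrines), so the same interpretation refutes $\varphi$ there. In the other direction, given an ILGL hyperdoctrine refuting $\varphi$, the indexed prime filter frame $Pr(\mathbb{P}(-))$ is an indexed layered frame (Lemma \ref{lem:indexedprimefilter}) and satisfaction again coincides (Lemma \ref{lem:ipfsatisfiability}), producing a refuting frame model. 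Together these supply both directions of the validity equivalence, and hence, after the first chaining step, the corollary.

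The corollary itself is thus immediate and purely formal; the genuine obstacle lies upstream, in the satisfiability-coincidence lemma for the indexed prime filter frame (Lemma \ref{lem:ipfsatisfiability}). Its quantifier cases cannot be read off directly from the representation theorem for the propositional fragment: one must manufacture prime filter witnesses for existential statements, and dually prime ideal witnesses for universal statements, by the Zorn's lemma construction patterned on Theorem \ref{thm:representation}, while simultaneously exploiting the adjointness of $\exists X_\Gamma$ and $\forall X_\Gamma$ to $\mathbb{P}(\pi_{\Gamma, X})$ together with the relevant naturality square. The (Psuedo Epi) condition on indexed layered frames is precisely what makes the dual construction (Lemma \ref{lem:indexedprimefilter}) go through. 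Once those lemmas are secured, the remaining steps — and in particular this corollary — follow by the trivial composition described above.
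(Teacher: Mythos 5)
Your proposal is correct and matches the paper's approach exactly: the corollary is obtained by chaining the hyperdoctrinal soundness and completeness theorem (cf.\ Biering et al.) with the immediately preceding equivalence theorem between ILGL hyperdoctrines and indexed layered frames, which is why the paper states it with no further proof. Your account of where the real work lies (the satisfaction-coincidence lemmas for complex hyperdoctrines and indexed prime filter frames, with the Zorn's lemma prime filter/ideal constructions and the (Psuedo Epi) condition) also agrees with the paper's development.
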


\subsection{Duality} \label{subsec:duality}
%topological dual
To finish we extend this correspondence to a dual equivalence of categories, in the same manner as the propositional case. We denote by $U: \mathrm{IntLaySp} \rightarrow \mathrm{IntLayFr}$ the forgetful functor that drops topology.

\begin{defi}[Indexed Layered Space]
An \emph{indexed layered space} is a functor $\mathcal{R}: \mathrm{C} \rightarrow \mathrm{IntLaySp}$ such that the following properties hold:
	\begin{enumerate}
	\item The composition with the forgetful functor, $U \circ \mathcal{R}: \mathrm{C} \rightarrow 				\mathrm{IntLayFr}$, is an indexed layered frame;
	\item For each object $X$ in $\mathrm{C}$, $Ran(\mathcal{R}(\Delta_X))$ is clopen;
	\item For each pair of objects $\Gamma$ and $X$ in $\mathrm{C}$, 
		$\mathcal{R}(\pi_{\Gamma, X})^*$ and 
		$\mathcal{R}(\pi_{\Gamma, X})_*$ map upward-closed clopen sets to upward-closed 				clopen sets.
	\end{enumerate}
Given an  indexed layered space $\mathcal{R}: \mathrm{C} \rightarrow \mathrm{IntResFr}$
and an object $X$ in $\mathrm{C}$ we denote the intuitionistic layered space at $X$ by 
$\mathcal{R}(X) = (\mathcal{R}(X), \mathcal{O}_{\mathcal{R}(X)}, \preccurlyeq_{\mathcal{R}(X)}, R_{\mathcal{R}(X)})$. \qed
\end{defi}

%Morphisms
We now define morphisms to obtain categories of ILGL hyperdoctrines and indexed layered spaces. Our definition of hyperdoctrine morphism adapts that given for \emph{coherent} hyperdoctrines in \cite{COUMANS20121940}.

\begin{defi}[ILGL Hyperdoctrine Morphism]
Given ILGL hyperdoctrines $\mathbb{P}: \mathrm{C}^{op} \rightarrow \mathrm{Poset}$ and 
$\mathbb{P}': \mathrm{D}^{op} \rightarrow \mathrm{Poset}$, an \emph{ILGL hyperdoctrine morphism} 
$(\mathrm{K}, \tau): \mathbb{P} \rightarrow \mathbb{P}'$ is a pair satisfying the following properties:
	\begin{enumerate}
	\item $\mathrm{K}: \mathrm{C} \rightarrow \mathrm{D}$ is a finite product preserving functor;
	\item $\tau: \mathbb{P} \rightarrow \mathbb{P}' \circ K$ is a natural transformation;
	\item For all objects $X$ in $\mathrm{C}$: $\tau_{X \times X}(=_X) ={\ } ='_{K(X)}$;
	\item For all objects $\Gamma$ and $X$ in $\mathrm{C}$, the following squares commute:

		{\small \begin{tikzcd}	
		\mathbb{P}(\Gamma \times X) \arrow{r}{\tau_{\Gamma \times X}}
		\arrow{d}[swap]{\exists X_{\Gamma}} & \mathbb{P'}(K(\Gamma) \times K(X))
		\arrow{d}{\exists' K(X)_{K(\Gamma)}} \\
		\mathbb{P}(\Gamma) \arrow{r}{\tau_{\Gamma}} & \mathbb{P'}(K(\Gamma))
		\end{tikzcd} 
		\begin{tikzcd}
		\mathbb{P}(\Gamma \times X) \arrow{r}{\tau_{\Gamma \times X}}
		\arrow{d}[swap]{\forall X_{\Gamma}} & \mathbb{P'}(K(\Gamma) \times K(X))
		\arrow{d}{\forall' K(X)_{K(\Gamma)}} \\
		\mathbb{P}(\Gamma) \arrow{r}{\tau_{\Gamma}} & \mathbb{P'}(K(\Gamma))
		\end{tikzcd}}
	\end{enumerate}
The composition of ILGL hyperdoctrine morphisms $(K, \tau): \mathbb{P} \rightarrow \mathbb{P}'$ and $(K', \tau'): \mathbb{P}' \rightarrow \mathbb{P}''$ is given by $(K' \circ K, \tau'_{K(-)} \circ \tau)$. This yields a category $\mathrm{ILGLHyp}$. \qed
\end{defi}

\begin{defi}[Indexed Layered Space Morphism]
Given indexed layered spaces $\mathcal{R}: \mathrm{C} \rightarrow \mathrm{IntLaySp}$ and 
$\mathcal{R}': \mathrm{D} \rightarrow \mathrm{IntLaySp}$, an \emph{indexed layered space morphism}
$(L, \lambda): \mathcal{R} \rightarrow \mathcal{R}'$ is a pair $(L, \lambda)$ such that
	\begin{enumerate}
	\item $L: D \rightarrow C$ is a finite product preserving functor;
	\item $\lambda: \mathcal{R} \circ L \rightarrow \mathcal{R}'$ is a natural transformation;
	\item (Lift Property) If there exists $x$ and $y$ 
		such that $\mathcal{R}'(\Delta_X)(y) \preccurlyeq \lambda_{X \times X}(x)$ then there exists
		$y'$ such that $\mathcal{R}(\Delta_{L(X)})(y') 
		\preccurlyeq x$;
	\item (Morphism Pseudo Epi)  If there exists $x$ and 
		$y$ with 
		$\mathcal{R}'(\pi_{\Gamma, X})(x) \preccurlyeq \lambda_{\Gamma}(y) $ then there exists
		$z$ such that
		$x \preccurlyeq\lambda_{\Gamma \times X}(z)$ and
		$\mathcal{R}(\pi_{L(\Gamma), L(X)})(z) \preccurlyeq y$.
	\end{enumerate}
The composition of indexed layered space morphisms $(L', \lambda'): \mathcal{R}' \rightarrow \mathcal{R}''$ and $(L, \lambda): \mathcal{R} \rightarrow \mathcal{R}'$ is given by $(L \circ L', \lambda' \circ \lambda_{L'(-)})$. This yields a category $\mathrm{IndLaySp}$. \qed
\end{defi}

We now define functors $\mathfrak{F}: \mathrm{ILGLHyp} \rightarrow \mathrm{IndLaySp}$ and 
$\mathfrak{G}: \mathrm{IndLaySp} \rightarrow \mathrm{ILGLHyp}$. The functors 
$\mathrm{F}: \mathrm{LayHeyAlg} \rightarrow \mathrm{IntLaySp}$ and $\mathrm{G}: \mathrm{IntLaySp} \rightarrow \mathrm{LayHeyAlg}$ are as in ILGL duality in Section \ref{sec:duality} and $X$ and $\Gamma$ range over objects of $\mathrm{C}$.
%\begin{multicols}{2} \raggedright
%{\small $\begin{cases}  \mathfrak{F}(\mathbb{P}) = \mathrm{F} \circ \mathbb{P} \\
%\mathfrak{F}f: \mathfrak{F}(\mathbb{P}') \rightarrow \mathfrak{F}(\mathbb{P}), (K, \tau) \mapsto (K, \tau^{-1})
%\end{cases}$}
%
%\columnbreak
%
%{\small $\begin{cases}\mathfrak{G}(\mathcal{R}) = (\mathrm{G} \circ \mathcal{R}, 
%		Ran(\mathcal{R}(\Delta_X)), 
%		\mathcal{R}(\pi_{\Gamma, X})^*, \mathcal{R}(\pi_{\Gamma, X})_*) \\
%\mathfrak{G}g: \mathfrak{G}(\mathcal{R}') \rightarrow \mathrm{G}(\mathcal{R}), (L, \lambda) \mapsto (L, \lambda^{-1})
%\end{cases}$}
%\end{multicols} 

{\small \[ \begin{cases}  \mathfrak{F}(\mathbb{P}) = \mathrm{F} \circ \mathbb{P} \\
\mathfrak{F}f: \mathfrak{F}(\mathbb{P}') \rightarrow \mathfrak{F}(\mathbb{P}), (K, \tau) \mapsto (K, \tau^{-1})
\end{cases}
 \begin{cases}\mathfrak{G}(\mathcal{R}) = (\mathrm{G} \circ \mathcal{R}, 
		Ran(\mathcal{R}(\Delta_X)), 
		\mathcal{R}(\pi_{\Gamma, X})^*, \mathcal{R}(\pi_{\Gamma, X})_*) \\
\mathfrak{G}g: \mathfrak{G}(\mathcal{R}') \rightarrow \mathrm{G}(\mathcal{R}), (L, \lambda) \mapsto (L, \lambda^{-1})
\end{cases}\] }

%	\begin{itemize}[leftmargin=*]
%	\item $\mathfrak{F}(\mathbb{P}) = \mathrm{F} \circ \mathbb{P}$; \qquad $\mathfrak{F}(K, \tau) = (K, \tau^{-1})$;
%	\item $\mathfrak{G}(\mathcal{R}) = (\mathrm{G} \circ \mathcal{R}, 
%		(Ran(\mathcal{R}(\Delta_X)))_{X \text{ in } \mathrm{C}}, 
%		(\mathcal{R}(\pi_{\Gamma, X})^*, \mathcal{R}(\pi_{\Gamma, X})_*)_{X, \Gamma \text{ in } \mathrm{C}})$;  $\mathfrak{G}(L, \lambda) = (L, \lambda^{-1})$.
%	\end{itemize}

\begin{lem}
The functor $\mathfrak{F}: \mathrm{ILGLHyp} \rightarrow \mathrm{IndLaySp}$ is well defined.
\end{lem}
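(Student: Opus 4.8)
The plan is to check well\nobreakdash-definedness on objects and on morphisms separately, in each case reducing to the propositional duality (Lemma~\ref{lem:fwelldefined}), the indexed\nobreakdash-frame result (Lemma~\ref{lem:indexedprimefilter}), and the representation theorem (Theorem~\ref{thm:representation}). Fix an ILGL hyperdoctrine $\mathbb{P}\colon \mathrm{C}^{op}\to\mathrm{Poset}$ and write $\mathcal{R}=\mathfrak{F}(\mathbb{P})=\mathrm{F}\circ\mathbb{P}$. For each object $X$ the space $\mathcal{R}(X)=\mathrm{F}(\mathbb{P}(X))$ is an intuitionistic layered space and each $\mathcal{R}(f)=\mathbb{P}(f)^{-1}$ is a morphism of such spaces, both by Lemma~\ref{lem:fwelldefined}, so $\mathcal{R}$ is a functor $\mathrm{C}\to\mathrm{IntLaySp}$. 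It then remains to verify the three defining conditions of an indexed layered space. Condition~(1) is immediate: $U\circ\mathcal{R}$ is precisely the indexed prime filter frame $Pr(\mathbb{P}(-))$, which is an indexed layered frame by Lemma~\ref{lem:indexedprimefilter}.

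For conditions~(2) and~(3) I would show that the sets in question coincide with clopen sets of the form $h_{\mathbb{P}(-)}(a)$, which by Esakia duality are exactly the upward\nobreakdash-closed clopen sets. For~(2), I claim $Ran(\mathcal{R}(\Delta_X))=h_{\mathbb{P}(X\times X)}(=_X)$. The inclusion $\subseteq$ is easy: if $G=\mathbb{P}(\Delta_X)^{-1}[F]$ then, since hyperdoctrine property~(4) yields $\mathbb{P}(\Delta_X)(=_X)=\top$, we get $=_X\in G$. The reverse inclusion is the substantial direction: given a prime filter $G$ with $=_X\in G$ one must produce a prime filter $F$ of $\mathbb{P}(X)$ with $\mathbb{P}(\Delta_X)^{-1}[F]=G$, which I would obtain by a Zorn's lemma filter/ideal separation argument exactly as in the proof of Theorem~\ref{thm:representation}. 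For~(3), the computation underlying Lemma~\ref{lem:ipfsatisfiability} shows $\mathcal{R}(\pi_{\Gamma,X})^*(h(a))=h(\exists X_\Gamma(a))$ and $\mathcal{R}(\pi_{\Gamma,X})_*(h(a))=h(\forall X_\Gamma(a))$ via the adjointness clauses of the hyperdoctrine; as every upward\nobreakdash-closed clopen set is some $h(a)$, this simultaneously establishes that $\mathcal{R}(\pi_{\Gamma,X})^*$ and $\mathcal{R}(\pi_{\Gamma,X})_*$ preserve upward\nobreakdash-closed clopen sets.

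For morphisms, given an ILGL hyperdoctrine morphism $(K,\tau)\colon\mathbb{P}\to\mathbb{P}'$ I set $\mathfrak{F}(K,\tau)=(K,\tau^{-1})\colon\mathfrak{F}(\mathbb{P}')\to\mathfrak{F}(\mathbb{P})$, noting that the variance works out so that $K\colon\mathrm{C}\to\mathrm{D}$ is exactly the base functor required. That $K$ preserves finite products is part of the morphism data; that $\tau^{-1}$ is a natural transformation $\mathfrak{F}(\mathbb{P}')\circ K\to\mathfrak{F}(\mathbb{P})$ with components $\lambda_X=\tau_X^{-1}$ follows from functoriality of $\mathrm{F}$ together with naturality of $\tau$, and each $\tau_X^{-1}$ is a morphism of intuitionistic layered spaces by Lemma~\ref{lem:fwelldefined}, since each $\tau_X$ is a layered Heyting algebra homomorphism. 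It remains to verify the (Lift Property) and (Morphism Pseudo Epi) conditions. After unfolding $\lambda_X=\tau_X^{-1}$ and expressing $\mathcal{R}(\Delta_{(-)})$ and $\mathcal{R}(\pi_{(-)})$ as preimage maps on prime filters, I would reduce these to hyperdoctrine\nobreakdash-morphism property~(3) (that $\tau_{X\times X}$ carries $=_X$ to the equality element of $\mathbb{P}'(K(X)\times K(X))$) and property~(4) (the commuting squares for $\exists$ and $\forall$) respectively; producing the witnessing prime filters again uses the Zorn's lemma extension technique in the style of Lemma~\ref{lem:indexedprimefilter}. Preservation of identities and composition by $\mathfrak{F}$ is then routine from functoriality of $\mathrm{F}$.

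The hard part is not conceptual but the recurring prime\nobreakdash-filter\nobreakdash-existence arguments: the reverse inclusion in condition~(2) and the witness constructions for (Lift Property) and (Morphism Pseudo Epi) each require building a proper filter/ideal pair compatible with the relevant relation and then maximising it to a prime filter by Zorn's lemma, precisely as in Theorem~\ref{thm:representation} and Lemma~\ref{lem:indexedprimefilter}. Additional bookkeeping care is needed to keep the $(-)^{op}$ conventions straight, so that the source and target of $\mathfrak{F}(K,\tau)$ land on the correct sides.
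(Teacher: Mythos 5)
Your proposal is correct and follows essentially the same route as the paper's proof: condition (1) via the indexed prime filter frame lemma, condition (2) via the identification $Ran(\mathcal{R}(\Delta_X)) = h_{\mathbb{P}(X\times X)}(=_X)$, condition (3) via the adjointness computations $h(a) \mapsto h(\exists X_\Gamma(a))$ and $h(a) \mapsto h(\forall X_\Gamma(a))$, and the morphism conditions (Lift Property) and (Morphism Pseudo Epi) reduced to hyperdoctrine-morphism properties (3) and (4) respectively, with Zorn's-lemma prime-filter constructions in the style of the representation theorem. You merely make explicit some prime-filter-existence details the paper leaves as "easily shown".
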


\begin{proof}
We first show $\mathfrak{F}$ is well defined on objects. Let $\mathbb{P}$ be a ILGL hyperdoctrine. We show $\mathfrak{F}(\mathbb{P})$ is an indexed layered space. As $U \circ F \circ \mathbb{P} = Pr(\mathbb{P}(-))$, Lemma \ref{lem:indexedprimefilter} suffices to give 1). For 2), it can easily be shown that adjointness of $=_X$ yields
$Ran(\mathbb{P}(\Delta_X)^{-1}) = h_{\mathbb{P}(X \times X)}(=_X)$, a clopen set. For 3), we note that by Esakia duality the upward-closed clopen sets in the domain of $(\mathcal{P}(\pi_{\Gamma, X})^{-1})^*$ and $(\mathcal{P}(\pi_{\Gamma, X})^{-1})_*$ are all of the form 
$h_{\mathbb{P}(\Gamma \times X)}(a)$. A simple application of the adjointness properties shows that
$(\mathcal{P}(\pi_{\Gamma, X})^{-1})^*(h_{\mathbb{P}(\Gamma \times X)}(a)) = h_{\mathbb{P}(\Gamma)}(\exists X_{\Gamma}(a))$ and $(\mathcal{P}(\pi_{\Gamma, X})^{-1})_*(h_{\mathbb{P}(\Gamma \times X)}(a)) = h_{\mathbb{P}(\Gamma)}(\forall X_{\Gamma}(a))$.

%For 3) we concentrate on the case for $\mathbb{P}(\pi_{\Gamma, X})^{-1}$: the other is similar. It suffices to show $\mathbb{P}(\pi_{\Gamma, X})^{-1}(h_{\mathbb{P}(\Gamma \times X)}(a))$ is open for arbitrary $a \in \mathbb{P}(\Gamma \times X)$. The case $a = \bot$ is trivial so suppose $a \neq \bot$. Let $F \in h_{\mathbb{P}(\Gamma \times X)}(a)$.  We have by adjointness and upwards-closure that
%$\mathbb{P}(\pi_{\Gamma, X})(\exists X_{\Gamma}(a)) \in F$, so $\exists X_{\Gamma}(a) \in \mathbb{P}(\pi_{\Gamma, X})^{-1}(F) \in  \mathbb{P}(\pi_{\Gamma, X})^{-1}(h_{\mathbb{P}(\Gamma \times X)}(a))$. We claim that $h_{\mathbb{P}(\Gamma)}(\exists X_{\Gamma}(a)) \subseteq \mathbb{P}(\pi_{\Gamma, X})^{-1}(h_{\mathbb{P}(\Gamma \times X)}(a))$, completing the proof. Since 
%$\mathbb{P}(\pi_{\Gamma, X})^{-1}$ is a intuitionistic resource space morphism by BI duality, by property (2) of Definition \ref{}  it suffices to show that if $\exists X_{\Gamma}(a) \in G$ then there exists $G'$ such that $a \in G'$ and $\mathbb{P}(\pi_{\Gamma, X})^{-1}(G') \subseteq G$.
%Consider $\alpha = [a)$. $\alpha$ is proper because $a \neq \bot$. Further, by adjointness we have 
%$\mathbb{P}(\pi_{\Gamma, X})^{-1}(\alpha) \subseteq G$. Hence by a maximality argument in the style of Theorem \ref{thm:representation} there exists such a $G'$.

Now let $(K, \tau)$ be an ILGL hyperdoctrine morphism. To see $\mathfrak{F}(K, \tau)$ is an indexed resource space morphism, note that $K$ is finite product preserving by definition and naturality of $\tau^{-1}$ is inherited. By ILGL duality,  each component is an intuitionistic layered space morphism. Finally, (Lift Property) and (Morphism Pseudo Epi) are verified in much the same way as (Psuedo Epi) in Lemma \ref{lem:indexedprimefilter}, using property (3) of the hyperdoctrine morphism definition for the former and (4) for the latter.
\end{proof}
%%FIll in lemma reference
%%Fill in definition reference (layered p-morphism
%%Fill in lemma reference (Indexed Prime Filter)

\begin{lem}
The functor $\mathfrak{G}: \mathrm{IndLaySp} \rightarrow \mathrm{ILGLHyp}$ is well defined.
\end{lem}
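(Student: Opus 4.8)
The plan is to verify the two halves of well-definedness separately, mirroring the proof for $\mathfrak{F}$ and leaning on the propositional duality functor $\mathrm{G}$ (Lemma \ref{lem:gwelldefined}) together with the complex-hyperdoctrine computations of Lemma \ref{lemma:comlexisilf}. For objects, given an indexed layered space $\mathcal{R}: \mathrm{C} \to \mathrm{IntLaySp}$, I would first observe that $\mathfrak{G}(\mathcal{R})$ has underlying functor $\mathrm{G} \circ \mathcal{R}: \mathrm{C}^{op} \to \mathrm{LayHeyAlg}$, which by Lemma \ref{lem:gwelldefined} is a contravariant functor each of whose fibres is a layered Heyting algebra and each of whose transition maps is a layered Heyting algebra homomorphism; this delivers conditions (1)--(3) of the ILGL hyperdoctrine definition, with $\mathrm{C}$ having finite products because $U \circ \mathcal{R}$ is an indexed layered frame (property 1 of indexed layered space). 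The crucial point is that the equality element and the two quantifier adjoints, whose existence at the frame level was established in Lemma \ref{lemma:comlexisilf}, now land inside the clopen subalgebras $\mathcal{C}_{\preccurlyeq}(\mathcal{R}(X))$: this is exactly the extra data an indexed layered space provides, since $Ran(\mathcal{R}(\Delta_X))$ is clopen by property (2) and upward-closed by property (2) of Definition \ref{def:pmorphism}, while property (3) ensures that $\mathcal{R}(\pi_{\Gamma, X})^*$ and $\mathcal{R}(\pi_{\Gamma, X})_*$ restrict to maps on upward-closed clopen sets. The adjointness of $=_X$, the monotonicity and adjointness of the quantifiers, and their naturality in $\Gamma$ then transfer verbatim from Lemma \ref{lemma:comlexisilf}, being properties of the underlying functions that are insensitive to the clopen restriction.

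For morphisms, given an indexed layered space morphism $(L, \lambda): \mathcal{R} \to \mathcal{R}'$, I would set $\mathfrak{G}(L, \lambda) = (L, \lambda^{-1})$ and check the four conditions of an ILGL hyperdoctrine morphism. That $L$ preserves finite products is hypothesis (1); that $\lambda^{-1}$ is a natural transformation of layered Heyting algebra homomorphisms follows because each component $\lambda_\Gamma$ is a continuous intuitionistic layered p-morphism, so $\mathrm{G}(\lambda_\Gamma) = \lambda_\Gamma^{-1}$ is a homomorphism by Lemma \ref{lem:gwelldefined}, with naturality inherited (contravariantly) from $\lambda$. The two remaining conditions carry the real content: preservation of the equality element, ${\tau_{X \times X}(=_X) = {=}'_{L(X)}}$, is obtained from the (Lift Property), and commutation of the two quantifier squares is obtained from (Morphism Pseudo Epi). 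Both are established by unwinding the definitions of $\mathcal{R}(\pi_{\Gamma, X})^*$, $\mathcal{R}(\pi_{\Gamma, X})_*$ and $Ran(\mathcal{R}(\Delta_X))$ and applying the relevant adjointness, exactly as (Pseudo Epi) was used to prove naturality in Lemma \ref{lem:indexedprimefilter} and in the well-definedness of $\mathfrak{F}$.

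I expect the morphism half --- specifically deriving the quantifier-square commutation from (Morphism Pseudo Epi) and the equality condition from (Lift Property) --- to be the only genuinely nontrivial step, the object half being essentially bookkeeping on top of Lemmas \ref{lem:gwelldefined} and \ref{lemma:comlexisilf} once one notices that properties (2) and (3) of the indexed layered space definition are precisely the clopen-closure facts required. The subtlety to watch throughout is variance: since $\mathrm{G}$ is contravariant, one must track carefully that $\lambda^{-1}$ yields a natural transformation in the direction demanded by the hyperdoctrine-morphism definition, and that the (Lift Property) and (Morphism Pseudo Epi) inequalities line up with the correct composites involving $\mathbb{P}(s)$ and the adjoints $\exists X_\Gamma$, $\forall X_\Gamma$.
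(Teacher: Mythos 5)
Your proposal is correct and follows essentially the same route as the paper's proof: objects are handled by combining the propositional duality (Lemma \ref{lem:gwelldefined}) with Lemma \ref{lemma:comlexisilf} and the clopen-closure conditions in the definition of indexed layered space, and morphisms by product preservation, inherited naturality, componentwise duality, and deriving equality preservation from (Lift Property) and the quantifier squares from (Morphism Pseudo Epi). The only quibble is a citation slip at the end: the model argument for those last two conditions is the naturality proof in Lemma \ref{lemma:comlexisilf} (where (Pseudo Epi) is \emph{used}), not Lemma \ref{lem:indexedprimefilter} (where (Pseudo Epi) is \emph{proved}), which is indeed what the paper points to.
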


\begin{proof} 
First we verify objects. Given an indexed layered space $\mathcal{R}$,  $\mathfrak{G}(\mathcal{R})$ is clearly a functor of the right sort, with ILGL duality giving that each $\mathfrak{G}(\mathcal{R})(X)$ is a layered Heyting algebra and each $\mathfrak{G}(\mathfrak{R})(f)$ is a layered Heyting algebra homomorphism. By assumption, for each object $X$, $Ran(\mathcal{R}(\Delta_X))$ is clopen and by Lemma \ref{lemma:comlexisilf} it is also upwards closed. Hence 
$Ran(\mathcal{R}(\Delta_X)) \in \mathfrak{G}(\mathcal{R})(X \times X)$, as required. We also have that
$\mathcal{R}(\pi_{\Gamma, X})^*$ and $\mathcal{R}(\pi_{\Gamma, X})_*$ map upward-closed clopen sets to upward-closed clopen sets by definition. Hence they are monotone maps
$\mathfrak{G}(\mathcal{R})(\Gamma \times X) \rightarrow \mathfrak{G}(\mathcal{R})(\Gamma)$, as required. The adjointness and naturality properties then follow from Lemma \ref{lemma:comlexisilf}.

We now check morphisms. Let $(L, \lambda)$ be an indexed resource space morphism. Then as in the previous lemma, $L$ is a finite product preserving functor by assumption and $\lambda^{-1}$ inherits naturality. By ILGL duality, since each component of $\lambda$ is an intuitionistic layered frame morphism, each component of $\lambda^{-1}$ is layered Heyting algebra homomorphism. Finally, preservation of $=_X$ follows from (Lift Property) and commutativity of the adjoints follows from (Morphism Psuedo Epi): the proofs are similar to that given in Lemma \ref{lemma:comlexisilf}.
\end{proof}

%Duality
\begin{thm}
The categories $\mathrm{ILGLHyp}$ and $\mathrm{IndLaySp}$ are dually equivalent.
\end{thm}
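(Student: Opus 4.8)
The plan is to lift the dual equivalence of Theorem~\ref{thm:ilglduality} fibrewise along the base category. Since both round-trip functors leave the base category $\mathrm{C}$ untouched---on objects $\mathfrak{G}\mathfrak{F}(\mathbb{P}) = \mathrm{G}\circ\mathrm{F}\circ\mathbb{P}$ and $\mathfrak{F}\mathfrak{G}(\mathcal{R}) = \mathrm{F}\circ\mathrm{G}\circ\mathcal{R}$---I would build the required natural isomorphisms $\eta: \mathrm{Id}_{\mathrm{ILGLHyp}}\Rightarrow\mathfrak{G}\mathfrak{F}$ and $\epsilon: \mathrm{Id}_{\mathrm{IndLaySp}}\Rightarrow\mathfrak{F}\mathfrak{G}$ by taking the functor component to be $\mathrm{id}_{\mathrm{C}}$ and the remaining component to be the family of Esakia--ILGL isomorphisms indexed by the objects of $\mathrm{C}$. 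Concretely, $\eta_{\mathbb{P}} := (\mathrm{id}_{\mathrm{C}}, (h_{\mathbb{P}(X)})_{X\in \mathrm{Ob}(\mathrm{C})})$ and $\epsilon_{\mathcal{R}} := (\mathrm{id}_{\mathrm{C}}, (f_{\mathcal{R}(X)})_{X\in \mathrm{Ob}(\mathrm{C})})$, where $h_{(-)}$ and $f_{(-)}$ are the natural isomorphisms underpinning Theorem~\ref{thm:ilglduality}.

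First I would check that each family is a natural transformation between the appropriate $\mathrm{Poset}$- (respectively $\mathrm{IntLaySp}$-) valued functors; naturality over the base category is inherited directly from the naturality of $h_{(-)}$ and $f_{(-)}$ established in the propositional duality, since the functorial actions $\mathbb{P}(g)$ and $\mathcal{R}(g)$ are sent by $\mathrm{F}$ and $\mathrm{G}$ to the corresponding preimage maps. It then remains to verify that $\eta_{\mathbb{P}}$ is a morphism of $\mathrm{ILGLHyp}$ and that $\epsilon_{\mathcal{R}}$ is a morphism of $\mathrm{IndLaySp}$, i.e.\ that they respect the predicate-level structure. For $\eta_{\mathbb{P}}$, condition (3) of the hyperdoctrine-morphism definition asks that $h_{\mathbb{P}(X\times X)}(=_X)$ coincide with the equality element $Ran(\mathrm{F}(\mathbb{P})(\Delta_X))$ of $\mathfrak{G}\mathfrak{F}(\mathbb{P})$, and condition (4) asks that the $h$'s commute with the quantifier adjoints; both identities are exactly the calculations $Ran(\mathbb{P}(\Delta_X)^{-1}) = h_{\mathbb{P}(X\times X)}(=_X)$ and $(\mathbb{P}(\pi_{\Gamma,X})^{-1})^{*}(h_{\mathbb{P}(\Gamma\times X)}(a)) = h_{\mathbb{P}(\Gamma)}(\exists X_{\Gamma}(a))$ (and its $\forall$-dual) carried out in the lemma establishing that $\mathfrak{F}$ is well-defined. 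Dually, the conditions (Lift Property) and (Morphism Pseudo Epi) for $\epsilon_{\mathcal{R}}$, together with clopenness of the ranges of diagonals and preservation of the quantifier maps, follow because each component $f_{\mathcal{R}(X)}$ is already an isomorphism of intuitionistic layered spaces and the relevant adjointness transfers exactly as in the proof that $\mathfrak{G}$ is well-defined.

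Finally I would confirm that $\eta$ and $\epsilon$ are themselves natural in $\mathbb{P}$ and $\mathcal{R}$: for a hyperdoctrine morphism $(K,\tau)$ this amounts to the naturality square for $\eta$ commuting, which again reduces to the naturality of the fibrewise $h_{(-)}$, and the indexed-layered-space side is symmetric via $f_{(-)}$. Each $h_{\mathbb{P}(X)}$ and each $f_{\mathcal{R}(X)}$ is an isomorphism by Theorem~\ref{thm:representation} and Theorem~\ref{thm:ilglduality}, so $\eta$ and $\epsilon$ are natural isomorphisms and the dual equivalence follows. The main obstacle is not the propositional core---this is handed to us by Theorem~\ref{thm:ilglduality}---but the bookkeeping that the fibrewise isomorphisms are coherent with the quantifier adjoints and the equality predicate, i.e.\ that conditions (3)--(4) for hyperdoctrine morphisms and (Lift Property)/(Morphism Pseudo Epi) for indexed-layered-space morphisms genuinely hold. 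Fortunately these coincide with adjointness computations already discharged in the well-definedness lemmas, so the chief remaining work is organising them into the two naturality verifications.
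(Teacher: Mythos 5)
Your proposal matches the paper's proof essentially step for step: the unit and counit components are taken to be $(\mathrm{Id}, h_{\mathbb{P}(-)})$ and $(\mathrm{Id}, f_{\mathcal{R}(-)})$, the hyperdoctrine-morphism conditions (3)--(4) are discharged by exactly the computations $Ran(\mathbb{P}(\Delta_X)^{-1}) = h_{\mathbb{P}(X\times X)}(=_X)$ and $(\mathbb{P}(\pi_{\Gamma,X})^{-1})^{*}(h_{\mathbb{P}(\Gamma\times X)}(a)) = h_{\mathbb{P}(\Gamma)}(\exists X_{\Gamma}(a))$ from the well-definedness lemma for $\mathfrak{F}$, and naturality and the isomorphism property are inherited from Theorem \ref{thm:ilglduality}. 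The only difference is presentational: where you appeal to each $f_{\mathcal{R}(X)}$ being an isomorphism to obtain (Lift Property) and (Morphism Pseudo Epi), the paper unfolds that appeal into an explicit argument using surjectivity of $f_{\mathcal{R}(X)}$ together with the Priestley separation axiom --- which is precisely the order-reflection your isomorphism claim encodes --- so the two justifications coincide.
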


\begin{proof}
\begin{sloppypar}We obtain the required natural transformations from ILGL duality: 
the maps $(Id_{\mathrm{ILGLHyp}}, h_{\mathbb{P}(-)}): \mathbb{P} \rightarrow \mathfrak{GF}(\mathbb{P})$ and $(Id_{\mathrm{IndLaySp}}, f_{\mathcal{R}(-)}): \mathcal{R} \rightarrow \mathfrak{FG}(\mathcal{R})$ define the components at $\mathbb{P}$ and $\mathcal{R}$, respectively. It remains to verify that each is a morphism in the respective category, and that they indeed form natural isomorphisms.
\end{sloppypar}

First, that $(Id_{\mathrm{ILGLHyp}}, h_{\mathbb{P}(-)})$ is an ILGL hyperdoctrine morphism. Clearly, it is a functor and a natural transformation of the right kind. Further, we know 
$h_{\mathbb{P}(X \times X)}(=_X) = Ran(\mathbb{P}(\Delta_X)^{-1})$, satisfying $=_X$ preservation. That $h_{\mathbb{P}(-)}$ commutes with the adjoints is shown by a similar argument to that given in Lemma \ref{lem:ipfsatisfiability}.

For the case for $(Id_{\mathrm{IndLaySp}}, f_{\mathcal{R}(-)})$, we once again have that the functor and natural transformation are of the right sort. To see that (Lift Property) holds, let $x \in \mathcal{R}(X \times X)$ and $F \in \mathfrak{FG}(X)$ with $\mathfrak{FG}(\Delta_X)(F) \subseteq f_{\mathcal{R}(X \times X)}(x)$. Then for any upward-closed clopen set $C$, if $\mathcal{R}(\Delta_X)^{-1}(C) \in F$ then $x \in C$. By Esakia duality, there exists $z$ such that $f_{\mathcal{R}(X)}(z) = F$. Suppose
$\mathcal{R}(\Delta_X)(z) \not\preccurlyeq x$. Then by the Priestley separation axiom there exists an upward-closed clopen set $C$ such that $\mathcal{R}(\Delta_X)(z) \in 
C$ and $x \not\in C$: but then $z \in \mathcal{R}(\Delta_X)^{-1}(C)$ so $\mathcal{R}(\Delta_X)^{-1}(C) \in F$ and $x \in C$. It must thus be the case that $\mathcal{R}(\Delta_X)(z) \preccurlyeq x$. (Morphism Psuedo Epi) is shown in much the same way.

Finally, that these components form natural isomorphisms follows immediately from the fact that $h_{(-)}$ and $f_{(-)}$ are natural isomorphisms by ILGL duality.
\end{proof}

\section{Conclusions \& Further Work} \label{sec:conclusions}

In this paper we have given a comprehensive metatheoretic treatment of an intuitionistic substructural logic for reasoning about layered graphs, ILGL. Most notably, we give a semantics on layered graphs that is a sound and complete with respect to a labelled tableaux system with countermodel extraction. An equivalent algebraic and relational semantics is also given, and can be used to bridge the gap between the labelled tableaux system and sequential systems -- including a display calculus with cut elimination -- that directly present an algebraic semantics. This is done via a representation theorem that can be extended to a full topological duality. This metatheory is also provided for the natural predicate logic extension of ILGL. Finally, decidability of propositional ILGL is proved algebraically by establishing the finite embeddability property for the logic's algebraic semantics. 

A first direction for further work is real systems modelling and verification with ILGL. A good first step would be the class of bigraph models, given ILGL appears to be suited to reasoning about them. A thorough investigation of a suitable signature of predicate ILGL for use as an assertion language for bigraph models would be an appropriate starting point, with an extension to a Separation Logic-style language incorporating a dynamics corresponding to Bigraphical Reactive Systems as an end goal. This in itself would represent a rich program of research. Reformulating the proof theoretic properties that makes Separation Logic scalable -- for example, the frame rule and bi-abduction \cite{bi-abduction} -- in this setting would also make a strong contribution to the theory of application of bigraphs.

Further logical work can be done as well. As previously discussed, layered graph logics are the bunched logics with the weakest multiplicatives. It is thus possible to reconstruct the other systems in the literature by adding appropriate structure to the metatheory of the layered graph logics. We aim to use these logics as a foundation for a systematic treatment of the entire hierarchy of bunched logics, including a uniform construction of semantics and proof theory. In this direction we have already proved representation and duality theorems for the breadth of the bunched logics in the literature, obtaining systematic soundness and completeness theorems for each logic as a corollary. Such a treatment can also guide the design of Henessey-Millner-van Bentham style logics of state for location-resource-processes \cite{CMP14,CMP15, DP16} that incorporate layering.

%%More here??

%The style of labelled tableaux system showcased here has proven extremely powerful in the proof theory of bunched logics. What is particularly interesting is the possibility of systematically specifying such tableaux systems by automating the specification of the conditions restricting the labelling algebra. We believe the required information is encoded in the duality theorems relating the algebraic and relational semantics of bunched logic and is therefore closely

\end{document}